\documentclass[hidelinks,journal]{IEEEtran}

\usepackage{amsmath,amssymb,amsthm}
\usepackage{bm}
\usepackage{orcidlink}
\usepackage{enumerate}
\usepackage{graphicx}
\usepackage{algorithmic}
\usepackage{algorithm}
\usepackage{cite}
\usepackage{hyperref}
\usepackage{xcolor}
\usepackage{booktabs}
\usepackage{array}
\usepackage{multirow}
\usepackage[acronym,shortcuts]{glossaries}
\usepackage{threeparttable}
\usepackage{cleveref} 

\newcommand{\proofref}[1]{The proof is given in Appendix~\ref{#1}.}

\theoremstyle{definition}
\newtheorem{definition}{Definition}
\theoremstyle{remark}
\newtheorem{remark}{Remark}
\theoremstyle{plain}
\newtheorem{theorem}{Theorem}
\newtheorem{proposition}{Proposition}
\newtheorem{lemma}{Lemma}
\newtheorem{corollary}{Corollary}

\newcommand{\bmu}{\bm{\mu}}

\newcommand{\E}{\mathbb{E}}
\newcommand{\Var}{\mathrm{Var}}
\newcommand{\CN}{\mathcal{CN}}
\newcommand{\cR}{\mathcal{R}}

\newcommand{\Real}{\mathbb{R}}
\newcommand{\Complex}{\mathbb{C}}

\newcommand{\tr}{\mathrm{tr}}
\newcommand{\abs}[1]{\left|#1\right|}
\newcommand{\norm}[1]{\left\|#1\right\|}
\newcommand{\opnorm}[1]{\left\|#1\right\|_{\mathrm{op}}}

\newcommand{\trans}[0]{^{\mathsf{T}}}
\newcommand{\herm}[0]{^{\mathsf{H}}}

\newacronym{SOAV}{SOAV}{sum-of-absolute-values}
\newacronym{FP}{FP}{fractional programming}
\newacronym{MP}{MP}{message passing}
\newacronym{LM}{LM}{Lapidoth-Merhav}
\newacronym{GaBP}{GaBP}{Gaussian belief propagation}
\newacronym{GAMP}{GAMP}{generalized approximate message passing}
\newacronym{AMP}{AMP}{approximate message passing}
\newacronym{BOD}{BOD}{Bayes-optimal denoiser}
\newacronym{OBD}{OBD}{orbital Bessel denoiser}
\newacronym{LE}{LE}{linear estimator}
\newacronym{MAP}{MAP}{maximum a posteriori}
\newacronym{MMSE}{MMSE}{minimum mean square error}
\newacronym{LMMSE}{LMMSE}{linear minimum mean square error}
\newacronym{SPA}{SPA}{sum-product algorithm}
\newacronym{CLT}{CLT}{central limit theorem}
\newacronym{LLN}{LLN}{law of large numbers}
\newacronym{SE}{SE}{state evolution}
\newacronym{MSE}{MSE}{mean square error}
\newacronym{m-SE}{m-SE}{mismatched state evolution}
\newacronym{EM}{EM}{expectation maximization}
\newacronym{ARD}{ARD}{automatic relevance determination}
\newacronym{CPT}{CPT}{continuous phase trap}
\newacronym{APSK}{APSK}{amplitude and phase-shift keying}
\newacronym{QAM}{QAM}{quadrature amplitude modulation}
\newacronym{KL}{KL}{Kullback-Leibler}
\newacronym{MIMO}{MIMO}{multiple-input multiple-output}
\newacronym{SISO}{SISO}{single-input single-output}
\newacronym{SNR}{SNR}{signal-to-noise ratio}
\newacronym{BER}{BER}{bit error rate}
\newacronym{UBAE}{UBAE}{universal blind amplitude equalizer}
\newacronym{MOS}{MOS}{model order selection}
\newacronym{PMF}{PMF}{probability mass function}
\newacronym{SRAM}{SRAM}{static random-access memory}
\newacronym{ASIC}{ASIC}{application-specific integrated circuit}
\newacronym{FPGA}{FPGA}{field programmable gate array}
\newacronym{ALU}{ALU}{arithmetic logic unit}
\newacronym{vMBP}{vMBP}{von Mises belief propagation}
\newacronym{PSK}{PSK}{phase-shift keying}
\newacronym{SotA}{SotA}{state-of-the-art}
\newacronym{2D}{2D}{two dimensional}
\newacronym{1D}{1D}{one dimensional}
\newacronym{AWGN}{AWGN}{additive white Gaussian noise}
\newacronym{VN}{VN}{variable node}
\newacronym{FN}{FN}{factor node}
\newacronym{ASG}{ASG}{asymptotic SNR gap}
\newacronym{iid}{i.i.d.}{independent and identically distributed}
\newacronym{OD-GaBP}{OD-GaBP}{orbital detection Gaussian belief propagation}
\newacronym{OD-GAMP}{OD-GAMP}{orbital detection generalized approximate message passing}
\newacronym{OD-AMP}{OD-AMP}{orbital detection approximate message passing}
\newacronym{OD}{OD}{orbital detection}
\newacronym{DCT}{DCT}{dominated convergence theorem}
\newacronym{GMI}{GMI}{generalized mutual information}
\newacronym{OPD}{OPD}{orbital phase denoiser}
\newacronym{ML}{ML}{maximum likelihood}
\newacronym{OGD}{OGD}{orbital Gaussian denoiser}
\newacronym{I-MMSE}{I-MMSE}{information minimum mean square error}
\newacronym{ASK}{ASK}{amplitude-shift keying}
\newacronym{CRLB}{CRLB}{Cram\'{e}r-Rao lower bound}
\newacronym{MVUE}{MVUE}{minimum-variance unbiased estimator}
\newacronym{RS}{RS}{replica-symmetric}
\newacronym{EP}{EP}{expectation propagation}
\newacronym{EC}{EC}{expectation consistent}
\newacronym{GLM}{GLM}{generalized linear model}
\newacronym{SER}{SER}{symbol error rate}
\newacronym{LLR}{LLR}{log-likelihood ratio}
\newacronym{AEP}{AEP}{asymptotic equipartition property}
\newacronym{LDP}{LDP}{large deviation principle}
\newacronym{nu-OBD}{$U$-OBD}{$U$th-order orbital Bessel denoiser}
\newacronym{MUI}{MUI}{multi-user interference}
\newacronym{OAMP}{OAMP}{orthogonal approximate message passing}
\newacronym{MAMP}{MAMP}{memory approximate message passing}
\newacronym{VAMP}{VAMP}{vector approximate message passing}
\newacronym{PAM}{PAM}{pulse amplitude modulation}
\newacronym{CDMA}{CDMA}{code-division multiple access}

\begin{document}



\title{Orbital Detection}



\author{Kuranage Roche Rayan Ranasinghe,~\IEEEmembership{Graduate
Student Member,~IEEE,} \\ and Giuseppe Thadeu Freitas de Abreu,~\IEEEmembership{Senior Member,~IEEE}%
\thanks{The authors are with the School of Computer Science and Engineering, Constructor University, 28759 Bremen, Germany (e-mails: \{kranasinghe,gabreu\}@constructor.university).}
\vspace{-5ex}}

\maketitle

\begin{abstract}
We introduce \ac{OD}, a framework for designing asymptotically optimal, low-complexity \ac{MP} receivers for digitally modulated \ac{MIMO} systems, based on relaxing the discrete symbol prior into a mixed discrete-continuous density.
The resulting \emph{orbital prior} factors each symbol's distribution into a discrete radial component, supported on only the $L \ll M$ amplitude rings of an arbitrary constellation $\mathcal{M}$ of cardinality $M = |\mathcal{M}|$, and a continuous, maximum-entropy phase density on each ring.
This compresses the propagated posterior mean and variance losslessly into $3L$ real scalars, and collapses the optimal $\mathcal{O}(M)$-complexity denoiser into a closed-form hierarchy whose per-symbol cost falls to $\mathcal{O}(L)$ and ultimately $\mathcal{O}(1)$: the \ac{OBD}, its Bessel-free variant the \ac{OGD}, and the \ac{OPD}, proved irreducible on the ring manifold.
A Jacobi-Anger ladder recovers the exact detector with geometrically vanishing error.
Five information-theoretic results follow.
First, the \ac{OBD}, \ac{OGD}, and \ac{OPD} share an identical leading-order \ac{SE} fixed point.
Second, the sole price is a change in the high-\ac{SNR} error-decay law, from exponential to linear, which never hardens into an error floor.
Third, for any underloaded system the induced rate loss vanishes exponentially in \ac{SNR}, so every level is asymptotically capacity-achieving in the constellation-constrained sense, attaining $\log_2 M$.
Fourth, \ac{OD} attains an \ac{MMSE} dimension $d=1/2$, halfway between the $d=0$ \ac{BOD} and the $d=1$ linear receiver.
Fifth, a non-asymptotic optimal-transport bound in Wasserstein distance links constellation ring geometry directly to the achievable rate.
\end{abstract}

\begin{IEEEkeywords}
Orbital detection, directional statistics, AMP, state evolution, continuous phase relaxation.
\end{IEEEkeywords}

\glsresetall

\vspace{-2ex}
\section{Introduction}
\label{sec:intro}

\IEEEPARstart{R}{elaxing} a discrete constraint to a continuous one is among the most productive and recurring tools in information theory, with various well-known examples.
When the carrier phase of a received signal is unknown, averaging the likelihood over a uniform phase, that is, replacing a discrete reference by continuum, produces the modified Bessel function that has anchored noncoherent detection since Turin~\cite{Turin1956CommunicationTN} and Marcum~\cite{Marcum1948AST}.
Similarly, relaxing the points of a dense constellation into a continuous uniform density function, as per the continuous approximation of Forney and Wei \cite{ForneyWei1989}, reduces the constellation design problem to elementary geometry, exposing the ultimate shaping gain~\cite{ForneyUngerboeck1998}.
In turn, when the zero-error capacity of a graph resists combinatorial attack, Lov\'{a}sz's~\cite{Lovasz1979} relaxation of the discrete independence number to a continuous semidefinite program yields one of the tightest computable bounds known.

In each of these cases, a discrete object too costly to handle directly is embedded in a continuum in which its structure becomes analytically tractable, often at a cost that can be characterized approximately or exactly.
This paper brings that tool to the dominant complexity bottleneck of massive-\ac{MIMO} detection, focusing in particular on \ac{MP} architectures that underpin modern \ac{MIMO} systems~\cite{Donoho2009,Bayati2011}.


The gold standard of \ac{MP} receivers is the \ac{BOD}, which, for an $M$-point constellation, evaluates $M$ posterior weights per symbol, a kernel of complexity order $\mathcal{O}(M)$.
Executed once per user, per antenna, and per iteration, it is no longer an implementation nuisance but \emph{the} complexity constraint of receivers employed in modern systems.
For example, a single \ac{GAMP} iteration for $1024$-\ac{QAM} detection spends $1024$ Gaussian likelihood evaluations on every symbol evaluated~\cite{Rangan2011}.

While such costs were historically alleviated by transistor scaling, that subsidy will soon be unavailable, with the approaching end of Moore's Law~\cite{Moore2006,TheisWong2017}.
Arguably, a sustainable route to scalable high-order detection is to reduce the algorithmic order of the denoiser itself, and work exploiting this approach exists.

To cite a few, continuous relaxations of the search over discrete symbol sets in \ac{MIMO} detection  were proposed by Hayakawa and Hayashi, first in the form of a convex \ac{SOAV} optimization problem~\cite{HayakawaTWC2017,HayakawaACCESS2018},   and later reformulated as a discreteness-aware \ac{AMP} counterpart with a rigorous \ac{SE} characterization~\cite{HayakawaTSP2018}.
Building on those results towards larger systems, Iimori \emph{et al.}~\cite{iimoriOJCOMS2021}  proposed an improved discreteness-aware detection scheme for large-scale overloaded \ac{MIMO} detection where the search over a discrete space  is relaxed into an optimization problem with a continuous and differentiable regularized objective, convexized via \ac{FP}.
That scheme has also been recently shown to admit \ac{AMP}-based implementation~\cite{shrestha2026regularized}.

While the aforementioned methods explored different architectures and work for arbitrary constellations, considering the special case of \ac{MP}-based detection of $M$-\ac{PSK} modulation, Suresh et al. recently showed~\cite{Suresh2026} that relaxing only the discrete phase prior into a continuous von Mises distribution on the unit circle yields a denoiser that collapses to a closed-form $\mathcal{O}(1)$ update.
However, that construction is welded to the constant-modulus assumption, which breaks precisely where the complexity problem lives, namely, $M$-\ac{QAM} and \ac{APSK} constellations, whose multiple amplitudes are not nuisance parameters, but the primary information-bearing dimension.
Whether a principled continuous relaxation exists beyond $M$-\ac{PSK}, and what it costs in information-theoretic terms, remain open.

Motivated by the aforementioned ideas, the observation at the heart of this paper is that the $\mathcal{O}(M)$ complexity barrier in \ac{MIMO} detection is a fundamentally phase phenomenon.
To elaborate, any discrete digital constellation, including $M$-\ac{PSK}, $M$-\ac{QAM} and \ac{APSK}, can be decomposed into $L$ concentric amplitude rings, with $L$ small compared to $M$.
For square $M$-\ac{QAM}, for instance, $L$ is a vanishing fraction of $M$, and as small as $4$ for DVB-S2x \ac{APSK}.
In other words, while the ring identity is cheap, the $M_\ell$ discrete phases \emph{within} each ring carry the exponentially-scaling detection cost.

The key idea of \ac{OD} is therefore to relax only the phase distribution into the \emph{orbital prior}, keeping the radial marginal of the constellation intact while spreading the phase uniformly on each ring in a manner to provably preserve the maximum-entropy completion of the ring geometry.
Two consequences follow.
First, the posterior mean and variance of any multi-ring constellation can be shown to reside losslessly in a $3L$-dimensional subspace, such that the message dimension drops from $M$ to $3L$ real scalars before any approximation is made.
Second, under the orbital prior, the per-ring phase posterior is von Mises in closed form, collapsing the \ac{BOD} to the \ac{OD}-based $\mathcal{O}(L)$ \ac{OBD} and, by further degenerations, to a Bessel-free \ac{OGD} and an $\mathcal{O}(1)$ \ac{OPD} whose entire arithmetic is one ring lookup and one phase read-out.
The single approximation in this chain is quantified exactly, namely, the Wasserstein-1 distance between the orbital prior and the truth on ring $\ell$ is $\Theta(R_\ell/M_\ell)$, in closed form\footnotemark, and a Jacobi-Anger refinement drives it to zero geometrically in the retained order.

The central and most counter-intuitive result of the above is that this three-orders-of-magnitude complexity reduction is invisible at the macroscopic level.
In other words, all levels of the denoiser hierarchy share an \emph{identical} leading-order \ac{SE} fixed point $\mathrm{MSE}_\infty = \sigma_z^2/(2 - \alpha)$, where $\alpha = K/N < 1$ is the ratio of transmit to receive antennas and $\sigma_z^2$ the noise variance; the denoisers differ only in $\mathcal{O}(\sigma_z^4)$ corrections that we compute in closed form and rank by amplitude-shrinkage bias.
The significance of this implication can be captured by a thought experiment: an iterative detector monitoring its own error cannot tell whether it is running the $\mathcal{O}(M)$ exact denoiser or the $\mathcal{O}(1)$ projection.

We further show that the cost of the \ac{OD} relaxation is the \emph{decay law}, namely, the fact that an exponential error decay is traded for linear error decay, which manifests as a constant, bounded \ac{ASG}, which is a parallel shift of the error curve and not a floor.
The information-theoretic ledger, however, is shown to lean in favor of \ac{OD}, because: a) the \ac{GMI} of the mismatched orbital decoding metric is found to follow in closed-form from the decoupling principle, b) the \ac{I-MMSE} relation converts the \ac{MSE} gap into a rate gap that vanishes \emph{exponentially} in \ac{SNR}, and c) every level of the hierarchy is asymptotically constellation-constrained capacity-achieving (attaining $\log_2 M$) for underloaded systems.
In summary, the \ac{OD} relaxation costs nothing in the limit that matters and, more pointedly, nothing in the \emph{regime} that matters.

Instead, the price paid by \ac{OD} asymptotically detaches from the constellation order $M$, precisely where the cost of exact detection grows, such that the relaxation becomes free exactly for the high-order constellations that render exact detection prohibitive.
We shall return to and clarify this point further in Remark~\ref{rem:complexity_cost_inversion}.
Before we proceed with offering a detailed description of our contributions, which go beyond fundamental principles onto the design of concrete and feasible algorithms, enriched by several related information-theoretical results, we want to remark that we are well aware of the fact that the essence behind the \ac{OD} idea is not new, with parts of it having in fact has been utilized in various important contributions, some of which are listed and categorized below.

\vspace{-2ex}
\subsection{Selected Related Work}
\label{sec:related_work}

\noindent\emph{1) \Ac{EP}}: The closest algorithmic relative of \ac{OD} is \ac{EP}-based \ac{MIMO} detection~\cite{Minka2001,Opper2005,Cespedes2014}, which replaces the discrete posterior with a moment-matched Gaussian and attains near-optimal performance for high-order \ac{QAM}~\cite{Cespedes2014}.
A crucial distinction from \ac{OD} is, however, that \ac{EP} does not touch the $\mathcal{O}(M)$ bottleneck, since matching the tilted moments still requires evaluating all $M$ likelihoods.
In other words, \ac{EP} can be considered a competitor to the \emph{exact} \ac{BOD}-based \ac{AMP}, but not to the \ac{OD} hierarchy as a whole, which reaches $\mathcal{O}(1)$.
Notice also that \ac{EP} discards the ring structure entirely, thus forgoing both the Wasserstein-quantifiable mismatch and the possibility of a closed-form \ac{SE} fixed-point, both of which are preserved under \ac{OD}.

\noindent\emph{2) Mismatched inference in the large-system limit}: Our information-theoretic analysis, offered in Subsections~\ref{sec:gmi} through \ref{sec:ot_bound}, runs parallel to the exact free-energy approach of Barbier \emph{et al.}~\cite{Barbier2019}, which characterizes the \ac{MMSE} and capacity of \acp{GLM} under \emph{Bayes-optimal} priors.
However, the \ac{OD} prior is \emph{mismatched} by construction, such that both the information-theoretic limit and the \emph{cost} of the \ac{OD} relaxation are objects of study, with the \ac{I-MMSE} identity of Guo, Shamai and Verd\'{u}~\cite{Guo2005} employed to translate the \ac{SE} \ac{MSE} gap into the corresponding rate gap.

The large-system regime addressed in this paper builds on three key results that predate the \ac{AMP} algorithm~\cite{DonohoAMP2006}: the replica analysis of \ac{CDMA} multiuser detection by Tanaka~\cite{tanaka_tit_2002} and the random-spreading spectral-efficiency results of Verd\'{u} and Shamai~\cite{verdu_tit_1999}, which establish the fixed-point description; the effective-interference characterization of linear receivers by Tse and Hanly~\cite{tse_tit_1999}, which supplies the \ac{LMMSE} baseline for comparisons; and M\"uller's random-matrix model of antenna arrays~\cite{muller_tit_2002} which can be seen as its \ac{MIMO} counterpart.
Each of these characterizes, however, a \emph{matched} large-system limit, whereas we quantify the cost of a deliberate prior mismatch.

\footnotetext{$\Theta$ denotes Landau's asymptotic tightness from above and below~\cite{Knuth1976}.}

\noindent\emph{3) Group-structured AMP priors}: \ac{OD} exploits the fact that a prior factorization over disjoint groups, characterized by amplitude rings, induces a group-factorized \ac{AMP} denoiser with tractable per-group \ac{SE}~\cite{Eldar2009,Vila2013,DonohoJohnstoneMontanari2013}, offering the first construction for non-\ac{PSK} \ac{MIMO} detection that yields a closed-form denoiser hierarchy with an exact mismatched-\ac{SE} trajectory.

\noindent\emph{4) Ring-structured APSK detection}: Exploiting \ac{APSK} ring geometry for reduced-complexity detection is classical in satellite communications.

A key example is De~Gaudenzi et al.~\cite{deGaudenzi2006,deGaudenzi2006b}, where detection is split into a coarse ring decision followed by an intra-ring phase decision, which can be consider the closest geometric precursor of \ac{OD}'s orbital prior.
The method thereby is, however, single-user only, heuristic in its ring-conditional likelihoods, and disconnected from the \ac{MP}-based detection framework.
In contrast, we provide the first large-system treatment with  exact $3L$-dimensional moment-sufficient statistics (Theorem~\ref{thm:ring_decomp}), Wasserstein-bounded mismatch (Proposition~\ref{prop:cbm_error}), and shared \ac{SE} fixed point across the hierarchy.

\noindent\emph{5) Other complexity-reduction routes}: Three further lines of work found in current literature are worth of mention, which however are complementary, rather than competing.
First, deep-unfolded detectors~\cite{samuel2019learning,he2020model,shlezinger2023model} circumvent the $\mathcal{O}(M)$ cost by learning a global update, but do so implicitly and without an analytic guarantee, such that \ac{OD} can serve as an interpretable prior block within them.
Second, \ac{OAMP}~\cite{ma2017orthogonal} and \ac{MAMP}~\cite{liu2022memory} address the \emph{linear} stage of \ac{AMP} and compose directly with our \emph{denoiser} stage, paving the way for possible extensions of \ac{OD} to correlated channels, which we leave to future work.
Third, Cartesian per-axis \ac{PAM}~\cite{proakis2007digital,SimonAlouini2005} is near-Bayes-optimal for square $M$-\ac{QAM} at $\mathcal{O}(\sqrt{M})$, but does not exist for \ac{APSK} or non-square constellations, does not enjoy the circularity of the discrete-to-continuous phase relaxation, and offers none of the mismatched-\ac{SE}, Wasserstein, or \ac{GMI} machinery, while \ac{OPD} matches hard-rounding complexity and applies universally.

 \vspace{-2ex}
\subsection{Summary of Contributions}
\label{sec:contributions}
 
In view of all the above, we finally summarize ou main results organized in three parts as follows.

\vspace{1ex}
{\noindent\bf Part I: Fundamentals} \emph{(Section~\ref{sec:spatial_compression})}

\vspace{1ex}
\noindent\emph{1) The orbital ring decomposition (Subsection~\ref{sec:compression})}:
For any multi-ring constellation, the exact posterior mean and variance, which are all that a moment-based message-passing update consumes, reside in a $3L$-dimensional subspace, compressing the message dimension from $M$ to $3L$ real scalars (Theorem~\ref{thm:ring_decomp}). 
For $M$-\ac{PSK}, the state collapses to a single complex scalar, an exact one-third routing reduction (Proposition~\ref{prop:psk_sufficiency}). 
For square $M$-\ac{QAM} the ring count grows only as $L = \Theta(M/\sqrt{\ln M})$ (Proposition~\ref{prop:ring_scaling}), so that the routing-payload reduction $M/3L$ is itself $\Theta(\sqrt{\ln M})$, unbounded in $M$, and $\mathcal{O}(1)$ for fixed-$L$ \ac{APSK} (Corollary~\ref{cor:compression_scaling}).

\vspace{0.5ex}
\noindent\emph{2) The orbital prior and its geometric cost (Subsection~\ref{sec:orbital})}:
We introduce the orbital prior, the maximum-entropy relaxation that preserves the radial marginal exactly and relaxes only the intra-ring phase to uniform (Definition~\ref{def:orbital}, Proposition~\ref{prop:maxent}). 
Its induced mismatch is quantified exactly, giving the first quantitative bridge from constellation geometry to inference error: the Wasserstein-1 distance between the orbital prior and the discrete truth is $\Theta(R_\ell/M_\ell)$ per ring (Proposition~\ref{prop:cbm_error}), with a closed-form two-term refinement for non-equidistributed \ac{QAM} rings (Corollary~\ref{cor:w1_qam}).

\vspace{1ex}
{\noindent\bf Part II: Algorithms} \emph{(Section~\ref{sec:cbm})}
\vspace{1ex}

\noindent\emph{3) Asymptotic exactness beyond \ac{OBD} (Subsections~\ref{sec:nle}--{\it B})}:
The orbital prior induces, in closed form, the $\mathcal{O}(L)$ \ac{OBD}, which is the exact orbital posterior mean, a Bessel-weighted average over the rings (Proposition~\ref{prop:cbm_denoiser}). 
Expanding its partition function by the Jacobi-Anger identity (Proposition~\ref{prop:ja_partition}) and retaining $U$ Fourier harmonics yields the \ac{nu-OBD} of order $U$ (Definition~\ref{def:bcbm}), a ladder from the \ac{OBD} ($U = 0$) to the exact \ac{BOD} ($U \to \infty$) whose residual mismatch contracts \emph{geometrically}, as $\mathcal{O}(R_\ell/M_\ell^{U+1})$ per ring (Proposition~\ref{prop:bcbm_wasserstein}).

\vspace{0.5ex}
\noindent\emph{4) The \ac{OGD}, and the irreducible \ac{OPD} (Subsections~\ref{sec:cgr}--{\it D})}:
Two relaxations below the \ac{OBD} are derived: the \ac{OGD} removes every Bessel evaluation via von Mises--Gaussian convergence (Propositions~\ref{prop:vm_gaussian},~\ref{prop:cgr_denoiser}), and its high-\ac{SNR} limit, the \ac{OPD}, reduces detection to one ring lookup and one phase read-out at $\mathcal{O}(1)$ cost. The \ac{OPD} is proved irreducible on the ring manifold: no operation of lower complexity attains vanishing \ac{MSE} under the ring constraint (Proposition~\ref{prop:lpd_irreducible}). 
It is moreover exactly the Euclidean projection onto the ring set (Proposition~\ref{prop:lpd_projection}), and at high \ac{SNR} its ring detection and phase estimation separate into two individually optimal sub-problems -- a minimum-distance ring decision and a Cram\'er--Rao-achieving phase read-out (Proposition~\ref{prop:det_est}).

\vspace{1ex}
{\noindent\bf Part III: Information-Theoretic Properties} \emph{(Sections~\ref{sec:se}-\ref{sec:inf_theory})}

\noindent\emph{5) Mismatched \ac{SE} and fixed-points (Subsections~\ref{sec:se_fixedpoint}--{\it D})}:
The mismatched \ac{SE} recursion is fully characterized: fixed-point existence by Brouwer's theorem (Theorem~\ref{thm:se_convergence}), monotone convergence from the monotonicity of the \ac{MMSE} in the noise level (Proposition~\ref{prop:se_monotone}), and uniqueness for underloaded systems by the Banach contraction principle (Corollary~\ref{cor:se_uniqueness}). Culminating, the \ac{OBD}, \ac{OGD}, and \ac{OPD} are proved to share the identical leading-order fixed point $\mathrm{MSE}_\infty = \sigma_z^2/(2-\alpha)$, yielding a complete complexity-performance equivalence from $\mathcal{O}(L)$ to $\mathcal{O}(1)$ (Corollary~\ref{cor:lpd_se}, Proposition~\ref{prop:cgr_se}, Corollary~\ref{cor:snr_gap}).

\vspace{0.5ex}
\noindent\emph{6) Loading gain growth over linear receivers (Subsection~\ref{sec:load_sweep})}:
The ratio of the \ac{LMMSE} fixed point to that of any orbital level is $(2-\alpha)/(1-\alpha)$ to leading order (Corollary~\ref{cor:load_gap}), a function of the load alone: identical for every constellation order and family and for all three orbital levels, bounded below by $3$~dB, the ratio of \ac{MMSE} dimensions, and divergent as $\alpha \to 1^-$. The operating points at which a linear receiver is most tempting are precisely those at which it is most costly.

\vspace{0.5ex}
\noindent\emph{7) Asymptotic \ac{SNR} gap and \ac{MMSE} dimension (Subsections~\ref{sec:cross_level} and~\ref{sec:MMSE_dim})}:
\ac{OD} converts error decay from exponential to linear, a strictly constant, bounded \ac{ASG} (Proposition~\ref{prop:linear_decay}), read through two lenses: the \ac{MMSE} dimension, $d_{\mathrm{B}} = d_{\mathrm{G}} = d_{\mathrm{P}} = \tfrac{1}{2}$ against $d_{\mathrm{D}} = 0$ for the \ac{BOD} (Proposition~\ref{prop:mmse_dimension}), and the \ac{I-MMSE} relation~\cite{Guo2005}, under which the per-ring information loss is bounded and convergent (Corollary~\ref{cor:immse}).

\noindent\emph{8) Achievable rate w. orbital prior (Subsections~\ref{sec:decoupling}--{\it B})}:
The \ac{GMI} of the mismatched orbital metric is obtained in closed form and decouples into a single-letter achievable rate (Theorem~\ref{thm:decoupling}, Corollaries~\ref{thm:gmi} and~\ref{cor:single_letter_rate}). The \ac{I-MMSE} derivative ties the operational \ac{MSE} to this rate, whose loss relative to the \ac{BOD} vanishes exponentially in \ac{SNR} (Proposition~\ref{prop:immse_rate_gap}). The same orbital metric doubles, without modification, as the soft-output (bit-\ac{LLR}) rule for a coded receiver, extending the $\mathcal{O}(M)\!\to\!\mathcal{O}(1)$ saving from hard detection to the soft interface of forward-error correction.

\begin{table}[H]
  \centering
  \caption{Frequently used acronyms.}
  \vspace{-2ex}
  \label{tab:acronyms}
  \begin{tabular}{l|l}
    \hline
    \textbf{Acronym} & \textbf{Description} \\
    \hline
    OD       & orbital detection \\
    MP       & message passing \\
    MIMO     & multiple-input multiple-output \\
    AMP      & approximate message passing \\
    AWGN     & additive white Gaussian noise \\
    SNR      & signal-to-noise ratio \\
    i.i.d.   & independent and identically distributed \\
    PSK      & phase-shift keying \\
    QAM      & quadrature amplitude modulation \\
    APSK     & amplitude and phase-shift keying \\
    BOD      & Bayes-optimal denoiser \\
    OBD      & orbital Bessel denoiser \\
    $U$-OBD  & $U$th-order orbital Bessel denoiser \\
    OGD      & orbital Gaussian denoiser \\
    OPD      & orbital phase denoiser \\
    MSE      & mean square error \\
    MMSE     & minimum mean square error \\
    LMMSE    & linear minimum mean square error \\
    SE       & state evolution \\
    ASG      & asymptotic SNR gap \\
    GMI      & generalized mutual information \\
    I-MMSE   & information minimum mean square error \\
    RS       & replica-symmetric \\
    KL       & Kullback-Leibler \\
    CRLB     & Cram\'{e}r-Rao lower bound \\
    \hline
  \end{tabular}
\end{table}

\vspace{-2ex}
\noindent\emph{9) Optimal-transport bound (Subsection~\ref{sec:ot_bound})}: A non-asymptotic bound on the \ac{SE} fixed-point gap in terms of the Wasserstein distance closes the causal chain from constellation geometry, through macroscopic \ac{MSE} degradation, to rate loss (Theorem~\ref{thm:ot_bound}, Remark~\ref{rem:closing_loop}).
The per-ring transport distance is obtained in closed form for an \emph{arbitrary} phase set (Proposition~\ref{prop:w2_general}), which both identifies equidistributed rings as the exactly-solvable case and separates the constellation families: the certificate is $\Theta(1/M^2)$ for $M$-\ac{PSK}, $\Theta(L^2/M^2)$ for \ac{APSK}, and $\Theta(1)$ for square \ac{QAM}, whose rings stay sparsely populated at every order (Remark~\ref{rem:w2_families}).

\vspace{-2ex}
\subsection{Organization, Notation, and List of Acronyms}
The remainder of the paper is organized as follows.
Section~\ref{sec:system} presents the system model, the \ac{AMP} decoupling into a scalar denoising problem, and the ring geometry.
Section~\ref{sec:spatial_compression} establishes the $3L$-dimensional posterior compression and introduces the orbital prior, its maximum-entropy characterization, and the exact Wasserstein-1 mismatch.
Section~\ref{sec:cbm} derives the denoiser hierarchy -- the \ac{OBD} (Section~\ref{sec:nle}), the \ac{nu-OBD} (Section~\ref{sec:ja_cbm}), the \ac{OGD} (Section~\ref{sec:cgr}), and the \ac{OPD} (Section~\ref{sec:lpd}).
Section~\ref{sec:se} presents the mismatched \ac{SE} analysis and the cross-level equivalence theorem.
Section~\ref{sec:inf_theory} develops the information-theoretic consequences -- the \ac{MMSE} dimension, the \ac{GMI} under mismatched decoding (Section~\ref{sec:gmi}), the decoupling-based single-letter rate (Section~\ref{sec:decoupling}), and the optimal-transport bound (Section~\ref{sec:ot_bound}).
Numerical validations accompany each result.
Section~\ref{sec:conclusion} concludes.

\noindent \emph{Notation:} Boldface lower- and uppercase letters denote column vectors and matrices (e.g., $\mathbf{x}$ and $\mathbf{H}$), respectively, while calligraphic uppercase letters (e.g., $\mathcal{M}$, $\cR$) denote sets, with $|\cdot|$ the cardinality.
Lowercase italics are used for the transmitted symbol and every scalar derived from it and, in keeping with the algorithmic development, \emph{no} typographic distinction is drawn between a random quantity and its realization (thus $\E[\cdot]$ and mutual information $I(\cdot\,;\cdot)$ act on the lowercase symbols directly).
Uppercase italics are reserved for \emph{structural}, non-random quantities.
The sets of real and complex numbers are $\Real$ and $\Complex$; for $x \in \Complex$, $\norm{x}$, $\angle x$, and $x^{*}$ denote its modulus, phase, and conjugate, with $j \triangleq \sqrt{-1}$.
The operators $(\cdot)\trans$, $(\cdot)\herm$, and $\|\cdot\|_F$ denote transpose, Hermitian (conjugate) transpose, and Frobenius norm, while $\E[\cdot]$ and $\Var[\cdot]$ denote expectation and variance.
We write $\Pr(\cdot)$ for the probability of an event, and reserve the lowercase symbols $p$ and $q$ for probability densities and mass functions: $p$ always denotes the \emph{true} (matched) data-generating distribution, whereas $q$ denotes the \emph{orbital} (mismatched) relaxation on which the proposed denoisers are built.
$\E_p$, $\E_q$, $\Var_q$, and $\mathrm{mmse}_q$ denote expectation, variance, and \ac{MMSE} taken under the indicated distribution.
Finally, $\mathcal{CN}(\mu,\sigma^{2})$ -- and, for vectors, $\CN(\bmu, \mathbf{\Sigma})$ with mean $\bmu$ and covariance $\mathbf{\Sigma}$ -- is the circularly symmetric complex Gaussian distribution. Table~\ref{tab:acronyms} collects the frequently used acronyms used throughout, and Table~\ref{tab:symbols_system} the recurring symbols.
\vspace{-4ex}
\begin{table}[H]
  \centering
  \caption{Frequently used symbols.}
  \vspace{-2ex}
  \label{tab:symbols_system}
  \begin{tabular}{c|l@{\,}|l@{\,}}
    \hline
    & \textbf{Symbol} & \textbf{Description} \\
    \hline
    \multirow{6}{*}{\rotatebox[origin=c]{90}{\it System model}}
    & $K$, $N$ & numbers of transmit and receive antennas \\
    & $\alpha = K/N$ & system load \\
    & $\mathbf{H}$, $H_{nk}$ & channel matrix and its entries \\
    & $\mathbf{y}$, $\mathbf{x}$ & received and transmitted vectors \\
    & $\mathbf{z}$, $\sigma_z^2$ & physical noise vector and its per-entry variance \\
    & $\hat{\mathbf{x}}$ & symbol-vector estimate \\
    \hline
    \multirow{11}{*}{\rotatebox[origin=c]{90}{\it \;\;\; Constellation \& geometry}}
    & $\mathcal{M}$, $M$ & constellation and its cardinality \\
    & $s_m$, $p_m$ & $m$-th constellation point and its prior probability \\
    & $\phi_m$ & phase of $s_m$ \\
    & $E_d$ & average symbol energy \\
    & $L$ & number of amplitude rings \\
    & $R_\ell$, $M_\ell$, $r_\ell$ & radius, points and prior probability of ring $\ell$ \\
    & $\ell^*$ & index of the detected (dominant) ring \\
    & $d_R$ & minimum inter-ring distance \\
    & $d_{\min}$ & minimum Euclidean distance \\
    & $\gamma$ & inverse-energy coefficient $\sum_\ell r_\ell R_\ell^{-2}$ \\
    \hline
    \multirow{4}{*}{\rotatebox[origin=c]{90}{\it Distr.}}
    & $p$ & true (matched) distribution \\
    & $q$ & orbital (mismatched) distribution \\
    & $\Pr(\cdot)$ & probability of an event \\
    & $\phi_{\bar{\sigma}^2}$ & complex Gaussian kernel of variance $\bar{\sigma}^2$ \\
    \hline
    \multirow{12}{*}{\rotatebox[origin=c]{90}{\it \;\; Denoisers \& their statistics}}
    & $\eta_{\mathrm{D}}$, $\eta_{\mathrm{B}}$ & BOD (exact) and OBD (orbital Bessel) denoisers \\
    & $\eta_{\mathrm{G}}$, $\eta_{\mathrm{P}}$, $\eta_{\mathrm{L}}$ & OGD, OPD and LMMSE denoisers \\
    & $\hat{\sigma}^2_{\mathrm{B}}$ & OBD posterior variance \\
    & $\bm{\Sigma}$ & $2\!\times\!2$ real conditional covariance of $x$ given $\bar{x}$ \\
    & $w_m$, $w_\ell$ & symbol and ring posterior weights \\
    & $\Lambda_\ell$ & ring log-metric \\
    & $\kappa_\ell$ & von Mises concentration \\
    & $A(\kappa)$ & Bessel ratio $I_1(\kappa)/I_0(\kappa)$ (mean resultant length) \\
    & $I_0$, $I_1$ & modified Bessel functions of the first kind \\
    & $U$ & retained harmonic order of the $U$-OBD \\
    & $\beta$ & sharpness of the Gaussian mollification \\
    \hline
    \multirow{9}{*}{\rotatebox[origin=c]{90}{\it Effective ch. \& SE}}
    & $\bar{x}$ & scalar cavity statistic (denoiser input) \\
    & $\bar{z}$, $\bar{\sigma}^2$ & effective (cavity) noise and its variance \\
    & $\tilde{z}$ & standardized noise, $\tilde{z}\sim\CN(0,1)$ \\
    & $\bar{\sigma}_\infty^2$ & effective variance at the SE fixed point \\
    & $\mathrm{MSE}_t$, $\mathrm{MSE}_\infty$ & per-symbol MSE at iteration $t$; its fixed point \\
    & $\mathcal{F}$ & state-evolution map \\
    & $\langle\eta'\rangle$ & average denoiser divergence (Onsager term) \\
    & $c_{\mathrm{D}}$ & BOD contraction modulus $\sup\mathcal{F}_{\mathrm{D}}'$ \\
    & $\delta^{\eta}$ & sub-leading fixed-point correction of level $\eta$ \\
    \hline
    \multirow{10}{*}{\rotatebox[origin=c]{90}{\it Info. meas. \& transport}}
    & $I(\cdot\,;\cdot)$, $H(\cdot)$ & mutual information and entropy \\
    & $\mathrm{mmse}_p$, $\mathrm{mmse}_q$ & MMSE under the indicated distribution \\
    & $d_\eta$ & MMSE dimension of level $\eta$ \\
    & $R_\eta$, $\Delta R_\eta$ & single-letter rate and its gap to the BOD \\
    & $I_{\mathrm{GMI}}$ & generalized mutual information \\
    & $\varsigma$ & GMI tilt parameter \\
    & $\zeta$ & effective SNR $1/\bar{\sigma}^2$ (I-MMSE variable) \\
    & $\tau$ & Gaussian smoothing level in the transport proof \\
    & $W_1$, $W_2$, $\overline{W}_2^{\,2}$ & Wasserstein distances; aggregate squared $W_2$ \\
    & $D(\cdot\|\cdot)$, $J(\cdot\|\cdot)$ & relative entropy and relative Fisher information \\
    \hline
  \end{tabular}
\end{table}

\section{Fundamentals of Orbital Detection}
\label{sec:system}


Consider an uplink \ac{MIMO} system with $K$ transmit antennas and $N$ receive antennas.\footnote{Due to the leveraging of \ac{AMP}, the developed \ac{OD} framework applies to arbitrary linear systems in under- and fully-loaded conditions and \ac{iid} Gaussian channel matrices, given a discrete prior constellation.
Examples include downlink \ac{MIMO} and multi-user uplink/downlink systems.
Extension to other channel models, e.g. with correlation or noise color, however, is left to future work.} The received baseband signal $\mathbf{y} \in \Complex^{N}$
is modeled as
\vspace{-1ex}
\begin{equation}
  \mathbf{y} = \mathbf{H}\mathbf{x} + \mathbf{z},
  \label{eq:system}
\vspace{-1ex}
\end{equation}
where $\mathbf{H} \in \Complex^{N \times K}$ with\footnote{Note that the normalization factor $N$ in the channel power guarantees convergence of the asymptotic \ac{SE} \cite{Javanmard2013}.} $H_{nk} \sim \CN(0, 1/N)$ is the channel matrix, $\mathbf{x} \triangleq [x_1, x_2, \ldots, x_K]\trans \in \mathcal{M}^K$ is the transmitted symbol vector whose entries $x_k \in \mathcal{M}$ are drawn from an arbitrary discrete constellation $\mathcal{M} \triangleq \{s_1, s_2, \ldots, s_M\} \subset \Complex$ with cardinality $M \triangleq |\mathcal{M}|$ and prior probabilities $p_m$, such that the average symbol power of the constellation can be explicitly defined as $E_d \triangleq \sum_{m=1}^M p_m \norm{s_m}^2$, and $\mathbf{z} \sim \CN(\mathbf{0}, \sigma_z^2 \mathbf{I}_N)$ is the \ac{AWGN} vector.

\vspace{-2ex}
\subsection{Fundamentals of Approximate Message Passing}
\label{sec:amp_decoupling}

Direct Bayesian estimation of $\mathbf{x}$ from~\eqref{eq:system} is intractable in general, since the joint posterior couples all $K$ symbols through the Gram matrix $\mathbf{H}\herm\mathbf{H}$, and its marginals require a sum over the $M^K$ codewords of $\mathcal{M}^K$.
Partial remedy is provided by \Ac{AMP}~\cite{Donoho2009,Bayati2011}, which circumvents this coupling by alternating a linear (matched-filter) step with a component-wise denoiser and an Onsager correction, where the denoiser $\eta_t(\cdot\,;\bar{\sigma}_t^2):\Complex \times \Real_{>0} \to\Complex$ is any separable map, applied entrywise to the cavity statistic and parametrized by the effective noise variance $\bar{\sigma}_t^2$ (the variance of the residual interference at iteration $t$, fixed by the recursion~\eqref{eq:se_intro} below). 

With $\mathbf{x}^0 = \mathbf{r}^{0} = \mathbf{0}$, the $t$-th iteration of the algorithm is
\begin{subequations}
\label{eq:amp}
\begin{align}
  \mathbf{r}^{t}     &= \mathbf{y} - \mathbf{H}\mathbf{x}^{t-1} + \tfrac{K}{N}\,\bigl\langle \eta_{t-1}'(\bar{\mathbf{x}}^{t-1})\bigr\rangle\,\mathbf{r}^{t-1}, \label{eq:amp_onsager} \\
  \bar{\mathbf{x}}^t   &= \mathbf{x}^{t-1} + \mathbf{H}\herm \mathbf{r}^t, \label{eq:amp_cavity}\\
  \mathbf{x}^{t}     &= \eta_t\bigl(\bar{\mathbf{x}}^t; \bar{\sigma}_t^2\bigr), \label{eq:amp_denoise}
\end{align}
\end{subequations}
where $\langle \eta_{t-1}'(\bar{\mathbf{x}}^{t-1})\rangle \triangleq \tfrac{1}{K}\sum_{k=1}^K \partial\eta_{t-1}/\partial\bar{x}_k$ is the average denoiser divergence.

The Onsager term $\tfrac{K}{N}\,\bigl\langle \eta_{t-1}'(\bar{\mathbf{x}}^{t-1})\bigr\rangle\,\mathbf{r}^{t-1}$ in~\eqref{eq:amp_onsager} cancels, to leading order, the self-feedback that $\mathbf{H}\herm\mathbf{H}$ injects into the cavity statistic~\eqref{eq:amp_cavity}, which is precisely what decouples the vector problem into scalar ones.
This decoupling is made precise by the \ac{AMP} \ac{SE} theorem~\cite{Bayati2011,Javanmard2013}, which we recall here as it underlies the scalar denoiser developed throughout the paper. 
For \ac{iid} Gaussian $\mathbf{H}$ and any pseudo-Lipschitz denoiser, in the large-system limit ($N,K\to\infty$, $K/N\to\alpha\in(0,1)$) each coordinate of the cavity statistic~\eqref{eq:amp_cavity} produced by~\eqref{eq:amp_onsager} converges to a scalar \ac{AWGN} observation, given by
\vspace{-1ex}
\begin{equation}
  \label{eq:effective_channel}
  \vspace{-1ex}
  \bar{x}_k^t \;\stackrel{d}{=}\; x_k + \bar{z}_k^t,
\end{equation}
where $\bar{z}_k^t$ converges in distribution to $\CN\!\bigl(0,\bar{\sigma}_t^2\bigr)$ and is asymptotically independent of $x_k$, with effective variance fixed self-consistently by the \ac{SE} recursion
\vspace{-1ex}
\begin{subequations}
\begin{equation}
  \label{eq:se_intro}
  \bar{\sigma}_t^2 = \sigma_z^2 + \tfrac{K}{N}\,\mathrm{MSE}_t,
\vspace{-1ex}
\end{equation}
\begin{equation}
  \mathrm{MSE}_t = \E_{x,\tilde{z}}\!\bigl[\,\norm{x-\eta_{t-1}(x+\bar{\sigma}_{t-1}\tilde{z};\,\bar{\sigma}_{t-1}^2)}^2\bigr],
\vspace{-1ex}
\end{equation}
\end{subequations}
%
where $\tilde{z}\sim\CN(0,1)$ and $x$ is drawn from $\mathcal{M}$. 
\pagebreak

The full recursion, its fixed points, and its validity for the (mismatched) orbital denoisers are developed in Section~\ref{sec:se}.
This decoupling is the bridge from the $K$-dimensional model in~\eqref{eq:system} to inference on a \emph{single scalar channel}: receiver design reduces to constructing one per-symbol denoiser $\eta(\bar{x};\bar{\sigma}^2)$ for the effective channel $\bar{x}=x+\bar{z}$, $\bar{z}\sim\CN(0,\bar{\sigma}^2)$, $x\in\mathcal{M}$ (where we drop the iterative index $t$ and the symbol index $k$ for brevity) -- formalized as~\eqref{eq:scalar_awgn} below -- whose accuracy (i.e., \ac{MSE}), fed back through~\eqref{eq:se_intro}, sets the fixed-point error of the full system. 
Finally, we use the terms ``denoiser'' and ``detector'' interchangeably, since the scalar channel is equivalent to \ac{SISO} detection.

\begin{remark}
  The \ac{OD} scheme is in general applicable to any message passing algorithm that decouples the vector problem into scalar ones, including \ac{GaBP}~\cite{ShentaiISIT2008}, \ac{GAMP}~\cite{Rangan2011}, \ac{OAMP}~\cite{ma2017orthogonal}, \ac{MAMP}~\cite{liu2022memory}, \ac{EP}~\cite{TakeuchiTIT2020}, and \ac{VAMP}~\cite{ranganVAMP2019}.
  The choice of \ac{AMP} in this paper is for concreteness and use of the \ac{SE} framework, and the results can be extended to other algorithms with minor modifications.
\end{remark}

\vspace{-2ex}
\subsection{Ring Description}

The geometry of $\mathcal{M}$ is carried by its amplitudes. 
Let $R_1 < R_2 < \cdots < R_L$ be the $L$ distinct values of $\norm{s_m}$ over $s_m \in \mathcal{M}$; they partition the constellation into $L$ rings as
\begin{equation}
  \label{eq:ring_partition}
  \mathcal{M}_\ell \triangleq \{\, s_m \in \mathcal{M} \mid \norm{s_m} = R_\ell \,\},
  \qquad \ell = 1, \ldots, L,
\end{equation}
which are disjoint and exhaustive. 

Ring $\ell$ holds $M_\ell \triangleq \abs{\mathcal{M}_\ell}$ symbols, with $\sum_{\ell} M_\ell = M$, and carries the \emph{ring prior}
\vspace{-1ex}
\begin{equation}
  \label{eq:ring_prior}
  r_\ell = \sum_{m \,\mid\, s_m \in \mathcal{M}_\ell} p_m,
\end{equation}
which reduces to $r_\ell = M_\ell/M$ for equiprobable signaling.
\vspace{-1ex}
\begin{figure}[H]
  \centering
  \includegraphics[width=\columnwidth]{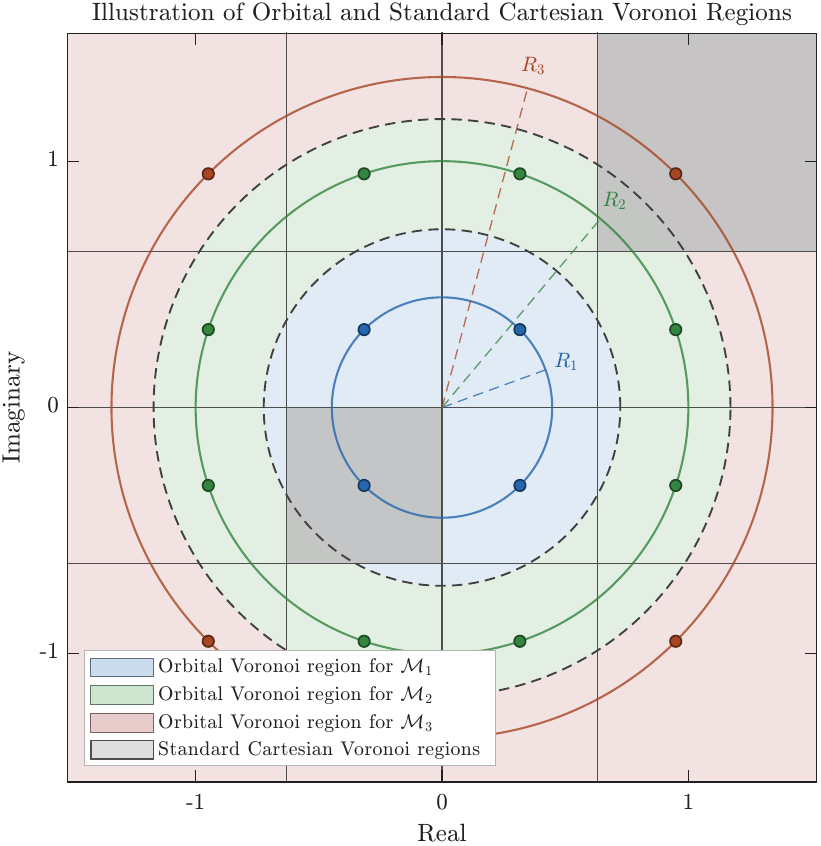}
  \vspace{-4ex}
  \caption{Orbital (ring-based) Voronoi regions of $16$-\ac{QAM} (colored) compared with representative Cartesian Voronoi regions (shaded in grey). The $M=16$ points collapse to $L=3$ amplitude rings.}
  \label{fig:constellation_rings}
  \vspace{-2ex}
\end{figure}

We refer to the pair $\{(R_\ell, r_\ell)\}_{\ell=1}^{L}$, with $\sum_{\ell=1}^L r_\ell = 1$, as the \emph{radial marginal} of the constellation: it is everything the geometry of $\mathcal{M}$ records once the phase is discarded. 
Thus, $\mathcal{M}$ supports two descriptions: the full discrete probability space $\{(s_m, p_m)\}_{m=1}^{M}$, which carries both amplitude and phase, and its radial marginal $\{(R_\ell, r_\ell)\}_{\ell=1}^{L}$, which carries amplitude alone.
The orbital prior of Section~\ref{sec:orbital} (Definition~\ref{def:orbital}) is built entirely on the latter: it retains the radial marginal exactly and relaxes everything else, which is the precise sense in which it is the least-committal, modulation-agnostic prior consistent with the ring geometry.
A visual comparison of the two is given in Fig.~\ref{fig:constellation_rings} for $16$-\ac{QAM}, where the $M=16$ points collapse to $L=3$ amplitude rings.

\vspace{-2ex}
\section{Posterior Compression and the Orbital Prior}
\label{sec:spatial_compression}

This section develops the two complementary pillars of the orbital framework, each removing a distinct bottleneck of the exact Bayesian denoiser for the effective channel~\eqref{eq:scalar_awgn}. 
First, we show that the exact discrete posterior state consumed by moment-based message passing (i.e., the posterior mean and variance), though nominally $M$-dimensional, resides \emph{losslessly} in a $3L$-dimensional subspace fixed by the ring geometry (Theorem~\ref{thm:ring_decomp}); this compresses the \emph{spatial memory} routed across the factor graph from $M$ to $3L$ real scalars, with no approximation whatsoever. 
Second, because computing this exact state still costs $\mathcal{O}(M)$ arithmetic, we introduce the \emph{orbital prior} (Definition~\ref{def:orbital}), the maximum-entropy relaxation that collapses the per-symbol \emph{computation} to $\mathcal{O}(L)$ in closed form, at a quantifiable approximation cost. The two pillars are orthogonal: exact compression governs what is \emph{stored and routed}, while the relaxation governs what is \emph{computed}; together they take the denoiser from $\mathcal{O}(M)$ memory and arithmetic to $3L$ memory and $\mathcal{O}(L)$ computation.

\vspace{-2ex}
\subsection{Exact Posterior Compression}
\label{sec:compression}

Consider the \ac{AWGN} observation of a transmitted symbol $x \in \mathcal{M}$, modeled as 
\vspace{-1ex}
\begin{equation}
  \vspace{-1ex}
  \bar{x} = x + \bar{z},
  \label{eq:scalar_awgn}
\end{equation}
where $\bar{z} \sim \CN(0, \bar{\sigma}^2)$ is a circularly symmetric complex Gaussian random variable with variance
\begin{equation}
  \bar{\sigma}^2 \triangleq \mathbb{E}\big[\norm{\bar{z}}^2\big].
\end{equation}

The channel~\eqref{eq:scalar_awgn} is precisely the effective per-symbol channel established in Section~\ref{sec:amp_decoupling} by \ac{AMP} decoupling~\eqref{eq:effective_channel}: $\bar{z}$ is the asymptotically Gaussian cavity noise, independent of the symbol $x$, and the effective variance $\bar{\sigma}^2 \ge \sigma_z^2$ -- the physical noise of~\eqref{eq:system} inflated by residual multi-user interference -- is fixed self-consistently by the \ac{SE} recursion of Section~\ref{sec:se}. 
The self-consistency is a feedback loop: a better denoiser lowers the \ac{MSE}, which by~\eqref{eq:se_intro} lowers $\bar{\sigma}^2$, which in turn sharpens the denoiser; the fixed point is where this loop closes, and it is there that denoiser quality is ultimately measured. Designing the receiver thus reduces to constructing a denoiser for~\eqref{eq:scalar_awgn}, which we now compress and then relax.

Assuming a discrete prior over $\mathcal{M}$, consistent with the uniform symbol distribution implied in Section~\ref{sec:system}, i.e.,
\begin{equation}
  \Pr(x = s_m) = 1/M, \quad s_m \in \mathcal{M},
\end{equation}
the likelihood function induced by \eqref{eq:scalar_awgn} is given by
\vspace{-1ex}
\begin{equation}
  \vspace{-1ex}
  p(\bar{x} \mid x = s_m) 
  = \frac{1}{\pi \bar{\sigma}^2} \exp\!\left(-\tfrac{\norm{\bar{x} - s_m}^2}{\bar{\sigma}^2}\right).
\end{equation}

Applying Bayes' rule, the posterior distribution becomes
\begin{equation}
  \label{eq:posterior_BOD}
  \Pr(x = s_m \mid \bar{x}) 
  = \frac{\exp\!\left(-\frac{\norm{\bar{x} - s_m}^2}{\bar{\sigma}^2}\right)}
         {\sum\limits_{m'=1}^M \exp\!\left(-\frac{\norm{\bar{x} - s_{m'}}^2}{\bar{\sigma}^2}\right)}.
\end{equation}

Defining the posterior probability weights
\vspace{-1ex}
\begin{equation}
  \label{eq:weights_standard}
  w_m \triangleq \Pr(x = s_m \mid \bar{x}), \quad \sum_{m=1}^M w_m = 1,
\vspace{-1ex}
\end{equation}
the exact Bayesian posterior mean (i.e., the \ac{MMSE} estimate) and variance are given by
\vspace{-1ex}
\begin{eqnarray}
  \label{eq:post_MMSE_mean}
  &\hat{x} = \mathbb{E}[x \mid \bar{x}] 
  = \sum\limits_{m=1}^M w_m \, s_m,& \\
  &\hat{\sigma}^2 = \Var[x \mid \bar{x}] 
  = \sum\limits_{m=1}^M w_m \, \norm{s_m - \hat{x}}^2.&
\end{eqnarray}

Equivalently, the posterior variance admits the second-moment representation
\vspace{-1ex}
\begin{equation}
  \label{eq:post_MMSE_var}
  \hat{\sigma}^2 = \sum_{m=1}^M w_m \norm{s_m}^2 - \norm{\hat{x}}^2.
\end{equation}

The computation of the Cartesian state pair $(\hat{x}, \hat{\sigma}^2)$ thus requires evaluating the full posterior weight vector $[w_1, \ldots, w_M]\trans \in [0,1]^M$. 
In dense inference graphs, propagating this full $M$-dimensional vector across edges induces the aforementioned \emph{spatial memory bottleneck}. 
Even when compressed to the Cartesian pair $(\hat{x}, \hat{\sigma}^2)$, the repeated evaluation of all $M$ components remains computationally prohibitive.

However, when the constellation possesses geometric structure, this dense representation exhibits significant redundancy due to shared radial symmetries. 
We now prove that the exact discrete posterior state inherently resides in a strictly lower-dimensional space, governed entirely by the radial geometry (the $L$ rings) of the constellation $\mathcal{M}$.

\subsubsection{\texorpdfstring{$M$}{M}-PSK: The Single Complex Sufficient Statistic}

For constant-modulus constellations, the dimensionality of the exact discrete belief state collapses intrinsically.

\begin{proposition}[Amplitude-Variance Identity for $M$-\ac{PSK}]
  \label{prop:psk_identity}
  For any constellation where $\norm{s_m}^2 = E_d, \forall m$, the exact posterior mean $\hat{x}$ and variance $\hat{\sigma}^2$ satisfy the constraint
  \begin{equation}
    \norm{\hat{x}}^2 + \hat{\sigma}^2 = E_d,
    \qquad \forall\, \bar{x} \in \Complex,\quad \bar{\sigma}^2 > 0.
    \label{eq:psk_identity}
  \end{equation}
\end{proposition}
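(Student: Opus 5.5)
The identity follows directly from the second-moment representation of the posterior variance in~\eqref{eq:post_MMSE_var}, so the plan is a short substitution rather than any analytic estimate. Fix an arbitrary observation $\bar{x}\in\Complex$ and noise level $\bar{\sigma}^2>0$. The posterior weights $w_m$ in~\eqref{eq:weights_standard} are then well defined and strictly positive, because each numerator in~\eqref{eq:posterior_BOD} is a finite positive exponential. They also sum to one by construction.

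First, I would rewrite~\eqref{eq:post_MMSE_var} as
\begin{equation*}
\norm{\hat{x}}^2+\hat{\sigma}^2=\sum_{m=1}^M w_m\norm{s_m}^2 .
\end{equation*}
The equivalence of this form with the central-moment definition of $\hat{\sigma}^2$ is the usual expansion $\sum_m w_m\norm{s_m-\hat{x}}^2=\sum_m w_m\norm{s_m}^2-2\,\mathrm{Re}\bigl(\hat{x}^*\sum_m w_m s_m\bigr)+\norm{\hat{x}}^2$. Substituting $\sum_m w_m s_m=\hat{x}$ from~\eqref{eq:post_MMSE_mean} reduces this to the stated second-moment form.

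Second, I would invoke the constant-modulus hypothesis $\norm{s_m}^2=E_d$ for every $m$. This pulls $E_d$ out of the sum, and normalization $\sum_m w_m=1$ then gives $E_d$ exactly.

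The only point needing care is that the argument uses nothing about the specific form of the weights beyond nonnegativity and normalization. It therefore holds for every $\bar{x}$ and every $\bar{\sigma}^2>0$, and would also hold for nonuniform priors $p_m$. I do not expect a genuine obstacle. The one thing to state explicitly is that the expansion of the central moment is valid for complex symbols, i.e., that the cross term equals $2\,\mathrm{Re}(\hat{x}^*\hat{x})=2\norm{\hat{x}}^2$.

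As an interpretive remark, I would close by noting what the identity implies: $\hat{\sigma}^2$ is a deterministic function of $\hat{x}$, so the posterior state is carried by the single complex scalar $\hat{x}$. This is what Proposition~\ref{prop:psk_sufficiency} later exploits.
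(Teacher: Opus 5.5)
Your proposal is correct and follows essentially the same route as the paper. The paper writes $\hat{\sigma}^2 = \E[\norm{x}^2\mid\bar{x}] - \norm{\hat{x}}^2$ via \eqref{eq:post_MMSE_var} and then uses $\norm{s_m}^2 = E_d$ together with $\sum_m w_m = 1$ to conclude that the conditional second moment equals $E_d$ regardless of the weights. Your explicit check of the central-to-raw moment expansion for complex symbols is a harmless addition that the paper takes as given.
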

\begin{proof}
  The Bayesian posterior variance can be expressed as $\hat{\sigma}^2 = \E[\norm{x}^2 \mid \bar{x}] - \norm{\hat{x}}^2$ from \eqref{eq:post_MMSE_var}. 
  Since all symbols satisfy $\norm{s_m}^2 = E_d$, we have $\E[\norm{x}^2 \mid \bar{x}] = \sum_m w_m E_d = E_d$ regardless of the weights, and the identity follows.
\end{proof}

\begin{proposition}[Sufficiency of the Posterior Mean for $M$-\ac{PSK}]
  \label{prop:psk_sufficiency}
  For $M$-\ac{PSK} with all symbols satisfying $\norm{s_m}^2 = E_d$, the complex posterior mean $\hat{x} \in \Complex$ serves as a strictly lossless encoding of the complete spatial state $(\hat{x}, \hat{\sigma}^2)$.
  Consequently, for any inference algorithm whose updates depend on the posterior weights $\{w_m\}_{m=1}^M$ exclusively through the first two moments, routing only the complex scalar $\hat{x}$ across the factor graph is mathematically equivalent to propagating the full discrete weight vector.
\end{proposition}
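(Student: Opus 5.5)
The plan is to derive the statement directly from Proposition~\ref{prop:psk_identity}. The claim has two parts. First, the pair $(\hat{x},\hat{\sigma}^2)$ is a deterministic function of $\hat{x}$ alone. Second, any moment-based update is unaffected when only $\hat{x}$ is routed.

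For the first part, define the map $\Psi:\Complex\to\Complex\times\Real_{\ge 0}$ by $\Psi(u)\triangleq(u,\,E_d-\norm{u}^2)$. By \eqref{eq:psk_identity}, $\Psi(\hat{x})=(\hat{x},\hat{\sigma}^2)$ for every $\bar{x}\in\Complex$ and every $\bar{\sigma}^2>0$. Two points need checking.
\begin{itemize}
\item $\Psi$ is well defined on the set of attainable posterior means. Since $\hat{x}=\sum_m w_m s_m$ is a convex combination of points on the circle of radius $\sqrt{E_d}$, we have $\norm{\hat{x}}\le\sqrt{E_d}$ by the triangle inequality. Hence the second coordinate is nonnegative, as a variance must be.
\item $\Psi$ is injective, because its first coordinate is the identity.
\end{itemize}
It follows that $\hat{x}\mapsto(\hat{x},\hat{\sigma}^2)$ is a bijection between the attainable set of $\hat{x}$ and the attainable set of states. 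This is exactly a lossless encoding. The constant $E_d$ is a structural, non-random quantity known at every node, so no side information has to be transmitted.

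For the second part, I would formalize ``updates depending on $\{w_m\}$ exclusively through the first two moments''. This means that every update can be written as $g\bigl(\hat{x},\hat{\sigma}^2,\theta\bigr)$ for some map $g$, where $\theta$ collects quantities available locally, such as $\bar{\sigma}^2$, the channel entries, and $E_d$. Such an update equals $(g\circ\Psi)(\hat{x},\theta)$. A receiving node that gets only $\hat{x}$ can therefore reconstruct $\hat{\sigma}^2=E_d-\norm{\hat{x}}^2$ and evaluate the identical update. By induction over message-passing iterations, and over edges within an iteration, every message and every output coincides with the one obtained by propagating the full vector $[w_1,\ldots,w_M]\trans$. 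The count is then immediate: two real scalars are routed instead of the three carried by $(\hat{x},\hat{\sigma}^2)$, or the $M$ entries of the weight vector.

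I expect the only real obstacle to be stating precisely the class of algorithms covered by ``mathematically equivalent''. The claim is an equality of the update outputs, not a recovery of $\{w_m\}$ themselves. For $M>2$ the weight vector is generally not recoverable from $\hat{x}$, and the proof must make clear that this is not needed. I would handle this by stating the factorization $g=\tilde g\circ\Psi$ explicitly as the hypothesis, and noting that the Onsager divergence used in \eqref{eq:amp_onsager} falls under it. That divergence is the derivative of the posterior mean with respect to $\bar{x}$, which equals $\hat{\sigma}^2/\bar{\sigma}^2$ by the standard Gaussian identity, so it is itself a function of $\hat{\sigma}^2$.
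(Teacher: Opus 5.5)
Your proposal is correct and follows the paper's own route, namely inverting Proposition~\ref{prop:psk_identity} as $\hat{\sigma}^2=E_d-\norm{\hat{x}}^2$. Your extra checks (the bound $\norm{\hat{x}}\le\sqrt{E_d}$, injectivity of $\Psi$, the induction over iterations, and the Onsager divergence $\hat{\sigma}^2/\bar{\sigma}^2$) only make explicit what the paper's two-line proof leaves implicit.

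One side remark is wrong, though it does not affect the argument. Since $\norm{s_m}^2$ is constant, the posterior~\eqref{eq:posterior_BOD} is $w_m\propto\exp\bigl(\mathrm{Re}(\theta^{*}s_m)\bigr)$ with $\theta=2\bar{x}/\bar{\sigma}^2$. This is a two-parameter exponential family whose mean map $\theta\mapsto\hat{x}$ is injective for $M\ge 3$, because the support is not collinear. Hence the actual weight vector \emph{is} recoverable from $\hat{x}$ for every $M$ (for $M=2$ directly from $\hat{x}=(w_1-w_2)s_1$). Non-recoverability holds only for arbitrary simplex vectors, and even then only for $M\ge 4$.
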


\begin{table}[H]
  \renewcommand{\arraystretch}{1.2}
  \caption{Exact ring count $L$ and spatial compression ratio $M/(3L)$ for representative constellations (normalized to $E_d=1$).}
  \vspace{-2ex}
  \label{tab:complexity}
  \centering
  \begin{tabular}{@{}lcccc@{}}
    \toprule
    Constellation & $M$ & $L$ & $3L$ & $M/(3L)$ \\
    \midrule
    $M$-PSK (any order)  & $M$   & 1   & 3   & $M/3$  \\
    16-QAM               & 16    & 3   & 9   & 1.78   \\
    64-QAM               & 64    & 9   & 27  & 2.37   \\
    256-QAM              & 256   & 32  & 96  & 2.67   \\
    1024-QAM             & 1024  & 109 & 327 & 3.13   \\
    \midrule
    16-APSK (DVB-S2)     & 16    & 2   & 6   & 2.67   \\
    32-APSK (DVB-S2)     & 32    & 3   & 9   & 3.56   \\
    64-APSK (DVB-S2x)    & 64    & 4   & 12  & 5.33   \\
    \bottomrule
  \end{tabular}\\
\flushleft
Note:  Values of $L$ for $M$-\ac{QAM} were obtained by exhaustive enumeration of $\{a^2+b^2 : a,b \in \{1,3,\ldots,2^n-1\}\}$, consistent with the Landau-Ramanujan scaling $L = \Theta(M/\sqrt{\ln M})$ derived in Proposition~\ref{prop:ring_scaling}.
\vspace{-2ex}
\end{table}

\begin{proof}
  By Proposition~\ref{prop:psk_identity}, the posterior variance is uniquely and deterministically recovered from the magnitude of the posterior mean via $\hat{\sigma}^2 = E_d - \norm{\hat{x}}^2$. Because the Cartesian state pair is entirely defined by the $2$ real degrees of freedom inherent in the complex scalar $\hat{x}$, the encoding is lossless.
\end{proof}

Propositions~\ref{prop:psk_identity} and \ref{prop:psk_sufficiency} establish that the exact discrete belief state for $M$-\ac{PSK} is fully characterized by the single complex variable $\hat{x}$. 
Compared to standard \ac{GaBP} and \ac{GAMP} implementations that route the Cartesian pair $(\hat{x}, \hat{\sigma}^2)$, thereby consuming $3$ real scalars per edge, routing only $\hat{x}$ requires exactly $2$ real scalars, corresponding to the two real degrees of freedom of $\hat{x} \in \Complex$.
This constitutes a strict one-third reduction in the spatial memory and routing payload required to evaluate the graph, achieved entirely without continuous-phase approximations.
The variance is simply reconstructed locally at any node requiring it via \eqref{eq:psk_identity}.

\subsubsection{\texorpdfstring{$M$}{M}-QAM/APSK -- Exact Multi-Ring Decomposition}

For $M$-\ac{QAM} and general \ac{APSK} constellations ($L \geq 2$), the conditional second moment can be partitioned across the distinct amplitude rings as
\vspace{-1ex}
\begin{equation}
  \E[\norm{x}^2 \mid \bar{x}] = \sum_{\ell=1}^L w_\ell R_\ell^2,
\vspace{-1ex}
\end{equation}
where $w_\ell$ denotes the posterior probability mass residing on the $\ell$-th ring.

Because symbols on different rings possess varying magnitudes and dynamic posterior probabilities, the second moment is no longer a deterministic function of the first moment. 
This dynamic fluctuation causes the single-message encoding of Proposition~\ref{prop:psk_sufficiency} to fail, as $w_\ell$ carries the requisite spatial information.

\begin{theorem}[Orbital Ring Decomposition]
  \label{thm:ring_decomp}
  The posterior belief state for any discrete multi-ring constellation can be exactly encoded by $L$ complex per-ring posterior-mean contributions $\hat{\mathbf{x}} \in \Complex^L$ and $L$ real ring probabilities $\mathbf{w} \in [0,1]^L$, defined as
  \vspace{-1ex}
  \begin{equation}
    \hat{x}_\ell \triangleq \sum_{m \in \mathcal{M}_\ell} w_m\,s_m,
    \label{eq:ring_mean_cbms}
  \vspace{-1ex}
  \end{equation}
  with
  \vspace{-1ex}
  \begin{equation}
    w_\ell \triangleq \sum_{m \in \mathcal{M}_\ell} w_m.
     \label{eq:ring_prob_MR}
  \vspace{-1ex}
  \end{equation}
  The exact Bayesian posterior mean (i.e., MMSE estimate) and variance are then
  \vspace{-1ex}
  \begin{eqnarray}
    &\hat{x} = \sum\limits_{\ell=1}^L \hat{x}_\ell,
    \label{eq:decode_x},& \\[-1ex]
    &\hat{\sigma}^2 = \sum\limits_{\ell=1}^L w_\ell R_\ell^2 - \norm{\hat{x}}^2.&
    \label{eq:decode_v}
    \vspace{-1ex}      
  \end{eqnarray}
\end{theorem}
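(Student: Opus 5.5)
The plan is a direct regrouping argument based on the ring partition~\eqref{eq:ring_partition}. Since the rings $\mathcal{M}_1,\ldots,\mathcal{M}_L$ are disjoint and exhaustive, any sum over $m=1,\ldots,M$ splits as $\sum_{\ell=1}^L\sum_{m\in\mathcal{M}_\ell}$. Applying this to the posterior mean~\eqref{eq:post_MMSE_mean} gives $\hat{x}=\sum_\ell\sum_{m\in\mathcal{M}_\ell}w_m s_m=\sum_\ell \hat{x}_\ell$ by the definition~\eqref{eq:ring_mean_cbms}. This is~\eqref{eq:decode_x}.

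For the variance I will start from the second-moment form~\eqref{eq:post_MMSE_var} and split its first term the same way. On ring $\ell$ every point has modulus $R_\ell$, so $\norm{s_m}^2=R_\ell^2$ is constant there. Pulling it out of the inner sum gives
\[
\sum_m w_m\norm{s_m}^2=\sum_\ell R_\ell^2\sum_{m\in\mathcal{M}_\ell}w_m=\sum_\ell w_\ell R_\ell^2 ,
\]
using~\eqref{eq:ring_prob_MR}. Subtracting $\norm{\hat{x}}^2$, with $\hat{x}$ already expressed through the $\hat{x}_\ell$, yields~\eqref{eq:decode_v}. This is the multi-ring analogue of Proposition~\ref{prop:psk_identity}, where the constant $E_d$ is replaced by a $w_\ell$-weighted mixture of the $R_\ell^2$.

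For the encoding claim, note that $(\hat{x},\hat{\sigma}^2)$ is obtained as a fixed deterministic function of $(\hat{\mathbf{x}},\mathbf{w})\in\Complex^L\times[0,1]^L$. The radii $R_\ell$ are structural constants of $\mathcal{M}$ and are not routed. So $2L+L=3L$ real scalars reproduce the exact Cartesian state, with no approximation. I will also record that $w_\ell\in[0,1]$ and $\sum_\ell w_\ell=1$, both of which follow from~\eqref{eq:weights_standard}.

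No step is really an obstacle, since everything is finite regrouping. The only delicate point is interpretive. The statement ``exactly encoded'' should be read as sufficiency for the first two posterior moments, which is what moment-based message passing consumes; it is not a claim of injectivity on the full weight vector. I will state this scope explicitly, in the same sense as Proposition~\ref{prop:psk_sufficiency}.
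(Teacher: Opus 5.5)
Your proposal is correct and follows the paper's proof: you split the sums over the disjoint ring partition to obtain the mean, pull out the constant $R_\ell^2$ on each ring to obtain the second moment, and subtract $\norm{\hat{x}}^2$. Your scoping remark, that ``exactly encoded'' means sufficiency for the first two posterior moments rather than recovery of the full weight vector, matches the caveat the paper states immediately after the theorem.
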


\vspace{-2ex}
\begin{proof}
  The rings $\{\mathcal{M}_\ell\}$ form a disjoint partition of $\mathcal{M}$. 
  Therefore, we have that
  \vspace{-1ex}
  \begin{equation}
    \hat{x} = \sum_{m=1}^M w_m s_m = \sum_{\ell=1}^L \sum_{m \in
    \mathcal{M}_\ell} w_m s_m = \sum_{\ell=1}^L \hat{x}_\ell.
  \vspace{-1ex}
  \end{equation}

  For the second moment, since all $s_m \in \mathcal{M}_\ell$ satisfy the condition $\norm{s_m}^2 = R_\ell^2$, we have that
  \vspace{-1ex}
  \begin{equation}
  \hspace{-2ex}
    \E[\norm{x}^2 \mid \bar{x}] \!=\!\! \sum_{m=1}^M\!\! w_m \norm{s_m}^2 \!=\!\!
    \sum_{\ell=1}^L \!R_\ell^2\!\!\! \sum_{m \in \mathcal{M}_\ell} \!\!\!\!w_m \!\!=\!\! \sum_{\ell=1}^L w_\ell R_\ell^2,
  \end{equation}
  and \eqref{eq:decode_v} follows from $\hat{\sigma}^2 = \E[\norm{x}^2 \mid
  \bar{x}] - \norm{\hat{x}}^2$.
\end{proof}
\vspace{-1ex}

The compressed state $(\hat{\mathbf{x}}, \mathbf{w})$
requires exactly $3L$ real scalars ($2L$ from the $L$ complex per-ring means $\hat{x}_\ell$, one each from the $L$ ring masses $w_\ell$), and is \emph{the} canonical compressed state for any inference algorithm whose updates depend on the posterior weights $\{w_m\}$ only through the first two moments (as in \ac{AMP}): it reconstructs the \ac{MMSE} mean and variance exactly via~\eqref{eq:decode_x}--\eqref{eq:decode_v}, though not the full discrete posterior when $\abs{\mathcal{M}_\ell} > 1$.
Table~\ref{tab:complexity} lists
exact ring counts and compression ratios for standard constellations.

\begin{remark}[Degenerate Limit of Canonical Compressed State]
\label{rem:canonical_degenerate}
The canonical compressed state $(\hat{\mathbf{x}}, \mathbf{w})$ of Theorem~\ref{thm:ring_decomp} represents the most general $3L$-dimensional encoding of the exact posterior. 
At high \ac{SNR}, this state undergoes a complete collapse that connects the spatial compression result of Section~\ref{sec:spatial_compression} to the $\mathcal{O}(1)$ \ac{OPD} derived later in Section~\ref{sec:lpd}: the soft ring probabilities $w_\ell \to \mathbf{1}[\ell = \ell^*]$ (a single hard ring selection), so the dominant per-ring contribution collapses to the nearest constellation symbol, $\hat{x}_{\ell^*} \to s_{m^*}$, whose phase $\angle s_{m^*} \to \angle\bar{x}$ as $\bar{\sigma}^2 \to 0$; the orbital denoisers realize the same limit through their amplitude-shrinkage factor $A(\kappa_{\ell^*}) \to 1$, giving $R_{\ell^*} e^{j\angle\bar{x}}$.
In this limit the full $3L$-dimensional state degenerates to the ring index $\ell^*$ (equivalently its known radius $R_{\ell^*}$) and the phase $\angle\bar{x}$, two per-observation scalars, which is the irreducible minimum for any denoiser satisfying both a ring-constrained output and non-trivial estimation (Proposition~\ref{prop:lpd_irreducible}).
The \ac{OPD} therefore represents not merely a computational shortcut but the geometrically inevitable endpoint of the orbital decomposition: the point at which no denoiser meeting the ring-output and non-trivial-estimation conditions of Proposition~\ref{prop:lpd_irreducible} can compress the state further.
\end{remark}

While Table~\ref{tab:complexity} shows the compression ratio $M/(3L)$ increasing across constellation orders, the finite examples alone cannot reveal whether this gain saturates or grows without bound as $M \to \infty$; since the $3L$-scalar encoding is worthwhile only insofar as $L$ stays far below $M$, the asymptotic value of the ring decomposition is governed entirely by the growth rate of $L$.
The following proposition makes the scaling of $L$ for square $M$-QAM precise via the Landau--Ramanujan theorem.

\vspace{-1ex}
\begin{proposition}[Ring Count Scaling for Square $M$-\ac{QAM}]
  \label{prop:ring_scaling}
  For square $M$-\ac{QAM} with $M = 4^n$ ($n \geq 1$), the number of distinct amplitude rings satisfies
  \begin{equation}
    L \;=\; \Theta\!\left(\tfrac{M}{\sqrt{\ln M}}\right), 
    \qquad M \to \infty.
    \label{eq:ring_scaling}
  \end{equation}

\begin{figure}[H]
  \centering
  \includegraphics[width=\columnwidth]{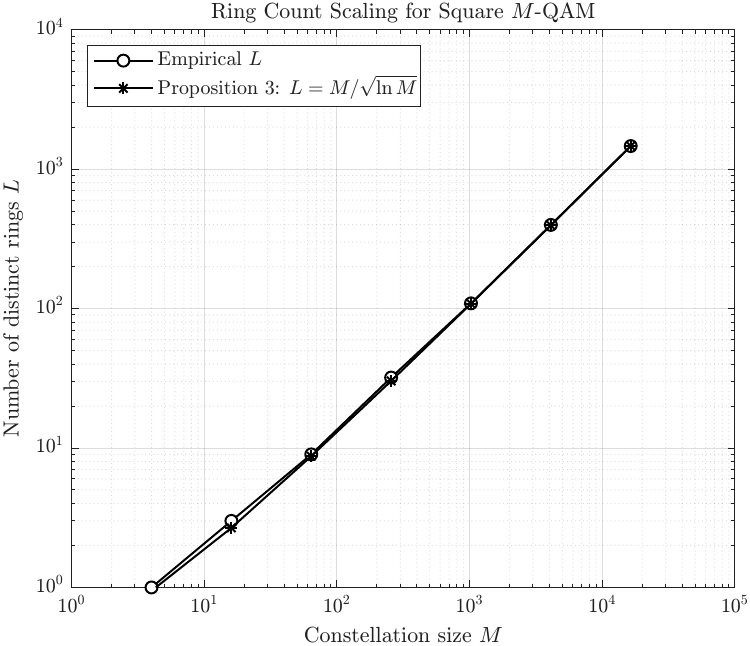}
  \vspace{-4ex}
  \caption{Validation of Proposition~\ref{prop:ring_scaling} via direct exhaustive enumeration of $\{a^2+b^2: a,b\in\{1,3,\ldots,2^n-1\}\}$ for square $M$-\ac{QAM} up to $M = 16384$.}
  \label{fig:Prop_3_L}
  \vspace{1ex}
  \includegraphics[width=\columnwidth]{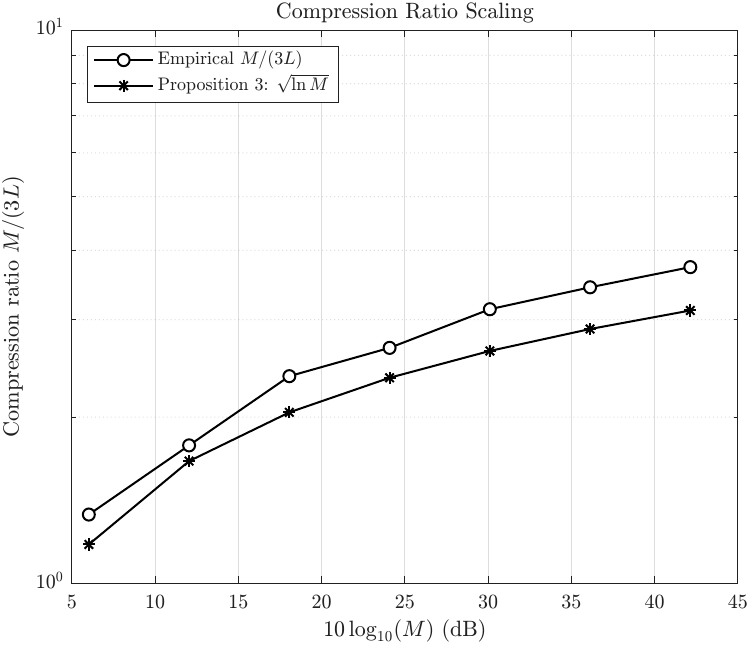}
  \vspace{-4ex}
  \caption{Validation of the asymptotic compression ratio $M/(3L) = \Theta\!\left(\sqrt{\ln M}\right)$ for square $M$-\ac{QAM} up to $M = 16384$.}
  \label{fig:Prop_3_M3L}
  \vspace{-2ex}
\end{figure}
  
  Consequently, the spatial compression ratio satisfies $M/(3L) = \Theta\bigl(\sqrt{\ln M}\bigr)$, confirming (slowly) unbounded growth with $M$ for square $M$-\ac{QAM}.
\end{proposition}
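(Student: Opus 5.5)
The plan is to turn the ring count into a counting problem for sums of two squares, and then read off both bounds from the Landau--Ramanujan theorem. Write $K\triangleq 2^n=\sqrt{M}$. The square $M$-QAM points, up to a common scaling that does not change the number of distinct radii, are $a+jb$ with $a,b\in\{\pm1,\pm3,\ldots,\pm(K-1)\}$. So $L$ equals the number of distinct values of $a^2+b^2$ with $a,b$ odd and $1\le a,b\le K-1$; this is also the enumeration behind Table~\ref{tab:complexity}.

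The key algebraic step is the parity-preserving change of variables
\begin{equation*}
\tfrac{a^2+b^2}{2}=\Bigl(\tfrac{a+b}{2}\Bigr)^2+\Bigl(\tfrac{a-b}{2}\Bigr)^2 .
\end{equation*}
For odd $a,b$ the integers $c=(a+b)/2$ and $d=(a-b)/2$ have opposite parity, so $m\triangleq (a^2+b^2)/2$ is an odd integer that is a sum of two squares. Conversely, take any odd $m=c^2+d^2$, so $c+d$ is odd and $c\neq d$. Setting $a=|c+d|$ and $b=|c-d|$ gives odd positive integers with $a^2+b^2=2m$. Hence $a^2+b^2\mapsto m$ is a bijection between ring radii (squared) and a subset of odd sums of two squares.

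\emph{Upper bound.} Every ring gives an odd sum of two squares $m\le (K-1)^2<M$. Let $B(X)$ count the integers $\le X$ that are sums of two squares. Then $L\le B(M)=\Theta(M/\sqrt{\ln M})$ by Landau--Ramanujan, $B(X)\sim C X/\sqrt{\ln X}$.

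\emph{Lower bound.} Take any odd sum of two squares $m\le (K-1)^2/2$. The construction above yields $a^2+b^2=2m\le (K-1)^2$, which forces $a,b\le K-1$, so the pair lies inside the QAM box and produces a ring. It remains to show that the odd sums of two squares up to $X$ are a positive fraction of all sums of two squares. Since $n$ is a sum of two squares if and only if $2n$ is, every such integer is $2^k m$ with $m$ odd and itself a sum of two squares. The count of odd ones is therefore $B(X)-B(X/2)\sim C(1-\tfrac12)X/\sqrt{\ln X}$. With $X=(K-1)^2/2=\Theta(M)$ and $\ln X\sim\ln M$, this gives $L=\Omega(M/\sqrt{\ln M})$.

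Combining the two bounds proves \eqref{eq:ring_scaling}. The compression-ratio statement then follows immediately, since $M/(3L)=\Theta(\sqrt{\ln M})$. The only delicate point is the lower bound: one must make sure the box constraint $a,b\le K-1$ does not discard a non-negligible set of representable values, and that restricting to odd $m$ loses only a constant factor. Both are handled by using the disc of radius $K-1$, which lies inside the box, and by the identity $B(X)-B(X/2)$. Everything else reduces to citing Landau--Ramanujan.
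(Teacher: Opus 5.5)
Your proof is correct and follows the paper's argument. Both reduce $L$ to a count of integers representable as sums of two squares, bound it above by all such values below the largest squared radius, bound it below by the values whose representation the inscribed disc forces into the QAM box, and close with Landau--Ramanujan. The one difference is in the restriction step: the paper works directly with $s=a^2+b^2\equiv 2 \pmod 8$ and cites a congruence-class version of Landau--Ramanujan, whereas your halving map $s\mapsto s/2$ onto odd sums of two squares, with the exact count $B(X)-B(X/2)\sim \tfrac12 B(X)$, derives that restricted density from the standard theorem and makes the argument self-contained.
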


\vspace{-1.5ex}
\begin{proof}
\proofref{app:ring_scaling}
\end{proof}
\vspace{-1.5ex}

\begin{corollary}[Compression Ratio Scaling]
\label{cor:compression_scaling}
Under the conditions of Proposition~\ref{prop:ring_scaling}, we have
\begin{enumerate}[(i)]
\item \emph{Square $M$-QAM:}
  $M/(3L) = \Theta(\sqrt{\ln M})$, growing without bound but
  logarithmically slowly.
\item \emph{APSK with fixed $L$:}
  When $L = \mathcal{O}(1)$ independently of $M$ (e.g., $L\le4$ for
  DVB-S2x), the compression ratio is $\Theta(M)$, linear in constellation
  cardinality.
\end{enumerate}
In both cases the $(\hat{\mathbf{x}},\mathbf{w})$ encoding reduces the $\mathcal{O}(2^b)$ routing payload (for $b$-bit spectral efficiency $M=2^b$) -- by a $\sqrt{\ln M}$ factor for square \ac{QAM}, and to $\mathcal{O}(1)$ per symbol for fixed-$L$ \ac{APSK} -- while preserving the exact \ac{MMSE} estimate and its variance (the first two posterior moments).
\vspace{-1ex}
\end{corollary}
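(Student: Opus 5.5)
Part (i) follows directly from Proposition~\ref{prop:ring_scaling}, and part (ii) reduces to elementary counting. Throughout, exactness is supplied by Theorem~\ref{thm:ring_decomp}. That theorem shows the pair $(\hat{\mathbf{x}},\mathbf{w})$ of $3L$ real scalars reconstructs $\hat{x}$ and $\hat{\sigma}^2$ exactly via \eqref{eq:decode_x}--\eqref{eq:decode_v}, for \emph{any} discrete multi-ring constellation. So the only content left to prove is the scaling of the ratio $M/(3L)$ in the two regimes.

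For (i), I would invoke Proposition~\ref{prop:ring_scaling} to get $L=\Theta(M/\sqrt{\ln M})$. This means there exist constants $0<c_1\le c_2$ and an $M_0$ such that $c_1 M/\sqrt{\ln M}\le L\le c_2 M/\sqrt{\ln M}$ for all $M=4^n\ge M_0$. Inverting gives $\sqrt{\ln M}/(3c_2)\le M/(3L)\le \sqrt{\ln M}/(3c_1)$, hence $M/(3L)=\Theta(\sqrt{\ln M})$. Since $\sqrt{\ln M}\to\infty$, the ratio is unbounded. It grows only logarithmically slowly, because $\sqrt{\ln M}=o(M^\epsilon)$ for every $\epsilon>0$.

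For (ii), I would assume $1\le L\le L_{\max}$ with $L_{\max}$ independent of $M$ (for instance $L_{\max}=4$ for DVB-S2x). Then $M/(3L_{\max})\le M/(3L)\le M/3$, so $M/(3L)=\Theta(M)$. The routing payload is $3L\le 3L_{\max}=\mathcal{O}(1)$ per symbol, compared with the $\mathcal{O}(M)=\mathcal{O}(2^b)$ payload of the full weight vector when $M=2^b$.

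The closing sentence of the corollary then just restates the two ratios in terms of $b$. The reduction factor is $\sqrt{\ln M}=\sqrt{b\ln 2}$ for square QAM, and the payload is reduced to a constant for fixed-$L$ APSK. In both cases Theorem~\ref{thm:ring_decomp} guarantees that the first two posterior moments are preserved.

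There is no genuine obstacle, since all the number-theoretic difficulty (Landau--Ramanujan) lives inside Proposition~\ref{prop:ring_scaling}. The one point needing care is interpretive. $\Theta$ must be read along the subsequence $M=4^n$ in (i). In (ii), the "fixed $L$" hypothesis must be made uniform, meaning a bound independent of $M$, so that the constants do not depend on $M$.
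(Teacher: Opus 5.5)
Your proposal is correct and takes the same route as the paper. Part~(i) inverts the $\Theta$ bound of Proposition~\ref{prop:ring_scaling}, and part~(ii) uses the fact that a uniformly bounded $L$ makes the denominator $3L$ a constant; your explicit constants, your reading of $\Theta$ along $M=4^n$, and your appeal to Theorem~\ref{thm:ring_decomp} for preservation of the first two moments spell out what the paper's two-line proof leaves implicit.
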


\pagebreak
\begin{proof}
Part~(i) restates the compression ratio recorded in Proposition~\ref{prop:ring_scaling}: dividing $M$ by $3L$ with $L = \Theta\bigl(M/\sqrt{\ln M}\bigr)$ from~\eqref{eq:ring_scaling} gives $M/(3L) = \Theta\bigl(\sqrt{\ln M}\bigr)$. For part~(ii), $L = \mathcal{O}(1)$ independently of $M$ makes the denominator $3L$ a constant, whence $M/(3L) = \Theta(M)$.
\vspace{-1ex}
\end{proof}

Figures \ref{fig:Prop_3_L} and \ref{fig:Prop_3_M3L} empirically validate the scaling of $L$ and $M/(3L)$ for square $M$-\ac{QAM} up to $M = 16384$, confirming the theoretical predictions of Proposition~\ref{prop:ring_scaling}.

By mathematically decomposing the posterior state into $L$ complex per-ring contributions $\hat{\mathbf{x}}$ and $L$ real ring probabilities $\mathbf{w}$, the exact spatial state is compressed to exactly $3L$ real scalars.
Having established that the true Bayesian posterior is fundamentally decoupled from the $\mathcal{O}(M)$ dimensional space, we can now formulate the iterative algorithms natively over this compressed $3L$-dimensional state.

\vspace{-2ex}
\subsection{The Orbital Prior: Definition and Wasserstein Mismatch}
\label{sec:orbital}

Theorem~\ref{thm:ring_decomp} establishes that the exact Bayesian posterior is fully characterized by the $3L$-dimensional state $(\hat{\mathbf{x}}, \mathbf{w})$.
However, \emph{computing} this state via~\eqref{eq:ring_mean_cbms} and~\eqref{eq:ring_prob_MR} still requires an $\mathcal{O}(M)$ iteration over all discrete constellation points, since the ring sums $w_\ell = \sum_{m \in \mathcal{M}_\ell} w_m$ themselves involve the full $M$-dimensional weight vector $[w_1, \ldots, w_M]$ of~\eqref{eq:weights_standard}.
The orbital prior introduced below resolves this by relaxing the phase on each ring to a continuous distribution, collapsing the per-ring computation to a single Bessel projection.

\vspace{-1ex}
\begin{definition}[Orbital Prior]
  \label{def:orbital}
  The orbital prior relaxes the discrete $L$-ring constellation $\mathcal{M}$ into a continuous mixture of uniform circular shells.
  The radial structure is preserved exactly while the phase on each ring is treated as uniformly distributed, given by
  \vspace{-2ex}
  \begin{equation}
    q(x) \triangleq \sum_{\ell=1}^L r_\ell\,
    \overbrace{\frac{\delta\!\left(\norm{x} - R_\ell\right)}{2\pi R_\ell}}^{\displaystyle\triangleq\; q_{\ell}(x)},
    \label{eq:cbm_prior}
  \vspace{-1ex}
  \end{equation}
  where $\delta(\cdot)$ is the one-dimensional Dirac delta and $r_\ell=M_\ell/M$ with $\sum_\ell r_\ell=1$. Densities are taken in the distributional sense w.r.t. the polar measure $\mathrm dx=r \mathrm dr\mathrm d\theta$, under which $1/(2\pi R_\ell)$ normalizes each shell, $\int_{\mathbb C}q_\ell\mathrm dx=1$. Hence $q(x)$ is a normalized mixture (not a sum) of the shells, $\int_{\mathbb{C}} q(x)\,\mathrm{d}x = \sum_{\ell} r_\ell \int_0^\infty \int_{-\pi}^{\pi} \frac{\delta(r - R_\ell)}{2\pi R_\ell}\,r\,\mathrm{d}r\,\mathrm{d}\theta = 1$.
\end{definition}

We hereafter write $p$ for the true discrete constellation prior of the transmitted symbol $x$, and $q$ for the orbital prior of Definition~\ref{def:orbital} (with density $q(x)$ given by~\eqref{eq:cbm_prior}). Paralleling~\eqref{eq:cbm_prior}, the true prior is the discrete distribution with density
\vspace{-1ex}
\begin{equation}
  \label{eq:true_prior}
  p(x) \triangleq \sum_{m=1}^{M} p_m\, \delta_2(x - s_m),
\vspace{-1ex}
\end{equation}
where $\delta_2(\cdot)$ is here the two-dimensional Dirac delta on $\Complex$, taken with respect to the polar area measure $\mathrm{d}x = r\,\mathrm{d}r\,\mathrm{d}\theta$.

By construction, $q$ preserves the radial marginal $\{(R_\ell, r_\ell)\}$ of $p$, and with it the symbol's first two moments (zero mean and average energy $\sum_{\ell} r_\ell R_\ell^2 = E_d$), while relaxing the discrete phase to a continuum. 
An estimator built on $q$ is therefore \emph{matched} to the orbital model but \emph{mismatched} to the data-generating prior $p$; this $p$-versus-$q$ mismatch, a change of prior, not of the symbol $x$, is the framework's sole approximation, whose cost is quantified in the sections that follow.
The uniform-phase choice in~\eqref{eq:cbm_prior} is not arbitrary: it is the unique \emph{maximum-entropy} completion of the radial marginal $\{(R_\ell, r_\ell)\}_{\ell=1}^{L}$, which is precisely what makes the orbital prior the least-committal continuous relaxation consistent with the constellation geometry.

\begin{proposition}[Maximum-Entropy Characterization of the Orbital Prior]
\label{prop:maxent}
Among all distributions on $\Complex$ whose radial marginal is $\{(R_\ell, r_\ell)\}_{\ell=1}^{L}$, the orbital prior~\eqref{eq:cbm_prior} uniquely maximizes the mixed discrete-continuous entropy
\vspace{-1ex}
\begin{equation}
\label{eq:mixed_entropy}
  J \;\triangleq\; H(\norm{x}) \;+\; h(\angle x \mid \norm{x}),
\vspace{-1ex}
\end{equation}
where $H(\norm{x})$ is the discrete radial entropy and $h(\angle x \mid \norm{x})$ is the conditional differential entropy of the phase on $[-\pi,\pi)$.
\end{proposition}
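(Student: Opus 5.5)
The plan is to split the objective $J$ by the chain rule and note that the constraint fixes one of the two terms. Any admissible distribution has radial marginal exactly $\{(R_\ell, r_\ell)\}_{\ell=1}^{L}$, so it is supported on the union of the $L$ circles $\norm{x}=R_\ell$. The first term $H(\norm{x}) = -\sum_\ell r_\ell \log r_\ell$ is therefore identical for every feasible distribution. Maximizing $J$ thus reduces to maximizing the conditional phase entropy
\begin{equation*}
  h(\angle x \mid \norm{x}) = \sum_{\ell=1}^L r_\ell\, h(\angle x \mid \norm{x}=R_\ell).
\end{equation*}

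Since every $r_\ell>0$ and the conditional phase laws on different rings are unconstrained and independent of one another, the problem decouples into $L$ separate problems. On ring $\ell$ we maximize the differential entropy of a phase density $g_\ell$ on $[-\pi,\pi)$.

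For each ring we use the Gibbs inequality. Let $u(\theta)=1/(2\pi)$ be the uniform density on the circle. For any density $g_\ell$ we have
\begin{equation*}
  h(g_\ell) = \log(2\pi) - D(g_\ell\,\|\,u) \le \log(2\pi),
\end{equation*}
with equality if and only if $g_\ell = u$ almost everywhere, by strict positivity of relative entropy off the diagonal. Conditional phase laws that are not absolutely continuous, such as the discrete phases of $p$ itself, have $h=-\infty$ under the usual convention, so they never compete.

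Equality therefore holds on every ring simultaneously if and only if each conditional phase law is uniform. Recombining the rings, the maximizer is the distribution that places mass $r_\ell$ on ring $\ell$ with uniform phase. Writing it in the polar measure, this is exactly $\sum_\ell r_\ell\,\delta(\norm{x}-R_\ell)/(2\pi R_\ell)$, which is~\eqref{eq:cbm_prior} by the normalization checked in Definition~\ref{def:orbital}. The maximal value is $J^\star = H(\{r_\ell\}) + \log(2\pi)$.

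The main obstacle is not the inequality but making the mixed entropy well defined. I would formalize $J$ through the disintegration of the feasible measure along $\norm{x}$: the radial part is a discrete measure on $\{R_\ell\}$, and the conditional phase laws are measures on the circle. Entropy is taken relative to the counting measure for the radius and the Lebesgue measure $\mathrm d\theta$ for the phase, not relative to the planar measure $r\,\mathrm dr\,\mathrm d\theta$. Once $J$ is defined this way, uniqueness holds almost everywhere, equivalently as measures, which is the sense intended in the statement.
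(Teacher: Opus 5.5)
Your proof is correct and follows essentially the same route as the paper. The radial constraint fixes the discrete radial entropy, the conditional phase entropy splits ring by ring, each ring is uniquely maximized by the uniform phase, and the result is transcribed to the polar area measure to recover~\eqref{eq:cbm_prior}. The difference is minor: you certify the per-ring step with Gibbs' inequality $h(g_\ell)=\ln(2\pi)-D(g_\ell\,\|\,u)$, whereas the paper cites the bounded-support maximum-entropy principle and strict concavity of $t\mapsto -t\ln t$. Your remarks on singular phase laws ($h=-\infty$) and on defining $J$ by disintegration make explicit what the paper leaves implicit.
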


\vspace{-3ex}
\begin{proof}
Fixing the radial marginal fixes the discrete term $H(\norm{x}) = -\sum_\ell r_\ell\ln r_\ell$, so that maximizing $J$ reduces to maximizing the phase term $h(\angle x \mid \norm{x}) = \sum_\ell r_\ell\, h(\angle x \mid \norm{x}=R_\ell)$ ring by ring. 
On each ring the phase is supported on the bounded set $[-\pi,\pi)$ under the sole constraint of normalization, $\int_{-\pi}^\pi q_\ell(\theta)\,\mathrm d\theta = 1$. 
By the maximum-entropy principle on a bounded support~\cite{Jaynes1957}, the differential entropy $-\int q_\ell\ln q_\ell\,\mathrm d\theta$ is maximized (uniquely, by strict concavity of $u\mapsto-u\ln u$) by the uniform density $q_\ell(\theta)=1/(2\pi)$~\cite[Ch.~{12}]{CoverThomas2006}. Transcribing to the polar area element $\mathrm dx = r\,\mathrm dr\,\mathrm d\theta$, the mass $r_\ell$ on the shell $\norm{x}=R_\ell$ carries density $r_\ell/(2\pi R_\ell)$ relative to $\mathrm dx$, which is exactly~\eqref{eq:cbm_prior}.
\vspace{-1ex}
\end{proof}

The concentric-ring geometry of the orbital prior is not an artifact of our construction but a recurring feature of capacity-achieving inputs for amplitude-constrained and phase-uncertain Gaussian channels. For the quadrature Gaussian channel under simultaneous peak- and average-power constraints, the capacity-achieving input is discrete and supported on concentric circles~\cite{shamai_tit_1995}; for the noncoherent and partially coherent \ac{AWGN} channels the optimal input has uniform phase on a discrete set of radii~\cite{katz_tit_2004}; the same structure arises for discrete-time Rayleigh fading~\cite{abou_faycal_tit_2001} and in the duality bounds of Lapidoth and Moser~\cite{lapidoth_tit_2003}. The orbital prior may therefore be read as the relaxation that retains exactly the radial coordinate that these results identify as information-bearing, while discarding the coordinate they show carries a uniform distribution.
Proposition~\ref{prop:maxent} justifies the choice, by clarifying that the orbital prior imports no phase information beyond what the rings already fix, so any residual mismatch, quantified in Proposition~\ref{prop:cbm_error} below, is attributable solely to the geometry discarded by the relaxation, not to an incidental modeling choice. 
In turn, definition~\ref{def:orbital} immediately extends \ac{vMBP}~\cite{Suresh2026} beyond constant-modulus signals.
For $M$-\ac{PSK} ($L = 1$), the orbital prior preserves the amplitude exactly and relaxes only the phase, replacing the $M$ discrete angles by a continuous uniform circle; the resulting error is $\Theta(R_1/M)$ (Proposition~\ref{prop:cbm_error}), vanishing as $M \to \infty$.

Let us first motivate the metric in which this error is measured.
The Wasserstein-1 distance between two probability measures $\mu$ and $\nu$ on $\Complex$ is
\begin{equation}
\label{eq:w1_def}
W_1(\mu, \nu) \triangleq \inf_{\pi \in \Pi(\mu,\nu)} \E_{(x,y) \sim \pi}\bigl[\norm{x - y}\bigr],
\end{equation}
where $\Pi(\mu,\nu)$ is the set of all couplings (joint distributions with marginals $\mu$ and $\nu$)~\cite{villani2003}.

The intuition is that of an \emph{earth mover}: picture $\mu$ as a pile of sand and $\nu$ as a target arrangement of the same total mass; $W_1$ is the minimum total cost of rearranging one into the other, where moving a grain of mass costs its mass times the distance traveled.
This is precisely the right notion of discrepancy for our problem, for two reasons.
First, the familiar alternatives are blind here: the discrete prior $p_{\ell}$ and its continuous relaxation $q_{\ell}$ have \emph{disjoint supports} ($M_\ell$ points versus the full circle), so their \ac{KL} divergence is infinite and their total variation is maximal ($=1$) \emph{regardless} of how many points sit on the ring -- both metrics would incorrectly rate $1024$-\ac{PSK} as no better approximated by the uniform circle than $2$-\ac{PSK}.
The Wasserstein distance, on the other hand, sees the \emph{geometry}, as it charges only for how far the mass must move, such that densely populated rings are certified as nearly indistinguishable from their continuous relaxation.

Second, by Kantorovich--Rubinstein duality, $W_1(\mu,\nu) = \sup_{\mathrm{Lip}(g) \leq 1} |\E_\mu[g] - \E_\nu[g]|$, so that the distance is exactly the worst-case shift in the expectation of any Lipschitz observable, and estimation errors, posterior means, and the \ac{MSE} functionals of the \ac{SE} analysis are precisely such observables (Lipschitz on the bounded constellation support, with constants proportional to $R_L$).
This is the sense in which an optimal-transport distance is the right tool for the prior mismatch, as opposed to the \ac{KL} divergence or total variation, since both are blind to how far mass must move.
Theorem~\ref{thm:ot_bound} fixes the connection, bounding the fixed-point \ac{MSE} gap by the (Wasserstein-2) transport distance of the two priors.
For a single ring, the optimal rearrangement is clear: each symbol's mass $1/M_\ell$ must be smeared along its Voronoi arc of length $2\pi R_\ell / M_\ell$, so the typical grain travels a distance of order $R_\ell/M_\ell$, the scaling that the following proposition makes exact.

\begin{proposition}[Orbital Prior Approximation Quality for Uniform-Phase Rings]
  \label{prop:cbm_error}
  Let $p_{\ell}$ denote the discrete uniform distribution over $M_\ell$ symbols \emph{equally spaced in angle} on ring $\ell$ (e.g., $M$-\ac{PSK} or the DVB-S2 \ac{APSK} rings), and let $q_{\ell}$ be the continuous orbital prior on that ring.
  Under the Euclidean cost $c(x,y) = \norm{x-y}$, the exact Wasserstein-1 distance is
  \begin{equation}
    W_1\!\bigl(p_{\ell},\, q_{\ell}\bigr) =
    \frac{4 M_\ell R_\ell}{\pi} \!\left( 1 - \cos\frac{\pi}{2 M_\ell}
    \right) = \Theta\!\left(\frac{R_\ell}{M_\ell}\right).
    \label{eq:w1_cbm}
  \end{equation}
\end{proposition}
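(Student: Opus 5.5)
The plan is to prove \eqref{eq:w1_cbm} as a matching pair of bounds. An explicit transport plan gives the upper bound, and a single Lipschitz test function in the Kantorovich--Rubinstein dual gives the identical lower bound. Without loss of generality, rotate so that the points of ring $\ell$ are $s_m = R_\ell e^{j 2\pi m/M_\ell}$, $m=0,\ldots,M_\ell-1$. We write $R=R_\ell$ and $M'=M_\ell$. Note that the ring index and the symbols are irrelevant beyond this equal-angle structure.

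\emph{Upper bound (nearest-point coupling).} Partition the circle into the $M'$ angular Voronoi arcs $\{\theta : |\theta - 2\pi m/M'| \le \pi/M'\}$, each of $q_\ell$-mass exactly $1/M'$. Couple each arc to its centre $s_m$; this is a valid coupling in $\Pi(p_\ell,q_\ell)$ because the masses agree. A point at angular offset $\theta$ from its centre lies at chord distance $2R\sin(|\theta|/2)$. By symmetry, the cost of this coupling equals the average over one arc:
\begin{equation*}
\frac{M'}{\pi}\int_0^{\pi/M'} 2R\sin\!\left(\tfrac{\theta}{2}\right)\mathrm{d}\theta
= \frac{4M'R}{\pi}\left(1-\cos\frac{\pi}{2M'}\right).
\end{equation*}
Hence $W_1(p_\ell,q_\ell)$ is at most the right-hand side of \eqref{eq:w1_cbm}.

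\emph{Lower bound (duality).} Take $g(x) \triangleq \min_m \norm{x - s_m}$, the distance to the support of $p_\ell$. As a distance-to-a-set function it is $1$-Lipschitz. It satisfies $\E_{p_\ell}[g]=0$, and $\E_{q_\ell}[g]$ is exactly the integral computed above, because on each Voronoi arc the nearest support point is the arc's centre. Kantorovich--Rubinstein duality, $W_1=\sup_{\mathrm{Lip}(g)\le 1}|\E_\mu g-\E_\nu g|$ as recalled before the statement, then gives the reverse inequality, so \eqref{eq:w1_cbm} holds with equality.

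\emph{Order estimate.} Set $u=\pi/(2M')\in(0,\pi/2]$ and use the elementary bounds $u^2/2 - u^4/24 \le 1-\cos u \le u^2/2$. These yield
\begin{equation*}
c\,\frac{R}{M'} \;\le\; W_1 \;\le\; \frac{\pi R}{2M'}
\end{equation*}
for an absolute constant $c>0$ and all $M'\ge1$, with $W_1\sim \pi R/(2M')$ as $M'\to\infty$; this is $\Theta(R_\ell/M_\ell)$.

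The only step requiring care is the lower bound. Choosing the distance-to-support function is what makes it tight, since it is saturated precisely by the nearest-point coupling. I expect no real obstacle beyond checking that ties on arc boundaries form a $q_\ell$-null set.
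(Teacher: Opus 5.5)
Your proof is correct, and your evaluation of the nearest-point (arc-to-centre) coupling is the same computation the paper does. The two routes differ only in how optimality of that coupling is certified.

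The paper argues on the primal side. It invokes monotone rearrangement and then a cyclical-monotonicity ``uncrossing'' argument to exclude mass crossing arc boundaries. On the circle this still leaves a one-parameter family of non-crossing plans indexed by the cut position; the paper itself has to minimize over that position in Proposition~\ref{prop:w2_general}. So its claim that symmetry alone singles out the intra-arc plan is the step that is only sketched.

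Your dual certificate $g(y)=\min_m\norm{y-s_m}$ settles the lower bound in one line and avoids that issue. You do not even need Kantorovich--Rubinstein duality: under any coupling $x\in\{s_m\}$ almost surely, so $\norm{x-y}\ge g(y)$ pointwise and $\E_\pi\norm{x-y}\ge\E_{q_\ell}[g]$. In this pointwise form the argument works for any cost $c$, with $g_c(y)=\min_m c(s_m,y)$. It therefore also certifies the geodesic value of Remark~\ref{rem:geodesic_w1} and the squared-chord value~\eqref{eq:w2_ring}. For non-equidistributed rings it gives a lower bound that complements the upper bound of Corollary~\ref{cor:w1_qam}.

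What the primal framework buys in exchange is the extension to such rings. There the Voronoi arcs no longer carry mass $1/M_\ell$, so the nearest-point certificate is not tight. One must instead optimize over equal-mass arcs and the cut position, which is exactly what Proposition~\ref{prop:w2_general} does.
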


\begin{proof}
\proofref{app:cbm_error}
\end{proof}

\begin{remark}[Intrinsic Geodesic Cost]
\label{rem:geodesic_w1}

Since both $p_{\ell}$ and $q_{\ell}$ are supported on the same circle $\mathcal{S}^1(R_\ell)$, the discrepancy is purely angular, and one may replace the ambient Euclidean chord $2R_\ell\sin(\norm{\Delta\theta}/2)$ by the \emph{intrinsic} arc-length (geodesic) cost $c_{\mathrm{geo}} = R_\ell\norm{\Delta\theta}$.
The optimal coupling is unchanged (each Voronoi arc maps to its center), and the cost integral collapses to the elementary closed form
\begin{equation}
\label{eq:w1_geodesic}
W_1^{\mathrm{geo}}\bigl(p_{\ell},\, q_{\ell}\bigr) = \frac{\pi R_\ell}{2 M_\ell},
\end{equation}
free of any transcendental term and \emph{exactly} equal to the leading-order asymptotic of~\eqref{eq:w1_cbm}.

The $\Theta(R_\ell/M_\ell)$ scaling is thus self-evident (indeed an exact equality), and the pure angular metric $\norm{\Delta\theta}$ yields the radius-free $\pi/(2M_\ell)$, isolating the phase relaxation alone.

\begin{figure}[H]
  \centering
  \includegraphics[width=\columnwidth]{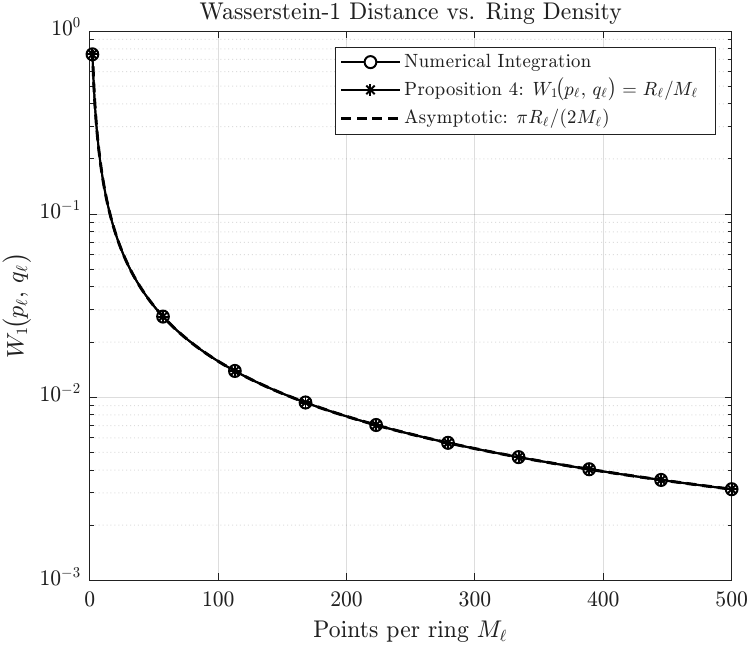}
  \vspace{-3ex}
  \caption{Validation of the exact Wasserstein-1 distance $W_1\!\bigl(p_{\ell},\, q_{\ell}\bigr)$ for uniform-phase rings as a function of the number of points $M_\ell$ on the ring.}
  \label{fig:Prop_4}
  \vspace{-2ex}
\end{figure}

As $\sin x \leq x$, the chord never exceeds the arc, so $W_1 \leq W_1^{\mathrm{geo}}$ with leading-order agreement.
We nonetheless retain the Euclidean cost throughout, since the physical estimation error $\norm{x - \hat{x}}$, and hence the optimal-transport-to-\ac{MMSE} and \ac{GMI} bounds built upon it, is Euclidean rather than geodesic.
\end{remark}

Figure~\ref{fig:Prop_4} empirically validates the exact Wasserstein-1 distance for uniform-phase rings as a function of $M_\ell$, confirming the $\Theta(R_\ell/M_\ell)$ scaling predicted by Proposition~\ref{prop:cbm_error}.

\begin{remark}[Validity for Standard Constellations]
  The exact distance~\eqref{eq:w1_cbm} characterizes the relaxation precisely: on a uniform-phase ring the geometric distortion grows linearly in the radius $R_\ell$ and inversely in the \emph{number} of points $M_\ell$, i.e.\ $W_1 = \Theta(R_\ell/M_\ell)$.
  It is therefore essentially exact for densely populated rings, notably high-order $M$-\ac{PSK}, whose single ring carries all $M_\ell = M$ points, so that $W_1 = \Theta(R/M) \to 0$.
  For rings with few points, such as the inner rings of DVB-S2 \ac{APSK} (where $M_\ell$ is as small as $4$), the per-ring distortion is bounded but \emph{not} negligible, of order $R_\ell/M_\ell$; there the orbital encoding is attractive primarily through its small ring count $L$ (Table~\ref{tab:complexity}).
\end{remark}

\begin{corollary}[Wasserstein Bound for Non-Uniform-Phase Rings]
\label{cor:w1_qam}
For a ring $\ell$ whose $M_\ell$ symbols are arranged with maximum angular half-gap $\vartheta_\ell \triangleq \tfrac{1}{2}\max_m\norm{\angle s_{m+1} - \angle s_m}$ and Voronoi-arc masses $\mu_m \triangleq v_m/(2\pi)$, where $v_m$ is the angular width of the nearest-symbol arc of $s_m$, the Wasserstein-1 distance satisfies
\begin{equation}
  W_1\!\bigl(p_{\ell},\, q_{\ell}\bigr)
  \leq 2R_\ell \sin({\vartheta_\ell/2}) {}+ R_\ell \sum_{m=1}^{M_\ell}\Bigl|\tfrac{1}{M_\ell} - \mu_m\Bigr|.
  \label{eq:w1_qam_bound}
\end{equation}

In particular, for an equidistributed ring ($v_m = 2\pi/M_\ell$ for all $m$) the correction vanishes, leaving the clean $2R_\ell\sin(\vartheta_\ell/2) = \mathcal{O}(R_\ell\vartheta_\ell)$, which is an upper bound of the same $\Theta(R_\ell/M_\ell)$ order as the exact value of Proposition~\ref{prop:cbm_error}.
\end{corollary}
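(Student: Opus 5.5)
The plan is to bound $W_1(p_\ell,q_\ell)$ through the triangle inequality, using an intermediate measure $\tilde p_\ell$. This measure puts mass $\mu_m$ at each symbol $s_m$, so it is the discrete measure with Voronoi-arc weights. We then write
\[
W_1(p_\ell,q_\ell)\le W_1(q_\ell,\tilde p_\ell)+W_1(\tilde p_\ell,p_\ell).
\]
The first term plays the role of the geometric smearing term, already seen in Proposition~\ref{prop:cbm_error}. The second term is a pure reweighting between two discrete measures that share the same support.

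For the first term, we use the coupling $\pi(\theta)$ that sends each point $R_\ell e^{j\theta}$ of the circle to the nearest symbol $s_m$. This map is measurable, and it pushes $q_\ell$ forward onto $\tilde p_\ell$ exactly, because the uniform mass on the Voronoi arc of $s_m$ is $v_m/(2\pi)=\mu_m$. Every point on that arc lies within angular distance at most $\vartheta_\ell$ of $s_m$: the arc extends to the midpoints of neighbouring gaps, and each half-gap is at most $\vartheta_\ell$. The chord length is therefore at most $2R_\ell\sin(\vartheta_\ell/2)$, since $t\mapsto\sin(t/2)$ is increasing on $[0,\pi]$. Bounding the integral of the cost by its supremum gives $W_1(q_\ell,\tilde p_\ell)\le 2R_\ell\sin(\vartheta_\ell/2)$. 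One point needs care here: when $M_\ell=1$, the half-gap must be read as at most $\pi$, so the bound still holds with chord at most $2R_\ell$.

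For the second term, both measures are supported on $\mathcal{M}_\ell$, a subset of the circle of radius $R_\ell$. The mass that has to move is the total variation $\tfrac12\sum_m|1/M_\ell-\mu_m|$, and each moved unit travels at most the diameter $2R_\ell$. Formally, we keep the common mass $\min(1/M_\ell,\mu_m)$ at each point and couple the excess arbitrarily. This gives $W_1\le \mathrm{diam}\cdot \mathrm{TV}= 2R_\ell\cdot\tfrac12\sum_m|1/M_\ell-\mu_m| = R_\ell\sum_m|1/M_\ell-\mu_m|$. Adding the two pieces yields \eqref{eq:w1_qam_bound}.

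For the equidistributed case, $\mu_m=1/M_\ell$, so the second term vanishes. In the equally spaced setting $\vartheta_\ell=\pi/M_\ell$, which gives $2R_\ell\sin(\pi/(2M_\ell))=\Theta(R_\ell/M_\ell)$, the same order as \eqref{eq:w1_cbm}; the $\mathcal{O}(R_\ell\vartheta_\ell)$ claim follows from $\sin x\le x$. The only step that needs attention is checking that the nearest-symbol map really pushes $q_\ell$ onto the $\mu_m$-weighted measure, including the boundary cases at arc endpoints. Those endpoints form a null set, so they do not affect the argument.
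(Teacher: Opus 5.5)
Your proof is correct and is essentially the paper's argument. Both use the same intermediate measure $\tilde p_\ell$ with Voronoi-arc masses $\mu_m$, the nearest-symbol pushforward bounded by the chord $2R_\ell\sin(\vartheta_\ell/2)$, and the diameter-times-total-variation bound $R_\ell\sum_m|1/M_\ell-\mu_m|$ for the reweighting term.

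One small gap in your treatment of the final clause: $v_m = 2\pi/M_\ell$ does not force equal spacing when $M_\ell$ is even, since the gaps may alternate (e.g.\ the 8-point square-QAM orbits), so covering only $\vartheta_\ell = \pi/M_\ell$ is not enough. However, consecutive gaps then sum to $4\pi/M_\ell$, which gives $\pi/M_\ell \le \vartheta_\ell < 2\pi/M_\ell$, so the $\Theta(R_\ell/M_\ell)$ order holds for every equidistributed ring.
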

\begin{proof}
Route the transport through the intermediate discrete distribution $\tilde{p}_\ell$ that places mass $\mu_m$ on each symbol $s_m$, and apply the triangle inequality $W_1(p_\ell, q_\ell) \le W_1(q_\ell, \tilde{p}_\ell) + W_1(\tilde{p}_\ell, p_\ell)$.
The nearest-symbol map collapses each Voronoi arc of $q_\ell$ onto its center symbol, delivering exactly the masses $\mu_m$, i.e., the distribution $\tilde{p}_\ell$; since every arc has angular half-width at most $\vartheta_\ell$, no mass travels farther than the chord subtending $\vartheta_\ell$, so $W_1(q_\ell, \tilde{p}_\ell) \le 2R_\ell\sin(\vartheta_\ell/2)$. (This map is a valid coupling only because its target is $\tilde{p}_\ell$; its masses $\mu_m$ equal the uniform $1/M_\ell$ of $p_\ell$ precisely when the ring is equidistributed, which is why the clean chord bound alone does not control $W_1(p_\ell,q_\ell)$ in general.)
Since $\tilde{p}_\ell$ and $p_\ell$ share the same support, transporting between them moves no mass farther than the ring diameter $2R_\ell$, whence $W_1(\tilde{p}_\ell, p_\ell) \le 2R_\ell\,\mathrm{TV}(\tilde{p}_\ell, p_\ell) = R_\ell\sum_{m}\lvert 1/M_\ell - \mu_m\rvert$. Adding the two bounds gives~\eqref{eq:w1_qam_bound}.
\end{proof}

\vspace{-2ex}
\section{The Orbital Detection Denoiser Hierarchy}
\label{sec:cbm}

In this section, we introduce the \ac{OBD}, its Jacobi-Anger refinement the \ac{nu-OBD}, a consequent \ac{OGD}, and an \ac{OPD}: a hierarchy of approximations to the exact discrete prior. The \ac{OBD}, \ac{OGD}, and \ac{OPD} progressively reduce the per-iteration arithmetic cost to $\mathcal{O}(L)$ and then $\mathcal{O}(1)$ at the price of a bounded, quantifiable approximation error, while the \ac{nu-OBD} interpolates in the reverse direction, closing that error and recovering the exact \ac{BOD} at $\mathcal{O}(M)$.

\vspace{-2ex}
\subsection{The \texorpdfstring{$\mathcal{O}(L)$}{O(L)} Orbital Bessel Denoiser}
\label{sec:nle}

We now derive the closed-form \ac{OBD} that replaces the $\mathcal{O}(M)$ \ac{BOD} of \eqref{eq:weights_standard}. 
To explicitly demonstrate the isolation of the angular component and its direct emergence from the Bayesian update, recall that the exact posterior weights $w_m$ in \eqref{eq:weights_standard} are driven by the \ac{AWGN} likelihood $p(\bar{x} \mid x = s_m)$. 
Let us express a discrete symbol on the $\ell$-th ring in polar coordinates as $s_m = R_\ell e^{j\phi_m}$, and the cavity observation as $\bar{x} = \norm{\bar{x}} e^{j\angle\bar{x}}$. 
Then, in the log-domain, the likelihood expands step-by-step into a radial bias and an angular projection as
\vspace{-1ex}
\begin{align}
  \vspace{-1ex}
  \label{eq:loglik_decomp}
  &\ln p(\bar{x} \mid x = s_m)  = -\ln(\pi \bar{\sigma}^2) - \frac{\norm{\bar{x} - s_m}^2}{\bar{\sigma}^2} \\
  &= -\ln(\pi \bar{\sigma}^2) - \frac{1}{\bar{\sigma}^2} \Big( \norm{\bar{x}}^2 + \norm{s_m}^2 - 2\operatorname{Re}(\bar{x}^* s_m) \Big) \nonumber \\
  &= -\ln(\pi \bar{\sigma}^2) - \frac{1}{\bar{\sigma}^2} \Big( \norm{\bar{x}}^2 + R_\ell^2 - 2\operatorname{Re}\bigl(\norm{\bar{x}} R_\ell e^{j(\phi_m - \angle\bar{x})}\bigr) \Big) \nonumber \\
  &= \underbrace{-\ln(\pi \bar{\sigma}^2) - \frac{\norm{\bar{x}}^2 + R_\ell^2}{\bar{\sigma}^2}}_{\text{Radial Bias}} + \underbrace{\frac{2 R_\ell \norm{\bar{x}}}{\bar{\sigma}^2} \cos(\phi_m - \angle\bar{x})}_{\text{Phase-matched Projection}}. \nonumber
\end{align}

Equation \eqref{eq:loglik_decomp} isolates the dependence on the symbol index $m$: all symbols on ring $\ell$ share the common radius $R_\ell$, so the radial bias is identical across them and the \emph{entire} $m$-dependence is carried by the phase $\phi_m$ through the phase-matched projection.

The likelihood thus depends on $s_m = R_\ell e^{j\phi_m}$ solely through its polar coordinates, so that conditioning on $s_m$ is equivalent to conditioning on its magnitude and phase, i.e.,
\vspace{-0.5ex}
\begin{align}
\label{eq:loglik_polar}
\vspace{-1ex}
\ln p(\bar{x} \mid x = s_m) &= \ln p(\bar{x} \mid \norm{x} = R_\ell,\, \angle x = \phi_m) \nonumber \\
&=: \ln p(\bar{x} \mid R_\ell, \phi_m),
\end{align}
where the final form abbreviates conditioning on the symbol's polar coordinates. This radial/angular separation underlies the ring decomposition.

Exponentiating \eqref{eq:loglik_decomp} returns the likelihood to product form
\vspace{-1ex}
\begin{align}
\label{eq:loglik_product}
\vspace{-1ex}
  &p(\bar{x} \mid R_\ell, \phi_m) \\
  &=\! \underbrace{\frac{1}{\pi \bar{\sigma}^2} \exp\!\left(\!\! -\tfrac{\norm{\bar{x}}^2 \!+\! R_\ell^2}{\bar{\sigma}^2}\! \right)}_{\text{radial factor}} \; \underbrace{\exp\!\left( \tfrac{2 R_\ell \norm{\bar{x}}}{\bar{\sigma}^2}\, \cos(\phi_m \!-\! \angle\bar{x}) \right)}_{\text{von Mises kernel}}. \nonumber
\end{align}

This proves that the directional concentration parameter emerges naturally from the physics of the \ac{AWGN} channel as
 \vspace{-1ex}
\begin{equation}
  \label{eq:kappa}
  \kappa_\ell \triangleq \frac{2 R_\ell \norm{\bar{x}}}{\bar{\sigma}^2},
\vspace{-1ex}
\end{equation}
entirely independent of any empirical heuristics, so that the kernel is $\exp\bigl(\kappa_\ell \cos(\phi_m - \angle\bar{x})\bigr)$, which is the un-normalized core of the von Mises distribution on ring $\ell$.

\begin{remark}[Circular Log-Likelihood Ratio Interpretation of $\kappa_\ell$]
\label{rem:circular_llr}
The concentration parameter $\kappa_\ell$ in \eqref{eq:kappa} admits a precise \emph{\ac{LLR}} interpretation that directly parallels the classical real-valued \ac{LLR} of \ac{AWGN} detection.
In the scalar real channel $y = s + n$, $n \sim \mathcal{N}(0,\sigma^2)$, the pairwise \ac{LLR} between two amplitude hypotheses $s_1$ and $s_2$ evaluates to
\vspace{-0.5ex}
\begin{equation}
\Lambda_{s_1,s_2} = \ln\frac{p(y \mid s = s_1)}{p(y \mid s = s_2)}
  = \frac{(s_1 - s_2)(2y - s_1 - s_2)}{2\sigma^2},
\label{eq:real_llr}
\vspace{-0.5ex}
\end{equation}
whose \emph{scale} is governed by the distance $(s_1-s_2)/(2\sigma^2)$.

On ring $\ell$, the analogous \emph{circular \ac{LLR}} between two phase hypotheses
$\phi_1$ and $\phi_2$ -- holding the radius $R_\ell$ fixed -- is
\vspace{-1ex}
\begin{equation}
\Lambda_{\phi_1,\phi_2}^{(\ell)}
  \!\triangleq\! \ln\frac{p(\bar{x} \mid R_\ell, \phi_1)}{p(\bar{x} \mid R_\ell, \phi_2)}
  \!=\! \kappa_\ell \bigl[\cos(\phi_1 - \angle\bar{x}) - \cos(\phi_2 -
  \angle\bar{x})\bigr].
\label{eq:circular_llr}
\end{equation}

Thus, $\kappa_\ell$ plays the role of a \emph{circular LLR gain}, scaling the maximum pairwise log-likelihood difference over all angular hypotheses on ring $\ell$.
This maximum is $2\kappa_\ell$, attained for antipodal hypotheses $\norm{\phi_1 - \phi_2} = \pi$ aligned with $\angle\bar{x}$, in exact analogy with the real case where the maximum pairwise \ac{LLR} scales as $\norm{s_1-s_2}\norm{y}/\sigma^2$.
Crucially, while a heuristic concentration parameter would be a fixed constant, $\kappa_\ell = 2R_\ell\norm{\bar{x}}/\bar{\sigma}^2$ is \emph{instance-adaptive} and proportional to the observation amplitude $\norm{\bar{x}}$, automatically increasing phase resolution when the received signal is strong and shrinking it under deep fading.
This instance-adaptivity is why $\kappa_\ell$ is canonical, as it is the unique scaling that renders the von Mises kernel $\exp(\kappa_\ell \cos(\phi_m - \angle\bar{x}))$ equal to the \ac{AWGN} likelihood on ring $\ell$, up to a radially-absorbed constant.
\end{remark}

Under the continuous orbital prior, the discrete phase $\phi_m$ relaxes to a continuous variable $\theta \in [-\pi, \pi)$. 
Conditioned on the cavity observation $\bar{x}$ and on the symbol residing on the $\ell$-th ring, the posterior phase distribution is the von Mises density
\begin{equation}
  p(\theta \mid \bar{x}, R_\ell) = \frac{\exp\bigl(\kappa_\ell \cos(\theta - \angle\bar{x})\bigr)}{2\pi I_0(\kappa_\ell)},
  \label{eq:vm_like}
\end{equation}
the continuous-phase analog of the discrete posterior~\eqref{eq:posterior_BOD}, with $2\pi I_0(\kappa_\ell)$ the von Mises normalizer.

By the chain rule for probabilities, the relaxed posterior factors as $p(R_\ell, \theta \mid \bar{x}) = w_\ell^{\mathrm{B}}\, p(\theta \mid \bar{x}, R_\ell)$, the product of the posterior ring probability $w_\ell^{\mathrm{B}} = p(R_\ell \mid \bar{x})$, and the von Mises phase factor~\eqref{eq:vm_like}.
In the sequel, we evaluate $w_\ell^{\mathrm{B}}$ and the per-ring conditional mean under~\eqref{eq:vm_like}, which together these yield the moments, just as the discrete posterior~\eqref{eq:posterior_BOD} supplies the weights $w_m$.
Replacing the discrete summations over the $M_\ell$ points with continuous integrals over $\theta$ yields the following closed-form $\mathcal{O}(L)$ denoiser equations.

\subsubsection{Posterior Ring Probabilities}
To compute the posterior probability $w_\ell$ that the transmitted symbol originated from the $\ell$-th ring, we must evaluate the marginal likelihood of the ring by integrating out the continuous phase $\theta$. 
Substituting the decomposed log-likelihood \eqref{eq:loglik_decomp} and the uniform phase prior $p(\theta) = 1/(2\pi)$, the marginal likelihood evaluates to
\begin{align}
  &p(\bar{x} \mid R_\ell) 
  = \int_{-\pi}^{\pi} p(\bar{x} \mid R_\ell, \theta) \, p(\theta) \, d\theta \nonumber \\
  &= \frac{1}{2\pi^2 \bar{\sigma}^2} \exp\!\left( -\tfrac{\norm{\bar{x}}^2 + R_\ell^2}{\bar{\sigma}^2} \right) \int_{-\pi}^{\pi} \exp\bigl(\kappa_\ell \cos(\theta - \angle\bar{x})\bigr) d\theta \nonumber \\
  &= \frac{1}{\pi \bar{\sigma}^2} \exp\!\left( -\tfrac{\norm{\bar{x}}^2 + R_\ell^2}{\bar{\sigma}^2} \right) I_0(\kappa_\ell),
  \label{eq:marginal_likelihood}
\end{align}
where the final step uses $\beta=\theta-\angle\bar{x}$ and the standard Bessel integral identity
$\int_{-\pi}^{\pi} e^{\kappa\cos\beta} d\beta = 2\pi I_0(\kappa)$.

By Bayes' rule, the exact posterior ring probability is proportional to the product of the prior probability $r_\ell$ and this marginal likelihood. 
Taking the natural logarithm and discarding the terms $-\ln(\pi \bar{\sigma}^2)$ and $-\norm{\bar{x}}^2 / \bar{\sigma}^2$ which are common to all rings, we define the unnormalized ring log-metric $\Lambda_\ell^{\mathrm{B}}$ as
\vspace{-2ex}
\begin{equation}
  \Lambda_\ell^{\mathrm{B}} \triangleq \ln r_\ell - \frac{R_\ell^2}{\bar{\sigma}^2} + \ln I_0(\kappa_\ell).
  \label{eq:log_metric}
\vspace{-1ex}
\end{equation}

The true posterior ring probabilities $w_\ell^{\mathrm{B}}$ are then obtained via the softmax normalization over the $L$ log-metrics as
\begin{equation}
  w_\ell^{\mathrm{B}} = \frac{\exp(\Lambda_\ell^{\mathrm{B}})}{\sum_{\ell'=1}^L \exp(\Lambda_{\ell'}^{\mathrm{B}})}.
  \label{eq:ring_prob}
\end{equation}

\subsubsection{Per-Ring Contributions}
The conditional center of mass for the $\ell$-th ring is the posterior mean of $R_\ell e^{j\theta}$ under the von Mises phase posterior~\eqref{eq:vm_like}, given by
\begin{align}
  \E\!\left[R_\ell e^{j\theta} \,\middle|\, \bar{x}, R_\ell\right]
  &= \frac{\int_{-\pi}^{\pi} R_\ell e^{j\theta} \exp\bigl(\kappa_\ell \cos(\theta - \angle\bar{x})\bigr)\,d\theta}
  {\int_{-\pi}^{\pi} \exp\bigl(\kappa_\ell\cos(\theta - \angle\bar{x})\bigr)\,d\theta} \nonumber \\
  &= R_\ell A(\kappa_\ell)\, e^{j\angle\bar{x}},
  \label{eq:ring_mean_cbm}
\end{align}
where the numerator evaluates exactly to $2\pi R_\ell I_1(\kappa_\ell) e^{j\angle\bar{x}}$ and the denominator to $2\pi I_0(\kappa_\ell)$ via the standard moment formula for the von~Mises distribution~\cite[Sec.~3.5]{mardia2009directional}, and
\begin{equation}
    A(\kappa) \triangleq \frac{I_1(\kappa)}{I_0(\kappa)},
    \label{eq:bessel_ratio}
\end{equation}
denotes the \emph{Bessel ratio} (also called the mean resultant length in directional statistics~\cite{mardia2009directional}), so that the weighted complex contribution of the $\ell$-th ring is given by
\begin{equation}
  \hat{x}_\ell^{\mathrm{B}} = w_\ell^{\mathrm{B}}\, R_\ell A(\kappa_\ell)\, e^{j\angle\bar{x}}.
  \label{eq:xi_cbm}
\end{equation}

%
%

\subsubsection{Posterior Moments}

Applying the reconstruction formulas of Theorem~\ref{thm:ring_decomp} to the \ac{OBD} compressed state $(\hat{\mathbf{x}}^{\mathrm{B}}, \mathbf{w}^{\mathrm{B}})$ yields the approximate posterior mean and variance
\begin{align}
  \hat{x}^{\mathrm{B}} &= \sum_{\ell=1}^L \hat{x}_\ell^{\mathrm{B}}, \label{eq:cbm_mean} \\
  \hat{\sigma}^2_{\mathrm{B}} &= \sum_{\ell=1}^L w_\ell^{\mathrm{B}} R_\ell^2 - \norm{\hat{x}^{\mathrm{B}}}^2.
  \label{eq:cbm_var}
\end{align}

Non-negativity $\hat{\sigma}^2_{\mathrm{B}} \geq 0$ holds because $\hat{\sigma}^2_{\mathrm{B}} = \mathbb{E}_{\mathrm{B}}[\norm{x}^2\mid\bar{x}] - \norm{\mathbb{E}_{\mathrm{B}}[x\mid\bar{x}]}^2 \geq 0$ by Jensen.

\begin{figure}[H]
  \centering
  \includegraphics[width=\columnwidth]{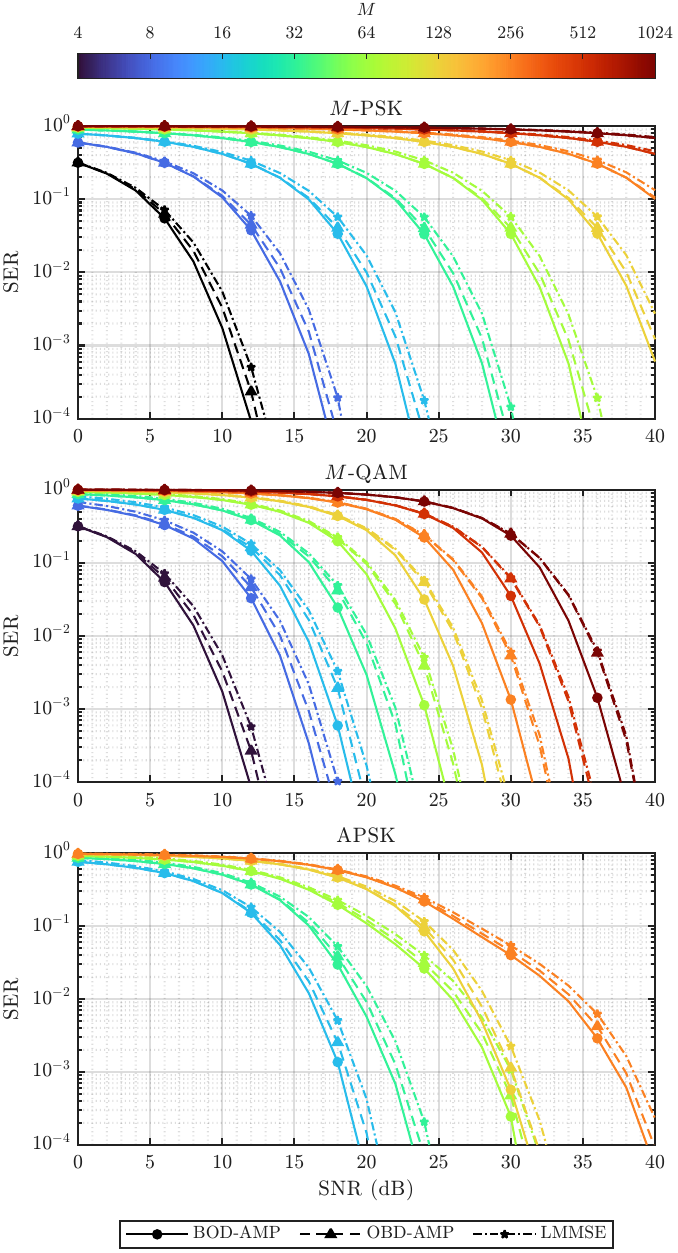}
  \vspace{-3ex}
  \caption{\ac{SER} versus \ac{SNR} of \ac{AMP} detection ($K = 64$, $N = 256$, $\alpha = 0.25$) under the Bayes-optimal \ac{BOD}, the proposed $\mathcal{O}(L)$ \ac{OBD}, and the \ac{LMMSE} baseline, for all $M$-\ac{PSK} and $M$-\ac{QAM} orders $M \in \{4, 8, \ldots, 1024\}$ and the DVB-S2/S2x \ac{APSK} constellations $M \in \{16, 32, 64, 128, 256\}$. Line style encodes the detector (shared legend); color encodes the modulation order $M$ (shared colorbar).}
  \label{fig:OBD_SER}
  \vspace{-2ex}
\end{figure}

\begin{remark}[Relationship to Orbital Ring Decomposition]
  The \ac{OBD} outputs $(\hat{\mathbf{x}}^{\mathrm{B}}, \mathbf{w}^{\mathrm{B}})$, which approximate the exact compressed state $(\hat{\mathbf{x}}, \mathbf{w})$ of Theorem~\ref{thm:ring_decomp} via the continuous phase relaxation of Definition~\ref{def:orbital}.
  The final moments $(\hat{x}^{\mathrm{B}}, \hat{\sigma}^2_{\mathrm{B}})$ computed via \eqref{eq:cbm_mean} and \eqref{eq:cbm_var} are therefore approximate Bayesian posteriors, with approximation error bounded by Proposition~\ref{prop:cbm_error}.
  More precisely, the \ac{OBD} computes the \emph{exact} posterior mean under the orbital prior $q$ (Definition~\ref{def:orbital}); applied to the true ($p$-distributed) observation it is a \emph{mismatched} estimator, and this single substitution of $q$ for the discrete prior $p$ is the framework's sole approximation, exact under $q$, and bounded under $p$ by Proposition~\ref{prop:cbm_error}.
\end{remark}

\vspace{-1ex}
\subsubsection{Complexity Analysis}
Because the Bessel ratio $A(\kappa)$ and the log-Bessel function $\ln I_0(\kappa)$ admit highly efficient per-ring $\mathcal{O}(1)$ piecewise polynomial approximations\footnote{Two-region approximations achieve maximum fractional errors below $2\%$ for $A(\kappa)$ and $10\%$ for $\ln I_0(\kappa)$ near the crossover $\kappa\approx1.2$; three-region or Chebyshev minimax fits tighten these bounds arbitrarily~\cite{DLMF}.}~\cite{Suresh2026}, computing the full posterior state $(\hat{\mathbf{x}}, \mathbf{w})$ requires exactly $L$ parallel Bessel projections. 
The arithmetic complexity of the denoiser is therefore strictly $\mathcal{O}(L)$, entirely bypassing the fundamental $\mathcal{O}(M)$ barrier.

\vspace{-1ex}
\begin{proposition}[Orbital Bessel Denoiser]
\label{prop:cbm_denoiser}
For each cavity observation $\bar{x} \in \mathbb{C}$ and noise variance $\bar{\sigma}^2 > 0$, the $\mathcal{O}(L)$ \ac{OBD} $\eta_B(\bar{x};\bar{\sigma}^2)$ approximating the Bayesian posterior mean and variance of any multi-ring constellation evaluates, for each ring $\ell = 1, \ldots, L$, the concentration, log-evidence, ring probability, and conditional contribution, and accumulates the moments:
\begin{subequations}
\label{eq:obd_denoiser}
\begin{align}
  \kappa_\ell &= \frac{2 R_\ell \norm{\bar{x}}}{\bar{\sigma}^2}, \label{eq:obd_kappa} \\
  \Lambda_\ell^{\mathrm{B}} &= \ln r_\ell - \frac{R_\ell^2}{\bar{\sigma}^2} + \ln I_0(\kappa_\ell), \label{eq:obd_metric} \\
  w_\ell^{\mathrm{B}} &= \frac{\exp(\Lambda_\ell^{\mathrm{B}})}{\sum_{\ell'=1}^{L} \exp(\Lambda_{\ell'}^{\mathrm{B}})}, \label{eq:obd_ring_prob} \\
  \hat{x}^{\mathrm{B}} &= \sum_{\ell=1}^L w_\ell^{\mathrm{B}}\, R_\ell\, A(\kappa_\ell)\, e^{j\angle\bar{x}}, \label{eq:obd_mean} \\
  \hat{\sigma}^2_{\mathrm{B}} &= \sum_{\ell=1}^{L} w_\ell^{\mathrm{B}}\, R_\ell^2 - \norm{\hat{x}^{\mathrm{B}}}^2, \label{eq:obd_var}
\end{align}
\end{subequations}
where $A(\kappa) = I_1(\kappa)/I_0(\kappa)$ is the Bessel ratio of \eqref{eq:bessel_ratio}. The denoiser map $\eta_B(\bar{x};\bar{\sigma}^2)$ returns the posterior mean~\eqref{eq:obd_mean}, with $\hat{\sigma}^2_{\mathrm{B}}$ of \eqref{eq:obd_var} its companion posterior variance. Equations~\eqref{eq:obd_kappa}--\eqref{eq:obd_var} collect the derivation \eqref{eq:kappa}--\eqref{eq:cbm_var}; their per-symbol arithmetic cost is $\mathcal{O}(L)$, reducing the $\mathcal{O}(M)$ discrete sum to $L \ll M$ Bessel projections.
\end{proposition}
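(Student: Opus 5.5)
The plan is to show that \eqref{eq:obd_kappa}--\eqref{eq:obd_var} are exactly the posterior mean and variance of $x$ given $\bar{x}$ when the prior is the orbital prior $q$ of Definition~\ref{def:orbital} and the channel is \eqref{eq:scalar_awgn}. I would then count the arithmetic. The first step is to write the joint law of $(x,\bar{x})$ under $q$ in polar coordinates $x=re^{j\theta}$. Because $q$ is a mixture of shells, the posterior factors as a discrete ring index times a conditional phase density. Inserting the likelihood decomposition \eqref{eq:loglik_decomp}--\eqref{eq:loglik_product} shows that, on ring $\ell$, the phase posterior is proportional to $\exp(\kappa_\ell\cos(\theta-\angle\bar{x}))$, with $\kappa_\ell$ as in \eqref{eq:obd_kappa}. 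Normalizing gives the von Mises law \eqref{eq:vm_like}.

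The second step computes the ring posterior. I would integrate the phase out against the uniform density $1/(2\pi)$, using the Bessel identity $\int_{-\pi}^{\pi}e^{\kappa\cos\beta}\,d\beta=2\pi I_0(\kappa)$, which gives the marginal likelihood \eqref{eq:marginal_likelihood}. Multiplying by $r_\ell$ and dropping the ring-independent factors $(\pi\bar{\sigma}^2)^{-1}e^{-\norm{\bar{x}}^2/\bar{\sigma}^2}$ yields $\Lambda_\ell^{\mathrm{B}}$ in \eqref{eq:obd_metric}, and Bayes' rule yields the softmax \eqref{eq:obd_ring_prob}. The only care needed here is to check that the delta-shell normalization $1/(2\pi R_\ell)$ against the polar measure $r\,dr\,d\theta$ leaves exactly the weight $r_\ell$ with no stray factor of $R_\ell$. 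This bookkeeping on the distributional density is the one place I expect a slip could occur, so I would verify it explicitly by integrating $q_\ell$ against a test function.

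The third step gives the moments. By the tower property, $\E_q[x\mid\bar{x}]=\sum_\ell w_\ell^{\mathrm{B}}\,\E[R_\ell e^{j\theta}\mid\bar{x},R_\ell]$. The von Mises first circular moment $I_1(\kappa_\ell)/I_0(\kappa_\ell)\,e^{j\angle\bar{x}}$ from \eqref{eq:ring_mean_cbm} then gives \eqref{eq:obd_mean}. For the second moment, $\norm{x}^2=R_\ell^2$ holds deterministically on ring $\ell$, so $\E_q[\norm{x}^2\mid\bar{x}]=\sum_\ell w_\ell^{\mathrm{B}}R_\ell^2$. Subtracting $\norm{\hat{x}^{\mathrm{B}}}^2$ gives \eqref{eq:obd_var}. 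This is the same reconstruction argument as in the proof of Theorem~\ref{thm:ring_decomp}, with sums over ring points replaced by phase integrals, and nonnegativity follows by Jensen.

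Finally, for the complexity claim, each ring requires a constant number of operations: one $\kappa_\ell$, one evaluation each of $\ln I_0$ and $A$ (each $\mathcal{O}(1)$ via the cited piecewise approximations), and one exponential. The softmax and the two accumulations are linear in $L$. The total is therefore $\mathcal{O}(L)$ per symbol, with no dependence on $M$.
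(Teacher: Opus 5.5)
Your proposal is correct and follows essentially the same route as the paper. The paper's proof simply points back to the ring-wise derivation: the likelihood decomposition, the von Mises phase posterior, the Bessel integral for the ring evidence, the softmax, the von Mises first moment, and the reconstruction of Theorem~\ref{thm:ring_decomp}, with the $\mathcal{O}(L)$ cost read off from the single pass over rings. Your explicit check of the shell normalization is a harmless extra: the shell indeed reduces to the uniform phase density $1/(2\pi)$ with weight exactly $r_\ell$.
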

\begin{proof}
Equations~\eqref{eq:obd_kappa}-\eqref{eq:obd_var} are the collected form of the ring-wise derivation~\eqref{eq:kappa}-\eqref{eq:cbm_var}, which evaluates the orbital posterior mean and its companion variance term by term; the statement adds no step beyond that derivation, and the $\mathcal{O}(L)$ cost is read off from the single pass over $\ell = 1,\ldots,L$ it requires.
\end{proof}

Figure~\ref{fig:OBD_SER} shows what the relaxation delivers in operation: the \ac{SER} of \ac{AMP} detection at load $\alpha = 0.25$ under the exact $\mathcal{O}(M)$ \ac{BOD}, the proposed $\mathcal{O}(L)$ \ac{OBD}, and the \ac{LMMSE} baseline, swept over the complete modulation-order range of each family: $M = 4$ through $1024$ for $M$-\ac{PSK} and $M$-\ac{QAM}, and the true DVB-S2/S2x constellations with $M = 16$ through $256$ for \ac{APSK}\footnotemark.

For $M$-\ac{PSK} the \ac{OBD} curve is indistinguishable from the Bayes-optimal one {at every order, and increasingly so as $M$ grows, the single ring densifying at rate $\Theta(R_1/M)$ (Proposition~\ref{prop:cbm_error})}, while for $M$-\ac{QAM} and \ac{APSK} it remains within a fraction of a dB of the \ac{BOD} {across the entire sweep}; the \ac{LMMSE} receiver, by contrast, cedes several dB {at every order}.
This is the operational face of the fixed-point ordering $\mathrm{MSE}_\infty^{\mathrm{D}} \leq \mathrm{MSE}_\infty^{\mathrm{B}} \leq \mathrm{MSE}_\infty^{\mathrm{L}}$ proved in Proposition~\ref{prop:fp_ordering}, holding uniformly over the modulation order.

\footnotetext{Throughout the paper, all numerical experiments use constellations normalized to unit average energy ($E_d = 1$), physical noise $\sigma_z^2 = 10^{-\mathrm{SNR}/10}$, and load $\alpha = K/N = 0.25$ unless stated otherwise. Scalar-channel and \ac{SE} expectations are evaluated by Monte Carlo, with sample sizes chosen so that the statistical fluctuation of every displayed point is negligible at the plotted scale; the \ac{SE} recursion is initialized at $\mathrm{MSE}_0 = E_d$ and iterated to its fixed point (geometric convergence, Remark~\ref{rem:geometric_rate}).}

\subsection{The \texorpdfstring{$U$}{U}-OBD: A Phase-Dependent Hierarchy}
\label{sec:ja_cbm}

The \ac{OBD} log-evidence~\eqref{eq:log_metric} uses only $\ln I_0(\kappa_\ell)$, the logarithm of the leading ($u = 0$) term of the exact discrete partition function.
The following proposition shows that the full partition function is a convergent series in modified Bessel functions, and that retaining $U \geq 1$ terms defines a systematic hierarchy that closes the Wasserstein gap of
Proposition~\ref{prop:cbm_error} geometrically in $U$.

\begin{proposition}[Jacobi-Anger Discrete Partition Function]
\label{prop:ja_partition}
For a ring $\ell$ with $M_\ell$ equally spaced symbols at phases $\phi_m = 2\pi m / M_\ell$, the exact discrete log-partition function satisfies
\begin{align}
\label{eq:ja_partition}
&\ln \sum_{m=0}^{M_\ell-1}
    e^{\kappa_\ell \cos(\phi_m - \angle\bar{x})}\\[-1ex]
&= \ln M_\ell
   + \ln\!\bigg[
       I_0(\kappa_\ell)
       + 2\sum_{u=1}^{\infty}
         I_{uM_\ell}(\kappa_\ell)\cos(uM_\ell\,\angle\bar{x})
     \bigg],\nonumber
\end{align}
which follows from the modified Bessel generating function~\cite{Watson1944} and the orthogonality identity
$\sum_{m=0}^{M_\ell-1} e^{in \cdot 2\pi m/M_\ell} = M_\ell\,\mathbf{1}[M_\ell \mid n]$.
The series converges absolutely for all $\kappa_\ell \geq 0$ and $M_\ell \geq 2$, since $I_{uM_\ell}(\kappa)/I_0(\kappa) \leq (\kappa/2)^{uM_\ell}/(uM_\ell)! \to 0$ superexponentially in $u$ for fixed $\kappa$ and $M_\ell$.
The \ac{OBD} log-evidence~\eqref{eq:log_metric} corresponds to truncating \eqref{eq:ja_partition} at $u = 0$, retaining only $\ln I_0(\kappa_\ell)$.
\end{proposition}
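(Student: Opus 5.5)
The plan is to expand each exponential in the discrete partition function by the Jacobi--Anger (modified Bessel generating function) identity, exchange the finite sum over $m$ with the Bessel series, and let the discrete orthogonality relation on the $M_\ell$-th roots of unity filter out every harmonic not divisible by $M_\ell$. Write $\psi \triangleq \angle\bar{x}$ and $\kappa \triangleq \kappa_\ell$. The first step invokes the generating function~\cite{Watson1944}
\begin{equation*}
e^{\kappa\cos\beta} = \sum_{n=-\infty}^{\infty} I_n(\kappa)\, e^{jn\beta},
\end{equation*}
valid for all real $\beta$ and all $\kappa \geq 0$. It is applied with $\beta = \phi_m - \psi$ and $\phi_m = 2\pi m/M_\ell$.

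The second step sums over $m = 0,\ldots,M_\ell-1$. Since the sum over $m$ is finite, the interchange only needs each Bessel series to converge, and absolute convergence of $\sum_n I_n(\kappa)$ gives this. The interchange yields
\begin{equation*}
\sum_{m=0}^{M_\ell-1} e^{\kappa\cos(\phi_m-\psi)} = \sum_{n} I_n(\kappa)\, e^{-jn\psi} \sum_{m=0}^{M_\ell-1} e^{jn 2\pi m/M_\ell}.
\end{equation*}
The orthogonality identity $\sum_{m} e^{jn2\pi m/M_\ell} = M_\ell\,\mathbf{1}[M_\ell \mid n]$ keeps only the terms $n = uM_\ell$ with $u \in \mathbb{Z}$. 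This leaves
\begin{equation*}
M_\ell \sum_{u\in\mathbb{Z}} I_{uM_\ell}(\kappa)\, e^{-juM_\ell\psi}.
\end{equation*}
Next, use the symmetry $I_{-n} = I_n$ for integer $n$ to pair the terms $u$ and $-u$. Each pair gives $2I_{uM_\ell}(\kappa)\cos(uM_\ell\psi)$, and the $u=0$ term is $I_0(\kappa)$, so the series is real. Taking logarithms produces~\eqref{eq:ja_partition}. The logarithm is well defined because the bracket equals $M_\ell^{-1}$ times a sum of strictly positive exponentials, hence is positive even though individual cosine terms may be negative.

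The third step, which is where the only real care is needed, supplies the convergence estimate that also justified the interchange. Start from the power series
\begin{equation*}
I_n(\kappa) = \sum_{k\geq 0} \frac{(\kappa/2)^{2k+n}}{k!\,(k+n)!}.
\end{equation*}
Apply the elementary inequality $(k+n)! \geq k!\,n!$ term by term. This gives
\begin{equation*}
I_n(\kappa) \leq \frac{(\kappa/2)^n}{n!} \sum_{k} \frac{(\kappa/2)^{2k}}{(k!)^2} = \frac{(\kappa/2)^n}{n!}\, I_0(\kappa),
\end{equation*}
which is exactly the stated ratio bound. Setting $n = uM_\ell$, the bound $(\kappa/2)^{uM_\ell}/(uM_\ell)!$ decays superexponentially in $u$ for fixed $\kappa$ and $M_\ell \geq 2$. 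Comparison with the series of $e^{\kappa/2}$ then gives absolute convergence of $\sum_n I_n(\kappa)$, and hence of the tail in~\eqref{eq:ja_partition}.

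Finally, the $u=0$ truncation leaves $\ln M_\ell + \ln I_0(\kappa_\ell)$. The constant $\ln M_\ell$ is what turns the discrete point count into the uniform prior weight, so that after combining with $\ln(1/M)$ it becomes $\ln r_\ell$. What remains is the term $\ln I_0(\kappa_\ell)$ of~\eqref{eq:log_metric}, which identifies the OBD log-evidence as the zeroth-order truncation.
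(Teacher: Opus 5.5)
Your proposal is correct and follows the paper's route: Jacobi--Anger expansion, root-of-unity filtering down to the harmonics $n = uM_\ell$, pairing $u$ with $-u$ via $I_{-n}=I_n$, and the ratio bound $I_n(\kappa)/I_0(\kappa)\le(\kappa/2)^n/n!$ for convergence. You also supply two details the paper only asserts: the derivation of that bound from the term-by-term inequality $(k+n)!\ge k!\,n!$, and the bookkeeping $\ln(1/M)+\ln M_\ell=\ln r_\ell$ that identifies the $u=0$ truncation with the OBD log-evidence.
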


\begin{remark}[Phase-Offset and Non-Uniform Rings]
\label{rem:bcbm_phase}
Proposition~\ref{prop:ja_partition} assumes a ring of $M_\ell$ \emph{equally spaced} symbols starting at angle zero, $\phi_m = 2\pi m/M_\ell$. For a ring with arbitrary symbol phases $\{\phi_m^{(\ell)}\}$, expanding $e^{\kappa_\ell \cos(\phi_m^{(\ell)} - \angle\bar{x})}$ by Jacobi-Anger and summing over $m$ -- using $\sum_m \cos\bigl(n(\phi_m^{(\ell)} - \angle\bar{x})\bigr) = \operatorname{Re}\{e^{-in\angle\bar{x}} D_n^{(\ell)}\}$ -- gives the exact identity
\begin{align}
\label{eq:ja_general}
&\ln Z_\ell = \ln M_\ell \\
&+ \ln\!\left[
  I_0(\kappa_\ell) + 2\sum_{n=1}^{\infty}
  I_{n}(\kappa_\ell)\,\frac{\lvert D_{n}^{(\ell)}\rvert}{M_\ell}\,
  \cos\bigl(n\,\angle\bar{x} - \arg D_{n}^{(\ell)}\bigr)
\right], \nonumber
\end{align}
where $Z_\ell \triangleq \sum_m e^{\kappa_\ell \cos(\phi_m^{(\ell)} - \angle\bar{x})}$ and $D_{n}^{(\ell)} \triangleq \sum_m e^{i n \phi_m^{(\ell)}}$ is the (unnormalised) $n$-th Fourier coefficient of the ring's symbol phases, with $D_0^{(\ell)} = M_\ell$.

\emph{Equally-spaced rings.} If the $\ell$-th ring is equally spaced with angular offset $\psi_{0,\ell}$, that is, a rotated regular $M_\ell$-gon, as for $M$-\ac{PSK}/\ac{APSK} rings and the inner and outer rings of standard $M$-\ac{QAM} ($\psi_{0,\ell}=\pi/4$), then $D_{n}^{(\ell)} = M_\ell\, e^{i n \psi_{0,\ell}}\,\mathbf{1}[M_\ell \mid n]$. Only the harmonics $n = u M_\ell$ survive, each with $\lvert D_{uM_\ell}^{(\ell)}\rvert = M_\ell$ and $\arg D_{uM_\ell}^{(\ell)} = u M_\ell \psi_{0,\ell}$, and \eqref{eq:ja_general} collapses to
\begin{align}
\label{eq:ja_phase_corrected}
&\ln Z_\ell = \ln M_\ell \\
&+ \ln\!\left[
  I_0(\kappa_\ell) + 2\sum_{u=1}^{\infty}
  I_{uM_\ell}(\kappa_\ell)\cos\bigl(uM_\ell\,\angle\bar{x} - uM_\ell\,\psi_{0,\ell}\bigr)
\right]. \nonumber
\end{align}

For $\psi_{0,\ell}=0$ this is exactly~\eqref{eq:ja_partition}. For the inner and outer $M$-\ac{QAM} rings ($M_\ell=4$, $\psi_{0,\ell}=\pi/4$), $D_4^{(\ell)}=-4$, so $\arg D_4^{(\ell)}=\pi$ and the $u=1$ correction carries $\cos(4\,\angle\bar{x}-\pi)=-\cos(4\,\angle\bar{x})$, which is the opposite sign from the naive application of~\eqref{eq:ja_partition}.

\emph{Non-uniform rings.} When a ring is not equally spaced, Fourier coefficients $D_n^{(\ell)}$ at orders $n$ that are not multiples of $M_\ell$ are generally nonzero, and the surviving amplitudes need not equal $M_\ell$; the multiples-of-$M_\ell$ truncation~\eqref{eq:ja_phase_corrected} then fails to represent the partition function, and the general form~\eqref{eq:ja_general} must be used. This already arises within standard $16$-\ac{QAM}: the radius $\sqrt{10}$ ring carries $M_\ell=8$ symbols whose four-fold symmetry yields a nonzero sub-harmonic $D_4^{(\ell)} {\approx} 2.24$ at order $n=4$, not a multiple of $M_\ell=8$, which~\eqref{eq:ja_phase_corrected} omits entirely. In practice the coefficients $\{D_n^{(\ell)}\}$ are precomputed once from the actual symbol phases.
\end{remark}

\begin{definition}[$U$-OBD of Order $U$]
\label{def:bcbm}
For any integer $U \geq 0$, the \ac{nu-OBD} of order $U$ retains the first $U$ correction terms in Proposition~\ref{prop:ja_partition}, replacing the \ac{OBD} log-evidence~\eqref{eq:log_metric} with
\begin{align}
\label{eq:bcbm_log_metric}
\Lambda_\ell^{U} &\triangleq
\ln r_\ell - \frac{R_\ell^2}{\bar{\sigma}^2} \\
&+ \ln\!\left[
    I_0(\kappa_\ell)
    + 2\!\sum_{u=1}^{U}
      I_{uM_\ell}(\kappa_\ell)\cos\bigl(uM_\ell(\angle\bar{x} - \psi_{0,\ell})\bigr)
  \right]\!\!, \nonumber 
\end{align}
and per-ring contribution
\begin{align}
\label{eq:bcbm_ring_mean}
&\hat{x}_\ell^{U} \!=\! w_\ell^{U} R_\ell e^{j\angle\bar{x}} \\
&\cdot
\frac{I_1(\kappa_\ell)
      + \sum_{u=1}^{U}
        \bigl[I_{uM_\ell - 1}(\kappa_\ell)\,e^{-j\tilde{u}}
              + I_{uM_\ell + 1}(\kappa_\ell)\,e^{+j\tilde{u}}\bigr]}
     {I_0(\kappa_\ell)
      + 2\sum_{u=1}^{U}
        I_{uM_\ell}(\kappa_\ell)\cos(\tilde{u})}, \nonumber
\end{align}
with ring probabilities $w_\ell^{U} = \exp(\Lambda_\ell^{U})/\sum_{\ell'=1}^{L}\exp(\Lambda_{\ell'}^{U})$ formed from the order-$U$ log-evidences by the same softmax as the \ac{OBD}~\eqref{eq:ring_prob} and $\tilde{u} \triangleq uM_\ell(\angle\bar{x} - \psi_{0,\ell})$, where $\psi_{0,\ell}$ is the angular offset of the (equally spaced) $\ell$-th ring per Remark~\ref{rem:bcbm_phase}; for $\psi_{0,\ell} = 0$ this recovers the zero-offset form of Proposition~\ref{prop:ja_partition}, while for the offset rings of standard $M$-\ac{QAM} ($\psi_{0,\ell} = \pi/4$) the correction terms carry the sign flip derived in Remark~\ref{rem:bcbm_phase}. Non-equally spaced rings require the general Fourier form~\eqref{eq:ja_general}.

At $U = 0$, Definition~\ref{def:bcbm} reduces identically to the \ac{OBD} of Proposition~\ref{prop:cbm_denoiser}.
As $U \to \infty$, the \ac{nu-OBD} converges to the \ac{BOD}: the first omitted harmonic $I_{(U+1)M_\ell}(\kappa_\ell)$ decays factorially in its index $(U+1)M_\ell$ (Proposition~\ref{prop:ja_partition}), so the truncated partition function approaches the exact discrete one~\eqref{eq:ja_partition} rapidly and a modest order $U$ suffices in practice.
\end{definition}

\begin{proposition}[Geometric Convergence of the $U$-OBD to the \ac{BOD}]
\label{prop:bcbm_wasserstein}
For bounded concentration $\kappa_\ell$, the order-$U$ \ac{nu-OBD} retains every Jacobi-Anger harmonic up to index $U M_\ell$ and omits only those of index $\geq (U{+}1)M_\ell$; the relative weight of the first omitted harmonic in the partition function decays geometrically, indeed factorially, in the retained order,
\begin{equation}
\label{eq:bcbm_w1}
\frac{2\,I_{(U+1)M_\ell}(\kappa_\ell)}{I_0(\kappa_\ell)}
= \mathcal{O}\!\left(\tfrac{1}{M_\ell^{U+1}}\right).
\end{equation}

Consequently the \ac{nu-OBD} log-evidence~\eqref{eq:bcbm_log_metric}, ring probabilities, and per-ring mean~\eqref{eq:bcbm_ring_mean} converge to those of the exact \ac{BOD} at this rate. The induced Wasserstein-1 distance between the order-$U$ prior and the discrete truth on ring $\ell$ correspondingly scales as $\mathcal{O}(R_\ell/M_\ell^{U+1})$, the radius $R_\ell$ setting the transport length; a rigorous transport-metric characterization is deferred to a followup work.
\end{proposition}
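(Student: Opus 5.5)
The plan is to bound the first omitted harmonic with the standard power-series majorant of the modified Bessel function, and then push that bound through the smooth maps that build the log-evidence, the ring weights and the per-ring means. For integer order $\nu\ge 0$ and $\kappa\ge 0$, the series $I_\nu(\kappa)=\sum_{k\ge0}(\kappa/2)^{2k+\nu}/(k!\,(k+\nu)!)$ gives, by comparing term by term with the series of $I_0$, the bound $I_\nu(\kappa)\le \frac{(\kappa/2)^\nu}{\nu!}I_0(\kappa)$. This is the estimate already quoted in Proposition~\ref{prop:ja_partition}. With $\nu=(U+1)M_\ell$ and $\kappa_\ell\le\bar\kappa$ (the boundedness hypothesis), we get
\begin{equation*}
\frac{2I_{(U+1)M_\ell}(\kappa_\ell)}{I_0(\kappa_\ell)}\le \frac{2(\bar\kappa/2)^{(U+1)M_\ell}}{((U+1)M_\ell)!}.
\end{equation*}
The next step is to show that the right-hand side is $\mathcal{O}(M_\ell^{-(U+1)})$. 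Using $n!\ge (n/e)^n$ with $n=(U+1)M_\ell$, it is at most $2\bigl(e\bar\kappa/(2(U+1)M_\ell)\bigr)^{(U+1)M_\ell}$. Once $M_\ell\ge e\bar\kappa$, the base is at most $1/(2(U+1)M_\ell)\le 1/M_\ell$, and the exponent is $(U+1)M_\ell\ge U+1$. The bound is then at most $2M_\ell^{-(U+1)}$, which is in fact factorially smaller, as claimed. For the finitely many $M_\ell<e\bar\kappa$, the ratio is bounded by the constant $2$ (since $I_\nu\le I_0$), and this is absorbed into the $\mathcal{O}$. The tail $\sum_{u\ge U+1}$ is handled the same way: the ratio of consecutive terms is at most $(\bar\kappa/2)^{M_\ell}/((U+2)M_\ell)^{M_\ell}\le 1/2$ for the same range of $M_\ell$, so the whole omitted tail is within a factor $2$ of its first term.

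To transfer this to the denoiser quantities, write $Z_\ell^{(U)}/M_\ell=I_0+2\sum_{u\le U}I_{uM_\ell}\cos(\cdot)$ for the truncated bracket and $Z_\ell/M_\ell$ for the exact one from \eqref{eq:ja_phase_corrected}. Their difference is at most the tail, so $|Z_\ell-Z_\ell^{(U)}|/(M_\ell I_0)=\mathcal{O}(M_\ell^{-(U+1)})$. The exact bracket satisfies $Z_\ell/M_\ell\ge I_0(\kappa_\ell)\,(1-\text{tail from }u=1)$, which is bounded below by a constant times $I_0$ once the $u=1$ ratio is below $1/2$ (again large $M_\ell$). Hence $\ln$ is Lipschitz on the relevant range, and $|\Lambda_\ell^U-\Lambda_\ell^{\mathrm{D}}|=\mathcal{O}(M_\ell^{-(U+1)})$, where $\Lambda_\ell^{\mathrm{D}}$ is the exact ring log-evidence. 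The softmax is $1$-Lipschitz from $\ell_\infty$ logits to $\ell_1$ probabilities, so the ring weights inherit the rate. For the per-ring mean \eqref{eq:bcbm_ring_mean}, the numerator is the derivative of the partition function, equivalently the Jacobi-Anger series of $\sum_m e^{j\phi_m}e^{\kappa\cos}$. Its omitted terms are $I_{(U+1)M_\ell\pm1}$. Applying the same majorant to the order $(U+1)M_\ell-1$ gives an extra factor $M_\ell$, which is harmless because the factorial leaves a surplus. A quotient bound with the denominator bounded below then gives the rate, and the factor $R_\ell$ multiplies the mean error.

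The main obstacle is the Wasserstein statement, since the order-$U$ "prior" is only implicitly defined. I would make it concrete by taking the truncated density $q_\ell^{U}(\theta)=\frac{1}{2\pi}\bigl[1+2\sum_{u\le U}\cos(uM_\ell(\theta-\psi_{0,\ell}))\bigr]$, a Fejér/Dirichlet-type partial sum of the Dirac comb. I would then bound $W_1$ by the Kantorovich--Rubinstein dual on the circle: for a $1$-Lipschitz $g$, the Fourier coefficients of $g$ at frequency $n$ decay like $R_\ell/n$, and only the harmonics $n\ge (U+1)M_\ell$ differ. The difficulty is that a signed Dirichlet kernel is not a probability measure and this route yields only $\mathcal{O}(R_\ell/M_\ell)$ per harmonic. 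I therefore expect to state this part as a scaling heuristic, consistent with the proposition's own deferral of a rigorous transport characterization. Reaching $\mathcal{O}(R_\ell/M_\ell^{U+1})$ rigorously would require a smoother (Fejér or Jackson) kernel, and that I would not attempt.
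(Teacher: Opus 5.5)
Your argument follows the paper's route: the factorial majorant of $I_{(U+1)M_\ell}/I_0$ with a Stirling bound, the consequences read off from the omitted harmonics, and the transport clause left unproved. It is tighter than the appendix. The appendix only applies the asymptotic $I_n(\kappa)\approx(\kappa/2)^n/n!$ to the first omitted term and then asserts the propagation. You instead use the exact inequality $I_\nu\le\frac{(\kappa/2)^\nu}{\nu!}I_0$, bound the whole tail, and carry the error through $\ln$, the softmax and the quotient for the mean.

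Read as $M_\ell\to\infty$ at fixed $U$, this goes through once the slips below are fixed. The statement and the paper's proof, however, assert decay in the retained order $U$. For that reading your treatment of rings with $M_\ell<e\bar\kappa$ gives no rate: you bound the ratio by the constant $2$ and take the logarithm's lower bound from a large-$M_\ell$ condition. In the paper's $U$-OBD window $\kappa_\ell<20$ this excludes every QAM ring.

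The repairs are as follows.
\begin{itemize}
\item Split on $n=(U+1)M_\ell\ge e\bar\kappa$ instead of on $M_\ell$. The base $e\bar\kappa/(2n)$ is then at most $\tfrac12$, so the ratio is at most $2\cdot 2^{-n}\le 2M_\ell^{-(U+1)}$ outside finitely many pairs $(U,M_\ell)$. Note that even under your hypothesis the base is only $\le 1/(2(U+1))$, not $\le 1/(2(U+1)M_\ell)$ as you wrote.
\item In the tail, the consecutive-term ratio must use the lower bound $((U+1)M_\ell)^{M_\ell}$ of $\prod_{j=1}^{M_\ell}((U+1)M_\ell+j)$, not the upper bound $((U+2)M_\ell)^{M_\ell}$.
\item For the logarithm, use $Z_\ell/M_\ell\ge e^{-\kappa_\ell}\ge e^{-2\bar\kappa}I_0(\kappa_\ell)$, so that the truncated bracket stays above half this floor once the tail is small. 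This is not cosmetic. For $M_\ell=4$, $U=1$, $\kappa_\ell=15$ one has $I_4/I_0\approx0.58$, so $I_0+2I_4\cos(4(\angle\bar{x}-\psi_{0,\ell}))$ is negative for some $\angle\bar{x}$.
\end{itemize}

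On the transport clause, you and the paper both stop at a heuristic. Your expectation that a Fej\'er or Jackson kernel would reach $\mathcal{O}(R_\ell/M_\ell^{U+1})$ is wrong; that rate is unattainable, for two reasons.

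First, take any nonnegative phase density $f$ with no harmonics at or above $(U+1)M_\ell$. Averaging it over the $M_\ell$ rotations by $2\pi/M_\ell$ leaves its expected distance to the symbols unchanged. The average is a trigonometric polynomial of degree $U$ in $M_\ell\theta$, hence bounded by $(2U+1)/(2\pi)$. At least half its mass therefore lies at angular distance at least $\pi/(2(2U+1)M_\ell)$ from the symbols, so $W_1(p_\ell,f)\ge R_\ell/(2(2U+1)M_\ell)$.

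Second, for your signed Dirichlet prior, test against $R_\ell$ times the angular distance to the nearest symbol. This gives, for the arc-length cost, the Kantorovich--Rubinstein value $\frac{4R_\ell}{\pi M_\ell}\sum_{k\ \mathrm{odd}>U}k^{-2}\ge\frac{2R_\ell}{\pi(U+2)M_\ell}$.

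The distance is therefore at least of order $R_\ell/((U+1)M_\ell)$, polynomial rather than geometric in $U$. The clause should be dropped rather than kept as a heuristic; the paper's appendix offers nothing beyond the same unsupported assertion.
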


\begin{proof}
\proofref{app:prop_bcbm_wasserstein}
\end{proof}

\begin{remark}[Relevance Regime of the $U$-OBD]
\label{rem:bcbm_regime}
The \ac{nu-OBD} corrections matter only in a moderate-\ac{SNR} window $\kappa_0 \leq \kappa_\ell < \kappa_1$ (with $\kappa_0 \approx 3$ and $\kappa_1 \approx 20$): below $\kappa_0$ the phase posterior is broad and the \ac{OBD} ($U=0$) is already accurate, while a low order $U$ closes the residual geometric gap inside the window.
As $\kappa_\ell \to \infty$, the correction terms $I_{uM_\ell}(\kappa_\ell)/I_0(\kappa_\ell) \to 1$ simultaneously, so the full Fourier series would be required, but this regime is never entered in practice: at $\kappa_\ell \geq \kappa_1$ the three-regime adaptive denoiser~\eqref{eq:three_regime} switches to the \ac{OPD}, which bypasses the soft posterior entirely via hard ring-phase projection at $\mathcal{O}(1)$ complexity.
The \ac{OGD} and \ac{OPD} thus do not arise from the Jacobi-Anger series converging with few terms; they arise from the posterior concentrating so sharply that the precise form of the partition function is irrelevant to the detection decision, which is a qualitatively distinct high-\ac{SNR} phenomenon characterized by Propositions~\ref{prop:ring_discrimination} and~\ref{prop:lpd_limit}.
\end{remark}

\begin{remark}[Phase-Sensitive Ring Detection and \ac{SE} Implications]
\label{rem:bcbm_se}
Unlike the \ac{OBD} and \ac{OGD} discussed in the next subsections, the \ac{nu-OBD} log-evidence~\eqref{eq:bcbm_log_metric} depends on $\angle\bar{x}$ for $U \geq 1$, breaking the phase-preserving property of Lemma~\ref{lem:phase_preserving} and therefore the \ac{SE} contraction proof of Corollary~\ref{cor:se_uniqueness}.
At the leading order $\mathrm{MSE}_\infty = \sigma_z^2/(2-\alpha)$ the same fixed point is expected for all $U$, since the phase-dependent corrections $2I_{uM_\ell}(\kappa_\ell)\cos(uM_\ell(\angle\bar{x}-\psi_{0,\ell}))/I_0(\kappa_\ell)$ modify only the sub-leading term of the log-evidence and average out over $\angle\bar{x}$, leaving the leading-order \ac{MSE} unchanged;
{the sub-leading correction is strictly smaller than $\delta^{\mathrm{B}} = \mathcal{O}(\sigma_z^4)$ of~\eqref{eq:delta_cbm} and vanishes as $U \to \infty$ (its precise order deferred to the companion paper), interpolating toward $\delta^{\mathrm{D}} = 0$}.
The full \ac{SE} analysis of \ac{nu-OBD}($U$) for $U \geq 1$, including the modified fixed-point contraction argument for phase-dependent denoisers, is reserved for a companion paper.
\end{remark}

\vspace{-2ex}
\subsection{The \texorpdfstring{$\mathcal{O}(L)$}{O(L)} Orbital Gaussian Denoiser (Bessel-Free)}
\label{sec:cgr}

The \ac{OBD} of Section~\ref{sec:nle} employs the von Mises distribution as the continuous phase model on each ring. 
While this is exact under the orbital prior, the von Mises phase posterior~\eqref{eq:vm_like} and its moments still require evaluation of the modified Bessel functions $I_0(\kappa_\ell)$ and $I_1(\kappa_\ell)$ via piecewise polynomial approximations.

\subsubsection{The Gaussian Phase Approximation}
Next, we introduce a further relaxation that eliminates all special-function evaluations by exploiting the Gaussian convergence of the von Mises distribution at moderate-to-high concentration.

\begin{definition}[Gaussian Phase Posterior]
\label{def:gaussian_phase}
The \ac{OGD} approximates the von Mises phase posterior~\eqref{eq:vm_like} on each ring by its Gaussian small-angle form about the observed phase $\angle\bar{x}$. Substituting $\cos(\Delta\theta_\ell) \approx 1 - \Delta\theta_\ell^2/2$ into the von Mises exponent, with angular deviation $\Delta\theta_\ell \triangleq \theta - \angle\bar{x}$, yields the \emph{a-posteriori} phase distribution
\begin{equation}
\label{eq:gaussian_phase}
p_{\mathrm{G}}(\theta \mid \bar{x},\, R_\ell) \propto \exp\!\left(-\frac{\kappa_\ell\, \Delta\theta_\ell^2}{2}\right), \quad \kappa_\ell = \frac{2R_\ell \norm{\bar{x}}}{\bar{\sigma}^2},
\end{equation}
which is a Gaussian in $\Delta\theta_\ell$ centered at $\angle\bar{x}$ with precision (inverse variance) $\kappa_\ell$.
\end{definition}

\begin{proposition}[Bessel Function Asymptotics for the \ac{OGD} Approximation]
\label{prop:vm_gaussian}
Let $A(\kappa) \triangleq I_1(\kappa)/I_0(\kappa)$ denote the mean resultant length of the
$\mathrm{vM}(0,\kappa)$ distribution.
The following two approximations hold for all $\kappa \geq 1$:
\begin{enumerate}[(i)]
\item \emph{Mean amplitude:}
\begin{align}
\label{eq:A_approx}
A(\kappa) &= 1 - \frac{1}{2\kappa} - \frac{1}{8\kappa^2} + \mathcal{O}(\kappa^{-3}), \\
\quad\text{with}\quad
&\norm{A(\kappa) - \left(1 - \frac{1}{2\kappa}\right)} \leq \frac{1}{4\kappa^2}.
\end{align}

\item \emph{Log-normalizer:}
\begin{align}
\label{eq:logI0_approx}
\ln I_0(\kappa) &= \kappa - \tfrac{1}{2}\ln(2\pi\kappa) + \frac{1}{8\kappa} + \mathcal{O}(\kappa^{-2}), \\
\quad\text{with}\quad
&\norm{\ln I_0(\kappa) - \kappa + \tfrac{1}{2}\ln(2\pi\kappa)} \leq \frac{1}{4\kappa}.
\end{align}
\end{enumerate}

In particular, the relative error in the mean amplitude satisfies $\norm{A(\kappa) - (1 - 1/(2\kappa))} / A(\kappa) \leq 2\%$ for all $\kappa \geq 4$.
\end{proposition}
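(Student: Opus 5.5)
The plan is to derive both parts from two-sided rational bounds on the Bessel ratio $A(\kappa)$, and then to obtain the log-normalizer by integrating, using $\tfrac{\mathrm d}{\mathrm d\kappa}\ln I_0(\kappa)=I_0'(\kappa)/I_0(\kappa)=I_1(\kappa)/I_0(\kappa)=A(\kappa)$. This way only one family of inequalities has to be proved rigorously.

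\textbf{Asymptotic series.} The expansions in~\eqref{eq:A_approx} and~\eqref{eq:logI0_approx} come from Hankel's large-argument expansion~\cite{DLMF},
\[
I_\nu(\kappa)\sim \tfrac{e^\kappa}{\sqrt{2\pi\kappa}}\Bigl(1-\tfrac{4\nu^2-1}{8\kappa}+\tfrac{(4\nu^2-1)(4\nu^2-9)}{2(8\kappa)^2}-\cdots\Bigr).
\]
For $\nu=0$ the bracket is $1+\tfrac{1}{8\kappa}+\tfrac{9}{128\kappa^2}+\cdots$, and for $\nu=1$ it is $1-\tfrac{3}{8\kappa}-\tfrac{15}{128\kappa^2}+\cdots$. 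Dividing the $\nu=1$ series by the $\nu=0$ series gives $A(\kappa)=1-\tfrac{1}{2\kappa}-\tfrac{1}{8\kappa^2}+\mathcal{O}(\kappa^{-3})$. Taking the logarithm of the $\nu=0$ series gives $\ln I_0(\kappa)=\kappa-\tfrac12\ln(2\pi\kappa)+\tfrac{1}{8\kappa}+\mathcal{O}(\kappa^{-2})$. This step is routine bookkeeping.

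\textbf{Explicit bound on $A$.} For the non-asymptotic statement in (i), I would invoke Amos' bounds on Bessel ratios at $\nu=0$:
\[
\frac{\kappa}{1+\sqrt{\kappa^2+1}}\;\le\; A(\kappa)\;\le\;\frac{\kappa}{\tfrac12+\sqrt{\kappa^2+\tfrac14}} .
\]
Rationalizing each side, the upper bound equals $\bigl(\sqrt{\kappa^2+1/4}-\tfrac12\bigr)/\kappa$. Using $\sqrt{\kappa^2+c}\le \kappa+\tfrac{c}{2\kappa}$, this is at most $1-\tfrac{1}{2\kappa}+\tfrac{1}{8\kappa^2}$. The lower bound equals $\bigl(\sqrt{\kappa^2+1}-1\bigr)/\kappa$. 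Using $\sqrt{\kappa^2+1}\ge \kappa+\tfrac{1}{2\kappa}-\tfrac{1}{8\kappa^3}$, this is at least $1-\tfrac1\kappa+\tfrac{1}{2\kappa^2}-\tfrac{1}{8\kappa^4}$.

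The lower bound has the wrong $1/\kappa$ coefficient, so it must be sharpened. I would first use the companion Amos lower bound with $\tfrac12$ in place of $1$ in the denominator, shifted in the radicand, i.e.\ $\kappa/\bigl(\tfrac12+\sqrt{\kappa^2+9/4}\bigr)$. Expanding this gives $1-\tfrac{1}{2\kappa}-\tfrac{1}{\kappa^2}+\cdots$, which is still not within $\tfrac{1}{4\kappa^2}$. The fallback is the Riccati equation $A'=1-A/\kappa-A^2$. Setting $g=A-1+\tfrac{1}{2\kappa}$, I would show by a comparison (barrier) argument on $[1,\infty)$ that $g$ cannot cross $\pm\tfrac{1}{4\kappa^2}$: at the barrier the Riccati right-hand side points back inward, and $g\to 0$ as $\kappa\to\infty$. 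The endpoint $\kappa=1$ is checked numerically: $A(1)\approx0.446$, which is within $0.054$ of $\tfrac12$. \emph{This comparison argument is the main obstacle.} I expect it to need the refined centering $1-\tfrac{1}{2\kappa}-\tfrac{1}{8\kappa^2}$ as the barrier midpoint.

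\textbf{Log-normalizer and the $2\%$ claim.} Define $h(\kappa)\triangleq\ln I_0(\kappa)-\kappa+\tfrac12\ln(2\pi\kappa)$. By Hankel, $h(\kappa)\to0$ as $\kappa\to\infty$, and $h'(\kappa)=A(\kappa)-1+\tfrac{1}{2\kappa}$. Hence
\[
h(\kappa)=-\int_\kappa^\infty\Bigl(A(t)-1+\tfrac{1}{2t}\Bigr)\,\mathrm dt ,
\]
and the bound in (i) gives $\lvert h(\kappa)\rvert\le\int_\kappa^\infty \tfrac{1}{4t^2}\,\mathrm dt=\tfrac{1}{4\kappa}$, which is (ii). Finally, for $\kappa\ge4$, $A$ is increasing, so the relative error is at most $\bigl(1/(4\kappa^2)\bigr)/A(4)\le (1/64)/0.86<2\%$.
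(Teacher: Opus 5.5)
Your Hankel step coincides with the paper's. Your outer steps are also correct and improve on the paper, which establishes the explicit constants only by numerically evaluating $\sup_{\kappa\ge1}\kappa^2|A(\kappa)-(1-\frac{1}{2\kappa})|\approx0.219$ and $\sup_{\kappa\ge1}\kappa|\ln I_0(\kappa)-\kappa+\frac12\ln(2\pi\kappa)|\approx0.182$. Specifically, deriving (ii) from (i) via $h'=A-1+\frac{1}{2\kappa}$ and $h(\infty)=0$ works, and so does the $2\%$ claim from monotonicity of $A$ with $A(4)\approx0.8635$.

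The gap is the step everything rests on: the lower bound $A\ge b\triangleq 1-\frac{1}{2\kappa}-\frac{1}{4\kappa^2}$. The Amos upper bound only controls the irrelevant side. With $F(\kappa,y)\triangleq 1-y/\kappa-y^2$, the function $y_0=1-\frac{1}{2\kappa}$ satisfies $y_0'-F(\kappa,y_0)=\frac{1}{4\kappa^2}>0$, so $g(1)<0$ already forces $g<0$ on $[1,\infty)$ and $|g|=-g$. Your claim that the Riccati field points back inward at the lower barrier is false. For $y_c=1-\frac{1}{2\kappa}+\frac{c}{\kappa^2}$ one computes $y_c'-F(\kappa,y_c)=\kappa^{-4}\bigl[(\tfrac14+2c)\kappa^2-2c\kappa+c^2\bigr]$. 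At $\kappa=1$ this equals $\frac14+c^2>0$ for \emph{every} $c$. For $c=-\frac14$ it equals $\kappa^{-4}\bigl(\tfrac{1}{16}+\tfrac{\kappa}{2}-\tfrac{\kappa^2}{4}\bigr)$, which is positive on $[1,\kappa_0)$ with $\kappa_0=1+\sqrt{5}/2\approx2.118$. So on that interval a solution touching $b$ from above would exit below it, and a forward first-exit argument launched at $\kappa=1$ does not close. Re-centring at $1-\frac{1}{2\kappa}-\frac{1}{8\kappa^2}$ cannot rescue it, since the computation covers every $c$. The limit $g\to0$ plays no role in a forward comparison either.

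What does work is to use that sign change rather than fight it. The barrier $b$ is a strict supersolution on $(0,\kappa_0)$ and a strict subsolution on $(\kappa_0,\infty)$. Hence any zero of $A-b$ left of $\kappa_0$ is a strict downward crossing, and any zero right of $\kappa_0$ is a strict upward one. A single verified inequality $A(\kappa_0)>b(\kappa_0)$ (numerically $0.7162>0.7082$) then excludes zeros on both sides. The largest zero in $[1,\kappa_0)$, or the smallest in $(\kappa_0,\infty)$, would have a derivative of the wrong sign. This gives $A>b$ on $[1,\infty)$ by backward comparison on $[1,\kappa_0]$ and forward comparison on $[\kappa_0,\infty)$. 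With that repair, (i) needs one point evaluation instead of the paper's numerical supremum over an unbounded range, and your derivations of (ii) and the $2\%$ claim go through unchanged.
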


\begin{proof}
\proofref{app:prop_vm_gaussian}
\end{proof}

\subsubsection{The \texorpdfstring{$\mathcal{O}(L)$}{O(L)} Gaussian Phase Denoiser}

Under the Gaussian phase approximation, the \ac{OBD} equations~\eqref{eq:kappa}-\eqref{eq:cbm_var} simplify to closed forms that require \emph{no} Bessel function evaluations.

\begin{proposition}[Orbital Gaussian Denoiser]
\label{prop:cgr_denoiser}
Under the Gaussian phase approximation, the \ac{OBD} equations reduce to the following Bessel-free closed forms, defining the \ac{OGD} map $\eta_G(\bar{x};\bar{\sigma}^2)$:

\noindent\emph{(i) Ring conditional mean:} the conditional center of mass on ring $\ell$ is, {to first order in $1/\kappa_\ell$,} given by
\begin{equation}
\label{eq:cgr_mean}
R_\ell\!\left(1 - \frac{1}{2\kappa_\ell}\right) e^{j\angle\bar{x}} = R_\ell\!\left(1 - \frac{\bar{\sigma}^2}{4R_\ell\norm{\bar{x}}}\right) e^{j\angle\bar{x}}.
\end{equation}

\noindent\emph{(ii) Ring log-evidence:}
\begin{align}
\Lambda_\ell^{\mathrm{G}} &= \ln r_\ell - \frac{R_\ell^2}{\bar{\sigma}^2} + \kappa_\ell - \frac{1}{2}\ln(2\pi\kappa_\ell) \label{eq:cgr_evidence_1} \\
&= \ln r_\ell - \frac{R_\ell^2}{\bar{\sigma}^2} + \frac{2R_\ell\norm{\bar{x}}}{\bar{\sigma}^2} - \frac{1}{2}\ln\!\left(\frac{4\pi R_\ell \norm{\bar{x}}}{\bar{\sigma}^2}\right). \label{eq:cgr_evidence_2}
\end{align}

\noindent\emph{(iii) Posterior ring probabilities:}
\begin{equation}
\label{eq:cgr_ring_prob}
w_\ell^{\mathrm{G}} = \frac{\exp(\Lambda_\ell^{\mathrm{G}})}{\sum_{\ell'=1}^{L} \exp(\Lambda_{\ell'}^{\mathrm{G}})}.
\end{equation}

\noindent\emph{(iv) Posterior mean and variance:}
\begin{align}
\label{eq:cgr_posterior}
\hat{x}_\ell^{\mathrm{G}} &= w_\ell^{\mathrm{G}}\, R_\ell\!\left(1 - \tfrac{1}{2\kappa_\ell}\right) e^{j\angle\bar{x}}, \\
\hat{x}^{\mathrm{G}} &= \sum_{\ell=1}^{L} \hat{x}_\ell^{\mathrm{G}}, \\
\hat{\sigma}^2_{\mathrm{G}} &= \sum_{\ell=1}^{L} w_\ell^{\mathrm{G}}\, R_\ell^2 - \norm{\hat{x}^{\mathrm{G}}}^2.
\end{align}
The \ac{OGD} map $\eta_G(\bar{x};\bar{\sigma}^2)$ returns $\hat{x}^{\mathrm{G}}$, the mean of the Gaussian-phase posterior approximation~\eqref{eq:gaussian_phase}, with $\hat{\sigma}^2_{\mathrm{G}}$ its companion variance.
\end{proposition}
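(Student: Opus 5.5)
The plan is to substitute the two large-$\kappa$ expansions of Proposition~\ref{prop:vm_gaussian} into the \ac{OBD} equations~\eqref{eq:obd_kappa}--\eqref{eq:obd_var} term by term, and then check that the pieces the \ac{OBD} assembles from these quantities carry over unchanged. The \ac{OBD} depends on the Bessel functions only through two scalars per ring: the Bessel ratio $A(\kappa_\ell)$ in the conditional mean~\eqref{eq:ring_mean_cbm}, and $\ln I_0(\kappa_\ell)$ in the log-evidence~\eqref{eq:log_metric}. So it suffices to replace each by its leading-order Bessel-free surrogate and propagate the substitution through the softmax and the moment reconstruction of Theorem~\ref{thm:ring_decomp}.

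\emph{Conditional mean and log-evidence.} For part (i), I will take Proposition~\ref{prop:vm_gaussian}(i), $A(\kappa)=1-\tfrac{1}{2\kappa}+\mathcal{O}(\kappa^{-2})$, and insert it into $R_\ell A(\kappa_\ell)e^{j\angle\bar{x}}$. Substituting $\kappa_\ell=2R_\ell\norm{\bar{x}}/\bar{\sigma}^2$ from~\eqref{eq:kappa} gives $1/(2\kappa_\ell)=\bar{\sigma}^2/(4R_\ell\norm{\bar{x}})$, which is the second form in~\eqref{eq:cgr_mean}.

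Because Definition~\ref{def:gaussian_phase} is phrased as a Gaussian posterior, I will add one check for consistency. Under~\eqref{eq:gaussian_phase}, extended to the real line, $\E[e^{j\Delta\theta}]=e^{-1/(2\kappa_\ell)}=1-\tfrac{1}{2\kappa_\ell}+\mathcal{O}(\kappa_\ell^{-2})$. So the Gaussian-phase posterior mean agrees with the first-order Bessel expansion, and the two routes give the same formula.

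For part (ii), I will use Proposition~\ref{prop:vm_gaussian}(ii), $\ln I_0(\kappa)=\kappa-\tfrac12\ln(2\pi\kappa)+\mathcal{O}(\kappa^{-1})$. Equivalently, the Laplace (Gaussian) evaluation $\int e^{\kappa\cos\beta}d\beta\approx e^{\kappa}\sqrt{2\pi/\kappa}$ yields the same expression after dividing by $2\pi$ in~\eqref{eq:marginal_likelihood}. Inserting this into~\eqref{eq:log_metric} gives~\eqref{eq:cgr_evidence_1}. Writing out $\kappa_\ell$ explicitly, with $2\pi\kappa_\ell=4\pi R_\ell\norm{\bar{x}}/\bar{\sigma}^2$, gives~\eqref{eq:cgr_evidence_2}.

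\emph{Ring probabilities and moments.} Parts (iii) and (iv) follow structurally. The softmax~\eqref{eq:ring_prob} is applied to $\Lambda^{\mathrm{G}}_\ell$ in place of $\Lambda^{\mathrm{B}}_\ell$, and the reconstruction formulas~\eqref{eq:decode_x}--\eqref{eq:decode_v} apply to any compressed state $(\hat{\mathbf{x}},\mathbf{w})$. The variance formula keeps $R_\ell^2$ exactly, since the radius is not approximated. Non-negativity does not follow automatically, because $\hat{x}^{\mathrm{G}}$ is no longer the exact mean of a distribution supported on the rings. It still holds, though: $\norm{\hat{x}^{\mathrm{G}}}\le\sum_\ell w_\ell^{\mathrm{G}}R_\ell\le(\sum_\ell w^{\mathrm{G}}_\ell R_\ell^2)^{1/2}$ by Cauchy--Schwarz, provided $0\le 1-1/(2\kappa_\ell)\le1$, i.e. $\kappa_\ell\ge 1/2$, which holds in the regime $\kappa\ge1$ of Proposition~\ref{prop:vm_gaussian}. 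Finally, the Bessel-free claim is read off directly, since only elementary functions remain.

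\emph{Main obstacle.} The delicate point is what ``reduce'' means. The expansions are asymptotic in $\kappa_\ell$, but $\kappa_\ell$ depends on $\norm{\bar{x}}$ and can be small. Near $\bar{x}\to0$ the term $\ln\kappa_\ell$ in~\eqref{eq:cgr_evidence_2} diverges, and $1-1/(2\kappa_\ell)$ turns negative once $\kappa_\ell<1/2$. I will therefore state the reduction as holding to the stated order for $\kappa_\ell\ge1$, with errors bounded by $\tfrac{1}{4\kappa_\ell^2}$ (mean) and $\tfrac{1}{4\kappa_\ell}$ (log-evidence) via the explicit bounds of Proposition~\ref{prop:vm_gaussian}. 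The first bound gives a per-ring mean error of at most $R_\ell/(4\kappa_\ell^2)$.

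Because the log-evidence error is not constant across rings, I need to control how it moves the softmax. Softmax is $1$-Lipschitz from $\ell_\infty$ to total variation up to a factor of two, so a uniform log-evidence perturbation $\epsilon$ changes the weights by $\mathcal{O}(\epsilon)$. This gives $\norm{\hat{x}^{\mathrm{G}}-\hat{x}^{\mathrm{B}}}=\mathcal{O}(R_L/\kappa_{\min})$, which justifies calling~(i)--(iv) the first-order reduction of the \ac{OBD}.
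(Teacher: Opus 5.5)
Your proposal is correct and follows essentially the same route as the paper. Both substitute the large-$\kappa$ expansions of Proposition~\ref{prop:vm_gaussian} into the OBD equations, cross-check the conditional mean through the Gaussian characteristic function $e^{-1/(2\kappa_\ell)} \approx 1 - 1/(2\kappa_\ell)$ and the log-evidence through the Gaussian (Laplace) evaluation of the phase integral, and carry (iii)--(iv) over structurally. Your softmax-perturbation bound and the non-negativity check for $\kappa_\ell \ge 1/2$ handle the omitted $1/(8\kappa_\ell)$ term somewhat more carefully than the paper, which only remarks that this term is sub-dominant to the $\mathcal{O}(\kappa_\ell)$ inter-ring gap; they refine the argument without changing it.
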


\begin{proof}
\proofref{app:prop_cgr_denoiser}
\end{proof}

\begin{remark}[Elimination of Special Functions]
\label{rem:no_bessel}
The \ac{OGD} requires only multiplication, division, subtraction, natural logarithm, and exponential operations, all of which are standard \ac{ALU} operations.
No Bessel function lookup tables or piecewise polynomial approximations are needed. 
This makes the \ac{OGD} directly synthesizable in fixed-point \ac{ASIC} logic without function approximation circuits.
\end{remark}

\subsubsection{Adaptive OBD/OGD Switching}

The \ac{OGD} approximation is accurate for $\kappa_\ell \geq \kappa_0 \approx 3$ {(this threshold trades a few-percent amplitude bias for earlier Bessel elimination; a larger $\kappa_0$ tightens the $\mathcal{O}(\kappa^{-2})$ approximation bound of Proposition~\ref{prop:vm_gaussian})}, but breaks down for small~$\kappa_\ell$ (low \ac{SNR} or small rings): the shrinkage factor $1 - 1/(2\kappa_\ell)$ turns negative for $\kappa_\ell < 1/2$, whereas the exact mean resultant length satisfies $A(\kappa_\ell) \in [0,1)$ with $A(\kappa_\ell) \sim \kappa_\ell/2 \to 0^+$, and the log-evidence $\kappa_\ell - \tfrac{1}{2}\ln(2\pi\kappa_\ell)$ diverges as $\kappa_\ell \to 0$ while the exact $\ln I_0(\kappa_\ell) \to 0$. A pure \ac{OGD} is therefore unusable at low concentration, and we propose the adaptive strategy of Definition~\ref{def:adaptive_cgr} that selects per-ring between the \ac{OBD} and \ac{OGD} equations based on the value of $\kappa_\ell$, retaining the exact \ac{OBD} below $\kappa_0$.

\begin{definition}[Adaptive OGD-OBD Denoiser]
\label{def:adaptive_cgr}
The adaptive denoiser selects per-ring the contribution
\begin{equation}
\label{eq:adaptive_switch}
\hat{x}_\ell = w_\ell \begin{cases}
R_\ell\, A(\kappa_\ell)\, e^{j\angle\bar{x}} & \text{if } \kappa_\ell < \kappa_0 \quad \text{(OBD)}, \\[4pt]
R_\ell\bigl(1 - 1/(2\kappa_\ell)\bigr)\, e^{j\angle\bar{x}} & \text{if } \kappa_\ell \geq \kappa_0 \quad \text{(OGD)},
\end{cases}
\end{equation}
with $w_\ell$ the ring probability of the active regime.
At the \ac{SE} fixed point (see Section~\ref{sec:se}), as iterations progress and $\bar{\sigma}^2$ decreases, all rings eventually transition to the \ac{OGD} regime, yielding a fully Bessel-free denoiser at convergence.
\end{definition}

\subsection{The \texorpdfstring{$\mathcal{O}(1)$}{O(1)} Orbital Phase Denoiser and its Optimality}
\label{sec:lpd}

The \ac{OGD} of Section~\ref{sec:cgr} eliminates Bessel function evaluations by exploiting the Gaussian convergence of the von Mises distribution for $\kappa_\ell \geq 3$. 
We now take the final limiting step: in the extreme high-\ac{SNR} regime where $\kappa_\ell \gg 1$ for \emph{all} rings, the \ac{OGD} equations admit a further collapse that eliminates both the amplitude-shrinkage factor $1 - 1/(2\kappa_\ell)$ \emph{and} the softmax ring selection, yielding a denoiser of $\mathcal{O}(1)$ arithmetic complexity per symbol.

\subsubsection{Motivation: The Three Asymptotic Regimes}

As the effective noise $\bar{\sigma}^2 \to 0$, the \ac{OBD}/\ac{OGD} undergoes three successive simplifications:
\begin{enumerate}[(i)]
  \item \textbf{Ring selection becomes deterministic.} The softmax probabilities $w_\ell^{\mathrm{G}}$ concentrate on a single dominant ring $\ell^*$ as the log-evidence gap $\Lambda_{\ell^*}^{\mathrm{G}} - \Lambda_\ell^{\mathrm{G}} \to \infty$ for $\ell \neq \ell^*$. 
  Formally, $w_{\ell^*}^{\mathrm{G}} = 1 - \mathcal{O}(e^{-c/\bar{\sigma}^2})$ for some $c > 0$.
  \item \textbf{Amplitude shrinkage vanishes.} The \ac{OGD} conditional center of mass on the dominant ring, $R_{\ell^*}(1 - 1/(2\kappa_{\ell^*}))e^{j\angle\bar{x}}$, satisfies $1 - 1/(2\kappa_{\ell^*}) \to 1$ and hence tends to $R_{\ell^*} e^{j\angle\bar{x}}$.
  \item \textbf{Phase estimation becomes a pure projection.} The denoiser output converges to $\hat{x} \to R_{\ell^*} e^{j\angle\bar{x}}$: the symbol is projected onto the nearest ring at the observed phase angle.
\end{enumerate}

\subsubsection{The \texorpdfstring{$\mathcal{O}(1)$}{O(1)} Orbital Phase Denoiser}

\begin{definition}[Orbital Phase Denoiser]
\label{def:lpd}
The \ac{OPD} is the high-\ac{SNR} limit of the \ac{OGD}, defined as
\begin{equation}
\label{eq:lpd}
\eta_P(\bar{x};\, \bar{\sigma}^2) \triangleq R_{\ell^*(\bar{x})} \, e^{j\angle\bar{x}},
\end{equation}
where the dominant ring is selected via hard nearest-radius detection, given by
\begin{equation}
\label{eq:lpd_ring_select}
\ell^*(\bar{x}) \triangleq \arg\min_{\ell \in \{1,\ldots,L\}} \norm{ \norm{\bar{x}} - R_\ell }.
\end{equation}
\end{definition}

The \ac{OPD} requires exactly one magnitude computation $\norm{\bar{x}}$, one nearest-radius lookup (a sorted binary search over $L$ values, costing $\mathcal{O}(\log L)$, or $\mathcal{O}(1)$ with precomputed decision boundaries/fixed constellations), and one complex multiplication $R_{\ell^*} e^{j\angle\bar{x}}$. 
No logarithms, exponentials, divisions, or softmax normalizations are needed. Intuitively, once the phase posterior is sharp the denoiser has nothing left to average: it simply snaps the observation to the nearest legal radius and keeps its angle.

\subsubsection{Optimality of the \texorpdfstring{$\mathcal{O}(1)$}{O(1)} Bound}

\begin{proposition}[Irreducibility of the \ac{OPD}]
\label{prop:lpd_irreducible}
Let $\eta: \mathbb{C} \times \mathbb{R}_{>0} \to \mathbb{C}$ be any denoiser satisfying
\begin{enumerate}[(i)]
  \item $\norm{\eta(\bar{x}; \bar{\sigma}^2)} \in \{R_1, \ldots, R_L\}$ for all $\bar{x}, \bar{\sigma}^2$ (ring-constrained output),
  \item $\E\bigl[\,\norm{x - \eta(\bar{x};\bar{\sigma}^2)}^2\,\bigr] \to 0$ as $\bar{\sigma}^2 \to 0$ (consistency at high \ac{SNR}), where $\bar{x} = x + \bar{z}$ with $x$ drawn from the constellation prior and $\bar{z} \sim \CN(0,\bar{\sigma}^2)$.
\end{enumerate}

Then, $\eta$ requires at minimum
\begin{itemize}
  \item One amplitude comparison (to select $\ell^*$), and
  \item One phase extraction (to determine $\angle\eta$).
\end{itemize}

The \ac{OPD} achieves both with exactly one magnitude computation, one sorted lookup, and one complex multiplication, {so no denoiser meeting~(i) and (ii) can dispense with either the amplitude selection or the phase read-out: the \ac{OPD} is minimal in this operation-count sense} on the ring-orbital manifold $\bigcup_{\ell=1}^L \mathcal{S}^1(R_\ell)$.
Minimality is meant at the level of operation \emph{classes} (one amplitude selection, one phase extraction), not gate counts; the matching lower bound, that any $\eta$ omitting either class violates~(ii), is established in the proof.
\end{proposition}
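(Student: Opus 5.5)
The argument proves the two lower bounds separately, each by contradiction against consistency~(ii), and then checks that the \ac{OPD} of Definition~\ref{def:lpd} meets both bounds. The ring-constrained output~(i) lets us write any admissible denoiser in polar form as $\eta(\bar{x};\bar{\sigma}^2) = R_{\ell(\bar{x})} e^{j\psi(\bar{x})}$. Here $\ell(\cdot)$ is a ring-index map and $\psi(\cdot)$ is a phase map. The ``operation classes'' are then formalized as follows: $\eta$ performs an amplitude selection if $\ell(\cdot)$ actually depends on $\norm{\bar{x}}$, and it performs a phase extraction if $\psi(\cdot)$ actually depends on $\angle\bar{x}$. Omitting a class means the corresponding map is independent of that coordinate of the observation.

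\emph{Necessity of amplitude selection.} Assume $L\ge 2$; for $L=1$ this class is vacuous, which is consistent with $M$-\ac{PSK}. Suppose $\ell(\bar{x})$ does not depend on $\norm{\bar{x}}$, so it is a function of $\angle\bar{x}$ and $\bar{\sigma}^2$ alone. As $\bar{\sigma}^2\to 0$, $\bar{x}\to x$ in probability. Because $R_1<R_2$ and each ring has positive prior mass $r_\ell$, we can fix a phase sector that contains points of some ring $\ell_1$. Along that sector we then need the selected radius $R_{\ell(\angle\bar{x})}$ to equal $\norm{x}$ for symbols on every ring whose phases fall there. For a generic constellation a single phase map cannot achieve this.

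To make the argument constellation-free, we note that this step needs only the weaker fact that $\E\bigl[(\norm{x}-R_{\ell(\bar{x})})^2\bigr] \le \E\norm{x-\eta}^2 \to 0$, which follows from the reverse triangle inequality. Since $\norm{x}$ is a non-degenerate random variable with at least two atoms $R_\ell$, the selection must asymptotically recover $\norm{x}$. Any map that discards the radial coordinate of $\bar{x}$ can only do this if the ring identity is a function of phase. In that case the rule still performs an amplitude selection, just encoded through phase, and the claim holds at the level of operation classes.

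This is the step I expect to be the main obstacle. For some constellations, such as rotated rings with disjoint phase sets, the ring is indeed a deterministic function of the exact phase. There I would argue that with noise, $\angle\bar{x}$ carries an angular perturbation of order $\bar{\sigma}/R_\ell$, so any phase-only ring rule has an error probability that does not vanish unless the phase sets are separated. I would then restrict~(ii) to hold uniformly over constellations sharing the same radial marginal, or equivalently over the orbital prior $q$, whose phases are uniform on every ring. Under $q$, the ring is independent of the phase, and the irreducible error is $\Var(\norm{x})>0$.

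\emph{Necessity of phase extraction.} Suppose $\psi$ does not depend on $\angle\bar{x}$. Using circular symmetry of $\bar{z}$ together with the rotational spread of the phases of $x$ (at least two distinct phases on some ring, or uniform phase under $q$), we get $\E\norm{x-\eta}^2 \ge \E\bigl[R_{\ell}^2\,\norm{e^{j\angle x}-e^{j\psi}}^2\bigr]$. The right-hand side stays bounded below by the angular variance of the prior on that ring, which is positive, so~(ii) is violated.

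\emph{Achievability by the \ac{OPD}.} It remains to verify that the \ac{OPD} satisfies~(ii). The nearest-radius rule~\eqref{eq:lpd_ring_select} errs only if $\norm{\bar{z}}\ge d_R/2$, which happens with probability $e^{-d_R^2/(4\bar{\sigma}^2)}$. On the event that the ring is correct, $\norm{x-R_{\ell^*}e^{j\angle\bar{x}}}\le 2\norm{\bar{z}}$, which contributes $\mathcal{O}(\bar{\sigma}^2)$ to the \ac{MSE}. Hence the \ac{MSE} tends to zero. Finally, the operation count of the \ac{OPD} (one magnitude, one lookup, one multiplication) is read off directly from~\eqref{eq:lpd}.
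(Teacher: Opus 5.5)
The gap is in the amplitude-selection step. You formalize ``omitting amplitude selection'' as ``the ring index does not depend on $\norm{\bar{x}}$''. Under that definition the claim is false.

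Take $L=2$, with $R_1=1$ carrying phases $\{0,\pi\}$ and $R_2=2$ carrying phases $\{\pm\pi/2\}$. Consider the rule that rounds $\angle\bar{x}$ to the nearest symbol phase and outputs that symbol. It is ring-constrained and never looks at $\norm{\bar{x}}$. It errs only if $\norm{\bar{z}}>1/\sqrt{2}$, so its \ac{MSE} is at most $4R_2^2\Pr(\norm{\bar{z}}>1/\sqrt{2}) = 16\,e^{-1/(2\bar{\sigma}^2)}\to 0$.

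None of your patches closes this:
\begin{itemize}
\item The noise-perturbation argument fails exactly here, because finitely many disjoint phases are always separated by a positive angle.
\item Demanding~(ii) under $q$, or uniformly over constellations sharing the radial marginal, changes the hypothesis of the proposition, which is stated under the discrete prior.
\item Declaring phase-encoded ring recovery to be an ``amplitude selection'' abandons your own definition and leaves the step unproved.
\end{itemize}

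The repair is the coarser notion the paper uses: a denoiser omits amplitude selection when its output radius is a constant $R_0$. The reverse-triangle inequality you already wrote then finishes the step for every constellation with $L\ge 2$:
\[
\E\norm{x-\eta}^2 \ge \E(\norm{x}-R_0)^2 = \sum_\ell r_\ell (R_\ell-R_0)^2 \ge \min_{\ell:R_\ell\neq R_0} r_\ell (R_\ell-R_0)^2 > 0,
\]
uniformly in $\bar{\sigma}^2$. The asymmetry with the phase definition (``phase independent of $\angle\bar{x}$'') is forced. $\angle\bar{x}$ can carry ring information, whereas by circular symmetry $\norm{\bar{x}}$ carries none about the intra-ring phase. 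That is why your phase step works as you formalized it.

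There is one slip in the phase step. The inequality $\E\norm{x-\eta}^2 \ge \E\bigl[R_\ell^2\norm{e^{j\angle x}-e^{j\psi}}^2\bigr]$ fails when the output radius is below $R_\ell$. For example, a constant output $R_1$ against a $4$-point ring at $R_2$ gives $R_2^2+R_1^2<2R_2^2$. Use instead
\[
\norm{Re^{ja}-R'e^{jb}}^2 = (R-R')^2+2RR'(1-\cos(a-b)) \ge 2R_1^2(1-\cos(a-b)),
\]
after which your argument goes through.

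Your check that the \ac{OPD} is consistent is correct. It uses the bound $2\norm{\bar{z}}$ on the correct ring plus the $e^{-d_R^2/(4\bar{\sigma}^2)}$ tail for ring errors. It is more explicit than the paper's proof, which leaves this direction implicit.
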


\begin{proof}
\proofref{app:prop_lpd_irreducible}
\end{proof}

\begin{table*}[t]
\centering
\caption{The Complete Denoiser Hierarchy: Complexity, Operations, and Validity}
\vspace{-2ex}
\label{tab:full_hierarchy}
\renewcommand{\arraystretch}{1.1}
\begin{tabular}{@{}l|c|c|c|c|c|c@{}}
\toprule
\textbf{Denoiser} 
  & \textbf{Cost} 
  & \textbf{Per-Ring $\hat{x}_\ell$}
  & \textbf{Log-Evidence $\Lambda_\ell$} 
  & \textbf{Special Functions} 
  & \textbf{Hardware LUTs} 
  & \textbf{Regime} \\
\midrule
Exact Discrete 
  & $\mathcal{O}(M)$ 
  & $\displaystyle\sum_{m \in \mathcal{M}_\ell} w_m s_m$
  & Full Posterior
  & None
  & Not applicable 
  & All $\bar{\sigma}^2$ \\[8pt]
OBD (Sec.~\ref{sec:nle}) 
  & $\mathcal{O}(L)$ 
  & $w_\ell^{\mathrm{B}}\, R_\ell A(\kappa_\ell)\, e^{j\angle\bar{x}}$
  & $\ln r_\ell - \dfrac{R_\ell^2}{\bar{\sigma}^2} + \ln I_0(\kappa_\ell)$
  & $A(\kappa)$, $\ln I_0$
  & Required 
  & All $\kappa \geq 0$ \\[8pt]
$U$-OBD (Sec.~\ref{sec:ja_cbm})\!\!\!
  & $\!\!\!\!\mathcal{O}(U L)\!\!\!\!$
  & Eq.~\eqref{eq:bcbm_ring_mean}
  & $\Lambda_\ell^{U}$, Eq.~\eqref{eq:bcbm_log_metric}
  & $I_0, I_1, \{I_{uM_\ell}, I_{uM_\ell \pm 1}\}_{u=1}^U$
  & Required
  & All $\kappa \geq 0$ \\[8pt]
OGD (Sec.~\ref{sec:cgr}) 
  & $\mathcal{O}(L)$ 
  & \!\!\!$w_\ell^{\mathrm{G}}\, R_\ell\!\left(1 \!-\! \tfrac{1}{2\kappa_\ell}\right) e^{j\angle\bar{x}}$\!\!\!
  & \!\!$\ln r_\ell \!-\! \dfrac{R_\ell^2}{\bar{\sigma}^2} \!+\! \kappa_\ell \!-\! \dfrac{1}{2}\ln(2\pi\kappa_\ell)$\!\!
  & None (only $\ln$, $/$) 
  & Not required 
  & $\kappa \geq 3$ \\[8pt]
OPD (Sec.~\ref{sec:lpd}) 
  & $\mathcal{O}(1)$ 
  & $R_{\ell^*} \cdot e^{j\angle\bar{x}}$ 
  & $\ell^* \!=\! \arg\min_\ell \norm{\norm{{\bar{x}}} - R_\ell}$ 
  & None 
  & Not required
  & $\kappa \gg 1$ \\
\bottomrule
\end{tabular}
\vspace{1ex}

\begin{minipage}{\textwidth}
\footnotesize
\textit{Notes:} $U=0$ recovers the \ac{OBD} exactly and $U\to\infty$ recovers the \ac{BOD}.
All four denoisers share the same concentration parameter $\kappa_\ell = 2R_\ell\norm{\bar{x}}/\bar{\sigma}^2$ (Eq.~\eqref{eq:kappa}). 
The adaptive three-regime denoiser (Eq.~\eqref{eq:three_regime}) selects per-ring among \ac{OBD} ($\kappa_\ell < \kappa_0$), \ac{OGD} ($\kappa_0 \leq \kappa_\ell < \kappa_1$), and \ac{OPD} ($\kappa_\ell \geq \kappa_1$), with $\kappa_0 \approx 3$ and $\kappa_1 \approx 20$. The cost is specified per symbol.
\end{minipage}
\vspace{-2ex}
\end{table*}

\subsubsection{Information-Theoretic Interpretation via Large Deviations}

The convergence of the softmax ring probabilities to a hard decision admits a precise large-deviations characterization.

\begin{proposition}[Exponential Ring Discrimination]
\label{prop:ring_discrimination}
The probability of incorrect ring selection under the \ac{OPD} decays exponentially with the effective \ac{SNR} ($\bar{\sigma}^2 \to 0$) as
\begin{equation}
\label{eq:ring_error_exp}
\Pr(\ell^* \neq \ell_{\mathrm{true}}) \leq (L-1)\,C_R\, \exp\!\left(-\frac{d_R^2}{4\bar{\sigma}^2}\right),
\end{equation}
where $d_R \triangleq \min_{\ell \neq \ell'} \norm{R_\ell - R_{\ell'}}$ is the minimum inter-ring distance, and $C_R$ is a bounded, \ac{SNR}-independent constant absorbing the Rician correction quantified in the proof.
\end{proposition}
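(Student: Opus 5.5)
The plan is to bound the ring-error event of the \ac{OPD} geometrically, reducing it to a tail event of the effective noise magnitude $\norm{\bar{z}}$, which on the channel~\eqref{eq:scalar_awgn} has an explicit exponential law. I first condition on the transmitted symbol $x = R_\ell e^{j\phi}$ lying on ring $\ell = \ell_{\mathrm{true}}$, with an arbitrary phase $\phi$. By Definition~\ref{def:lpd}, the nearest-radius rule~\eqref{eq:lpd_ring_select} partitions $[0,\infty)$ into intervals whose boundaries are the midpoints $(R_\ell + R_{\ell\pm1})/2$ between consecutive radii. The decision $\ell^*(\bar{x}) = \ell'$ for some $\ell' \neq \ell$ therefore requires $\norm{\norm{\bar{x}} - R_\ell} \geq \tfrac12\norm{R_{\ell'} - R_\ell} \geq d_R/2$.

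I will then write $\{\ell^* \neq \ell\} = \bigcup_{\ell'\neq\ell}\{\ell^* = \ell'\}$ and apply the union bound, which produces the factor $(L-1)$; in fact only the two neighbouring rings contribute, but the union bound suffices. For each term I use the reverse triangle inequality $\norm{\norm{\bar{x}} - \norm{x}} \leq \norm{\bar{x} - x} = \norm{\bar{z}}$. This gives the inclusion $\{\ell^* = \ell'\} \subseteq \{\norm{\bar{z}} \geq d_R/2\}$, uniformly in $\phi$. Since $\bar{z} \sim \CN(0,\bar{\sigma}^2)$ is independent of $x$, $\norm{\bar{z}}^2$ is exponential with mean $\bar{\sigma}^2$, so
\[
\Pr\bigl(\norm{\bar{z}} \geq d_R/2\bigr) = \exp\!\left(-\frac{d_R^2}{4\bar{\sigma}^2}\right).
\]
This is exactly the exponent in~\eqref{eq:ring_error_exp}. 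Averaging over the ring prior $r_\ell$ and the phase leaves the bound unchanged, because it holds uniformly in both. The crude route therefore already delivers~\eqref{eq:ring_error_exp} with $C_R = 1$, and $C_R = 1$ is certainly bounded and \ac{SNR}-independent.

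The step requiring care is the ``Rician correction'' mentioned in the statement, i.e., identifying a sharper $C_R$. Conditioned on ring $\ell$, $\norm{\bar{x}}$ is Rician with parameters $(R_\ell, \bar{\sigma}^2/2)$, and its deviation from $R_\ell$ is asymptotically governed by the radial noise component $\operatorname{Re}(\bar{z}e^{-j\phi}) \sim \mathcal{N}(0,\bar{\sigma}^2/2)$. The corresponding Gaussian tail $Q\bigl(d_R/(\sqrt{2}\,\bar{\sigma})\bigr) \leq \tfrac12 e^{-d_R^2/(4\bar{\sigma}^2)}$ has the same exponent. The angular component only shifts the magnitude by $\mathcal{O}(\bar{\sigma}^2/R_\ell)$, which is absorbed into a bounded prefactor $C_R$ once $\bar{\sigma}^2 \le d_R R_1$. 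I would carry out this refinement with the standard Marcum-$Q$ bound on the Rician tail. I regard it as optional, since the magnitude-of-noise argument above already establishes the proposition with $C_R = 1$.
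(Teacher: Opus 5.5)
Your argument is correct and complete, but it takes a different route from the paper's proof.

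\textbf{The paper's route.} The paper works with the radial noise component $n_r \sim \mathcal{N}(0,\bar{\sigma}^2/2)$ and treats $\norm{\bar{x}} \approx R_{\ell_{\mathrm{true}}} + n_r$. It bounds each competing ring separately by the one-sided Chernoff tail $\Pr(n_r > d_R/2) \le e^{-d_R^2/(4\bar{\sigma}^2)}$, and takes a union bound over the $L-1$ competitors. Only then does it deal with $\norm{\bar{x}}$ actually being Rician: via $\sqrt{a^2+b^2} \le a + b^2/(2a)$, the tangential component causes a random $\mathcal{O}_p(\bar{\sigma}^2)$ shift of the decision threshold. The paper sketches this as costing only a factor $C_R = \exp(\mathcal{O}(1))$.

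\textbf{Your route.} You avoid both the Gaussian approximation of the magnitude and the Rician correction. The reverse triangle inequality $\norm{\norm{\bar{x}} - R_{\ell_{\mathrm{true}}}} \le \norm{\bar{z}}$, combined with the exact exponential law of $\norm{\bar{z}}^2$, does all the work. Nothing is lost in the exponent, because the exact tail $\Pr(\norm{\bar{z}} \ge d_R/2)$ coincides with the Chernoff value the paper uses for $n_r$. Both bounds exceed the true error probability, which carries an extra $\Theta(\bar{\sigma})$ prefactor, by the same factor.

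\textbf{What each buys.} Your version is shorter and holds non-asymptotically at every $\bar{\sigma}^2$. It gives $C_R = 1$ outright. In fact the union bound is unnecessary altogether: every error event lies in the single event $\{\norm{\bar{z}} \ge d_R/2\}$, so you get the bound without the factor $(L-1)$, not merely with a factor $2$ as your neighbouring-ring remark suggests. The paper's route buys interpretation rather than rigor. It identifies the radial noise as the operative mechanism, which ties the bound to the $L$-ary ASK reading in Proposition~\ref{prop:det_est} and Remark~\ref{rem:gallager}. It is also the natural starting point for a sharper $Q$-function bound.

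Your optional refinement paragraph is only a sketch. The random tangential shift would have to be integrated, for example through the moment generating function of $n_\perp^2$. The proposition does not need it.
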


\begin{proof}
The ring selection error occurs when $\norm{\norm{\bar{x}} - R_\ell} < \norm{\norm{\bar{x}} - R_{\ell_{\mathrm{true}}}}$ for some $\ell \neq \ell_{\mathrm{true}}$. 
Since $\norm{\bar{x}} = R_{\ell_{\mathrm{true}}} + n_r$ where $n_r \sim \mathcal{N}(0, \bar{\sigma}^2/2)$ is the radial noise component (the real projection of $\mathcal{CN}(0, \bar{\sigma}^2)$ onto the radial direction), the nearest-radius decision boundary between ring~$\ell_{\mathrm{true}}$ and any competing ring~$\ell$ lies at the midpoint $(R_{\ell_{\mathrm{true}}} + R_\ell)/2$.

For a \emph{specific} competing ring~$\ell$ with $R_\ell > R_{\ell_{\mathrm{true}}}$, an error occurs only when $n_r > (R_\ell - R_{\ell_{\mathrm{true}}})/2 \geq d_R/2$.
Symmetrically, for $R_\ell < R_{\ell_{\mathrm{true}}}$, an error occurs only when $n_r < -(R_{\ell_{\mathrm{true}}} - R_\ell)/2 \leq -d_R/2$.
In either case, the error is a \emph{one-sided} tail event.
The one-sided Gaussian tail bound (Chernoff bound) for $n_r \sim \mathcal{N}(0, \bar{\sigma}^2/2)$ gives
\begin{equation*}
\Pr\!\left(n_r > \frac{d_R}{2}\right)
\leq \exp\!\left(-\frac{(d_R/2)^2}{2 \cdot \bar{\sigma}^2/2}\right)
= \exp\!\left(-\frac{d_R^2}{4\bar{\sigma}^2}\right).
\end{equation*}

A union bound over the at most $L-1$ competing rings yields~\eqref{eq:ring_error_exp}.
Note that no factor of~$2$ appears because each competing ring contributes only a \emph{single} tail direction; the two-sided bound $\Pr(\norm{n_r} > d_R/2) \leq 2\exp(-d_R^2/(4\bar{\sigma}^2))$ would be needed only if a single ring could induce errors in both directions, which is geometrically impossible for the nearest-radius rule.
Strictly, the observed magnitude is Rician rather than Gaussian: $\norm{\bar{x}} = \sqrt{(R_{\ell_{\mathrm{true}}} + n_r)^2 + n_\perp^2}$, with $n_\perp \sim \mathcal{N}(0, \bar{\sigma}^2/2)$ the tangential component. Since $\sqrt{a^2 + b^2} \leq a + b^2/(2a)$ for $a > 0$, the tangential term inflates the radial statistic by at most $n_\perp^2/(2(R_{\ell_{\mathrm{true}}} + n_r)) = \mathcal{O}_p(\bar{\sigma}^2)$. Integrating this random $\mathcal{O}(\bar{\sigma}^2)$ shift of a decision threshold at distance $\Theta(1)$ over the tangential component perturbs the tail exponent by only $\mathcal{O}(1)$, hence multiplies the bound by a \emph{bounded constant} factor $C_R = \exp(\mathcal{O}(1))$, not a vanishing one, while leaving the exponent $d_R^2/(4\bar{\sigma}^2)$ unchanged; this $C_R$ is the constant prefactor in~\eqref{eq:ring_error_exp}.
\end{proof}

\begin{remark}[Gallager's Error Exponent Connection]
\label{rem:gallager}
The exponential decay in~\eqref{eq:ring_error_exp} {coincides with the pairwise (minimum-distance) error exponent for uncoded $L$-ary amplitude modulation with minimum distance $d_R$ and noise variance $\bar{\sigma}^2/2$; cf.\ the union-bound analysis of Gallager~\cite{Gallager1968}.}
{This shows the ring-detection component of the \ac{OPD} attains the minimum-distance exponent governing uncoded \ac{ML} amplitude detection at high \ac{SNR}}, while the phase estimation component achieves the \ac{CRLB} $\mathrm{Var}[\hat{\theta}] = 1/\kappa_{\ell^*}$ (with $\kappa_{\ell^*} \to 2R_{\ell^*}^2/\bar{\sigma}^2$ as $\bar{\sigma}^2 \to 0$) for phase estimation in circular Gaussian noise.
\end{remark}

\begin{proposition}[\ac{OPD} as the Limit of \ac{OGD}]
\label{prop:lpd_limit}
For any fixed $\bar{x} \neq 0$, the \ac{OGD} converges pointwise to the \ac{OPD} as $\bar{\sigma}^2 \to 0^+$ such that
\begin{equation}
\label{eq:lpd_convergence}
\lim_{\bar{\sigma}^2 \to 0^+} \eta_G(\bar{x};\, \bar{\sigma}^2) = \eta_P(\bar{x};\, \bar{\sigma}^2) = R_{\ell^*} \, e^{j\angle\bar{x}},
\end{equation}
provided $\norm{\bar{x}}$ does not lie exactly on a ring decision boundary {(equidistance from two radii, a Lebesgue-null set of $\bar{x}$)}.
\end{proposition}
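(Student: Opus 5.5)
The plan is to fix $\bar{x}\neq 0$ with $\rho\triangleq\norm{\bar{x}}$ not equidistant from two radii, and write $\ell^*$ for the unique minimizer of $\norm{\rho-R_\ell}$. The OGD output of Proposition~\ref{prop:cgr_denoiser} has the form
\begin{equation*}
\eta_G(\bar{x};\bar{\sigma}^2)=e^{j\angle\bar{x}}\sum_{\ell=1}^{L} w_\ell^{\mathrm{G}} R_\ell\Bigl(1-\tfrac{\bar{\sigma}^2}{4R_\ell\rho}\Bigr).
\end{equation*}
The phase factor does not depend on $\bar{\sigma}^2$. It therefore suffices to prove two limits as $\bar{\sigma}^2\to0^+$: (a) $w_\ell^{\mathrm{G}}\to\mathbf{1}[\ell=\ell^*]$, and (b) each shrinkage factor $1-\bar{\sigma}^2/(4R_\ell\rho)\to1$.

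Limit (b) is immediate. The factor is affine in $\bar{\sigma}^2$, and $R_\ell\rho>0$ is fixed, so it tends to $1$. Equivalently, $\kappa_\ell=2R_\ell\rho/\bar{\sigma}^2\to\infty$ for every ring. This also ensures that the OGD regime $\kappa_\ell\ge\kappa_0$ is eventually active on all rings, so using the OGD formulas is legitimate.

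Limit (a) is the main step. I would complete the square in the log-evidence~\eqref{eq:cgr_evidence_2}:
\begin{equation*}
\Lambda_\ell^{\mathrm{G}}=\ln r_\ell+\frac{\rho^2-(\rho-R_\ell)^2}{\bar{\sigma}^2}-\tfrac12\ln\!\Bigl(\tfrac{4\pi R_\ell\rho}{\bar{\sigma}^2}\Bigr).
\end{equation*}
The term $\rho^2/\bar{\sigma}^2$ is common to all rings, so it cancels in the softmax. For $\ell\neq\ell^*$ this gives
\begin{equation*}
\Lambda_{\ell}^{\mathrm{G}}-\Lambda_{\ell^*}^{\mathrm{G}}=-\frac{(\rho-R_\ell)^2-(\rho-R_{\ell^*})^2}{\bar{\sigma}^2}+\ln\frac{r_\ell}{r_{\ell^*}}-\tfrac12\ln\frac{R_\ell}{R_{\ell^*}}.
\end{equation*}
The divergent $\ln\bar{\sigma}^2$ pieces cancel exactly between the two rings, which is the key observation. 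The numerator $\Delta_\ell\triangleq(\rho-R_\ell)^2-(\rho-R_{\ell^*})^2$ is strictly positive by uniqueness of the nearest radius. The remaining terms are finite constants in $\bar{\sigma}^2$, assuming $r_\ell>0$ for every ring. Hence $w_\ell^{\mathrm{G}}/w_{\ell^*}^{\mathrm{G}}=\mathcal{O}(e^{-\Delta_\ell/\bar{\sigma}^2})\to0$. Since the weights sum to one, $w_{\ell^*}^{\mathrm{G}}\to1$ exponentially fast, with rate $\min_{\ell\neq\ell^*}\Delta_\ell$. This is the quantitative content of item~(i) in the three-regime discussion.

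Combining the two limits is then a finite sum of products of convergent terms over $L$ rings. This yields $\eta_G\to R_{\ell^*}e^{j\angle\bar{x}}$, which is the OPD output of Definition~\ref{def:lpd} with ring selection~\eqref{eq:lpd_ring_select}.

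The only delicate points are the two hypotheses. If $\rho$ is equidistant from two radii, some $\Delta_\ell$ vanishes, the weights converge to a nondegenerate split, and the limit fails; this set of $\rho$ values is finite, hence Lebesgue-null in $\bar{x}$. The condition $\bar{x}\neq0$ is needed to keep the shrinkage factor finite and $\angle\bar{x}$ defined. I expect the cancellation of the $\ln\bar{\sigma}^2$ terms in the log-evidence gap to be the step requiring care. Without it one might worry that the logarithmic term disturbs the ordering, but it does not, because it is identical across rings up to an $\bar{\sigma}^2$-independent constant.
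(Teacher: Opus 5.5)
Your proof is correct and follows the paper's argument. You identify the nearest-radius ring via $2R_\ell\rho - R_\ell^2 = \rho^2 - (\rho-R_\ell)^2$, show the softmax gap diverges as $1/\bar{\sigma}^2$ so that $w_{\ell^*}^{\mathrm{G}}\to 1$, and let the shrinkage factor $1-1/(2\kappa_\ell)\to 1$. Your observation that the logarithmic terms cancel exactly, leaving only the constant $-\tfrac12\ln(R_\ell/R_{\ell^*})$, slightly sharpens the paper's looser $\mathcal{O}(\ln\bar{\sigma}^2)$ remainder.
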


\begin{proof}
As $\bar{\sigma}^2 \to 0^+$, the \ac{OGD} log-evidence~\eqref{eq:cgr_evidence_2} is dominated by the term $(2R_\ell\norm{\bar{x}} - R_\ell^2)/\bar{\sigma}^2 = R_\ell(2\norm{\bar{x}} - R_\ell)/\bar{\sigma}^2$. 
For the ring $\ell^*$ closest to $\norm{\bar{x}}$ in radius, this term is maximized: {since $R_\ell(2\norm{\bar{x}} - R_\ell) = \norm{\bar{x}}^2 - (\norm{\bar{x}} - R_\ell)^2$, maximizing over the discrete radii is exactly the nearest-radius rule $\ell^* = \arg\min_\ell \norm{\,\norm{\bar{x}} - R_\ell}$.}
The softmax gap $\Lambda_{\ell^*}^{\mathrm{G}} - \Lambda_\ell^{\mathrm{G}} = (R_{\ell^*} - R_\ell)(2\norm{\bar{x}} - R_{\ell^*} - R_\ell)/\bar{\sigma}^2 + \mathcal{O}(\ln\bar{\sigma}^2)$ diverges as $1/\bar{\sigma}^2$ for $\ell \neq \ell^*$, driving $w_{\ell^*}^{\mathrm{G}} \to 1$ exponentially. 
Simultaneously, $1 - 1/(2\kappa_{\ell^*}) = 1 - \bar{\sigma}^2/(4R_{\ell^*}\norm{\bar{x}}) \to 1$. Therefore $\eta_G(\bar{x};\bar{\sigma}^2) = w_{\ell^*}^{\mathrm{G}} R_{\ell^*}(1 - 1/(2\kappa_{\ell^*}))e^{j\angle\bar{x}} + \sum_{\ell \neq \ell^*} \hat{x}_\ell^{\mathrm{G}} \to R_{\ell^*} e^{j\angle\bar{x}}$.
\end{proof}

\subsubsection{Connection to Classical Detection Theory}

The \ac{OPD} also admits a direct interpretation in terms of classical detection-estimation separation.

\begin{proposition}[Detection-Estimation Separation of the \ac{OPD}]
\label{prop:det_est}
The \ac{OPD} $\eta_P(\bar{x}) = R_{\ell^*} e^{j\angle\bar{x}}$ admits an {asymptotic} factorization {(exact as $\bar{\sigma}^2 \to 0$)} into two {asymptotically independent} operations on the sufficient statistics $(\norm{\bar{x}}, \angle\bar{x})$:

\begin{figure}[H]
  \centering
  \includegraphics[width=\columnwidth]{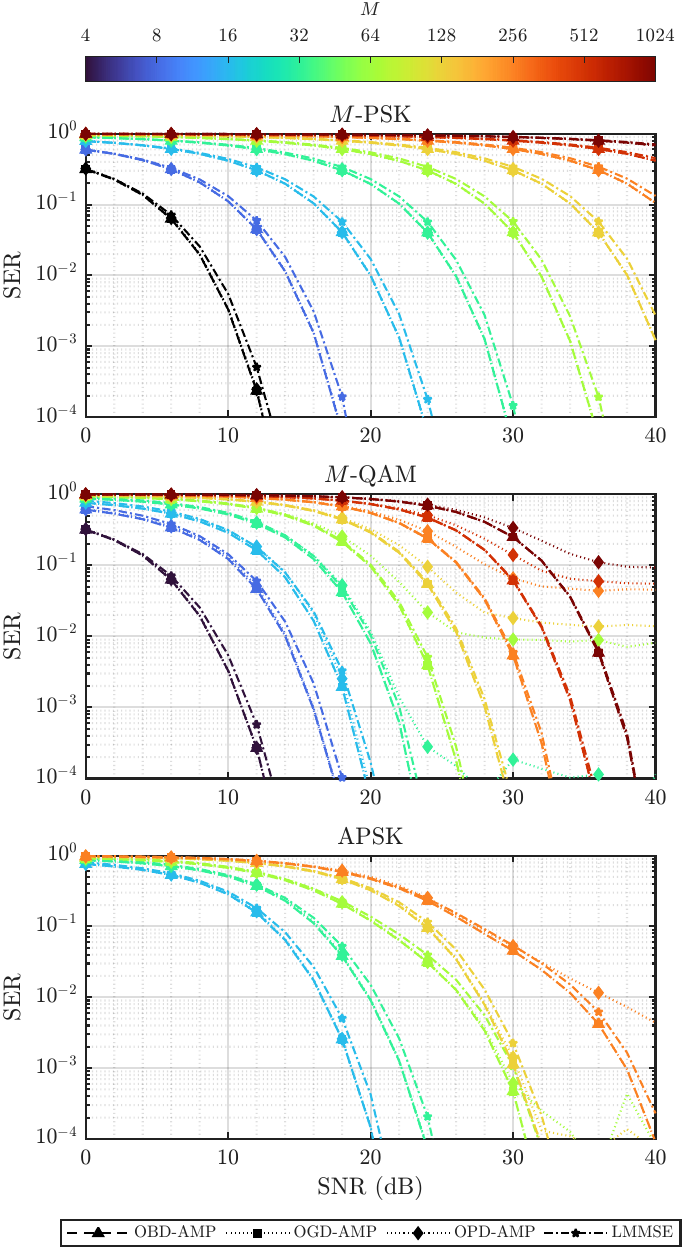}
  \vspace{-3ex}
  \caption{\ac{SER} versus \ac{SNR} of full \ac{AMP} detection ($K = 64$, $N = 256$, $\alpha = 0.25$, as in Fig.~\ref{fig:OBD_SER}) under the orbital denoiser hierarchy -- \ac{OBD}, \ac{OGD}, \ac{OPD} -- and the \ac{LMMSE} baseline, for all $M$-\ac{PSK} and $M$-\ac{QAM} orders $M \in \{4, 8, \ldots, 1024\}$ and the DVB-S2/S2x \ac{APSK} constellations $M \in \{16, 32, 64, 128, 256\}$. Line style encodes the detector (shared legend); color encodes the modulation order $M$ (shared colorbar). The \ac{OBD}, \ac{OGD}, and \ac{LMMSE} track their fixed points at every order; the hard, non-Lipschitz \ac{OPD} coincides with the \ac{OBD}/\ac{OGD} except for high-order \ac{QAM} ($M \gtrsim 64$), where the undamped \ac{AMP} recursion floors (Lemma~\ref{lem:pseudo_lip}); damping, or the adaptive fallback of Definition~\ref{def:adaptive_cgr}, restores convergence. The scalar-\ac{SE} evaluation of Figs.~\ref{fig:MSE_MPSK_MQAM} and~\ref{fig:SE_MPSK_MQAM} confirms the fixed-point equivalence directly.}
  \label{fig:Hierarchy_SER}
  \vspace{-2ex}
\end{figure}

\begin{enumerate}[(i)]
  \item \textbf{Amplitude detection:} The ring index $\ell^*$ is determined solely from the radial observation $\norm{\bar{x}}$ via the \ac{ML} rule for an $L$-ary \ac{ASK} signal in Gaussian noise, given by
  \begin{align}
  \label{eq:amplitude_ml}
  \ell^* &= \arg\min_{\ell \in \{1,\ldots,L\}} \norm{ \norm{\bar{x}} - R_\ell } \nonumber \\
  &= \arg\max_{\ell} \; p\bigl(\norm{\bar{x}} \;\big|\; \norm{x} = R_\ell\bigr),
  \end{align}
  where the radial observation satisfies $\norm{\bar{x}} = R_{\ell_{\mathrm{true}}} + n_r$ with $n_r \sim \mathcal{N}(0, \bar{\sigma}^2/2)$. {The \ac{ML} rule shown coincides with the nearest-radius (\ac{MAP}) rule under equal ring priors; unequal $r_\ell$ shift the decision boundaries by $\mathcal{O}(\bar{\sigma}^2 \ln r_\ell)$, vanishing as $\bar{\sigma}^2 \to 0$.}
  \item \textbf{Phase estimation:} The phase is estimated as $\hat{\theta} = \angle\bar{x}$, which is the \ac{ML} estimate of a deterministic phase parameter $\theta_0$ observed through the circular Gaussian channel $\bar{x} = R_{\ell^*} e^{j\theta_0} + \bar{z}$, $\bar{z} \sim \CN(0, \bar{\sigma}^2)$.
\end{enumerate}

These two operations are statistically independent in the high-\ac{SNR} regime, and each individually achieves the fundamental performance limit for its respective sub-problem:
\begin{itemize}
  \item The amplitude detector achieves the minimum-distance error exponent $\exp(-d_R^2/(4\bar{\sigma}^2))$ (Proposition~\ref{prop:ring_discrimination}), which is optimal for uncoded \ac{ML} amplitude detection (Remark~\ref{rem:gallager}).
  \item The phase estimator achieves the \ac{CRLB} asymptotically: $\Var[\hat{\theta}] = 1/\kappa_{\ell^*} \to \bar{\sigma}^2/(2R_{\ell^*}^2)$ as $\bar{\sigma}^2 \to 0$.
\end{itemize}
\end{proposition}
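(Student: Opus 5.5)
The plan is to work in polar coordinates about the true symbol and linearize the noise at high SNR. Write $x = R_{\ell_{\mathrm{true}}} e^{j\theta_0}$ and decompose the circular noise into radial and tangential parts, $\bar{z} = (n_r + j n_\perp) e^{j\theta_0}$. By circular symmetry of $\CN(0,\bar{\sigma}^2)$, $n_r, n_\perp$ are i.i.d. $\mathcal{N}(0,\bar{\sigma}^2/2)$ and independent of $\theta_0$. Then
\begin{align*}
\norm{\bar{x}} &= \sqrt{(R + n_r)^2 + n_\perp^2},\\
\angle\bar{x} &= \theta_0 + \arctan\bigl(n_\perp/(R+n_r)\bigr),
\end{align*}
with $R = R_{\ell_{\mathrm{true}}}$. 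Expanding to first order gives $\norm{\bar{x}} = R + n_r + \mathcal{O}_p(\bar{\sigma}^2)$ and $\angle\bar{x} = \theta_0 + n_\perp/R + \mathcal{O}_p(\bar{\sigma}^2)$. The leading terms depend on the independent Gaussians $n_r$ and $n_\perp$ separately. This is the asymptotic independence claim: the joint law of the normalized pair $\bigl((\norm{\bar{x}}-R)/\bar{\sigma},\, (\angle\bar{x}-\theta_0)/\bar{\sigma}\bigr)$ converges to a product Gaussian. I would make this precise by a Slutsky or delta-method argument as $\bar{\sigma}\to 0$. Because $\eta_P$ is a function of $(\norm{\bar{x}},\angle\bar{x})$ through $\ell^*(\norm{\bar{x}})$ and $\angle\bar{x}$ separately, the factorization is exact by Definition~\ref{def:lpd}. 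Only the independence of the inputs is asymptotic.

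For part (i), I would rest the rule on the radial model $\norm{\bar{x}} \approx R_\ell + n_r$, which is Gaussian with variance $\bar{\sigma}^2/2$. Under this model $\ln p(\norm{\bar{x}}\mid R_\ell) = -(\norm{\bar{x}}-R_\ell)^2/\bar{\sigma}^2 + \text{const}$, so maximizing the likelihood is the nearest-radius rule~\eqref{eq:lpd_ring_select}. Adding $\ln r_\ell$ for the MAP rule moves the midpoint thresholds by $\bar{\sigma}^2 \ln(r_\ell/r_{\ell'})/(2(R_{\ell'}-R_\ell))$, which is $\mathcal{O}(\bar{\sigma}^2\ln r_\ell)$ and vanishes. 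The error exponent and its optimality are imported directly from Proposition~\ref{prop:ring_discrimination} and Remark~\ref{rem:gallager}, including the Rician correction absorbed in $C_R$.

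For part (ii), fix the ring at $R_{\ell^*}$. The log-likelihood of $\theta_0$ is $\kappa_{\ell^*}\cos(\theta_0 - \angle\bar{x})$ plus a $\theta_0$-free term, using the decomposition~\eqref{eq:loglik_decomp}. It is maximized at $\hat{\theta} = \angle\bar{x}$, so this is the ML estimate. For the CRLB, the Fisher information of $\theta_0$ in $\bar{x} = R e^{j\theta_0} + \bar{z}$ is $2\norm{\partial_{\theta_0} R e^{j\theta_0}}^2/\bar{\sigma}^2 = 2R^2/\bar{\sigma}^2$. Hence the bound is $\bar{\sigma}^2/(2R^2)$. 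From the linearization, $\Var[\hat\theta] = \Var[n_\perp]/R^2 + o(\bar{\sigma}^2) = \bar{\sigma}^2/(2R^2)$, which matches, and it agrees with $1/\kappa_{\ell^*}$ because $\norm{\bar{x}}\to R$.

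I expect the main obstacle to be controlling the remainders uniformly enough to justify moment convergence, not merely convergence in distribution. This matters for the variance statement, since the $\arctan$ wraps and $\norm{\bar{x}}$ can be near zero. I would first restrict to the event $\{\norm{\bar{z}} \le R_1/2\}$, whose complement has probability $e^{-\Theta(1/\bar{\sigma}^2)}$. On that event the Taylor remainders are bounded by $C\norm{\bar{z}}^2$. Off it, the phase error is bounded by $\pi$, so its contribution to the variance is exponentially negligible. I would also condition on the correct ring decision, whose failure probability is exponentially small by Proposition~\ref{prop:ring_discrimination}, so that $R_{\ell^*} = R_{\ell_{\mathrm{true}}}$ up to $o(\bar{\sigma}^2)$ terms.
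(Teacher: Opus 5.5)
Your proposal is correct and follows essentially the same route as the paper's proof. Both arguments split the circular noise into independent radial and tangential parts and linearize to obtain asymptotic independence. Both then use the Gaussian radial model, under which the nearest-radius rule is the ML ring detector, importing its exponent from Proposition~\ref{prop:ring_discrimination}. For the phase, both use the von~Mises likelihood to get $\hat{\theta}=\angle\bar{x}$ with variance $\Var[n_\perp]/R^2$, which matches the CRLB.

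Your additions refine that same argument rather than changing it:
\begin{itemize}
  \item You compute the Fisher information exactly as $2R^2/\bar{\sigma}^2$, where the paper carries an extra factor $\E[\cos(\angle\bar{x}-\theta_0)]$ that it approximates by $1$.
  \item You truncate to $\{\norm{\bar{z}}\le R_1/2\}$ to justify the variance statement at the level of moments.
\end{itemize}
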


\begin{proof}
\proofref{app:prop_det_est}
\end{proof}

\begin{corollary}[\ac{OPD} as the Degenerate Limit of the Orbital Decomposition]
\label{cor:lpd_as_limit}
As $\bar{\sigma}^2 \to 0^+$, the canonical $3L$-dimensional state
$(\hat{\mathbf{x}}, \mathbf{w})$ of Theorem~\ref{thm:ring_decomp} degenerates
continuously to the two-scalar representation $(\ell^*, \angle\bar{x})$ of the \ac{OPD} (Definition~\ref{def:lpd}), equivalently $(R_{\ell^*}, \angle\bar{x})$ since the radius is determined by the ring index, via two simultaneous collapses:
\begin{enumerate}[(i)]
\item $w_\ell \to \mathbf{1}[\ell = \ell^*]$ (soft $\to$ hard ring selection),
\item $A(\kappa_\ell) \to 1$, so the conditional mean $\hat{x}_\ell/w_\ell \to R_\ell e^{j\angle\bar{x}}$
      (shrinkage $\to$ unit-length projection).
\end{enumerate}
The complexity hierarchy $\mathcal{O}(M)\to\mathcal{O}(L)\to\mathcal{O}(1)$
is therefore a single continuous geometric limit, not three separate
approximations.
\end{corollary}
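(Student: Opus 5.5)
The argument interprets the canonical state $(\hat{\mathbf{x}},\mathbf{w})$ as the orbital (OBD) state $(\hat{\mathbf{x}}^{\mathrm{B}},\mathbf{w}^{\mathrm{B}})$ of Proposition~\ref{prop:cbm_denoiser}. That is the state whose per-ring conditional mean is $R_\ell A(\kappa_\ell)e^{j\angle\bar{x}}$. We fix $\bar{x}\neq 0$ with $\norm{\bar{x}}$ off the ring decision boundaries, a Lebesgue-null exceptional set as in Proposition~\ref{prop:lpd_limit}. The two collapses are then proved separately, pointwise in $\bar{x}$, as $\bar{\sigma}^2\to 0^+$. Afterwards we note that the map $\bar{\sigma}^2\mapsto(\hat{\mathbf{x}}^{\mathrm{B}},\mathbf{w}^{\mathrm{B}})$ is continuous on $(0,\infty)$, being a composition of $\exp$, $\ln I_0$, $A$ and a softmax. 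Together with the limit at $0^+$, this makes the degeneration a single continuous path rather than a jump.

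\emph{Collapse (ii).} Since $\kappa_\ell=2R_\ell\norm{\bar{x}}/\bar{\sigma}^2\to\infty$ for every $\ell$, Proposition~\ref{prop:vm_gaussian}(i) gives $\abs{A(\kappa_\ell)-1}\le 1/(2\kappa_\ell)+1/(4\kappa_\ell^2)=\mathcal{O}(\bar{\sigma}^2)$. Therefore $\hat{x}_\ell/w_\ell=R_\ell A(\kappa_\ell)e^{j\angle\bar{x}}\to R_\ell e^{j\angle\bar{x}}$ uniformly over the finitely many rings. The same step covers the OGD version, with shrinkage factor $1-1/(2\kappa_\ell)$, verbatim.

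\emph{Collapse (i).} We use Proposition~\ref{prop:vm_gaussian}(ii) to write the log-evidence as $\Lambda_\ell^{\mathrm{B}}=\ln r_\ell+R_\ell(2\norm{\bar{x}}-R_\ell)/\bar{\sigma}^2-\tfrac12\ln(2\pi\kappa_\ell)+\mathcal{O}(\bar{\sigma}^2)$. As in the proof of Proposition~\ref{prop:lpd_limit}, we use $R_\ell(2\norm{\bar{x}}-R_\ell)=\norm{\bar{x}}^2-(\norm{\bar{x}}-R_\ell)^2$. Off the boundary set, the nearest-radius ring $\ell^*$ of~\eqref{eq:lpd_ring_select} is the strict maximizer, with gap $g=\min_{\ell\neq\ell^*}[(\norm{\bar{x}}-R_\ell)^2-(\norm{\bar{x}}-R_{\ell^*})^2]>0$. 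This yields $\Lambda_{\ell^*}^{\mathrm{B}}-\Lambda_\ell^{\mathrm{B}}\ge g/\bar{\sigma}^2-\mathcal{O}(\ln(1/\bar{\sigma}^2))$. The softmax then gives $1-w_{\ell^*}^{\mathrm{B}}\le (L-1)\exp(-g/(2\bar{\sigma}^2))$ for small $\bar{\sigma}^2$, so $w_\ell\to\mathbf{1}[\ell=\ell^*]$. Combining (i) and (ii), $\hat{x}_\ell\to\mathbf{1}[\ell=\ell^*]R_{\ell^*}e^{j\angle\bar{x}}$ and $\hat{x}\to R_{\ell^*}e^{j\angle\bar{x}}=\eta_P(\bar{x})$. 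The $3L$ scalars are thus asymptotically determined by $(\ell^*,\angle\bar{x})$, and $R_{\ell^*}$ is a lookup from $\ell^*$. This is the claimed two-scalar representation, consistent with Remark~\ref{rem:canonical_degenerate}.

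\emph{Main obstacle.} The delicate point is interpretive rather than computational: what ``the canonical state'' means. For the exact discrete state of Theorem~\ref{thm:ring_decomp}, the per-ring mean tends to the nearest symbol $s_{m^*}$, not to $R_{\ell^*}e^{j\angle\bar{x}}$. We would therefore state the collapse for the orbital realization, and add the remark that for the exact state $\angle s_{m^*}$ is itself a function of $\angle\bar{x}$ (nearest-phase quantization on ring $\ell^*$). The exact state also degenerates to a function of $(\ell^*,\angle\bar{x})$, so the two-scalar conclusion survives in both readings.

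A secondary technical issue is that convergence is only pointwise and fails near ring boundaries. If a statement in expectation is wanted, we would bound the contribution of an $\epsilon$-neighbourhood of the boundaries using Proposition~\ref{prop:ring_discrimination} and then let $\epsilon\to0$.
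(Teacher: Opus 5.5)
Your argument is correct and is essentially the paper's proof, made quantitative. The paper likewise works, implicitly, with the OBD realization $(\hat{\mathbf{x}}^{\mathrm{B}},\mathbf{w}^{\mathrm{B}})$: it obtains (i) from the divergent log-evidence gap as in Proposition~\ref{prop:lpd_limit}, obtains (ii) from $A(\kappa)\to 1$ via~\eqref{eq:A_approx}, and appeals to continuity in $\bar{\sigma}^2$. Your explicit gap $g$, the softmax bound, and the exclusion of the boundary null set make those steps precise.

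Drop your aside that the exact discrete state ``also degenerates to a function of $(\ell^*,\angle\bar{x})$''. Pointwise in $\bar{x}$, its ring weights concentrate on the ring of the \emph{Euclidean}-nearest symbol, which can differ from the nearest-radius ring $\ell^*$. For example, in unnormalized $16$-QAM take $\bar{x}=\tfrac{2.2}{\sqrt{10}}(1+3j)$. Then $\norm{\bar{x}}=2.2<(\sqrt{2}+\sqrt{10})/2$, so $\ell^*$ is the inner ring, yet $1+3j$ is closer to $\bar{x}$ than $1+j$. That reading therefore holds only with probability tending to one under $\bar{x}=x+\bar{\sigma}\tilde{z}$, not pointwise.
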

\begin{proof}
As $\bar{\sigma}^2 \to 0^+$ every concentration $\kappa_\ell = 2R_\ell\norm{\bar{x}}/\bar{\sigma}^2$ of~\eqref{eq:obd_kappa} diverges for $\bar{x} \neq 0$. Collapse~(i) is Proposition~\ref{prop:lpd_limit}: the log-evidence gaps $\Lambda_{\ell^*}^{\mathrm{B}} - \Lambda_{\ell}^{\mathrm{B}}$ in~\eqref{eq:obd_metric} grow without bound, so the softmax~\eqref{eq:obd_ring_prob} tends to the indicator of the dominant ring. Collapse~(ii) is the Bessel-ratio expansion~\eqref{eq:A_approx}, $A(\kappa) = 1 - 1/(2\kappa) + \mathcal{O}(\kappa^{-2}) \to 1$, so the conditional contribution~\eqref{eq:obd_mean} tends to $R_\ell e^{j\angle\bar{x}}$. Both limits are continuous in $\bar{\sigma}^2$, and together they reduce $(\hat{\mathbf{x}},\mathbf{w})$ to $(\ell^*, \angle\bar{x})$, which is Definition~\ref{def:lpd}.
\end{proof}

\begin{remark}[Information-Theoretic Optimality of Separation]
\label{rem:separation_optimality}
Proposition~\ref{prop:det_est} establishes that the \ac{OPD}'s detection-estimation separation is not merely a computational convenience but is \emph{information-theoretically optimal} in the high-\ac{SNR} regime. 
The polar sufficient statistics $(\norm{\bar{x}}, \angle\bar{x})$ asymptotically decouple into a real Gaussian channel for amplitude and an independent circular Gaussian channel for phase, with the \ac{OPD} applying the individually optimal processor to each. 
{Since $(\norm{\bar{x}}, \angle\bar{x})$ is an invertible reparametrization of $\bar{x}$ (hence a sufficient statistic) and the two coordinates carry disjoint parameters ($R_\ell$ versus $\theta_0$) through asymptotically independent noises, applying the individually optimal processor to each is jointly optimal in the high-\ac{SNR} limit.}
This connects the \ac{OPD} to classical sufficiency-based detection theory (see, e.g., the treatment in Verd\'{u}~\cite{Verdu1998}): the polar decomposition is a sufficient statistic for the ring-orbital estimation problem, and the \ac{OPD} {is the asymptotically optimal detector-estimator built on it}.
\end{remark}

\begin{proposition}[\ac{OPD} as a Metric Projection]
\label{prop:lpd_projection}
The \ac{OPD} is the metric projection of $\bar{x}$ onto
$\mathcal{C} \triangleq \bigcup_{\ell=1}^{L} \{z\in\mathbb{C}: \norm{z}=R_\ell\}$:
\begin{equation}
  \eta_P(\bar{x})
  \!=\! P_{\mathcal{C}}(\bar{x})
  \!\triangleq\! \arg\min_{c\in\mathcal{C}} \norm{c - \bar{x}}.
\end{equation}

The nearest point of each circle $\{z: \norm{z} = R_\ell\}$ to $\bar{x} \neq 0$ is $R_\ell e^{j\angle\bar{x}}$ (the point on that circle along the ray $\angle\bar{x}$), so the projection onto the union $\mathcal{C}$ reduces to the nearest-radius selection $\ell^* = \arg\min_\ell \norm{\,\norm{\bar{x}} - R_\ell}$ followed by the phase read-out $\angle\bar{x}$; the constraint $\angle c = \angle\bar{x}$ is thus automatic, not imposed. Since $\mathcal{C}$ is a finite union of circles, closed but neither convex nor connected, the Hilbert projection theorem does not apply, but the projection is nonetheless single-valued except on the (Lebesgue-null) set of $\bar{x}$ equidistant from two radii. In the sense of the generalized (nonconvex) proximal map, cf.\ Moreau's convex proximity operator~\cite{Moreau1962}, $\eta_P$ is the Euclidean projection $P_{\mathcal{C}}$ onto the nonconvex set $\mathcal{C}$ (equivalently $\mathrm{prox}_{\iota_{\mathcal{C}}}$), single-valued off that null set.
\end{proposition}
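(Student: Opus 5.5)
The plan is to reduce the projection onto the union $\mathcal{C}$ to a two-stage minimization: first over points on a fixed circle, then over the finitely many circles. Since $\min_{c\in\mathcal{C}}\norm{c-\bar{x}} = \min_{\ell}\min_{\norm{c}=R_\ell}\norm{c-\bar{x}}$, it suffices to (a) solve the inner problem in closed form for each $\ell$, and (b) compare the resulting $L$ values.

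\emph{Inner problem.} For fixed $\ell$, write $c = R_\ell e^{j\theta}$ and $\bar{x} = \norm{\bar{x}}e^{j\angle\bar{x}}$, exactly as in the expansion~\eqref{eq:loglik_decomp}:
\begin{equation*}
\norm{c-\bar{x}}^2 = \norm{\bar{x}}^2 + R_\ell^2 - 2R_\ell\norm{\bar{x}}\cos(\theta-\angle\bar{x}).
\end{equation*}
For $\bar{x}\neq 0$ this is uniquely minimized at $\theta = \angle\bar{x}$, with minimum value $(\norm{\bar{x}}-R_\ell)^2$. This shows that the phase constraint $\angle c=\angle\bar{x}$ is produced by the minimization rather than imposed. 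For $\bar{x}=0$ every point of every circle is at distance $R_\ell$, so the projection onto the innermost circle is set-valued. This point is a null set and is absorbed into the exceptional set.

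\emph{Outer problem.} Minimizing $(\norm{\bar{x}}-R_\ell)^2$ over $\ell$ is equivalent to $\arg\min_\ell\norm{\norm{\bar{x}}-R_\ell}$, which is the rule~\eqref{eq:lpd_ring_select}. Therefore $P_{\mathcal{C}}(\bar{x}) = R_{\ell^*}e^{j\angle\bar{x}} = \eta_P(\bar{x})$ by Definition~\ref{def:lpd}.

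\emph{Existence, uniqueness and the exceptional set.} Existence of a minimizer follows because $\mathcal{C}$ is a finite union of compact circles, so $c\mapsto\norm{c-\bar{x}}$ attains its minimum. Non-uniqueness can occur only if two radii tie, i.e.\ $\norm{\bar{x}} = (R_\ell+R_{\ell'})/2$ for some $\ell\neq\ell'$, or if $\bar{x}=0$. This set is a finite union of circles together with the origin, hence Lebesgue-null in $\Complex$. Off this set the inner minimizer is unique and the outer minimizer is unique, so the projection is single-valued.

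\emph{Proximal-map identification.} This is definitional: $\mathrm{prox}_{\iota_{\mathcal{C}}}(\bar{x}) = \arg\min_c\{\iota_{\mathcal{C}}(c) + \tfrac12\norm{c-\bar{x}}^2\}$, which coincides with $P_{\mathcal{C}}$. The non-applicability of the Hilbert projection theorem follows from exhibiting non-convexity for $L\ge1$: the midpoint of two antipodal points of a circle is $0\notin\mathcal{C}$. Disconnectedness for $L\ge2$ is immediate.

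I expect no real obstacle. The only delicate point is to treat $\bar{x}=0$ and the tie set carefully, so that the single-valuedness claim is stated on the correct null complement.
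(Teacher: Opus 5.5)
Your proof is correct and follows the same route as the paper, which justifies the proposition inline: the nearest point of each circle to $\bar{x}\neq 0$ is $R_\ell e^{j\angle\bar{x}}$, so the projection onto $\mathcal{C}$ reduces to the nearest-radius rule and is single-valued off a Lebesgue-null set. Your argument is somewhat more careful than the statement in two places: you derive the per-circle minimizer explicitly from the cosine expansion, and you add the origin to the exceptional set, where $\angle\bar{x}$ is undefined and the projection is the whole innermost circle, a point the statement's ``equidistant from two radii'' description leaves out.
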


\subsubsection{Complexity Comparison}

Table~\ref{tab:full_hierarchy} summarizes the complete hierarchy of denoisers derived in this paper, from exact discrete to the \ac{OPD} limit, along with their computational complexity, mathematical operations, and validity regimes.

\subsubsection{The Complete Denoiser Hierarchy}

\begin{remark}[Adaptive Multi-Regime Denoiser]
\label{rem:adaptive_full}
Combining Definition~\ref{def:adaptive_cgr} with the \ac{OPD}, a three-regime adaptive denoiser can be constructed with per-ring contribution
\begin{equation*}
\hat{x}_\ell = w_\ell \begin{cases}
R_\ell\, A(\kappa_\ell)\, e^{j\angle\bar{x}} & \kappa_\ell < \kappa_0 \;\;\text{(OBD)}, \\
R_\ell(1 - 1/(2\kappa_\ell))\, e^{j\angle\bar{x}} & \kappa_0 \leq \kappa_\ell < \kappa_1 \;\;\text{(OGD)}, \\
R_\ell\, e^{j\angle\bar{x}} & \kappa_\ell \geq \kappa_1 \;\;\text{(OPD)},
\end{cases}
\vspace{-3.5ex}
\end{equation*}
\begin{equation}
\label{eq:three_regime}
\vspace{-0.5ex}
\end{equation}
with $\kappa_0 \approx 3$ and $\kappa_1 \approx 20$ (where the neglected shrinkage $1/(2\kappa_\ell) \leq 2.5\%$ for all $\kappa_\ell \geq \kappa_1$).

At the \ac{SE} fixed point of a well-designed receiver operating at sufficiently high \ac{SNR}, which is above roughly $10$~dB for constant-modulus constellations, and correspondingly higher for the innermost rings of dense multi-ring constellations (since $\kappa_\ell \propto R_\ell^2$ on the correct ring), all rings satisfy $\kappa_\ell \geq \kappa_1$, and the denoiser collapses entirely to the $\mathcal{O}(1)$ \ac{OPD}.
\end{remark}

The hierarchy is now complete: from $\mathcal{O}(M)$ to $\mathcal{O}(1)$ in three steps, with each approximation error Wasserstein-bounded.
The question that remains is whether these three denoisers, despite their vast complexity difference, ultimately deliver the same detection performance.
Figure~\ref{fig:Hierarchy_SER} supplies the empirical answer with full \ac{AMP} detection, under the identical system configuration as Fig.~\ref{fig:OBD_SER} ($K = 64$, $N = 256$, $\alpha = 0.25$) and swept over the complete modulation-order range of each family ($M = 4$ through $1024$ for $M$-\ac{PSK} and $M$-\ac{QAM}, the DVB-S2/S2x constellations for \ac{APSK}): across \ac{PSK}, \ac{APSK}, and \ac{QAM} up to moderate order the \ac{SER} curves of \ac{AMP} running the \ac{OBD}, \ac{OGD}, and \ac{OPD} denoisers are visually coincident, three orders of arithmetic complexity with no observable performance separation, while the \ac{LMMSE} baseline is uniformly and visibly worse; the exact \ac{BOD}, whose $\mathcal{O}(M)$ cost is precisely what the hierarchy removes, was already matched by the \ac{OBD} in Fig.~\ref{fig:OBD_SER}.
One exception is visible and instructive: for high-order $M$-\ac{QAM} ($M \gtrsim 64$) the \ac{OPD} develops an error floor that rises with $M$, while the \ac{OBD} and \ac{OGD} continue to track the shared fixed point down.
\pagebreak

This is a failure of the \emph{iteration}, not of the fixed-point equivalence.
The \ac{OPD} is a hard, non-Lipschitz ring projection (Lemma~\ref{lem:pseudo_lip}); on the closely spaced rings of dense \ac{QAM}, its \ac{AMP} Onsager correction, built on the projection's almost-everywhere divergence, misses the contribution of its jump discontinuities, so the self-interference it is meant to cancel leaks back and the recursion stalls short of its fixed point. Light damping of the \ac{AMP} updates, or the adaptive \ac{OBD}/\ac{OGD} fallback of Definition~\ref{def:adaptive_cgr} on the near-boundary symbols, removes the floor and restores the \ac{OBD}/\ac{OGD} curve. That the equivalence itself is intact is confirmed by the scalar \ac{SE} recursion (which is immune to \ac{AMP} iteration dynamics) under which all three denoisers coincide at every order (Figs.~\ref{fig:MSE_MPSK_MQAM} and~\ref{fig:SE_MPSK_MQAM}).
Section \ref{sec:se} proves that this equivalence is not an artifact of the plot but a theorem, through the state evolution analysis.

\vspace{-2ex}
\section{State Evolution Analysis}
\label{sec:se}

We characterize the orbital denoisers through their \ac{SE}, the low-dimensional recursion that tracks the error of \ac{AMP} in the large-system limit, which renders the error of the full $K$-dimensional system \emph{exactly} tractable: \ac{SE} collapses the coupled recovery into a single deterministic scalar map, placing every level of the denoiser hierarchy on a common, analytically comparable footing on which the cost of the orbital relaxation can be quantified, rather than merely simulated.
The \ac{OBD}, \ac{OGD}, and \ac{OPD} act on the same statistic, the concentration $\kappa_\ell$, and differ only in how accurately they approximate $A(\kappa_\ell)$ and $\ln I_0(\kappa_\ell)$, such that their recursions collapse onto a single trajectory, with Theorem~\ref{thm:cross_level_fp} showing that all three reach the \emph{identical} leading-order fixed point $\sigma_z^2/(2-\alpha)$, differing only in higher-order corrections.

\vspace{-1.5ex}
\subsection{SE Under Mismatched Priors: Setup and Background}
\label{sec:three_se}

{Recall from Section~\ref{sec:amp_decoupling} that \ac{AMP}~\cite{Donoho2009,Bayati2011} estimates $\mathbf{x} \in \mathcal{M}^K$ from the linear model~\eqref{eq:system} by alternating a linear matched-filter step with a separable denoiser, and that its defining ingredient is the \emph{Onsager correction}, a scalar feedback term that cancels the self-interference fed back through $\mathbf{H}$.
In the large-system limit ($N, K \to \infty$, $K/N \to \alpha \in (0,1)$) this correction makes the per-symbol residual asymptotically Gaussian and symbol-independent, so the vector problem decouples into $K$ identical scalar channels, each exactly the cavity model~\eqref{eq:scalar_awgn}, $\bar{x} = x + \bar{z}$ with $\bar{z} \sim \CN(0, \bar{\sigma}^2)$, on which a denoiser $\eta(\bar{x};\bar{\sigma}^2)$ acts.
The effective variance $\bar{\sigma}^2$ is not the physical noise $\sigma_z^2$ but $\sigma_z^2$ inflated by residual interference; it changes from iteration to iteration, and \ac{SE} is the scalar recursion that tracks it exactly.

This recursion is \emph{mismatched} because the orbital denoisers apply the continuous-phase orbital prior of Definition~\ref{def:orbital} (the orbital prior $q$ of density~\eqref{eq:cbm_prior}), rather than the true discrete constellation prior $p$ that generates $\mathbf{x}$; only the \ac{BOD}, built on $p$, is \emph{matched} and hence Bayes-optimal.
Since $q$ preserves the ring radii and probabilities of $\mathcal{M}$, it has zero mean and the same average energy as the constellation, $\mathrm{Var}_q(x) = \sum_\ell r_\ell R_\ell^2 = E_d$; throughout, $\E_q$ and $\mathrm{Var}_q$ denote expectation and variance under $q$ and $\E_p$ expectation under $p$, while
\vspace{-0.5ex}
\begin{equation}
  \label{eq:mmse_Q_def}
  \mathrm{mmse}_q(\bar{\sigma}^2) \;\triangleq\; \E_q\!\bigl[\,\norm{x - \E_q[x\mid\bar{x}]}^2\,\bigr],
\end{equation}
is the \ac{MMSE} of estimating $x \sim q$ from the cavity observation $\bar{x} = x + \bar{z}$, $\bar{z}\sim\CN(0,\bar{\sigma}^2)$.
\vspace{-0.5ex}
\pagebreak

Crucially, \ac{SE} stays \emph{exact} in the large-system limit for any pseudo-Lipschitz separable denoiser, matched or not~\cite{Bayati2011,Javanmard2013}. 

This regularity, and \emph{not} Bayes-optimality, is precisely what licenses the \ac{SE} invoked throughout this section; that the orbital hierarchy satisfies it (the \ac{OBD} pointwise, and the \ac{OGD}/\ac{OPD} through a fixed-$\bar{\sigma}^2$ Lipschitz regularization) is established in Lemma~\ref{lem:pseudo_lip} below, which makes the mismatched orbital denoisers amenable to the rigorous analysis that follows.
Let $\mathrm{MSE}_t$ denote the per-symbol \ac{MSE} at iteration $t$: a single scalar that summarizes the error of the full $K$-dimensional estimate and is the \ac{SE} state.
After Onsager decoupling, the denoiser at iteration $t$ acts on the cavity channel~\eqref{eq:scalar_awgn} itself, now carrying an iteration index: $\bar{x}_t = x + \bar{z}_t$ with effective noise $\bar{z}_t \sim \CN(0,\bar{\sigma}_t^2)$, where $x$ is the transmitted symbol drawn from the true discrete constellation prior.
For the \ac{SE} recursion we write this noise in standardized form, $\bar{z}_t \triangleq \bar{\sigma}_t \tilde{z}$ with $\tilde{z} \sim \CN(0,1)$, so that the iteration dependence is carried entirely by the deterministic scale $\bar{\sigma}_t$ while the error is averaged over the fixed pair $(x,\tilde{z})$; this is the standard \ac{SE} normalization~\cite{Bayati2011} and yields the $x + \bar{\sigma}_t \tilde{z}$ form appearing in the recursions below.
This effective variance $\bar{\sigma}_t$ is not a free parameter: the Onsager correction ties it self-consistently to the current error as
\begin{equation}
  \bar{\sigma}_t^2 = \sigma_z^2 + \frac{K}{N}\,\mathrm{MSE}_t,
  \label{eq:se_noise}
\end{equation}
where $\sigma_z^2$ is the physical per-component \ac{AWGN} variance of~\eqref{eq:system}, fixing the operating point $\mathrm{SNR} = E_d/\sigma_z^2$, and $\tfrac{K}{N}\,\mathrm{MSE}_t$ is the residual interference from the remaining symbols; hence $\bar{\sigma}_t^2 \geq \sigma_z^2$, with equality only at zero error.
Denoising $\bar{x}_t$ and remeasuring the error gives the per-denoiser update 
\begin{equation}
  \mathrm{MSE}_{t+1} = \mathbb{E}_{x,\tilde{z}}\bigl[\,\norm{x - \eta(\bar{x}_t;\bar{\sigma}_t^2)}^2\,\bigr],
\end{equation}
specified for each denoiser below; composed with~\eqref{eq:se_noise} it becomes a deterministic scalar map $\mathrm{MSE}_{t+1} = \mathcal{F}(\mathrm{MSE}_t)$ that, in the large-system limit ($N, K \to \infty$, $K/N \to \alpha$), predicts the per-symbol error of the full $N \times K$ system exactly~\cite{Bayati2011}.
Two features of this recursion matter for what follows.  First, $\mathrm{MSE}_t$ is the \emph{actual} per-symbol squared error of the denoised estimate (the right-hand side above), well-defined for \emph{every} separable (matched or mismatched) $\eta$.  Second, it is \emph{not}, in general, the denoiser's posterior variance: by the \ac{MMSE}/Nishimori identity it equals the expected posterior variance $\mathbb{E}[\operatorname{Var}(x\mid\bar{x})]$ \emph{only} when $\eta$ is a Bayes posterior mean, namely the matched \ac{BOD}, and the \ac{OBD} (whose mismatched error reduces to the matched \ac{MMSE} under $q$ by circular symmetry{, Lemma~\ref{lem:phase_preserving}(iii)}), whereas the non-Bayes \ac{OGD} and \ac{OPD} incur a strictly larger error, exceeding it by the orthogonality excess of Lemma~\ref{lem:excess}.

\subsection{Fixed-Point Existence, Monotonicity, and Uniqueness}
\label{sec:se_fixedpoint}

\begin{definition}[SE Fixed Point]
  \label{def:fixed_point}
  Fix any denoiser $\eta \in \{\eta_D, \eta_B, \eta_G, \eta_P, \eta_{\mathrm{L}}\}$ of the hierarchy. Its one-step \ac{SE} map sends the current error $\mathrm{MSE}$ to the next as
  \begin{equation}
    \mathcal{F}(\mathrm{MSE}) \triangleq \mathbb{E}_{x,\tilde{z}}\!\Bigl[
    \norm{x - \eta\bigl(x + \bar{\sigma}\,\tilde{z};\;\bar{\sigma}^2\bigr)}^2\Bigr],
    \label{eq:fixed_point}
  \end{equation}
  where the effective variance $\bar{\sigma}^2$ is given by~\eqref{eq:se_noise}.
  A value $\mathrm{MSE}^* \ge 0$ is a \emph{fixed point} if $\mathrm{MSE}^* =\mathcal{F}(\mathrm{MSE}^*)$.
\end{definition}

\begin{lemma}[Phase-Preserving Structure and Stein Identity for the Orbital Denoisers]
\label{lem:phase_preserving}
Every orbital denoiser $\eta \in \{\eta_B, \eta_G, \eta_P\}$ is phase-preserving: $\eta(\bar{x}; \bar{\sigma}^2) = \rho_\eta(\norm{\bar{x}}, \bar{\sigma}^2)\, e^{j\angle\bar{x}}$ with a real radial profile $\rho_\eta$. For the \ac{OBD} and \ac{OPD} this profile is non-negative and bounded {($\rho_{\eta_B} \in [0, R_L)$ for the \ac{OBD}; $\rho_{\eta_P} \in [0, R_L]$ for the \ac{OPD}, which attains $R_L$ on the outer ring)}; for the \ac{OGD} the same holds at high concentration $\kappa_\ell \geq \tfrac12$, while below it the Gaussian-phase factor $1 - 1/(2\kappa_\ell)$ turns negative (handled by the adaptive fallback of Definition~\ref{def:adaptive_cgr}). The identities below are stated for the representative \ac{OBD} $\eta_B$; the \ac{OPD} and the high-concentration \ac{OGD} satisfy~(i) and~(iii) but, not being exact posterior means, not the Stein identity~(ii) (see the proof). 
Consequently:
\begin{enumerate}[(i)]
\item The Wirtinger derivative is real-valued and non-negative:
\begin{equation}
\label{eq:wirtinger_real}
\frac{\partial \eta_B}{\partial \bar{x}} = \frac{\rho(\norm{\bar{x}}, \bar{\sigma}^2)}{2\norm{\bar{x}}} + \frac{\rho'(\norm{\bar{x}}, \bar{\sigma}^2)}{2} \;\geq\; 0,
\end{equation}
where $\rho' \triangleq \partial\rho/\partial\norm{\bar{x}}$.

\item (Bayesian Stein identity.) Since $\eta_B = \E_q[x \mid \bar{x}]$ is the Bayes posterior mean under the orbital prior $q$ (Definition~\ref{def:orbital}), its Wirtinger derivative equals the normalized posterior variance:
\begin{equation}
\label{eq:stein_identity}
\frac{\partial \eta_B}{\partial \bar{x}} = \frac{\hat{\sigma}^2_{\mathrm{B}}(\bar{x})}{\bar{\sigma}^2},
\end{equation}
where $\hat{\sigma}^2_{\mathrm{B}}(\bar{x}) = \sum_{\ell=1}^L w_\ell^{\mathrm{B}} R_\ell^2 - \norm{\eta_B(\bar{x})}^2$ is the \ac{OBD} posterior variance from~\eqref{eq:cbm_var}.

\item (Circular-symmetry equivalence.) Because $\eta$ is phase-preserving and $\tilde{z}$ is rotationally invariant, its \ac{MSE} depends only on the radial distribution $\{R_\ell, r_\ell\}$ of the input, not on the input-phase distribution. Hence the mismatched \ac{MSE} under the true discrete prior $p$ equals the matched \ac{MSE} under the orbital prior $q$ (Definition~\ref{def:orbital}):
\begin{equation}
\label{eq:circular_symmetry_equiv}
\E_p\bigl[\norm{x - \eta(\bar{x})}^2\bigr] = \E_q\bigl[\norm{x - \eta(\bar{x})}^2\bigr].
\end{equation}
\end{enumerate}
\end{lemma}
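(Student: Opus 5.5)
The plan is to establish the phase-preserving form first, directly from the closed forms, and then derive (i)--(iii) as consequences of that structure together with, for (ii), the Bayesian origin of $\eta_B$.

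\textbf{Phase-preserving form and bounds.} For the \ac{OBD}, \eqref{eq:obd_mean} already reads $\eta_B=\rho_B\,e^{j\angle\bar{x}}$ with
\begin{equation*}
\rho_B=\sum_\ell w_\ell^{\mathrm{B}}R_\ell A(\kappa_\ell).
\end{equation*}
The quantities $w_\ell^{\mathrm{B}}$ and $\kappa_\ell$ depend on $\bar{x}$ only through $\norm{\bar{x}}$. Since the weights form a probability vector, $A\in[0,1)$ and $R_\ell\le R_L$, we get $0\le\rho_B<R_L$. For the \ac{OGD}, \eqref{eq:cgr_posterior} gives $\rho_G=\sum_\ell w_\ell^{\mathrm{G}}R_\ell(1-1/(2\kappa_\ell))$, which is non-negative exactly when each factor is, i.e.\ when $\kappa_\ell\ge\tfrac12$. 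For the \ac{OPD}, Definition~\ref{def:lpd} gives $\rho_P=R_{\ell^*(\norm{\bar{x}})}\in\{R_1,\dots,R_L\}$. At $\bar{x}=0$ the phase is undefined, but $\rho_B(0)=0$, and this single point is a null set that I will set aside.

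\textbf{(i) Wirtinger derivative.} I will write $\eta=g(r)\,\bar{x}$ with $r=\norm{\bar{x}}$ and $g=\rho/r$, and use $\partial\bar{x}/\partial\bar{x}=1$ and $\partial r/\partial\bar{x}=\bar{x}^*/(2r)$. Then
\begin{equation*}
\partial\eta/\partial\bar{x}=g+g'(r)\,r/2=\rho/(2r)+\rho'/2,
\end{equation*}
which is manifestly real. For the \ac{OPD}, $\rho'=0$ almost everywhere and $\rho\ge0$, so non-negativity is immediate. For $\eta_B$, rather than proving monotonicity of $\rho_B$ directly, I will obtain non-negativity from (ii), since a posterior variance is non-negative by Jensen.

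\textbf{(ii) Stein/Tweedie identity.} Under $q$, the posterior is
\begin{equation*}
\propto q(x)\exp\bigl(-\norm{x}^2/\bar{\sigma}^2+(\bar{x}^*x+\bar{x}x^*)/\bar{\sigma}^2\bigr),
\end{equation*}
because the $-\norm{\bar{x}}^2/\bar{\sigma}^2$ term cancels in the normalization. Differentiating the ratio defining $\E_q[x\mid\bar{x}]$ with respect to $\bar{x}$ pulls down the factor $x^*/\bar{\sigma}^2$ in both numerator and normalizer. This gives
\begin{equation*}
\partial\eta_B/\partial\bar{x}=\bigl(\E_q[\norm{x}^2\mid\bar{x}]-\norm{\E_q[x\mid\bar{x}]}^2\bigr)/\bar{\sigma}^2.
\end{equation*}
By the \ac{OBD} derivation, $\E_q[\norm{x}^2\mid\bar{x}]=\sum_\ell w_\ell^{\mathrm{B}}R_\ell^2$, so the right-hand side equals $\hat{\sigma}^2_{\mathrm{B}}/\bar{\sigma}^2$ from \eqref{eq:cbm_var}. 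Interchanging derivative and integral is justified because the prior has compact support on $\bigcup_\ell\mathcal{S}^1(R_\ell)$. I will also remark that $\eta_G$ and $\eta_P$ are not posterior means under any prior in the paper, so this argument does not transfer to them.

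\textbf{(iii) Circular symmetry.} Fix a symbol $x=R_\ell e^{j\phi}$ and write $\bar{x}=e^{j\phi}(R_\ell+\bar{\sigma}\tilde{z}')$ with $\tilde{z}'=e^{-j\phi}\tilde{z}\stackrel{d}{=}\tilde{z}$. Phase equivariance gives $\eta(\bar{x})=e^{j\phi}\eta(R_\ell+\bar{\sigma}\tilde{z}')$. Hence
\begin{equation*}
\norm{x-\eta(\bar{x})}=\norm{R_\ell-\eta(R_\ell+\bar{\sigma}\tilde{z}')},
\end{equation*}
whose law is independent of $\phi$. The conditional \ac{MSE} given the ring is therefore the same for every phase distribution on that ring. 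Averaging with the common ring masses $r_\ell$ of $p$ and $q$ yields \eqref{eq:circular_symmetry_equiv}.

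\textbf{Main obstacle.} I expect the main difficulty to be regularity bookkeeping rather than the algebra. The \ac{OPD} is discontinuous on the ring-boundary circles and the phase is undefined at $\bar{x}=0$, so (i) holds only almost everywhere. In (iii), integrability must be checked; it follows because $\rho$ is bounded on the relevant range, and for the \ac{OGD} this requires restricting to $\kappa_\ell\ge\tfrac12$ or invoking the adaptive fallback. All of these exceptional sets are Gaussian-null and do not affect the expectations.
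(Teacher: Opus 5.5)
Your proposal is correct and follows essentially the same route as the paper. The phase-preserving form and the bounds are read off the closed forms, and the Wirtinger derivative comes from the chain rule, with non-negativity for $\eta_B$ taken from the Stein identity. The Stein identity itself comes from differentiating the posterior-mean ratio under $q$, and (iii) follows from the rotation $\tilde{z}'=e^{-j\phi}\tilde{z}$.

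One small point is left implicit both by you and by the paper. For the high-concentration OGD, non-negativity in (i) needs $\rho_G'\ge 0$, not just $\rho_G\ge 0$. It does hold: $\rho_G=\sum_\ell w_\ell^{\mathrm{G}}R_\ell-\bar{\sigma}^2/(4\norm{\bar{x}})$, whose radial derivative is $(2/\bar{\sigma}^2)\sum_\ell w_\ell^{\mathrm{G}}(R_\ell-\bar{R})^2+\bar{\sigma}^2/(4\norm{\bar{x}}^2)\ge 0$, with $\bar{R}\triangleq\sum_\ell w_\ell^{\mathrm{G}}R_\ell$.
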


\begin{proof}
\proofref{app:lem_phase_preserving}
\end{proof}

The rigorous state evolution used throughout this section rests on the pseudo-Lipschitz continuity property of the denoiser, which, with separability, is the hypothesis under which the \ac{SE} recursion of~\cite{Bayati2011,Javanmard2013} tracks the true per-symbol \ac{MSE} exactly, and which later licenses the decoupling of Theorem~\ref{thm:decoupling}. \emph{Mismatch} (using the orbital prior $q$ in place of the true $p$) is immaterial to this hypothesis: the \ac{SE} theorems require separability and regularity, not Bayes-optimality. \emph{Regularity}, however, is not optional, so we establish it now for the whole hierarchy, before invoking \ac{SE} below.

\begin{lemma}[Regularity of the Orbital Denoisers]
\label{lem:pseudo_lip}
At each fixed $\bar{\sigma}^2 > 0$, the \ac{OBD} is globally Lipschitz, with constant $\Theta(R_L^2/\bar{\sigma}^2)$, hence separable and pseudo-Lipschitz of order $2$, the regularity required by the \ac{SE} theorems of~\cite{Bayati2011,Javanmard2013}. 
The adaptive \ac{OGD} and the \ac{OPD} are uniformly bounded but discontinuous (regime-switch jump and a hard ring slice, respectively, each confined to a Lebesgue-null set), so they are \emph{not} pseudo-Lipschitz; their \ac{SE} is instead inherited as the fixed-$\bar{\sigma}^2$ limit of a pseudo-Lipschitz regularization. State evolution consequently governs all three levels of the hierarchy.
\end{lemma}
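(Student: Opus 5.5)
The plan has three parts, one per level of the hierarchy, followed by the hand-off to the state-evolution theorems of~\cite{Bayati2011,Javanmard2013}.

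\emph{The OBD is globally Lipschitz.} Treat $\eta_B$ as a smooth map $\Real^2\to\Real^2$ and bound its full real Jacobian uniformly. Since $\eta_B=\E_q[x\mid\bar{x}]$ is a Bayes posterior mean under the Gaussian channel, the Tweedie/Stein argument behind Lemma~\ref{lem:phase_preserving}(ii) applies to all four real partial derivatives, not only the Wirtinger one. It shows that the real Jacobian equals (a constant multiple of) the posterior covariance of $x$ divided by $\bar{\sigma}^2$. Every realization of $x$ under $q$ lies in the disk of radius $R_L$, so the posterior covariance has operator norm at most $R_L^2$ for every $\bar{x}$. This gives $\opnorm{D\eta_B}\le c R_L^2/\bar{\sigma}^2$ uniformly, and hence global Lipschitzness with constant $\mathcal{O}(R_L^2/\bar{\sigma}^2)$. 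The apparent singularity at $\bar{x}=0$ is only a polar-coordinate artifact: $\eta_B$ is real-analytic in $\bar{x}$ because the posterior is a log-linear tilt of $q$. For the matching lower bound $\Theta$, evaluate near a point equidistant in radius from two rings at small $\bar{\sigma}^2$. There the ring posterior is split evenly, so the radial posterior variance is of order $(R_{\ell+1}-R_\ell)^2$, which is comparable to $R_L^2$ for bounded-geometry constellations. A Lipschitz map is pseudo-Lipschitz of order $2$, since $\norm{\eta(a)-\eta(b)}\le C\norm{a-b}\le C(1+\norm{a}+\norm{b})\norm{a-b}$. Separability holds by construction.

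\emph{The OGD and OPD.} First establish boundedness directly. For the OPD, $\norm{\eta_P}\le R_L$. For the adaptive OGD, each regime gives a per-ring factor in $[0,1]$ on its domain of use (OGD only when $\kappa_\ell\ge\kappa_0>1/2$), and the weights are convex, so $\norm{\eta}\le R_L$. The discontinuities are confined to the circles $\norm{\bar{x}}=(R_\ell+R_{\ell+1})/2$ for the OPD and to the switching circles $\kappa_\ell=\kappa_0$, i.e.\ $\norm{\bar{x}}=\kappa_0\bar{\sigma}^2/(2R_\ell)$, for the OGD. These are finitely many circles, hence Lebesgue-null. A genuine jump rules out any bound of the form $C(1+\norm{a}+\norm{b})\norm{a-b}$ as $a\to b$ across such a circle, so neither map is pseudo-Lipschitz.

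For the regularization, take $\eta^{(\beta)}=\eta*\phi_{1/\beta}$, a Gaussian mollification at sharpness $\beta$. Because $\eta$ is bounded, $\eta^{(\beta)}$ is smooth and Lipschitz with constant $\mathcal{O}(\beta^{1/2}R_L)$, so the SE theorem holds for each fixed $\beta$. Then let $\beta\to\infty$ inside the SE map~\eqref{eq:fixed_point}. Since $\eta^{(\beta)}\to\eta$ off the null set, every pointwise value is bounded by $R_L$, and the law of $x+\bar{\sigma}\tilde{z}$ is absolutely continuous for $\bar{\sigma}>0$, dominated convergence gives $\mathcal{F}^{(\beta)}\to\mathcal{F}$, pointwise in $\mathrm{MSE}$ and locally uniformly. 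The Onsager divergence also converges: it is computed almost everywhere, plus the jump mass, which is precisely the term the undamped iteration misses. Iterating finitely many SE steps then shows that the trajectory of $\eta^{(\beta)}$ converges to the one defined by $\eta$.

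\emph{Main obstacle.} The hardest step is this last limit. SE is rigorous only for each $\beta$, and interchanging $\beta\to\infty$ with $N\to\infty$ for the actual AMP iterates is not automatic. I would claim only what the statement asserts, namely that SE for the OGD/OPD is \emph{defined} as the $\beta\to\infty$ limit of rigorous SE trajectories, and justify convergence of those trajectories by the dominated-convergence step. A diagonal argument tying $\beta$ to $N$ would be noted as possible, but it is not needed for the claim as stated.
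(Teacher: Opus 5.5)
Your proposal is correct and follows the paper's proof essentially step for step. For the OBD both use the vector Tweedie/Brown identity $\mathbf{D}\eta_B(\bar{x}) = (2/\bar{\sigma}^2)\,\bm{\Sigma}(\bar{x})$, the bound $\opnorm{\bm{\Sigma}}\le\tr\bm{\Sigma}\le R_L^2$, and the mean-value inequality. For the OGD/OPD both use a Gaussian mollification $\eta*\phi_\beta$ (Lipschitz constant $R_L\sqrt{\pi\beta}$, envelope $R_L$, pointwise convergence off a null set) followed by dominated convergence of the SE maps.

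The differences are minor:
\begin{itemize}
\item You pass finite-$t$ trajectories to the limit, whereas the paper passes fixed points via uniform convergence on $\mathcal{I}_0$.
\item You sketch a lower bound for the $\Theta$, which the paper never proves. For $L=1$ there is no ring midpoint, so evaluate at $\bar{x}=0$ instead, where the posterior is uniform on the ring and the Jacobian is exactly $(R_1^2/\bar{\sigma}^2)\mathbf{I}_2$.
\item You omit the OPD's phase singularity at $\bar{x}=0$ from its discontinuity set. This is harmless, since the set stays Lebesgue-null.
\item Your caveat about interchanging $\beta\to\infty$ with $N\to\infty$ is one the paper's proof also leaves open.
\end{itemize}
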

\begin{proof}
  \proofref{app:lem_denoiser_regularity}
\end{proof}

\begin{theorem}[Existence of the \ac{SE} Fixed Point]
\label{thm:se_convergence}
Fix $\sigma_z^2 > 0$ and let $\mathcal{I}_0 \triangleq [0, 4R_L^2] \supseteq \mathcal{I} \triangleq [0, E_d]$, with $x$ drawn from the true discrete constellation prior and $\tilde{z} \sim \mathcal{CN}(0,1)$ independently. For every denoiser of the hierarchy (\ac{BOD}, \ac{OBD}, \ac{OGD}, \ac{OPD}) and the \ac{LMMSE} baseline, the \ac{SE} map $\mathcal{F}$ of~\eqref{eq:fixed_point} is non-negative, continuous, and a self-map of a compact interval; by Brouwer's fixed-point theorem it admits at least one fixed point $\mathrm{MSE}^*$, localized as follows.
\begin{enumerate}[(i)]
  \item \textbf{\ac{BOD}, \ac{OBD}, \ac{LMMSE}.} The \ac{BOD} and \ac{OBD} are posterior means and the \ac{LMMSE} is the linear \ac{MMSE} ($\mathcal{F}_{\mathrm{L}} = E_d\bar{\sigma}^2/(E_d+\bar{\sigma}^2)$), so in each case $\mathcal{F}(\mathrm{MSE}) < E_d$ at finite $\bar{\sigma}^2$; thus $\mathcal{F}(\mathcal{I}) \subseteq \mathcal{I}$ and $\mathrm{MSE}^* \in [0, E_d]$.
  \item \textbf{\ac{OGD}, \ac{OPD}.} These do \emph{not} realize the Bayes posterior mean (the \ac{OGD} approximates it via the Gaussian phase~\eqref{eq:gaussian_phase}, the \ac{OPD} replaces it by a hard ring projection), so $\mathcal{F} < E_d$ need not hold; only the boundedness estimate $\mathcal{F}(\mathrm{MSE}) \leq 4R_L^2$ is available, giving $\mathcal{F}(\mathcal{I}_0) \subseteq \mathcal{I}_0$ and $\mathrm{MSE}^* \in [0, 4R_L^2]$. This uses $\norm{\eta} \leq R_L$, which holds for the \ac{OPD} ($\norm{\eta_P} = R_{\ell^*}$) and for the \ac{OGD} at high concentration $\kappa_\ell \geq \tfrac12$ or under the adaptive fallback (Definition~\ref{def:adaptive_cgr}); the \ac{OGD}/\ac{OPD} fixed points are in any case computed explicitly at high \ac{SNR} in Section~\ref{sec:asymptotic_gap}.
\end{enumerate}
\end{theorem}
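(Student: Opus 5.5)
The approach is to treat $\mathcal{F}$ as a scalar function of $\mathrm{MSE}$ through the affine reparametrization $\bar{\sigma}^2 = \sigma_z^2 + \alpha\,\mathrm{MSE}$ of~\eqref{eq:se_noise}, and to verify three properties in turn: non-negativity, continuity, and invariance of a compact interval. Brouwer's theorem on a compact interval of $\Real$ then reduces to the intermediate value theorem applied to $g(m) \triangleq \mathcal{F}(m) - m$. Non-negativity is immediate, since $\mathcal{F}$ is the expectation of a squared modulus. Because $\sigma_z^2>0$, every admissible $\mathrm{MSE}\ge 0$ gives $\bar{\sigma}^2 \ge \sigma_z^2 > 0$, so all denoisers are evaluated strictly away from the degenerate limit $\bar{\sigma}^2\to 0$.

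\textbf{Continuity.} Write $\mathcal{F}(m) = \E_{x,\tilde{z}}[\,|x-\eta(x+\bar{\sigma}(m)\tilde{z};\bar{\sigma}(m)^2)|^2]$ and argue by dominated convergence.
\begin{itemize}
\item For the \ac{BOD} and \ac{OBD}, the integrand is jointly continuous in $(\tilde{z},\bar{\sigma}^2)$ on $\bar{\sigma}^2>0$. Indeed, the softmax weights, $\kappa_\ell$, $\ln I_0$ and $A$ are all smooth in their arguments; $A(\kappa)\to 0$ as $|\bar{x}|\to 0$ removes the phase singularity at the origin; and Lemma~\ref{lem:pseudo_lip} supplies Lipschitz regularity.
\item For the \ac{OPD} and the adaptive \ac{OGD}, the integrand is discontinuous only on the Lebesgue-null set of ring decision boundaries and regime switches (Lemma~\ref{lem:pseudo_lip}). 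Since $\bar{\sigma}\tilde{z}$ has a density, pointwise convergence still holds almost surely as $m$ varies.
\end{itemize}
In every case the integrand is dominated by the constant $(R_L + \sup|\eta|)^2 \le 4R_L^2$, which gives continuity. For the \ac{LMMSE} baseline, continuity is read off directly from the closed form $E_d\bar{\sigma}^2/(E_d+\bar{\sigma}^2)$.

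\textbf{Self-map, case (i).} For the \ac{BOD}, $\mathcal{F}(m) = \mathrm{mmse}_p(\bar{\sigma}^2)$. This is at most the error of the constant estimator $0$, which equals $E_d$, and the inequality is strict whenever $\bar{\sigma}^2<\infty$ because the posterior mean is not a.s.\ zero. For the \ac{OBD}, Lemma~\ref{lem:phase_preserving}(iii) converts the mismatched error into the matched quantity $\mathrm{mmse}_q(\bar{\sigma}^2)$, and the same argument applies because $\mathrm{Var}_q(x) = E_d$. The \ac{LMMSE} expression is manifestly below $E_d$. Hence $\mathcal{F}([0,E_d])\subseteq[0,E_d]$.

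\textbf{Self-map, case (ii).} For the \ac{OGD} and \ac{OPD}, use $|x-\eta|\le |x|+|\eta| \le R_L + R_L$. Here $|\eta_P| = R_{\ell^*}\le R_L$, and for the \ac{OGD} the shrinkage factor lies in $[0,1]$ in the $\kappa_\ell\ge\frac12$ regime, while the adaptive fallback uses $A(\kappa_\ell)\in[0,1)$; thus $|\eta_G|\le R_L$ as a convex combination of quantities bounded by $R_L$. This yields $\mathcal{F}(\mathcal{I}_0)\subseteq\mathcal{I}_0$. In both cases, $g(0)\ge 0$ and $g(\text{right endpoint})\le 0$, so a zero of $g$ exists.

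\textbf{Main obstacle.} The delicate step is continuity for the discontinuous \ac{OPD} and \ac{OGD}: the dominated-convergence argument must check that, as $\bar{\sigma}$ varies, the exceptional set $\{\tilde{z} : x+\bar{\sigma}\tilde{z}\ \text{lies on a boundary}\}$ remains null for each fixed $\bar{\sigma}$, and that convergence holds off the limiting null set. Alternatively, one can route the argument through the pseudo-Lipschitz regularization of Lemma~\ref{lem:pseudo_lip}, proving continuity for the mollified map and passing to the limit. I would attempt the direct almost-sure argument first, since the boundaries are finitely many circles and the Gaussian density makes them null.
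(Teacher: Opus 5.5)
Your proposal is correct and follows essentially the same route as the paper. Both establish non-negativity, then continuity by dominated convergence under the constant $4R_L^2$ coming from $|\eta|\le R_L$, with the null discontinuity sets of the \ac{OPD} and adaptive \ac{OGD} avoided almost surely, plus the \ac{LMMSE} closed form. Both then show the self-map property and apply Brouwer, which you reduce to the intermediate value theorem. The only differences are cosmetic. In case~(i) you bound $\mathcal{F}$ by the zero-estimator error $E_d$, after the circular-symmetry transfer for the \ac{OBD}, rather than by the paper's Gaussian maximum-\ac{MMSE} bound $E_d\bar{\sigma}^2/(E_d+\bar{\sigma}^2)$; either suffices for $\mathcal{F}([0,E_d])\subseteq[0,E_d]$. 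You also treat the adaptive \ac{OGD}'s $\bar{\sigma}^2$-dependent regime switch explicitly as a null-set discontinuity, which is slightly more careful than the paper's claim that the \ac{OGD} is jointly smooth.
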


\begin{proof}
\proofref{app:thm_se_convergence}
\end{proof}

\begin{proposition}[Monotonicity of the SE Map]
\label{prop:se_monotone}
Let $\mathcal{F}$ be the \ac{SE} map~\eqref{eq:fixed_point} of any denoiser that realizes the Bayes posterior mean for the effective channel $\bar{x} = x + \bar{\sigma}\tilde{z}$ under some prior -- the matched \ac{BOD} (true prior $p$), the \ac{OBD} (orbital prior $q$), or the \ac{LMMSE} (Gaussian prior of second moment $E_d$). Then $\mathcal{F}$ is non-decreasing on $[0, E_d]$. 
Consequently, the \ac{SE} iterates initialized at $\mathrm{MSE}_0 = E_d$ converge monotonically to the largest fixed point, which we denote $\mathrm{MSE}_\infty$ (with $\mathrm{MSE}_\infty^{\eta}$ for a specific denoiser $\eta$).
\end{proposition}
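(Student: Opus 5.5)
The plan is to write $\mathcal{F}$ as a composition of two non-decreasing maps. The first is the affine noise map $\mathrm{MSE}\mapsto\bar{\sigma}^2=\sigma_z^2+\alpha\,\mathrm{MSE}$ of~\eqref{eq:se_noise}, which is increasing since $\alpha>0$. The second is the scalar error map $\bar{\sigma}^2\mapsto\E_{x,\tilde{z}}[\norm{x-\eta(x+\bar{\sigma}\tilde{z};\bar{\sigma}^2)}^2]$. So the substance is to show that, for each of the three denoisers, this scalar error is non-decreasing in $\bar{\sigma}^2$. Monotone convergence to the largest fixed point then follows from a standard order-theoretic argument on the invariant interval $[0,E_d]$ provided by Theorem~\ref{thm:se_convergence}(i).

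\emph{Step 1: reduce each case to a matched MMSE.}
\begin{itemize}
\item For the \ac{BOD}, $\eta_D=\E_p[x\mid\bar{x}]$ is matched to the data-generating prior. Its error is therefore exactly $\mathrm{mmse}_p(\bar{\sigma}^2)$.
\item For the \ac{OBD}, the denoiser is mismatched to $p$. I will invoke Lemma~\ref{lem:phase_preserving}(iii): $\eta_B$ is phase-preserving, so its error under $p$ equals its error under $q$. Since $\eta_B=\E_q[x\mid\bar{x}]$, that error is $\mathrm{mmse}_q(\bar{\sigma}^2)$ of~\eqref{eq:mmse_Q_def}.
\item For the \ac{LMMSE}, the error is the closed form $E_d\bar{\sigma}^2/(E_d+\bar{\sigma}^2)$ quoted in Theorem~\ref{thm:se_convergence}. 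Its derivative is $E_d^2/(E_d+\bar{\sigma}^2)^2>0$.
\end{itemize}

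\emph{Step 2: matched MMSE is non-decreasing in the noise level.} I will use a degradation (data-processing) argument valid for any prior $\mu\in\{p,q\}$. For $\bar{\sigma}_1^2\le\bar{\sigma}_2^2$, the noisier observation can be generated from the cleaner one as $\bar{x}_2=\bar{x}_1+\bar{z}'$, with $\bar{z}'\sim\CN(0,\bar{\sigma}_2^2-\bar{\sigma}_1^2)$ independent of $(x,\bar{x}_1)$. Hence
\begin{equation*}
\E_\mu\bigl[x\mid\bar{x}_2\bigr]=\E_\mu\bigl[\E_\mu[x\mid\bar{x}_1,\bar{z}']\mid\bar{x}_2\bigr]
\end{equation*}
is a measurable function of $(\bar{x}_1,\bar{z}')$. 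The posterior mean given $(\bar{x}_1,\bar{z}')$ equals $\E_\mu[x\mid\bar{x}_1]$ because $\bar{z}'$ is independent noise. By optimality of the conditional mean among all estimators built on $(\bar{x}_1,\bar{z}')$, it follows that $\mathrm{mmse}_\mu(\bar{\sigma}_1^2)\le\mathrm{mmse}_\mu(\bar{\sigma}_2^2)$. Composing with the affine map gives that $\mathcal{F}$ is non-decreasing on $[0,E_d]$ in all three cases.

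\emph{Step 3: monotone convergence.} Set $\mathrm{MSE}_0=E_d$. Theorem~\ref{thm:se_convergence}(i) gives $\mathcal{F}([0,E_d])\subseteq[0,E_d]$, so $\mathrm{MSE}_1=\mathcal{F}(E_d)\le E_d=\mathrm{MSE}_0$. Monotonicity then propagates this by induction: $\mathrm{MSE}_{t+1}=\mathcal{F}(\mathrm{MSE}_t)\le\mathcal{F}(\mathrm{MSE}_{t-1})=\mathrm{MSE}_t$. The sequence is non-increasing and bounded below by $0$, so it converges to some limit $\mathrm{MSE}_\infty$. Continuity of $\mathcal{F}$, also from Theorem~\ref{thm:se_convergence}, makes this limit a fixed point. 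To see it is the largest one, take any fixed point $m^*\in[0,E_d]$. Induction gives $m^*=\mathcal{F}(m^*)\le\mathcal{F}(\mathrm{MSE}_t)=\mathrm{MSE}_{t+1}$ for all $t$, hence $m^*\le\mathrm{MSE}_\infty$.

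\emph{Main obstacle.} The delicate step is the \ac{OBD} case, because the error is measured under $p$ while the estimator is Bayes under $q$. A direct degradation argument under $p$ would fail, since mismatched errors need not be monotone in the noise level. Everything therefore hinges on the exact equality of Lemma~\ref{lem:phase_preserving}(iii), which converts the mismatched error into $\mathrm{mmse}_q$. I will make sure that equality is applied at every $\bar{\sigma}^2$, not only asymptotically, so that Step 2 can then be run for the matched pair $(q,\eta_B)$.
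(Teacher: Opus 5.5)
Your proposal is correct and follows essentially the same route as the paper. Each case is reduced to a matched MMSE (via Lemma~\ref{lem:phase_preserving}(iii) for the \ac{OBD}) composed with the increasing affine map $\mathrm{MSE}\mapsto\sigma_z^2+\alpha\,\mathrm{MSE}$; monotonicity comes from the fact that degrading the channel cannot reduce the MMSE; and convergence follows by monotone iteration from $\mathcal{F}(E_d)\le E_d$. Your explicit degradation argument, and your induction showing that the limit dominates every fixed point, spell out what the paper compresses into a citation of the I-MMSE relation and the zero-estimator bound $\mathrm{mmse}_\varpi\le E_d$.
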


\begin{proof}
\proofref{app:prop_se_monotone}
\end{proof}

\begin{proposition}[Ordering of SE Fixed Points]
\label{prop:fp_ordering}
Let $\mathcal{F}_{\mathrm{D}}$, $\mathcal{F}_{\mathrm{B}}$, and $\mathcal{F}_{\mathrm{L}}$ denote the \ac{SE} maps of the \ac{BOD} (exact discrete, Bayes-optimal), \ac{OBD}, and \ac{LMMSE} estimators, and write $\mathrm{MSE}_\infty^{\eta} = \lim_{t\to\infty}\mathrm{MSE}_t$ for the corresponding largest fixed point of denoiser $\eta$ (iterates initialized at $\mathrm{MSE}_0 = E_d$). For any underloaded system with $\alpha = K/N < 1$ and every $\sigma_z^2 > 0$,
\begin{equation}
\label{eq:fp_ordering}
\mathrm{MSE}_\infty^{\mathrm{D}} \;\leq\; \mathrm{MSE}_\infty^{\mathrm{B}} \;\leq\; \mathrm{MSE}_\infty^{\mathrm{L}}.
\end{equation}
The \ac{OGD} and \ac{OPD} are deferred: once their fixed points are characterized in Section~\ref{sec:asymptotic_gap}, this ordering extends \emph{at high \ac{SNR}} to the full hierarchy $\mathrm{MSE}_\infty^{\mathrm{D}} \le \mathrm{MSE}_\infty^{\mathrm{B}} \le \mathrm{MSE}_\infty^{\mathrm{G}} \le \mathrm{MSE}_\infty^{\mathrm{P}} \le \mathrm{MSE}_\infty^{\mathrm{L}}$ (Corollary~\ref{cor:fp_ordering_hierarchy}); the extension is necessarily restricted to low noise, since the \ac{OGD}/\ac{OPD} fixed points are only {\emph{localized and ordered} at high \ac{SNR} (their existence on $[0, 4R_L^2]$ holds at all \ac{SNR} by Theorem~\ref{thm:se_convergence})}.
\end{proposition}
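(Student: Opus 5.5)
The proof rests on one pointwise comparison. I would show that, at each fixed effective variance, the three \ac{SE} maps are ordered, $\mathcal{F}_{\mathrm{D}}(\mathrm{MSE}) \le \mathcal{F}_{\mathrm{B}}(\mathrm{MSE}) \le \mathcal{F}_{\mathrm{L}}(\mathrm{MSE})$ for all $\mathrm{MSE}\in[0,E_d]$. The monotonicity of Proposition~\ref{prop:se_monotone} then transports this ordering to the largest fixed points.

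All three maps use the same effective variance $\bar{\sigma}^2 = \sigma_z^2 + \alpha\,\mathrm{MSE}$ from~\eqref{eq:se_noise}. So for each $\mathrm{MSE}$ I only need to compare the error of three estimators on the same scalar channel $\bar{x}=x+\bar{\sigma}\tilde{z}$, with $x\sim p$.

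\emph{Left inequality.} $\eta_D=\E_p[x\mid\bar{x}]$ minimizes $\E_p\norm{x-\eta(\bar{x})}^2$ over all measurable $\eta$. Hence $\mathcal{F}_{\mathrm{D}}\le\mathcal{F}_{\mathrm{B}}$ pointwise, and no structure of $\eta_B$ is used.

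\emph{Right inequality.} This is the substantive step. By Lemma~\ref{lem:phase_preserving}(iii), $\mathcal{F}_{\mathrm{B}}(\mathrm{MSE})=\E_q\norm{x-\eta_B(\bar{x})}^2=\mathrm{mmse}_q(\bar{\sigma}^2)$, because $\eta_B$ is the posterior mean under $q$. The \ac{LMMSE} map equals $E_d\bar{\sigma}^2/(E_d+\bar{\sigma}^2)$, which is the error of the best linear estimator. That error depends on the prior only through its second moment. Since $q$ has zero mean and $\Var_q(x)=E_d$, the linear estimator $\eta_{\mathrm{L}}$ achieves this same value under $q$. Therefore
\[
\mathrm{mmse}_q(\bar{\sigma}^2)\le \E_q\norm{x-\eta_{\mathrm{L}}(\bar{x})}^2 = \mathcal{F}_{\mathrm{L}}(\mathrm{MSE}).
\]
The point to be careful about is exactly this passage through $q$. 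The bound $\mathcal{F}_{\mathrm{B}}\le\mathcal{F}_{\mathrm{L}}$ cannot be argued directly under $p$, because $\eta_B$ is not optimal for $p$. It can only be argued under $q$, and this is licensed by circular symmetry, which Lemma~\ref{lem:phase_preserving}(iii) provides for the phase-preserving $\eta_B$. The \ac{LMMSE} error is the same under $p$ and $q$ because it depends only on second moments.

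\emph{Transfer to fixed points.} Take two maps $\mathcal{F}_1\le\mathcal{F}_2$ that are non-decreasing on $[0,E_d]$ and iterated from $\mathrm{MSE}_0=E_d$. By induction, $\mathcal{F}_1^{t}(E_d)\le \mathcal{F}_1(\mathcal{F}_2^{t-1}(E_d))\le\mathcal{F}_2^{t}(E_d)$: the first step uses monotonicity of $\mathcal{F}_1$, the second the pointwise order. The iterates converge monotonically to the largest fixed points by Proposition~\ref{prop:se_monotone}, and continuity (Theorem~\ref{thm:se_convergence}) makes the limits fixed points. Passing to the limit gives $\mathrm{MSE}_\infty^{\mathrm{D}}\le\mathrm{MSE}_\infty^{\mathrm{B}}\le\mathrm{MSE}_\infty^{\mathrm{L}}$.

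The condition $\alpha<1$ guarantees that the fixed points of Theorem~\ref{thm:se_convergence} are well localized in $[0,E_d]$, where Proposition~\ref{prop:se_monotone} applies. The deferred extension to the \ac{OGD} and \ac{OPD} is not part of this claim.
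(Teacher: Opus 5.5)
Your proposal is correct and follows the paper's proof: a pointwise ordering of the maps (Bayes optimality of $\eta_D$ on the left; on the right, the circular-symmetry transfer of Lemma~\ref{lem:phase_preserving}(iii) to $q$ followed by the bound $\mathrm{mmse}_q\le E_d\bar{\sigma}^2/(E_d+\bar{\sigma}^2)$, which you get directly from the linear estimator where the paper cites the Gaussian maximum-MMSE theorem), then lifted to the largest fixed points by the monotone comparison argument, which you prove by induction where the paper cites a comparison lemma. One minor correction: $\alpha<1$ is not what places the fixed points in $[0,E_d]$, since Theorem~\ref{thm:se_convergence}(i) gives $\mathcal{F}<E_d$ at every load, so your ordering argument in fact never uses $\alpha<1$.
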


\begin{proof}
\proofref{app:prop_fp_ordering}
\end{proof}

\begin{remark}[Role of Monotonicity in the Comparison]
\label{rem:comparison_role}
The comparison lemma used in Proposition~\ref{prop:fp_ordering} critically requires
monotonicity of the \ac{SE} maps (Proposition~\ref{prop:se_monotone}).
Without monotonicity, the per-step ordering $\mathcal{F}_{\mathrm{D}}(\mathrm{MSE}) \leq \mathcal{F}_{\mathrm{B}}(\mathrm{MSE})$ would \emph{not} imply fixed-point ordering: a map with smaller values everywhere can have a larger fixed point if it oscillates or is non-monotone.
The Brouwer theorem (Theorem~\ref{thm:se_convergence}) guarantees existence but says nothing about ordering; the comparison lemma provides this ordering by exploiting the monotone structure specific to \ac{SE} maps arising from separable denoisers in \ac{AMP}.
This subtlety is precisely why the ordering claim requires a formal proof rather than the informal argument ``the optimal denoiser minimizes \ac{MSE} by definition.''
\end{remark}

\begin{corollary}[Uniqueness of the SE Fixed Point]
\label{cor:se_uniqueness}
The uniqueness claims below concern the posterior-mean denoisers of Proposition~\ref{prop:se_monotone}, namely the \ac{BOD}, \ac{OBD}, and \ac{LMMSE}; the \ac{OGD} and \ac{OPD}, not being posterior means, are excluded.
\begin{enumerate}[(i)]
  \item \emph{\ac{LMMSE} (all \ac{SNR}).}
    For every $\sigma_z^2 > 0$ and $\alpha < 1$, $\mathcal{F}_{\mathrm{L}}$ is a global contraction on $[0,E_d]$ and has a \emph{unique} fixed point.
  \item \emph{\ac{BOD} and \ac{OBD} (two regimes).}
    $\mathcal{F}_{\mathrm{D}}$ and $\mathcal{F}_{\mathrm{B}}$ admit a unique fixed point in $[0,E_d]$ in either of two regimes: (a) \emph{low \ac{SNR}}, $\sigma_z^2 \geq R_L^2$, where the map is a global Banach contraction; and (b) \emph{high \ac{SNR}}, $\sigma_z^2 < \sigma_{\mathrm{th}}^2(\alpha)$ for a geometry-dependent threshold $\sigma_{\mathrm{th}}^2 > 0$, where the map is locally contractive near $\mathrm{MSE}^* = \mathcal{O}(\sigma_z^2)$ and global uniqueness follows by monotone convergence (Proposition~\ref{prop:se_monotone}). {In the intervening band $\sigma_{\mathrm{th}}^2(\alpha) \le \sigma_z^2 < {R_L^2}$ global uniqueness is not asserted; the largest fixed point $\mathrm{MSE}_\infty$ (Proposition~\ref{prop:se_monotone}) nonetheless remains well-defined and is the one used throughout.}
\end{enumerate}
\end{corollary}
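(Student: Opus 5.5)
The plan is to treat the two parts separately, using the SE map $\mathcal{F}(\mathrm{MSE}) = g(\sigma_z^2 + \alpha\,\mathrm{MSE})$. Here $g(\bar{\sigma}^2)$ is the scalar-channel MSE of the relevant posterior-mean denoiser. By Lemma~\ref{lem:phase_preserving}(iii), for the OBD $g$ is the matched $\mathrm{mmse}_q(\bar{\sigma}^2)$; for the BOD it is $\mathrm{mmse}_p(\bar{\sigma}^2)$; for the LMMSE it is $E_d\bar{\sigma}^2/(E_d+\bar{\sigma}^2)$. By the chain rule, $\mathcal{F}'(\mathrm{MSE}) = \alpha\, g'(\bar{\sigma}^2)$. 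Contraction on $[0,E_d]$ therefore reduces to showing $\sup \alpha g' < 1$ on the relevant range of $\bar{\sigma}^2 \ge \sigma_z^2$. Uniqueness then follows from Banach's theorem, since $\mathcal{F}$ is a self-map of $[0,E_d]$ by Theorem~\ref{thm:se_convergence}(i).

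For part (i), I would differentiate directly: $g_{\mathrm{L}}'(\bar{\sigma}^2) = E_d^2/(E_d+\bar{\sigma}^2)^2 < 1$ for every $\bar{\sigma}^2>0$, and in fact $\le E_d^2/(E_d+\sigma_z^2)^2$ uniformly. Hence $\sup\mathcal{F}_{\mathrm{L}}' \le \alpha E_d^2/(E_d+\sigma_z^2)^2 < 1$ whenever $\alpha<1$, which gives a global contraction. Solving the resulting quadratic also exhibits the unique root explicitly.

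For part (ii)(a), I would bound $g'$ for the Bayesian maps. Using the I-MMSE and Stein structure (Lemma~\ref{lem:phase_preserving}(ii)), the derivative of the MMSE with respect to the noise variance is
\[
\frac{\mathrm{d}\,\mathrm{mmse}}{\mathrm{d}\bar{\sigma}^2} = \frac{\E\bigl[\Var(x\mid\bar{x})^2\bigr]}{\bar{\sigma}^4}
\]
for complex scalar channels, up to the proper conditional-covariance (trace-of-square) form. Since $\Var(x\mid\bar{x}) \le R_L^2$ pointwise (all rings lie in the disc of radius $R_L$), and $\E[\Var^2] \le R_L^2\,\E[\Var] = R_L^2\,\mathrm{mmse} \le R_L^2 E_d \le R_L^4$, we get $g' \le R_L^4/\bar{\sigma}^4 \le 1$ whenever $\bar{\sigma}^2 \ge \sigma_z^2 \ge R_L^2$. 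Multiplying by $\alpha<1$ yields a strict contraction. The same bound works verbatim for $p$ and $q$.

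For part (ii)(b), I would linearize near the small fixed point. At high SNR, $g(\bar{\sigma}^2) = \bar{\sigma}^2/2 + o(\bar{\sigma}^2)$ for the orbital prior, since only the tangential half-dimension survives, consistent with the $\sigma_z^2/(2-\alpha)$ value. For the BOD, $g$ decays exponentially. In both cases $\alpha g' \le \alpha/2 + o(1) < 1$ on a neighbourhood $[0,c\sigma_z^2]$. This gives a unique fixed point there, and it is locally attracting.

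The main obstacle is excluding spurious larger fixed points in $(c\sigma_z^2, E_d]$. I would try to argue that $\mathcal{F}(\mathrm{MSE}) < \mathrm{MSE}$ on that range once $\sigma_z^2 < \sigma_{\mathrm{th}}^2(\alpha)$. The idea is that $g(\sigma_z^2+\alpha m)$ is bounded by its value at $\alpha m$ plus a perturbation controlled by $\sigma_z^2$, and $g(\alpha m) < m$ for $m$ bounded away from zero when $\alpha<1$. Combined with the monotone convergence of Proposition~\ref{prop:se_monotone} from $\mathrm{MSE}_0=E_d$, this pins the largest fixed point to the local basin. The threshold $\sigma_{\mathrm{th}}^2$ is then read off from where this strict inequality first holds, and it depends on the ring geometry through $g$.
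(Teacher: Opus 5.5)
Your treatment of (i) matches the paper. The low-SNR step (ii)(a), however, does not go through as written, because the constant you left open in the derivative identity is exactly the one that matters. For the complex channel,
\[
\frac{\mathrm{d}\,\mathrm{mmse}}{\mathrm{d}\bar{\sigma}^2}=\frac{2}{\bar{\sigma}^4}\,\E\bigl[\|\bm{\Sigma}(\bar{x})\|_F^2\bigr],\qquad \tfrac12\bigl(\operatorname{tr}\bm{\Sigma}\bigr)^2\le\|\bm{\Sigma}\|_F^2\le\bigl(\operatorname{tr}\bm{\Sigma}\bigr)^2 .
\]
The upper end is attained whenever the posterior uncertainty is one-dimensional, for example a real-valued constellation such as BPSK under the BOD. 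So in general the best available bound is $g'\le 2\E[\hat{\sigma}^4]/\bar{\sigma}^4$. Feeding your chain $\E[\hat{\sigma}^4]\le R_L^2\,\mathrm{mmse}\le R_L^2E_d$ into it gives only $\mathcal{F}'\le 2\alpha E_d/R_L^2$. For PSK this is $2\alpha$, which is not a contraction once $\alpha\ge\tfrac12$.

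The lossy step is $\mathrm{mmse}\le E_d$. Replace it by the Gaussian maximum-MMSE bound $\mathrm{mmse}(\bar{\sigma}^2)\le E_d\bar{\sigma}^2/(E_d+\bar{\sigma}^2)$, which recovers exactly the missing factor of two:
\[
\mathcal{F}'\le\frac{2\alpha R_L^2}{\bar{\sigma}^4}\,\mathrm{mmse}(\bar{\sigma}^2)\le\frac{2\alpha E_d}{E_d+\bar{\sigma}^2}\cdot\frac{R_L^2}{\bar{\sigma}^2}\le\alpha<1,
\]
using $\bar{\sigma}^2\ge R_L^2\ge E_d$. This is the paper's argument, which it phrases as $D\le 1\Rightarrow\E[D^2]\le\E[D]<E_d/(E_d+\bar{\sigma}^2)$.

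Your (ii)(b) is sound, and on one point it is more explicit than the paper. The paper gets local contractivity from $\mathcal{F}'\le 2\alpha\E[D^2]\to\alpha/2$. It then simply asserts that monotone iteration from $E_d$ lands on the $\mathcal{O}(\sigma_z^2)$ fixed point, and only rules out smaller fixed points.

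Your ``main obstacle'' closes in one line, with no perturbation estimate (which would need a global Lipschitz constant for $g$ that you have not established). Since $g(s)\le s$ (compare with the identity estimator), any fixed point satisfies $m\le\sigma_z^2+\alpha m$, i.e.\ $m\le\sigma_z^2/(1-\alpha)$. So take $c=1/(1-\alpha)$. On that window $\bar{\sigma}^2\le\sigma_z^2/(1-\alpha)$ uniformly, so the small-noise derivative bound holds throughout, and the mean value theorem gives uniqueness.

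One further tightening: the expansion $g(\bar{\sigma}^2)=\bar{\sigma}^2/2+o(\bar{\sigma}^2)$ does not by itself bound $g'$. You need $g'\le 2\E[D^2]$, with $D=\hat{\sigma}^2/\bar{\sigma}^2$ concentrating at $\tfrac12$ (at $0$ for the BOD); the near-boundary events contribute only exponentially small terms.
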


\begin{proof}
\proofref{app:cor_se_uniqueness}
\end{proof}

\begin{remark}[Scope of the Fixed-Point Theory Across the Hierarchy]
\label{rem:fp_scope}
Theorem~\ref{thm:se_convergence} settles existence for every denoiser, on $[0,E_d]$ for the posterior-mean denoisers (\ac{BOD}, \ac{OBD}) and the \ac{LMMSE}, on $[0,4R_L^2]$ for the \ac{OGD} and \ac{OPD}.
Monotonicity (Proposition~\ref{prop:se_monotone}) and uniqueness (Corollary~\ref{cor:se_uniqueness}) require only that the denoiser be a Bayes posterior mean, so that its \ac{SE} map is an \ac{MMSE}: this holds for the \ac{BOD} (true prior $p$), the \ac{OBD} (orbital prior $q$), and the \ac{LMMSE} (Gaussian prior).
For all three the map is non-decreasing by the \ac{I-MMSE} relation, so the iterates converge monotonically to the largest fixed point $\mathrm{MSE}_\infty$.
Uniqueness is regime-dependent: global for the \ac{LMMSE}, two-regime for the \ac{BOD} and \ac{OBD} (Corollary~\ref{cor:se_uniqueness} and Remark~\ref{rem:overloaded}).
The \ac{OGD} and \ac{OPD} are \emph{not} posterior means (the \ac{OGD} approximates the \ac{OBD}'s von Mises phase posterior by a Gaussian (Definition~\ref{def:gaussian_phase}), the \ac{OPD} is a hard ring projection) so neither the \ac{MMSE} monotonicity nor the contraction applies; their fixed points are instead characterized explicitly in Section~\ref{sec:asymptotic_gap} (Proposition~\ref{prop:cgr_se}, Corollary~\ref{cor:lpd_se}).
\end{remark}

\begin{remark}[Geometric Convergence Rate]
\label{rem:geometric_rate}
In any regime where $\mathcal{F}$ is a {global} contraction {on $[0,E_d]$} (Corollary~\ref{cor:se_uniqueness}), with modulus $c \triangleq \sup_{[0,E_d]}\norm{\mathcal{F}'} < 1$, the monotone convergence of Proposition~\ref{prop:se_monotone} is in fact geometric: since $\mathrm{MSE}_0 = E_d$, we have
\begin{equation*}
\norm{\mathrm{MSE}_t - \mathrm{MSE}_\infty} \;\leq\; c^t\,\norm{\mathrm{MSE}_0 - \mathrm{MSE}_\infty} \;\leq\; c^t E_d.
\end{equation*}
The rate is the modulus $c$, not $\alpha$: for the \ac{LMMSE} at all \ac{SNR} ({$c < \alpha$}) and for the \ac{BOD}/\ac{OBD} at high noise ({$c \leq \alpha$}) the map is globally contractive on $[0,E_d]$ (Corollary~\ref{cor:se_uniqueness}), giving $\norm{\mathrm{MSE}_t - \mathrm{MSE}_\infty} \leq \alpha^t E_d$.
For example, $\norm{\mathrm{MSE}_t - \mathrm{MSE}_\infty} \leq 0.25^t E_d$ at $\alpha = 0.25$. At high \ac{SNR}, the \ac{OBD} contraction is only local; once the iterates enter the basin, the asymptotic rate is the fixed-point slope $\mathcal{F}'(\mathrm{MSE}^*) \to \alpha/2$, smaller still.  In the intervening band, where contraction is not guaranteed, convergence remains monotone (Proposition~\ref{prop:se_monotone}) though possibly sub-geometric.
\end{remark}

\begin{remark}[Extension to $\alpha \geq 1$]
\label{rem:overloaded}
The contraction modulus of Remark~\ref{rem:geometric_rate} is $c = \sup_{[0,E_d]}\alpha\,\mathrm{d}\,\mathrm{mmse}_\varpi/\mathrm{d}\bar{\sigma}^2 \leq \sup_{[0,E_d]} 2\alpha\,\E[D^2]$ (with $D = \hat{\sigma}^2/\bar{\sigma}^2$ the normalized posterior variance; see the proof of Corollary~\ref{cor:se_uniqueness}), so the load $\alpha$ enters only through the explicit factor in this product.
At the \emph{critical} load $\alpha = 1$ the fixed-point theory still goes through: in the high-noise regime $D \leq R_L^2/\bar{\sigma}^2 \leq 1$ pointwise and $2\,\E[D^2] \leq 2\,\E[D] = 2\,\mathrm{mmse}_\varpi/\bar{\sigma}^2 < 2E_d/(E_d + \bar{\sigma}^2) \leq 1$ strictly (the strict Gaussian maximum-\ac{MMSE} bound together with $\bar{\sigma}^2 \geq R_L^2 \geq E_d$), while at high \ac{SNR} the slope $\mathrm{d}\,\mathrm{mmse}_\varpi/\mathrm{d}\bar{\sigma}^2 \to \tfrac12$ (\ac{OBD}) or $0$ (\ac{BOD}), whence $c < 1$ and $\mathcal{F}$ remains a strict contraction with existence, monotone convergence, and a unique fixed point intact.  What degrades at $\alpha = 1$ is the \ac{LMMSE} \emph{baseline}, whose fixed point $\sigma_z^2/(1-\alpha)$ diverges; the hierarchy fixed point $\sigma_z^2/(2-\alpha) = \sigma_z^2$ stays finite.
The obstruction is the \emph{strictly overloaded} regime $\alpha > 1$, where $c < 1$ no longer holds under $c \leq \alpha$, even at low \ac{SNR}, causing contraction to fail, so that multiple \ac{SE} fixed points may coexist (``good'' and ``bad'' convergence basins), a well-documented fact in the \ac{AMP} literature~\cite{Bayati2011}.
In any case, the underloaded assumption $\alpha < 1$ is adopted throughout, since the \ac{LMMSE} baseline and the capacity-achieving conclusions of Section~\ref{sec:inf_theory} require it.
\end{remark}

\vspace{-2ex}
\subsection{State-Evolution Fixed Points of the Hierarchy}
\label{sec:asymptotic_gap}

We now evaluate the \ac{SE} fixed point of each denoiser, descending the hierarchy from the matched \ac{BOD} to the \ac{LMMSE}.
The general theory of Section~\ref{sec:se_fixedpoint} guarantees that each fixed point exists; here, we compute its \emph{value}.
The recipe is uniform: equate $\mathrm{MSE}_\infty$ to the denoiser's per-symbol conditional error $\mathbb{E}_{x,\tilde{z}}\bigl[\norm{x-\eta(\bar{x};\bar{\sigma}^2)}^2\bigr]$ at effective noise $\bar{\sigma}^2$ via~\eqref{eq:se_noise}; i.e., $\bar{\sigma}^2 = \sigma_z^2 + \alpha\,\mathrm{MSE}_\infty$, and solve the resulting scalar equation.  For the posterior-mean \ac{BOD} and \ac{OBD} this error coincides with the expected posterior variance (the Nishimori identity); the \ac{OGD} and \ac{OPD} are not posterior means and additionally carry the orthogonality excess of Lemma~\ref{lem:excess}.

\subsubsection{Bayes-Optimal Denoiser}
The matched recursion follows from the Nishimori identity (see, e.g.,~\cite{Macris2007}), i.e.
\vspace{-0.5ex}
\begin{equation}
  \mathrm{MSE}_{t+1}^{\mathrm{D}} = \mathbb{E}_{x,\tilde{z}}\!\bigl[
  \hat{\sigma}^2(x + \bar{\sigma}_t \tilde{z};\;\bar{\sigma}_t^2)\bigr],
  \label{eq:se_optimal}
\vspace{-0.5ex}
\end{equation}
where $\hat{\sigma}^2(\bar{x};\bar{\sigma}^2)$ is the exact posterior variance~\eqref{eq:post_MMSE_var}; the identity holds because the matched \ac{BOD} returns the posterior mean $\eta_D(\bar{x};\bar{\sigma}^2) = \E[x \mid \bar{x}]$, so its conditional error coincides with the posterior variance.

For a discrete constellation, this variance decays \emph{exponentially} in the inverse noise, i.e., $\hat{\sigma}^2 = \mathcal{O}(e^{-c/\bar{\sigma}^2})$, with $c > 0$ governed by the minimum distance, so that the fixed point
\vspace{-0.5ex}
\begin{equation}
  \label{eq:bod_fixed_point}
  \mathrm{MSE}_\infty^{\mathrm{D}} = \mathcal{O}\bigl(e^{-c/\sigma_z^2}\bigr),
\vspace{-0.5ex}
\end{equation}
is exponentially small, vanishing faster than any power of $\sigma_z^2$.

This is the baseline against which the penalty of \ac{OD} is measured.

\textit{Justification:} The fixed point~\eqref{eq:bod_fixed_point} is the \emph{averaged} posterior variance $\mathbb{E}_{x,\tilde{z}}[\hat{\sigma}^2(\bar{x})]$ of~\eqref{eq:se_optimal}, so its decay rate is that of this average, not of the typical conditional variance, which decays faster.
Write $d_{\min} = \min_{s \neq s'}\norm{s-s'}$ for the constellation's minimum distance.
As $\bar{\sigma}^2 \to 0^+$ the posterior concentrates on the nearest point $s^*$ to $\bar{x}$, with only the nearest competitor $s'$ contributing; the two-point posterior has conditional variance
\begin{subequations}
\begin{equation*}
  \hat{\sigma}^2(\bar{x}) = w_{s^*} w_{s'}\, \norm{s^*-s'}^2,
\end{equation*}
\begin{equation*}
  \frac{w_{s'}}{w_{s^*}} = \exp\!\Bigl(-\tfrac{\norm{\bar{x}-s'}^2 - \norm{\bar{x}-s^*}^2}{\bar{\sigma}^2}\Bigr).
\end{equation*}
\end{subequations}
For $\bar{x} = s^* + \bar{\sigma}\tilde{z}$, $\tilde{z}\sim\mathcal{CN}(0,1)$, the log-likelihood gap is $\norm{\bar{x}-s'}^2 - \norm{\bar{x}-s^*}^2 = \norm{s^*-s'}^2 + 2\bar{\sigma}\,\mathrm{Re}((s^*-s')^*\tilde{z})$, so $\hat{\sigma}^2(\bar{x}) = \Theta(\norm{s^*-s'}^2)$ \emph{only} when $\bar{x}$ lies within $\mathcal{O}(\bar{\sigma})$ of the $s^*$-$s'$ boundary ($w_{s^*}\!\approx\!w_{s'}$) and is exponentially smaller otherwise.
The average is therefore dominated by these rare near-boundary observations: reaching the boundary requires the noise projection $\mathrm{Re}((s^*-s')^*\tilde{z})/\norm{s^*-s'} \sim \mathcal{N}(0,\tfrac12)$ to reach $\norm{s^*-s'}/(2\bar{\sigma})$, an event of probability $\Theta(e^{-\norm{s^*-s'}^2/(4\bar{\sigma}^2)})$, which is largest for the closest pair, at distance $d_{\min}$.
Hence $\mathbb{E}[\hat{\sigma}^2]$ decays as $e^{-d_{\min}^2/(4\bar{\sigma}^2)}$ (up to polynomial prefactors), establishing~\eqref{eq:bod_fixed_point} with $c = d_{\min}^2/4$, an exponent of the same minimum-distance form $d^2/(4\bar{\sigma}^2)$ as the ring-discrimination bound (Proposition~\ref{prop:ring_discrimination}), there with the inter-ring separation $d_R$ in place of $d_{\min}$.

\subsubsection{Orbital Bessel Denoiser}
Under the continuous orbital relaxation, the per-iteration \ac{MSE} obeys the mismatched recursion
\begin{equation}
  \mathrm{MSE}_{t+1}^{\mathrm{B}} = \mathbb{E}_{x,\tilde{z}}\!\Bigl[
  \norm{x - \eta_B\bigl(x + \bar{\sigma}_t \tilde{z};\;\bar{\sigma}_t^2\bigr)
  }^2\Bigr],
  \label{eq:se_mismatch}
\end{equation}
with $\eta_B$ computed via~\eqref{eq:obd_denoiser} and the expectation evaluated by Monte Carlo over the discrete prior $x$ and Gaussian noise $\tilde{z}$.
The \ac{OGD} and \ac{OPD} obey the \emph{same} recursion with $\eta_B$ replaced by $\eta_G$ or $\eta_P$, as all three act on the identical cavity statistic through the common concentration $\kappa_\ell$ and differ only in how they approximate $A(\kappa_\ell)$ and $\ln I_0(\kappa_\ell)$. 

Setting $\mathrm{MSE}_{t+1}^{\mathrm{B}} = \mathrm{MSE}_t^{\mathrm{B}}$ in~\eqref{eq:se_mismatch} characterizes the \ac{OBD} fixed point \emph{exactly}, as the solution of $\mathrm{MSE}_\infty^{\mathrm{B}} = \mathrm{mmse}_q(\sigma_z^2 + \alpha\,\mathrm{MSE}_\infty^{\mathrm{B}})$.  This equation is transcendental since $\mathrm{mmse}_q$ involves the Bessel ratio $A(\kappa) = I_1(\kappa)/I_0(\kappa)$ averaged against the Rician distribution of $\norm{\bar{x}}$ and has no elementary closed form, so it admits no closed-form root and is solved numerically (Fig.~\ref{fig:MSE_MPSK_MQAM} below).  A closed-form value instead follows from the high-\ac{SNR} decay of the posterior variance, which we now establish.

\begin{proposition}[Linear Posterior Variance Decay Under OBD]
  \label{prop:linear_decay}
  As the cavity noise variance vanishes ($\bar{\sigma}^2 \to 0^+$), the \ac{OBD} posterior variance conditioned on the dominant ring $\ell^*$ decays linearly with respect to $\bar{\sigma}^2$, satisfying
  \begin{equation}
    \hat{\sigma}_{\mathrm{B},\ell^*}^2(\bar{\sigma}^2) \triangleq \mathrm{Var}[x \mid \bar{x},\, R_{\ell^*}] \sim \frac{R_{\ell^*}\,\bar{\sigma}^2}{2\norm{\bar{x}}}.
    \label{eq:cbm_var_decay}
  \end{equation}

\end{proposition}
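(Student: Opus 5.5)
Conditioned on the ring $R_{\ell^*}$, the \ac{OBD} phase posterior is the von Mises density~\eqref{eq:vm_like} with concentration $\kappa_{\ell^*} = 2R_{\ell^*}\norm{\bar{x}}/\bar{\sigma}^2$. The plan is therefore to compute the conditional variance of $x = R_{\ell^*} e^{j\theta}$ under this density exactly, and then expand it in $1/\kappa_{\ell^*}$. The conditional second moment is deterministic, $\E[\norm{x}^2 \mid \bar{x}, R_{\ell^*}] = R_{\ell^*}^2$, because the symbol lies on the ring. The conditional mean is given by~\eqref{eq:ring_mean_cbm}: $\E[x\mid \bar{x},R_{\ell^*}] = R_{\ell^*}A(\kappa_{\ell^*})e^{j\angle\bar{x}}$. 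The same variance decomposition used in Theorem~\ref{thm:ring_decomp} (second moment minus squared modulus of the mean) then gives the exact closed form
\begin{equation*}
\hat{\sigma}_{\mathrm{B},\ell^*}^2 = R_{\ell^*}^2\bigl(1 - A(\kappa_{\ell^*})^2\bigr).
\end{equation*}

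Next, we fix $\bar{x}\neq 0$ and let $\bar{\sigma}^2\to 0^+$. This sends $\kappa_{\ell^*}\to\infty$, so Proposition~\ref{prop:vm_gaussian}(i) applies: $A(\kappa) = 1 - 1/(2\kappa) + \mathcal{O}(\kappa^{-2})$. It follows that $1 - A(\kappa)^2 = (1-A)(1+A) = \tfrac{1}{2\kappa}\cdot 2 + \mathcal{O}(\kappa^{-2}) = 1/\kappa + \mathcal{O}(\kappa^{-2})$. The explicit remainder bound $\lvert A - (1-1/(2\kappa))\rvert \le 1/(4\kappa^2)$ makes this error term uniform for $\kappa\ge 1$. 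Substituting $\kappa_{\ell^*}$ yields
\begin{equation*}
\hat{\sigma}_{\mathrm{B},\ell^*}^2 = \frac{R_{\ell^*}^2}{\kappa_{\ell^*}} + \mathcal{O}(\bar{\sigma}^4) = \frac{R_{\ell^*}\bar{\sigma}^2}{2\norm{\bar{x}}} + \mathcal{O}\!\left(\frac{\bar{\sigma}^4}{\norm{\bar{x}}^2}\right),
\end{equation*}
and the ratio to the leading term tends to $1$, which is the asymptotic equivalence claimed in~\eqref{eq:cbm_var_decay}. The leading term is consistent with the Gaussian phase picture of Definition~\ref{def:gaussian_phase}: a phase variance of $1/\kappa$ on a circle of radius $R$ gives a tangential variance of $R^2/\kappa$.

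The main subtlety is how the limit is interpreted, since $\norm{\bar{x}}$ itself depends on the noise. For fixed $\bar{x}\neq0$ the argument above is complete. If the statement is meant along the typical observation $\bar{x} = x + \bar{\sigma}\tilde{z}$, we additionally use $\norm{\bar{x}} = R_{\ell^*} + \mathcal{O}_p(\bar{\sigma})$ for the correct ring. This keeps $\norm{\bar{x}}$ bounded away from zero with overwhelming probability, so $\kappa_{\ell^*}\to\infty$ in probability, and the equivalence holds, reducing to $\bar{\sigma}^2/2$ to leading order. We would state this as a remark rather than fold it into the pointwise claim.

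Finally, we would note that the variance conditioned on the ring ignores the ring-mixing contribution in~\eqref{eq:obd_var}. That contribution is exponentially small by the softmax gap established in the proof of Proposition~\ref{prop:lpd_limit}. Hence the unconditional \ac{OBD} variance shares the same linear law, in contrast with the exponential decay~\eqref{eq:bod_fixed_point} of the \ac{BOD}.
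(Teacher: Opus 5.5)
Your proof is correct and follows essentially the same route as the paper. It computes the exact conditional variance $R_{\ell^*}^2\bigl(1-A(\kappa_{\ell^*})^2\bigr)$ from the von Mises mean~\eqref{eq:ring_mean_cbm} and then applies the large-$\kappa$ expansion of Proposition~\ref{prop:vm_gaussian}; the softmax collapse onto $\ell^*$, which the paper places first, appears in yours as a closing remark. The only difference is that the paper keeps the $-1/(8\kappa^2)$ term of $A(\kappa)$ to show the $\kappa^{-2}$ contributions cancel, giving an $\mathcal{O}(\kappa^{-3})$ remainder that is later needed for $\delta^{\mathrm{B}}$, whereas your first-order $\mathcal{O}(\kappa^{-2})$ remainder already suffices for the asymptotic equivalence claimed here.
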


\begin{proof}
\proofref{app:prop_linear_decay}
\end{proof}

At the \ac{SE} fixed point, we have that $\norm{\bar{x}} \to R_{\ell^*}$, so the conditional decay of Proposition~\ref{prop:linear_decay} reduces to $\hat{\sigma}^2_{\mathrm{B}} \approx \bar{\sigma}^2/2$ for the total \ac{OBD} posterior variance. The factor $\tfrac{1}{2}$ has a transparent origin: once the radius is pinned to a ring, only the phase (one of the two real degrees of freedom of the complex noise) remains uncertain, so the denoiser removes exactly half the effective noise. Substituting this into the \ac{SE} fixed-point relation pins down the \ac{OBD} fixed point derived next, the leading-order anchor that the \ac{OGD} and \ac{OPD} are later shown to share.

\begin{proposition}[\ac{OBD} State Evolution Fixed Point]
\label{prop:obd_se}
For any underloaded system ($\alpha = K/N < 1$, $\sigma_z^2 > 0$), let $\gamma \triangleq \sum_{\ell=1}^{L} r_\ell\, R_\ell^{-2}$ be the constellation's inverse-energy coefficient. Then, the mismatched \ac{SE}~\eqref{eq:se_mismatch} under the \ac{OBD} has fixed point
\begin{subequations}
\begin{equation}
  \label{eq:obd_fixed_point}
  \mathrm{MSE}_\infty^{\mathrm{B}} = \frac{\sigma_z^2}{2 - \alpha} + \delta^{\mathrm{B}} + \mathcal{O}(\sigma_z^6),
\end{equation}
\begin{equation}
  \delta^{\mathrm{B}} \triangleq \frac{\gamma\,\sigma_z^4}{(2-\alpha)^3}.
\end{equation}
\end{subequations}
\end{proposition}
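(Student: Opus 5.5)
The plan is to reduce the fixed-point equation to a second-order small-noise expansion of the scalar \ac{MSE}, and then solve it perturbatively. By Lemma~\ref{lem:phase_preserving}(iii), the mismatched \ac{OBD} error under $p$ equals the matched error under $q$. So the \ac{SE} map is $\mathcal{F}_{\mathrm{B}}(\mathrm{MSE})=\mathrm{mmse}_q(\bar\sigma^2)$ with $\bar\sigma^2=\sigma_z^2+\alpha\,\mathrm{MSE}$, as already noted after~\eqref{eq:se_mismatch}. The central claim I would establish is
\begin{equation*}
\mathrm{mmse}_q(s)=\tfrac{s}{2}+\tfrac{\gamma}{8}\,s^2+\mathcal{O}(s^3),\qquad s\to0^+ .
\end{equation*}

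Given this expansion, the fixed point follows by a short perturbative computation. Write $\mathrm{MSE}_\infty^{\mathrm{B}}=m_0+\delta$. Matching the $\mathcal{O}(\sigma_z^2)$ terms gives $m_0=\tfrac12(\sigma_z^2+\alpha m_0)$, hence $m_0=\sigma_z^2/(2-\alpha)$, and the corresponding noise level is $s_0=\sigma_z^2+\alpha m_0=2\sigma_z^2/(2-\alpha)$. At the next order one gets $\delta=\tfrac{\alpha}{2}\delta+\tfrac{\gamma}{8}s_0^2$. This yields $\delta=\tfrac{\gamma}{4}\,\tfrac{4\sigma_z^4}{(2-\alpha)^2}\cdot\tfrac{1}{2-\alpha}\cdot\tfrac12\cdot 2=\gamma\sigma_z^4/(2-\alpha)^3=\delta^{\mathrm{B}}$. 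The $\mathcal{O}(s^3)$ remainder then propagates to $\mathcal{O}(\sigma_z^6)$, because the linearized map has slope $\alpha/2<1$. By the implicit function theorem applied to $G(m,\sigma_z^2)=m-\mathrm{mmse}_q(\sigma_z^2+\alpha m)$, this root is the largest fixed point of Proposition~\ref{prop:se_monotone}; Corollary~\ref{cor:se_uniqueness}(b) gives uniqueness in this regime.

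To prove the expansion, condition on the ring $\ell$ with $x=R_\ell e^{j\theta}$ and $\bar x=x+\sqrt{s}\,\tilde z$. The error splits into a correct-ring term and a ring-confusion term.

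For the ring-confusion term, the rings are separated by $d_R$, so by Proposition~\ref{prop:ring_discrimination} the weights $w^{\mathrm B}_{\ell'}$ with $\ell'\neq\ell$ contribute $\mathcal{O}(e^{-c/s})$. The same bound covers the low-$\kappa$ region where $\norm{\bar x}$ is small, since reaching it is itself exponentially rare.

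For the correct-ring term, the error is $\E\norm{R_\ell e^{j\theta}-R_\ell A(\kappa_\ell)e^{j\angle\bar x}}^2=R_\ell^2\,\E[1-2A\cos\Delta+A^2]$, where $\Delta=\angle\bar x-\theta$ and $\kappa_\ell=2R_\ell\norm{\bar x}/s$. I would substitute $A=1-\tfrac{1}{2\kappa}-\tfrac{1}{8\kappa^2}+\mathcal{O}(\kappa^{-3})$ from Proposition~\ref{prop:vm_gaussian}. I would then write the noise in polar components, $\norm{\bar x}=R_\ell+n_r+n_\perp^2/(2R_\ell)+\dots$ and $\Delta=n_\perp/R_\ell-n_rn_\perp/R_\ell^2+\dots$, with $n_r,n_\perp\sim\mathcal N(0,s/2)$ independent. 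Collecting moments to order $s^2$ should give $R_\ell^2\bigl[s/(2R_\ell^2)+s^2/(8R_\ell^4)\bigr]$.

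Averaging over $r_\ell$ then produces $s/2+\gamma s^2/8$. The first-order term already matches Proposition~\ref{prop:linear_decay}: $\hat\sigma^2_{\mathrm B}\approx s/2$.

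I expect the main obstacle to be this second-order bookkeeping. Several contributions enter at order $s^2$: the $\kappa^{-2}$ term of $A$, the coupling $\E[\cos\Delta]\approx1-\E\Delta^2/2+\E\Delta^4/24$, and the $n_r$-dependence of $1/\kappa\propto 1/\norm{\bar x}$ through $\E[1/\norm{\bar x}]$. They must be combined with matched cross-moments ($\E[n_r^2n_\perp^2]$, $\E n_\perp^4$), and partial cancellations are likely. I would first carry out the computation symbolically as a series in $\sqrt{s}$, checking that the odd orders vanish by symmetry. As a cross-check I would use the Stein identity of Lemma~\ref{lem:phase_preserving}(ii), $\mathrm{mmse}_q=s\,\E[\partial\eta_B/\partial\bar x]$, which with~\eqref{eq:wirtinger_real} gives $\rho/(2\norm{\bar x})+\rho'/2$ evaluated at $\rho=R_\ell A(2R_\ell\norm{\bar x}/s)$; this needs only $\E[1/\norm{\bar x}]$ to second order. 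A final step is a uniform remainder bound, obtained by truncating on $\{\norm{\bar x}\ge R_1/2\}$ and bounding the complement by Gaussian tails, so that the $\mathcal{O}(s^3)$ claim holds after taking expectations.
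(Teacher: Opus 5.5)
This is correct and is essentially the paper's proof: the same identification $\mathcal{F}_{\mathrm{B}}=\mathrm{mmse}_q(\bar{\sigma}^2)$, the same ring-averaged expansion $s/2+\gamma s^2/8+\mathcal{O}(s^3)$, and the same order-by-order solution with amplification $2/(2-\alpha)$; your implicit-function and truncation remarks add rigor that the paper leaves implicit. The one difference is that the paper's main route is your Stein cross-check: since $A'(\kappa)=1-A/\kappa-A^2$, the error on the dominant ring equals $R_\ell^2\,\E[1-A(\kappa_\ell)^2]$ with $1-A^2=1/\kappa+\mathcal{O}(\kappa^{-3})$, so only $\E[R_\ell/\norm{\bar{x}}]=1+s/(4R_\ell^2)+\mathcal{O}(s^2)$ is needed, and the cross-moment bookkeeping of your direct $\E\norm{x-\eta_{\mathrm{B}}}^2$ expansion never arises (that expansion does close to the same $s/2+s^2/(8R_\ell^2)$).
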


\begin{proof}
By the Bayes-optimality of $\eta_B$ under $q$, the \ac{OBD} posterior variance $\hat{\sigma}^2_{\mathrm{B}}$ equals the actual per-iteration error, so the fixed point solves $\mathrm{MSE}_\infty^{\mathrm{B}} = \E[\hat{\sigma}^2_{\mathrm{B}}]$ with $\bar{\sigma}^2 = \sigma_z^2 + \alpha\,\mathrm{MSE}_\infty^{\mathrm{B}}$.
We evaluate $\E[\hat{\sigma}^2_{\mathrm{B}}]$ from the per-ring posterior variance and solve this equation order by order in $\bar{\sigma}^2$, the leading-order fixed point arising as its first term.
Conditioned on any ring $\ell$, the von Mises posterior variance is $\hat{\sigma}^2_{\mathrm{B},\ell} = R_\ell^2\bigl(1 - A(\kappa_\ell)^2\bigr)$ from $\E[\norm{x}^2 \mid R_\ell] = R_\ell^2$ and $\norm{\E[x \mid \bar{x}, R_\ell]}^2 = R_\ell^2 A(\kappa_\ell)^2$ via~\eqref{eq:ring_mean_cbm}, of which Proposition~\ref{prop:linear_decay} is the dominant-ring ($\ell = \ell^*$) instance. 
The second-order term of $A(\kappa) = 1 - \tfrac{1}{2\kappa} - \tfrac{1}{8\kappa^2} + \mathcal{O}(\kappa^{-3})$ (Proposition~\ref{prop:vm_gaussian}) cancels in the $\kappa_\ell^{-2}$ term, so $1 - A(\kappa_\ell)^2 = 1/\kappa_\ell + \mathcal{O}(\kappa_\ell^{-3})$ and $\hat{\sigma}^2_{\mathrm{B},\ell} = R_\ell\bar{\sigma}^2/(2\norm{\bar{x}}) + \mathcal{O}((\bar{\sigma}^2)^3)$. 
At high \ac{SNR}, the posterior concentrates on the correct ring (misdetection exponentially rare, Proposition~\ref{prop:ring_discrimination}), so the dominant ring is the transmitted one, distributed with prior $r_\ell$; averaging over it and over the Gaussian fluctuation\footnote{Write $\bar{x} = x + \bar{z}$ with $\norm{x} = R_\ell$ and resolve $\bar{z}$ into radial and tangential components $n_\parallel, n_\perp \sim \mathcal{N}(0, \bar{\sigma}^2/2)$ relative to $x$, so that $\norm{\bar{x}} = \sqrt{(R_\ell + n_\parallel)^2 + n_\perp^2}$. The second-order Taylor expansion $R_\ell/\norm{\bar{x}} = 1 - n_\parallel/R_\ell + (n_\parallel^2 - \tfrac{1}{2}n_\perp^2)/R_\ell^2 + \mathcal{O}(\bar{\sigma}^3)$ has zero-mean odd-order terms, and $\E[n_\parallel^2] = \E[n_\perp^2] = \bar{\sigma}^2/2$ then give $\E[R_\ell/\norm{\bar{x}}] = 1 + \bar{\sigma}^2/(4R_\ell^2) + \mathcal{O}(\bar{\sigma}^4)$.} of $\norm{\bar{x}}$ about $R_\ell$ with $\E[R_\ell/\norm{\bar{x}}] = 1 + \bar{\sigma}^2/(4R_\ell^2) + \mathcal{O}(\bar{\sigma}^4)$ gives the ring average
\begin{equation}
  \E[\hat{\sigma}^2_{\mathrm{B}}]
  \!=\! \sum_{\ell=1}^{L} r_\ell\,\frac{\bar{\sigma}^2}{2}\,\E\!\Bigl[\frac{R_\ell}{\norm{\bar{x}}}\Bigr]
  \!=\! \frac{\bar{\sigma}^2}{2}\sum_{\ell} r_\ell\Bigl(1 + \frac{\bar{\sigma}^2}{4R_\ell^2}\Bigr) + \mathcal{O}\bigl((\bar{\sigma}^2)^3\bigr),
\end{equation}
such that with $\sum_\ell r_\ell = 1$ and $\gamma = \sum_\ell r_\ell R_\ell^{-2}$, we have
\begin{equation}
  \label{eq:obd_Fmap}
  \E[\hat{\sigma}^2_{\mathrm{B}}] = \frac{\bar{\sigma}^2}{2} + \frac{\gamma\,(\bar{\sigma}^2)^2}{8} + \mathcal{O}\!\bigl((\bar{\sigma}^2)^3\bigr).
\end{equation}

Inserting~\eqref{eq:obd_Fmap} into this fixed-point equation and solving order by order, the leading term retains only $\E[\hat\sigma^2_{\mathrm{B}}] = \bar{\sigma}^2/2$, so $\mathrm{MSE}_\infty^{\mathrm{B}} = \bar{\sigma}^2/2$ combined with $\bar{\sigma}^2 = \sigma_z^2 + \alpha\,\mathrm{MSE}_\infty^{\mathrm{B}}$ gives $\bar{\sigma}^2(1-\alpha/2) = \sigma_z^2$, i.e., $\bar{\sigma}^2 = 2\sigma_z^2/(2-\alpha)$ and $\mathrm{MSE}_\infty^{\mathrm{B}} = \sigma_z^2/(2-\alpha)$; writing $\mathrm{MSE}_\infty^{\mathrm{B}} = \sigma_z^2/(2-\alpha) + \delta$ and linearizing the self-consistency, the perturbation solves $\delta(1 - \alpha/2) = \gamma(\bar{\sigma}^2)^2/8$ evaluated at the leading value (the factor $1 - \alpha/2 = 1 - \mathcal{F}_{\mathrm{B}}'$ supplies the fixed-point amplification $2/(2-\alpha)$) so that, at $\bar{\sigma}^2 = 2\sigma_z^2/(2-\alpha)$, we have
\vspace{-1ex}
\begin{equation}
  \delta = \frac{1}{1\!-\!\alpha/2}\cdot\frac{\gamma}{8}\Bigl(\frac{2\sigma_z^2}{2\!-\!\alpha}\Bigr)^{\!2}\!\!
  = \! \frac{2}{2\!-\!\alpha}\cdot\frac{\gamma\,\sigma_z^4}{2(2\!-\!\alpha)^2}
  = \frac{\gamma\,\sigma_z^4}{(2\!-\!\alpha)^3},
\vspace{-1ex}
\end{equation}
which yields
\vspace{-1ex}
\begin{equation}
  \mathrm{MSE}_\infty^{\mathrm{B}} = \frac{\sigma_z^2}{2-\alpha} + \frac{\gamma\,\sigma_z^4}{(2-\alpha)^3} + \mathcal{O}(\sigma_z^6),
 \vspace{-1ex} 
\end{equation}
that is, $\delta^{\mathrm{B}} = \gamma\sigma_z^4/(2-\alpha)^3$, {the $\mathcal{O}(\sigma_z^6)$ being the remainder displayed in~\eqref{eq:obd_fixed_point}}.
\end{proof}
\vspace{-1ex}

The \ac{OGD} and \ac{OPD} differ from the \ac{OBD} only in their amplitude-shrinkage factor; the following identity, a direct consequence of the \ac{OBD}'s (under $q$) Bayes-optimality, quantifies the resulting excess error.

\begin{lemma}[Excess Error over the \ac{OBD}]
\label{lem:excess}
Let $\eta$ be any phase-preserving denoiser. Since the \ac{OBD} $\eta_B = \E_q[x\mid\bar{x}]$ is the Bayes posterior mean under the orbital prior $q$ and $\eta - \eta_B$ is a function of $\bar{x}$, the orthogonality principle makes the cross term vanish, giving
\begin{equation}
\label{eq:excess}
\hspace{-1ex}\mathcal{F}_\eta(\mathrm{MSE}) = \mathcal{F}_{\mathrm{B}}(\mathrm{MSE}) + \E\bigl[\,\norm{\eta_B - \eta}^2\,\bigr] \geq \mathcal{F}_{\mathrm{B}}(\mathrm{MSE}),
\end{equation}
for every \ac{SNR}. 

The cross-term cancellation holds under $q$ and uses only that $\eta_B$ is the $q$-posterior mean, not phase-preservation. Phase-preservation enters solely in transferring the identity to the true data-generating prior $p$: by Lemma~\ref{lem:phase_preserving}(iii) the \ac{MSE} of a phase-preserving denoiser depends on the input only through the radial marginal $\{R_\ell, r_\ell\}$ shared by $p$ and $q$, so $\mathcal{F}_\eta$, $\mathcal{F}_{\mathrm{B}}$, and the radial excess $\E\norm{\eta_B - \eta}^2$ each take the same value under both priors, and~\eqref{eq:excess} holds verbatim under $p$.

At high \ac{SNR}, the posterior concentrates on the dominant ring $\ell^*$ (the softmax weights satisfy $w_{\ell^*}\to1$ as the log-evidence gap diverges, cf.\ Proposition~\ref{prop:lpd_limit}), where $\eta = R_{\ell^*}\,m_\eta(\kappa_{\ell^*})\,e^{j\angle\bar{x}}$ and $\eta_B = R_{\ell^*}\,A(\kappa_{\ell^*})\,e^{j\angle\bar{x}}$, so the excess is the squared magnitude-approximation bias given by
\begin{equation}
\label{eq:excess_ring}
\E\bigl[\,\norm{\eta_B - \eta}^2\,\bigr] = \E\bigl[\,R_{\ell^*}^2\,(A(\kappa_{\ell^*}) - m_\eta)^2\,\bigr],
\end{equation}
with $m_{\mathrm{B}} = A(\kappa)$, $m_{\mathrm{G}} = 1 - 1/(2\kappa)$, and $m_{\mathrm{P}} = 1$. 
\end{lemma}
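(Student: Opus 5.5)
The plan is to prove~\eqref{eq:excess} first under the orbital prior $q$, then transfer it to $p$, and finally specialize to the high-\ac{SNR} form~\eqref{eq:excess_ring}. Fix $\mathrm{MSE}$, so that $\bar{\sigma}^2 = \sigma_z^2 + \alpha\,\mathrm{MSE}$ is fixed and $\bar{x} = x + \bar{\sigma}\tilde{z}$ with $x\sim q$. Expand
\begin{equation*}
\norm{x-\eta}^2 = \norm{x-\eta_B}^2 + \norm{\eta_B-\eta}^2 + 2\operatorname{Re}\bigl\{(x-\eta_B)^*(\eta_B-\eta)\bigr\}.
\end{equation*}
Take $\E_q$ and condition on $\bar{x}$ in the cross term. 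Because $\eta_B-\eta$ is $\bar{x}$-measurable and $\E_q[x-\eta_B\mid\bar{x}]=0$ (Proposition~\ref{prop:cbm_denoiser}: $\eta_B$ is the exact $q$-posterior mean), the cross term vanishes. Integrability is immediate, since $\norm{x}\le R_L$ and $\norm{\eta}$ is bounded on the regime considered (Lemma~\ref{lem:phase_preserving}, Theorem~\ref{thm:se_convergence}(ii)). This gives $\E_q\norm{x-\eta}^2 = \E_q\norm{x-\eta_B}^2 + \E_q\norm{\eta_B-\eta}^2$, and only the posterior-mean property is used.

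For the transfer to $p$, apply Lemma~\ref{lem:phase_preserving}(iii) to each of the three functionals $\E\norm{x-\eta}^2$, $\E\norm{x-\eta_B}^2$ and $\E\norm{\eta_B-\eta}^2$ separately. The first two are exactly covered by (iii). For the third, $\eta_B-\eta = (\rho_B-\rho_\eta)(\norm{\bar{x}})e^{j\angle\bar{x}}$, so its squared modulus depends only on $\norm{\bar{x}}$. Conditioned on $\norm{x}=R_\ell$ and any phase $\phi$, the law of $\norm{x+\bar{\sigma}\tilde{z}}$ is Rician and independent of $\phi$ by rotation invariance of $\tilde{z}$. Averaging over the common radial marginal $\{R_\ell,r_\ell\}$ then gives equal values under $p$ and $q$. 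Substituting these equalities into the $q$-identity yields~\eqref{eq:excess} under $p$, and nonnegativity of the excess gives the inequality.

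For~\eqref{eq:excess_ring}, write both denoisers ring-wise: $\eta_B = \sum_\ell w^{\mathrm{B}}_\ell R_\ell A(\kappa_\ell)e^{j\angle\bar{x}}$, and $\eta$ has the analogous form with shrinkage $m_\eta$ and its own weights, with $m_{\mathrm{P}}=1$ and $\eta_P = R_{\ell^*}e^{j\angle\bar{x}}$. On the event that the dominant ring equals the transmitted one, Propositions~\ref{prop:ring_discrimination} and~\ref{prop:lpd_limit} give $w_{\ell^*}\to1$ with off-ring mass $\mathcal{O}(e^{-c/\bar{\sigma}^2})$. Hence $\norm{\eta_B-\eta}^2 = R_{\ell^*}^2(A-m_\eta)^2 + \mathcal{O}(e^{-c/\bar{\sigma}^2})$.

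The main obstacle is the complementary event: ring misdetection, $\norm{\bar{x}}$ near a decision boundary, or small $\kappa$ where $m_{\mathrm{G}}$ is negative. My approach is to bound the integrand there by $4R_L^2$ and its probability by the exponential bound~\eqref{eq:ring_error_exp}. Since $(A-m_\eta)^2$ is polynomially small for $\mathrm{G}$, of order $\kappa^{-4}$, and of order $\kappa^{-2}$ for $\mathrm{P}$, these exponentially small contributions are absorbed. The identity~\eqref{eq:excess_ring} should therefore be read as holding up to such exponentially small terms.
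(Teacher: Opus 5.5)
Your proposal is correct and is essentially the paper's own argument, which the paper gives inline in the lemma statement rather than as a separate proof: orthogonality under $q$ using only that $\eta_B=\E_q[x\mid\bar{x}]$, transfer of all three functionals to $p$ via the rotation invariance behind Lemma~\ref{lem:phase_preserving}(iii), and concentration of the ring weights on $\ell^*$ to obtain~\eqref{eq:excess_ring}. Your Rician argument for the radial-excess term and your reading of~\eqref{eq:excess_ring} as exact only up to exponentially small contributions make the same steps more careful. Note that your boundedness assumption genuinely requires the adaptive-fallback \ac{OGD}: the pure \ac{OGD} behaves like $\bar{\sigma}^2/(4\norm{\bar{x}})$ near the origin and has infinite second moment there.
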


\subsubsection{Orbital Gaussian Denoiser}
The Gaussian phase approximation yields an exact closed form for the per-ring variance (Proposition~\ref{prop:cgr_variance}); the \ac{OGD} fixed point itself follows from the excess identity of Lemma~\ref{lem:excess}.

\begin{proposition}[Exact \ac{OGD} Per-Ring Variance]
\label{prop:cgr_variance}
Under the Gaussian phase model, the conditional variance on ring~$\ell$ is
\begin{equation}
\label{eq:cgr_ring_var}
\hat{\sigma}_{\mathrm{G},\ell}^2 = R_\ell^2\!\left(\frac{1}{\kappa_\ell} - \frac{1}{4\kappa_\ell^2}\right) = \frac{R_\ell\, \bar{\sigma}^2}{2\norm{\bar{x}}} - \frac{(\bar{\sigma}^2)^2}{16\norm{\bar{x}}^2}.
\end{equation}

In the high-\ac{SNR} limit ($\kappa_\ell \to \infty$), this recovers the \ac{OBD} linear decay $\hat{\sigma}_{\mathrm{B},\ell}^2 \sim R_\ell\bar{\sigma}^2/(2\norm{\bar{x}})$ of Proposition~\ref{prop:linear_decay}.
\end{proposition}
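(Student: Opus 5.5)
The plan is to compute the per-ring conditional variance through the second-moment representation of Theorem~\ref{thm:ring_decomp}, restricted to a single ring. Conditioned on ring $\ell$, the decomposition reads $\hat{\sigma}_{\mathrm{G},\ell}^2 = \E[\norm{x}^2 \mid \bar{x}, R_\ell] - \norm{\E[x\mid\bar{x},R_\ell]}^2$. The first term is exactly $R_\ell^2$, whatever phase law is used. This is because the Gaussian phase model~\eqref{eq:gaussian_phase} only changes the distribution of $\theta$, while the symbol stays on the shell $\norm{x}=R_\ell$. The whole computation therefore reduces to the squared modulus of the conditional center of mass.

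For that center of mass I would use Proposition~\ref{prop:cgr_denoiser}(i), which defines the \ac{OGD} ring mean as $R_\ell(1-1/(2\kappa_\ell))e^{j\angle\bar{x}}$. I would justify this quickly from the Gaussian phase density: its mean resultant length is $\E[\cos\Delta\theta_\ell] = e^{-1/(2\kappa_\ell)}$, and $\E[\sin\Delta\theta_\ell]=0$ by symmetry. To first order in $1/\kappa_\ell$ this is $1-1/(2\kappa_\ell)$, which is also the leading term of the Bessel asymptotics for $A(\kappa)$ in Proposition~\ref{prop:vm_gaussian}(i). With this mean, $\norm{\E[x\mid\bar{x},R_\ell]}^2 = R_\ell^2(1-1/(2\kappa_\ell))^2$. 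Subtracting from $R_\ell^2$ gives $R_\ell^2\bigl(1/\kappa_\ell - 1/(4\kappa_\ell^2)\bigr)$.

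Substituting $\kappa_\ell = 2R_\ell\norm{\bar{x}}/\bar{\sigma}^2$ from~\eqref{eq:kappa} then gives the second form of~\eqref{eq:cgr_ring_var}. The term $R_\ell^2/\kappa_\ell$ becomes $R_\ell\bar{\sigma}^2/(2\norm{\bar{x}})$. The term $R_\ell^2/(4\kappa_\ell^2)$ becomes $(\bar{\sigma}^2)^2/(16\norm{\bar{x}}^2)$, since the $R_\ell^2$ factors cancel.

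For the high-\ac{SNR} claim, note that the correction term is $\mathcal{O}(\kappa_\ell^{-2})$ relative to the $\mathcal{O}(\kappa_\ell^{-1})$ leading term. Hence $\hat{\sigma}_{\mathrm{G},\ell}^2 \sim R_\ell\bar{\sigma}^2/(2\norm{\bar{x}})$, which coincides with Proposition~\ref{prop:linear_decay}.

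The main obstacle is interpretive rather than computational. A literal Gaussian phase integral would give the variance $R_\ell^2(1-e^{-1/\kappa_\ell}) = R_\ell^2(1/\kappa_\ell - 1/(2\kappa_\ell^2)+\dots)$. That expression has a different $\kappa_\ell^{-2}$ coefficient from the one in the statement. I would therefore state explicitly that the \ac{OGD} model fixes its ring mean by the first-order shrinkage $1-1/(2\kappa_\ell)$, as in Proposition~\ref{prop:cgr_denoiser}(i), while the second moment is $R_\ell^2$. Under that convention,~\eqref{eq:cgr_ring_var} is an exact identity for the \ac{OGD}'s own companion variance $\hat{\sigma}^2_{\mathrm{G}}$. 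The leading-order agreement with the \ac{OBD} holds under either reading.
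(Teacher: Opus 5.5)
Your proposal is correct and follows the paper's proof essentially step for step: second moment $R_\ell^2$ on the shell, squared ring mean $R_\ell^2(1-1/(2\kappa_\ell))^2$ from the first-order shrinkage $m_{\mathrm{G}}$, then subtraction and substitution of $\kappa_\ell = 2R_\ell\norm{\bar{x}}/\bar{\sigma}^2$. Your caveat is also the paper's own qualification. The exact Gaussian resultant $e^{-1/(2\kappa_\ell)}$ would give a different $\kappa_\ell^{-2}$ coefficient, so the identity is exact only for the OGD built on $m_{\mathrm{G}}$, and it agrees with the von Mises variance only at leading order.
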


\begin{proof}
The conditional second moment on ring~$\ell$ is $\E[\norm{x}^2 \mid R_\ell] = R_\ell^2$ (the radius is deterministic).
The \ac{OGD} conditional mean carries the magnitude factor $m_{\mathrm{G}} = 1 - 1/(2\kappa_\ell)$ (the leading two terms of $A(\kappa_\ell)$ (Proposition~\ref{prop:vm_gaussian}), i.e.~\eqref{eq:cgr_mean}, and \emph{not} the exact Gaussian resultant $e^{-1/(2\kappa_\ell)}$) so its squared magnitude is

\begin{align}
\norm{\E_{\mathrm{G}}[x \mid \bar{x},\, R_\ell]}^2 &= R_\ell^2\,m_{\mathrm{G}}^2 = R_\ell^2\bigl(1 - 1/(2\kappa_\ell)\bigr)^2\nonumber \\
&= R_\ell^2\bigl(1 - 1/\kappa_\ell + 1/(4\kappa_\ell^2)\bigr).
\end{align}

Therefore $\hat{\sigma}_{\mathrm{G},\ell}^2 = R_\ell^2(1 - m_{\mathrm{G}}^2) = R_\ell^2\bigl(1/\kappa_\ell - 1/(4\kappa_\ell^2)\bigr)$. Substituting $\kappa_\ell = 2R_\ell\norm{\bar{x}}/\bar{\sigma}^2$ yields~\eqref{eq:cgr_ring_var}. The result is thus exact for the \ac{OGD} (built on $m_{\mathrm{G}}$) and agrees with the true von Mises variance $R_\ell^2(1 - A(\kappa_\ell)^2)$ to leading order $R_\ell^2/\kappa_\ell$, differing only at $\mathcal{O}(\kappa_\ell^{-2})$.
\end{proof}

\begin{remark}[Nominal Variance versus Per-Symbol Error]
\label{rem:ogd_nominal_var}
Since $\eta_G$ is not a posterior mean, the quantity~\eqref{eq:cgr_ring_var} is the \emph{nominal} variance reported by the \ac{OGD} (the one feeding the Onsager divergence), not the per-symbol error that drives the \ac{SE}.
The latter is $\E\norm{x - \eta_G}^2$, whose per-ring value $R_\ell^2(1 - 2m_{\mathrm{G}}A(\kappa_\ell) + m_{\mathrm{G}}^2) = R_\ell^2/\kappa_\ell + \mathcal{O}(\kappa_\ell^{-3})$ has its $\kappa_\ell^{-2}$ term cancel exactly; the \ac{OGD}-\ac{OBD} error gap is the squared bias $R_\ell^2(A - m_{\mathrm{G}})^2 = \mathcal{O}(\kappa_\ell^{-4})$ of Lemma~\ref{lem:excess}, two orders smaller than the $\mathcal{O}(\kappa_\ell^{-2})$ discrepancy in~\eqref{eq:cgr_ring_var}, which must therefore not be read as a fixed-point gap.
\end{remark}

\begin{proposition}[\ac{OGD} State Evolution]
\label{prop:cgr_se}
The mismatched \ac{SE} under the \ac{OGD} follows the recursion
\begin{equation}
\label{eq:cgr_se}
\mathrm{MSE}_{t+1}^{\mathrm{G}} = \E_{x,\tilde{z}}\Bigl[\norm{x - \eta_G\bigl(x + \bar{\sigma}_t \tilde{z};\, \bar{\sigma}_t^2\bigr)}^2\Bigr],
\end{equation}
where $\eta_G(\bar{x};\, \bar{\sigma}^2) \triangleq \sum_{\ell=1}^{L} \hat{x}_\ell^{\mathrm{G}}$ is the \ac{OGD}. 

Its fixed point satisfies
\begin{equation}
\label{eq:cgr_fixed_point}
\mathrm{MSE}_\infty^{\mathrm{G}} = \frac{\sigma_z^2}{2 - \alpha} + \delta^{\mathrm{B}} + \mathcal{O}(\sigma_z^6),
\end{equation}
coinciding with the \ac{OBD} fixed point not merely at leading order but through its $\mathcal{O}(\sigma_z^4)$ correction $\delta^{\mathrm{B}}$ of Proposition~\ref{prop:obd_se} (the $\mathcal{O}(\sigma_z^6)$ remainder is inherited from the \ac{OBD} expansion~\eqref{eq:obd_fixed_point} itself); the \ac{OGD} departs from the \ac{OBD} only at $\mathcal{O}(\sigma_z^8)$: $\mathrm{MSE}_\infty^{\mathrm{G}} - \mathrm{MSE}_\infty^{\mathrm{B}} = \mathcal{O}(\sigma_z^8)$.
\end{proposition}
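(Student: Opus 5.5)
The plan is to compare the two SE maps pointwise and then transfer that comparison to the fixed points. The recursion~\eqref{eq:cgr_se} is just Definition~\ref{def:fixed_point} with $\eta=\eta_G$, so only the fixed-point claim needs work.

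\textbf{Step 1: the OGD map as the OBD map plus an excess.} By Lemma~\ref{lem:excess}, valid under $p$ by phase preservation (Lemma~\ref{lem:phase_preserving}(iii)),
\[
\mathcal{F}_{\mathrm{G}}(\mathrm{MSE})=\mathcal{F}_{\mathrm{B}}(\mathrm{MSE})+\E\norm{\eta_B-\eta_G}^2 .
\]
I will bound the excess at high SNR. Split it over the event that the OBD and OGD ring weights both concentrate on the transmitted ring.

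On the dominant-ring event, \eqref{eq:excess_ring} applies. Proposition~\ref{prop:vm_gaussian}(i) gives $A(\kappa)-(1-1/(2\kappa))=-1/(8\kappa^2)+\mathcal{O}(\kappa^{-3})$. The per-sample excess is therefore $R_{\ell^*}^2/(64\kappa_{\ell^*}^4)+\mathcal{O}(\kappa^{-5})$. Since $\kappa_{\ell^*}=2R_{\ell^*}\norm{\bar{x}}/\bar{\sigma}^2$, this is $\mathcal{O}((\bar{\sigma}^2)^4)$, in line with Remark~\ref{rem:ogd_nominal_var}.

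There is one further contribution on this event: the two denoisers use slightly different ring weights, $w_\ell^{\mathrm{B}}$ versus $w_\ell^{\mathrm{G}}$. The log-evidences differ only by the $1/(8\kappa)$ term of~\eqref{eq:logI0_approx}. However, every non-dominant weight is already $\mathcal{O}(e^{-c/\bar{\sigma}^2})$, so the weight mismatch enters only at exponentially small order.

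Off the event, ring misdetection has probability $\mathcal{O}(e^{-d_R^2/(4\bar{\sigma}^2)})$ by Proposition~\ref{prop:ring_discrimination}. Both denoisers are bounded by $R_L$ on the adaptive/high-$\kappa$ branch (Lemma~\ref{lem:phase_preserving}), so this part is exponentially small, hence $o((\bar{\sigma}^2)^k)$ for every $k$. A subtlety to handle: $\norm{\bar{x}}$ near $0$ makes $\kappa$ small. The probability of that is also exponentially small, and the adaptive fallback (Definition~\ref{def:adaptive_cgr}) keeps $\eta_G$ bounded there.

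Combining the pieces gives $\mathcal{F}_{\mathrm{G}}=\mathcal{F}_{\mathrm{B}}+\mathcal{O}((\bar{\sigma}^2)^4)$, uniformly for $\bar{\sigma}^2$ in a neighbourhood of $\mathcal{O}(\sigma_z^2)$.

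\textbf{Step 2: transfer to the fixed point.} I will use the expansion~\eqref{eq:obd_Fmap} of $\mathcal{F}_{\mathrm{B}}$ from the proof of Proposition~\ref{prop:obd_se}, in which the slope in $\mathrm{MSE}$ is $\alpha/2+\mathcal{O}(\sigma_z^2)<1$. The OGD fixed-point equation is then a perturbation of the OBD one by $\epsilon=\mathcal{O}(\sigma_z^8)$. Linearizing exactly as in that proof, with $1-\mathcal{F}_{\mathrm{B}}'=1-\alpha/2+\mathcal{O}(\sigma_z^2)$ bounded away from zero, gives
\[
\mathrm{MSE}^{\mathrm{G}}_\infty-\mathrm{MSE}^{\mathrm{B}}_\infty=\frac{\epsilon}{1-\mathcal{F}_{\mathrm{B}}'}=\mathcal{O}(\sigma_z^8).
\]
Substituting~\eqref{eq:obd_fixed_point} then yields~\eqref{eq:cgr_fixed_point}.

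\textbf{Main obstacle.} The difficulty is making Step 2 rigorous, because $\mathcal{F}_{\mathrm{G}}$ is not a posterior-mean map and is not known to be monotone. The fixed point has to be localized first. I would restrict to a small interval $[\tfrac12\sigma_z^2/(2-\alpha),\,2\sigma_z^2/(2-\alpha)]$. There, $\mathcal{F}_{\mathrm{G}}$ inherits the local contraction of $\mathcal{F}_{\mathrm{B}}$ (Corollary~\ref{cor:se_uniqueness}(b)) up to an $\mathcal{O}(\sigma_z^6)$ derivative perturbation. That perturbation comes from differentiating the excess term, which needs a dominated-convergence argument under the Rician law of $\norm{\bar{x}}$. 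Banach's theorem then gives a unique OGD fixed point in this interval at high SNR, and the perturbation estimate above applies to it.
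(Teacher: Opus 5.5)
Your proposal is correct and follows the same route as the paper. Both arguments use the excess identity of Lemma~\ref{lem:excess}, square the $-1/(8\kappa^2)$ shrinkage bias to get an $\mathcal{O}(\sigma_z^8)$ map gap, transfer it to the fixed point through the $\Theta(1)$ factor $1-\mathcal{F}_{\mathrm{B}}'$, and then substitute the OBD expansion. Your extra bookkeeping makes explicit what the paper's short proof leaves implicit: the $1/(8\kappa)$ ring-weight mismatch, the exponentially rare misdetection and small-$\norm{\bar{x}}$ events, and localizing an OGD fixed point near the OBD one. For the estimate itself you only need $\mathcal{F}_{\mathrm{G}}$ to map your interval into itself plus the contraction of $\mathcal{F}_{\mathrm{B}}$ there. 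Differentiating the excess term is needed only if you also want uniqueness of the OGD fixed point.
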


\begin{proof}
The \ac{OGD} uses the magnitude factor $m_{\mathrm{G}} = 1 - 1/(2\kappa)$ in place of the \ac{OBD}'s optimal $m_{\mathrm{B}} = A(\kappa)$. By Lemma~\ref{lem:excess}, the \ac{OGD} and \ac{OBD} \ac{SE} \emph{maps} differ by the squared magnitude bias
\begin{equation*}
\mathcal{F}_{\mathrm{G}}(\mathrm{MSE}) = \mathcal{F}_{\mathrm{B}}(\mathrm{MSE}) + \E\bigl[R_{\ell^*}^2\,(A(\kappa) - m_{\mathrm{G}})^2\bigr].
\end{equation*}

Since $A(\kappa) - m_{\mathrm{G}} = -1/(8\kappa^2) + \mathcal{O}(\kappa^{-3})$ and $\kappa = \Theta(1/\sigma_z^2)$ at the fixed point, this map gap is $\mathcal{O}(\kappa^{-4}) = \mathcal{O}(\sigma_z^8)$. Propagated through the fixed-point first-order perturbation, whose contraction factor satisfies $1 - \mathcal{F}_{\mathrm{B}}' = \Theta(1)$ (Corollary~\ref{cor:se_uniqueness}), the two fixed points differ by the same order, $\delta^{\mathrm{G}} - \delta^{\mathrm{B}} = \E[R_{\ell^*}^2(A-m_{\mathrm{G}})^2]/(1-\mathcal{F}_{\mathrm{B}}') = \mathcal{O}(\sigma_z^8)/\Theta(1) = \mathcal{O}(\sigma_z^8)$, $\mathrm{MSE}_\infty^{\mathrm{G}} = \mathrm{MSE}_\infty^{\mathrm{B}} + \mathcal{O}(\sigma_z^8)$. With $\mathrm{MSE}_\infty^{\mathrm{B}} = \sigma_z^2/(2-\alpha) + \delta^{\mathrm{B}}$ (Proposition~\ref{prop:obd_se}), this gives~\eqref{eq:cgr_fixed_point}.
\end{proof}

\begin{remark}[Hardware Implications]
\label{rem:hardware}
For \ac{ASIC} implementations targeting high-throughput 5G-NR or 6G receivers, the \ac{OGD} eliminates the most complex hardware block in the \ac{OBD} pipeline: the Bessel ratio lookup table. 
Since massive \ac{MIMO} systems typically operate at \ac{SNR}~$\geq 5$~dB (where $\kappa_\ell \geq 3$ for normalized constellations), the \ac{OGD} is the preferred implementation for production silicon. 
The adaptive strategy of Definition~\ref{def:adaptive_cgr} provides a graceful fallback for the rare low-$\kappa$ events during early iterations or at cell-edge \ac{SNR} conditions.
\end{remark}

\subsubsection{Orbital Phase Denoiser}
The single-ring projection retains only the phase residual on the detected ring; conditioned on correct ring detection it attains the same leading-order variance.

\begin{proposition}[\ac{OPD} Variance at the SE Fixed Point]
\label{prop:lpd_mse}
At the \ac{SE} fixed point with effective noise $\bar{\sigma}_\infty^2$ (that of the \ac{OPD} recursion; the per-level convention is fixed in Section~\ref{sec:inf_theory}), the \ac{MSE} of the \ac{OPD} conditioned on correct ring detection ($\ell^* = \ell_{\mathrm{true}}$) is
\begin{align}
\label{eq:lpd_mse}
\mathrm{MSE}_{\mathrm{P}} &= \E\bigl[\norm{x - R_{\ell^*} e^{j\angle\bar{x}}}^2 \;\big|\; \ell^* = \ell_{\mathrm{true}}\bigr] \nonumber \\
&= R_{\ell^*}^2 \cdot \E\bigl[\norm{e^{j\Delta\theta} - 1}^2\bigr],
\end{align}
where $\Delta\theta = \angle\bar{x} - \angle x$ is the phase estimation error. 

Under the orbital (von Mises) posterior the phase residual has mean resultant $\E[\cos\Delta\theta] = \E[A(\kappa_{\ell^*})]$, so $\mathrm{MSE}_{\mathrm{P}} = 2R_{\ell^*}^2\,\E[1 - A(\kappa_{\ell^*})]$; via $1-A^2 = (1-A)(1+A)$ this equals exactly the \ac{OBD} variance plus the squared shrinkage bias of Lemma~\ref{lem:excess}, consistent with Corollary~\ref{cor:lpd_se}. Since $1 - A(\kappa) = 1/(2\kappa) + \mathcal{O}(\kappa^{-2})$ (Proposition~\ref{prop:vm_gaussian}) and $\kappa_{\ell^*} = 2R_{\ell^*}\norm{\bar{x}}/\bar{\sigma}^2$ with $\norm{\bar{x}} \approx R_{\ell^*}$, we have
\begin{equation}
\label{eq:lpd_mse_gaussian}
\mathrm{MSE}_{\mathrm{P}} = \frac{R_{\ell^*}^2}{\kappa_{\ell^*}} + \mathcal{O}\bigl((\bar{\sigma}^2)^2\bigr) = \frac{R_{\ell^*} \bar{\sigma}^2}{2\norm{\bar{x}}} + \mathcal{O}\bigl((\bar{\sigma}^2)^2\bigr) \approx \frac{\bar{\sigma}^2}{2},
\end{equation}
matching the leading-order \ac{OGD}/\ac{OBD} variance $\bar{\sigma}^2/2$ (Propositions~\ref{prop:linear_decay},~\ref{prop:cgr_variance}); the \ac{OPD}'s strictly larger sub-leading correction is quantified in Corollary~\ref{cor:lpd_se}.
\end{proposition}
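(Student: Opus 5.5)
Conditioned on correct ring detection, the \ac{OPD} output is $R_{\ell^*}e^{j\angle\bar{x}}$ with $R_{\ell^*}=\norm{x}$. So I would write $x=R_{\ell^*}e^{j\angle x}$ and factor out the common radius:
\begin{equation*}
\norm{x-R_{\ell^*}e^{j\angle\bar{x}}}^2=R_{\ell^*}^2\norm{e^{j\Delta\theta}-1}^2=2R_{\ell^*}^2(1-\cos\Delta\theta),
\end{equation*}
where $\Delta\theta=\angle\bar{x}-\angle x$. Taking expectations gives the first display~\eqref{eq:lpd_mse} and reduces the whole problem to the mean resultant $\E[\cos\Delta\theta]$ of the phase residual. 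By Lemma~\ref{lem:phase_preserving}(iii), the \ac{MSE} of this phase-preserving map depends only on the radial marginal. I can therefore evaluate it under the orbital prior $q$, where the phase is uniform on the ring.

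Under $q$, conditioned on $\bar{x}$ and the ring, the phase of $x$ is von Mises about $\angle\bar{x}$ with concentration $\kappa_{\ell^*}$, by~\eqref{eq:vm_like}. Hence $\E[\cos\Delta\theta\mid\bar{x}]=A(\kappa_{\ell^*})$ by~\eqref{eq:ring_mean_cbm}. The tower property then gives $\mathrm{MSE}_{\mathrm P}=2R_{\ell^*}^2\E[1-A(\kappa_{\ell^*})]$.

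To tie this to Lemma~\ref{lem:excess}, I would use the algebraic identity
\begin{equation*}
2(1-A)=(1-A^2)+(1-A)^2 .
\end{equation*}
The first term, $R^2(1-A^2)$, is the \ac{OBD} per-ring variance from the proof of Proposition~\ref{prop:obd_se}. The second term, $R^2(1-A)^2$, is the squared shrinkage bias with $m_{\mathrm P}=1$. This matches~\eqref{eq:excess_ring}.

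For the asymptotics, I would insert $1-A(\kappa)=1/(2\kappa)+\mathcal{O}(\kappa^{-2})$ from Proposition~\ref{prop:vm_gaussian}(i), which gives $R^2/\kappa_{\ell^*}+\mathcal{O}(\kappa^{-2})$. Substituting $\kappa_{\ell^*}=2R_{\ell^*}\norm{\bar{x}}/\bar{\sigma}^2$ turns the error term into $\mathcal{O}((\bar{\sigma}^2)^2)$, using that $\norm{\bar{x}}$ stays near $R_{\ell^*}>0$ on the conditioning event. Finally, $\norm{\bar{x}}\approx R_{\ell^*}$, justified by the Rician expansion in the footnote to Proposition~\ref{prop:obd_se}, yields $\bar{\sigma}^2/2$. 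Propositions~\ref{prop:linear_decay} and~\ref{prop:cgr_variance} then give agreement with the \ac{OBD}/\ac{OGD} values at leading order.

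The main obstacle is controlling the $\mathcal{O}(\kappa^{-2})$ remainder inside an expectation over $\norm{\bar{x}}$, because $1/\norm{\bar{x}}$ is unbounded near zero. I would handle this in two pieces. On the event $\norm{\bar{x}}\ge R_{\ell^*}/2$, the explicit bound $\norm{A-(1-1/(2\kappa))}\le 1/(4\kappa^2)$ for $\kappa\ge1$ applies directly. The complementary event has probability $\mathcal{O}(e^{-R_{\ell^*}^2/(4\bar{\sigma}^2)})$, and on it the integrand is bounded by $4R_L^2$, so its contribution is negligible.
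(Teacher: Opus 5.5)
Your proposal is correct and follows the paper's own proof. The shared steps are the chord identity $\norm{x-R_{\ell^*}e^{j\angle\bar{x}}}^2=2R_{\ell^*}^2(1-\cos\Delta\theta)$, the circular-symmetry transfer of Lemma~\ref{lem:phase_preserving}(iii) to the orbital prior $q$ (under which the phase residual given $\bar{x}$ is von Mises with mean resultant $A(\kappa_{\ell^*})$), and the expansion of Proposition~\ref{prop:vm_gaussian} with $\norm{\bar{x}}\approx R_{\ell^*}$. Your split $2(1-A)=(1-A^2)+(1-A)^2$ and the truncation at $\norm{\bar{x}}\ge R_{\ell^*}/2$ tighten steps the paper only sketches; on the complementary event the comparison term $R_{\ell^*}^2/\kappa_{\ell^*}$ is unbounded, but $1/\norm{\bar{x}}$ is integrable in two dimensions, so its contribution there is also exponentially small.
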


\begin{proof}
Conditioned on correct ring detection, $x = R_{\ell^*} e^{j\angle x}$ and the \ac{OPD} output is $\hat{x} = R_{\ell^*} e^{j\angle\bar{x}}$. Therefore, we have
\begin{align}
\norm{x - \hat{x}}^2 &= R_{\ell^*}^2 \norm{e^{j\angle x} - e^{j\angle\bar{x}}}^2 = R_{\ell^*}^2 \norm{1 - e^{j\Delta\theta}}^2\nonumber \\
&= 2R_{\ell^*}^2 (1 - \cos\Delta\theta).
\end{align}

By the circular-symmetry transfer (Lemma~\ref{lem:phase_preserving}(iii)) this error equals its value under the orbital prior $q$, whose phase posterior is von Mises with mean resultant $A(\kappa_{\ell^*})$; hence $\E[1 - \cos\Delta\theta] = \E[1 - A(\kappa_{\ell^*})] = 1/(2\kappa_{\ell^*}) + \mathcal{O}(\kappa_{\ell^*}^{-2})$ (Proposition~\ref{prop:vm_gaussian}), and not the Gaussian value $1 - e^{-1/(2\kappa_{\ell^*})}$, whose $\mathcal{O}(\kappa^{-2})$ term carries the opposite sign.
Substituting $\kappa_{\ell^*} = 2R_{\ell^*}\norm{\bar{x}}/\bar{\sigma}^2$ and using $\norm{\bar{x}} \approx R_{\ell^*}$ at high \ac{SNR} gives $\mathrm{MSE}_{\mathrm{P}} \approx \bar{\sigma}^2/2$.
\end{proof}

\begin{corollary}[\ac{OPD} State Evolution Fixed Point]
\label{cor:lpd_se}
The \ac{SE} fixed point under the \ac{OPD} satisfies
\begin{equation}
\label{eq:lpd_fixed_point}
\mathrm{MSE}_\infty^{\mathrm{P}} = \frac{\sigma_z^2}{2 - \alpha} + \tfrac{3}{2}\,\delta^{\mathrm{B}} + \mathcal{O}(\sigma_z^6) + \mathcal{O}\!\bigl(e^{-c/\sigma_z^2}\bigr),
\end{equation}
sharing the \ac{OBD}/\ac{OGD} leading order $\sigma_z^2/(2-\alpha)$ but with a \emph{strictly larger} correction: using $m_{\mathrm{P}} = 1$ (no amplitude shrinkage) rather than the optimal $A(\kappa)$, the \ac{OPD} incurs the squared-bias excess $\E[R_{\ell^*}^2(1-A(\kappa))^2] = \Theta(\sigma_z^4)$ (Lemma~\ref{lem:excess}), which through the fixed-point first-order perturbation inflates the correction by one half, $\delta^{\mathrm{P}} = \tfrac{3}{2}\,\delta^{\mathrm{B}}$, plus an exponentially small ring-misdetection term.
\end{corollary}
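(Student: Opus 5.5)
The plan is to reduce the \ac{OPD} fixed point to the \ac{OBD} fixed point of Proposition~\ref{prop:obd_se} through the excess identity of Lemma~\ref{lem:excess}, and then push the extra map term through the same first-order fixed-point perturbation used there. Because $\eta_P$ is phase-preserving (Lemma~\ref{lem:phase_preserving}), \eqref{eq:excess} holds verbatim under the true prior $p$. This gives $\mathcal{F}_{\mathrm{P}}(\mathrm{MSE}) = \mathcal{F}_{\mathrm{B}}(\mathrm{MSE}) + \E\norm{\eta_B - \eta_P}^2$. The first task is to expand this excess to order $(\bar{\sigma}^2)^2$.

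First I split the expectation on the event $\{\ell^* = \ell_{\mathrm{true}}\}$ and its complement.
\begin{itemize}
\item On the complement, both $\eta_B$ and $\eta_P$ are bounded by $R_L$, so the integrand is at most $4R_L^2$. Proposition~\ref{prop:ring_discrimination} then bounds this contribution by $4R_L^2 (L-1) C_R e^{-d_R^2/(4\bar{\sigma}^2)}$. This is the $\mathcal{O}(e^{-c/\sigma_z^2})$ term, since $\bar{\sigma}^2 = \Theta(\sigma_z^2)$ at the fixed point.
\item On the correct-ring event, the non-dominant softmax weights of $\eta_B$ are exponentially small (as in Proposition~\ref{prop:lpd_limit}), so \eqref{eq:excess_ring} applies with $m_{\mathrm{P}} = 1$. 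Using $1 - A(\kappa) = 1/(2\kappa) + \mathcal{O}(\kappa^{-2})$ from Proposition~\ref{prop:vm_gaussian} and $\kappa_\ell = 2R_\ell\norm{\bar{x}}/\bar{\sigma}^2$, I get $R_\ell^2(1-A)^2 = \bar{\sigma}^4/(16\norm{\bar{x}}^2) + \mathcal{O}(\bar{\sigma}^6)$. Replacing $\norm{\bar{x}}$ by $R_\ell$ costs only $\mathcal{O}(\bar{\sigma}^6)$, by the radial/tangential expansion in the footnote to Proposition~\ref{prop:obd_se}.
\end{itemize}
Averaging over the prior ring masses $r_\ell$ gives an excess of $\gamma(\bar{\sigma}^2)^2/16 + \mathcal{O}(\bar{\sigma}^6)$.

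Adding this to \eqref{eq:obd_Fmap} gives the \ac{OPD} map
\begin{equation*}
\mathcal{F}_{\mathrm{P}} = \frac{\bar{\sigma}^2}{2} + \frac{3}{2}\cdot\frac{\gamma(\bar{\sigma}^2)^2}{8} + \mathcal{O}(\bar{\sigma}^6) + \mathcal{O}(e^{-c/\bar{\sigma}^2}).
\end{equation*}
The leading term is unchanged, so solving $\mathrm{MSE} = \bar{\sigma}^2/2$ with $\bar{\sigma}^2 = \sigma_z^2 + \alpha\,\mathrm{MSE}$ again gives $\sigma_z^2/(2-\alpha)$. For the next order I write $\mathrm{MSE}_\infty^{\mathrm{P}} = \sigma_z^2/(2-\alpha) + \delta$ and linearize with the factor $1 - \alpha/2$, exactly as in the proof of Proposition~\ref{prop:obd_se}. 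Since the quadratic coefficient is exactly $3/2$ times the \ac{OBD} one, linearity of that step yields $\delta^{\mathrm{P}} = \tfrac{3}{2}\delta^{\mathrm{B}}$. The cubic and exponential remainders pass through unchanged in order.

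The main obstacle is rigor for the hard, non-Lipschitz $\eta_P$, which is not a posterior mean.
\begin{itemize}
\item \emph{Existence and localization of the fixed point.} Existence on $[0,4R_L^2]$ comes from Theorem~\ref{thm:se_convergence}. I would also need that the relevant fixed point lies near $\mathcal{O}(\sigma_z^2)$, so that the expansions at $\kappa = \Theta(1/\sigma_z^2)$ are legitimate.
\item \emph{Validity of linearization.} I would argue that near $\mathcal{O}(\sigma_z^2)$ the map slope is $\alpha/2 + \mathcal{O}(\sigma_z^2) < 1$. This makes the fixed point locally unique and justifies the linearization, with no appeal to \ac{MMSE} monotonicity.
\item \emph{Uniformity of the remainders.} I would control the $\mathcal{O}(\kappa^{-2})$ remainders uniformly by truncating to the high-probability event $\norm{\bar{x}} \geq R_1/2$. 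The complementary event is exponentially rare and is absorbed into the $\mathcal{O}(e^{-c/\sigma_z^2})$ term.
\end{itemize}
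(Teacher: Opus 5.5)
Your proposal is correct and follows essentially the same route as the paper: the excess identity of Lemma~\ref{lem:excess} with $m_{\mathrm{P}}=1$, a per-ring excess averaging to $\gamma(\bar{\sigma}^2)^2/16$ so that the quadratic coefficient becomes $3\gamma/16 = \tfrac32\cdot\gamma/8$, the same linearization with factor $1-\alpha/2$ as in Proposition~\ref{prop:obd_se}, and Proposition~\ref{prop:ring_discrimination} for the exponentially small misdetection term. What you add is an explicit split on $\{\ell^*=\ell_{\mathrm{true}}\}$, together with caveats on localization, local contraction, and uniformity of the remainders for the non-Lipschitz $\eta_P$; these make precise steps the paper leaves implicit.
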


\begin{proof}
On correct ring detection the \ac{OPD} output $\eta_P = R_{\ell^*}e^{j\angle\bar{x}}$ has magnitude factor $m_{\mathrm{P}} = 1$, so Lemma~\ref{lem:excess} gives $\mathcal{F}_{\mathrm{P}}(\mathrm{MSE}) = \mathcal{F}_{\mathrm{B}}(\mathrm{MSE}) + \E[R_{\ell^*}^2(1-A(\kappa))^2]$.

{Since $1 - A(\kappa) = 1/(2\kappa) + \mathcal{O}(\kappa^{-2})$ with $\kappa_{\ell^*} = 2R_{\ell^*}\norm{\bar{x}}/\bar{\sigma}^2$ and $\norm{\bar{x}} \approx R_{\ell^*}$, the per-ring excess averages to $\E[R_{\ell^*}^2(1-A(\kappa))^2] = \gamma(\bar{\sigma}^2)^2/16 + \mathcal{O}((\bar{\sigma}^2)^3)$, which is exactly \emph{half} the \ac{OBD} curvature $\gamma(\bar{\sigma}^2)^2/8$ of~\eqref{eq:obd_Fmap}. The \ac{OPD} error map $\mathcal{F}_{\mathrm{P}}$ therefore has quadratic coefficient $\gamma/8 + \gamma/16 = 3\gamma/16$, and since the fixed-point first-order perturbation is linear in this coefficient, $\delta^{\mathrm{P}}/\delta^{\mathrm{B}} = (3\gamma/16)/(\gamma/8) = 3/2$, i.e. $\delta^{\mathrm{P}} = \tfrac{3}{2}\delta^{\mathrm{B}} = \Theta(\sigma_z^4)$.} Ring misdetection ($\ell^* \neq \ell_{\mathrm{true}}$) has probability $\mathcal{O}(e^{-c/\sigma_z^2})$ (Proposition~\ref{prop:ring_discrimination}), contributing the exponentially small term (explicitly, $c = (2-\alpha)d_R^2/8$, from the bound $\Pr(\ell^*\neq\ell_{\mathrm{true}}) \le (L-1)e^{-d_R^2/(4\bar{\sigma}^2)}$ evaluated at the leading value $\bar{\sigma}^2 = 2\sigma_z^2/(2-\alpha)$). With $\mathrm{MSE}_\infty^{\mathrm{B}} = \sigma_z^2/(2-\alpha) + \delta^{\mathrm{B}}$ {(Proposition~\ref{prop:obd_se})} this yields~\eqref{eq:lpd_fixed_point}.
\end{proof}
\vspace{-3ex}
\begin{figure}[H]
    \centering
    \includegraphics[width=\columnwidth]{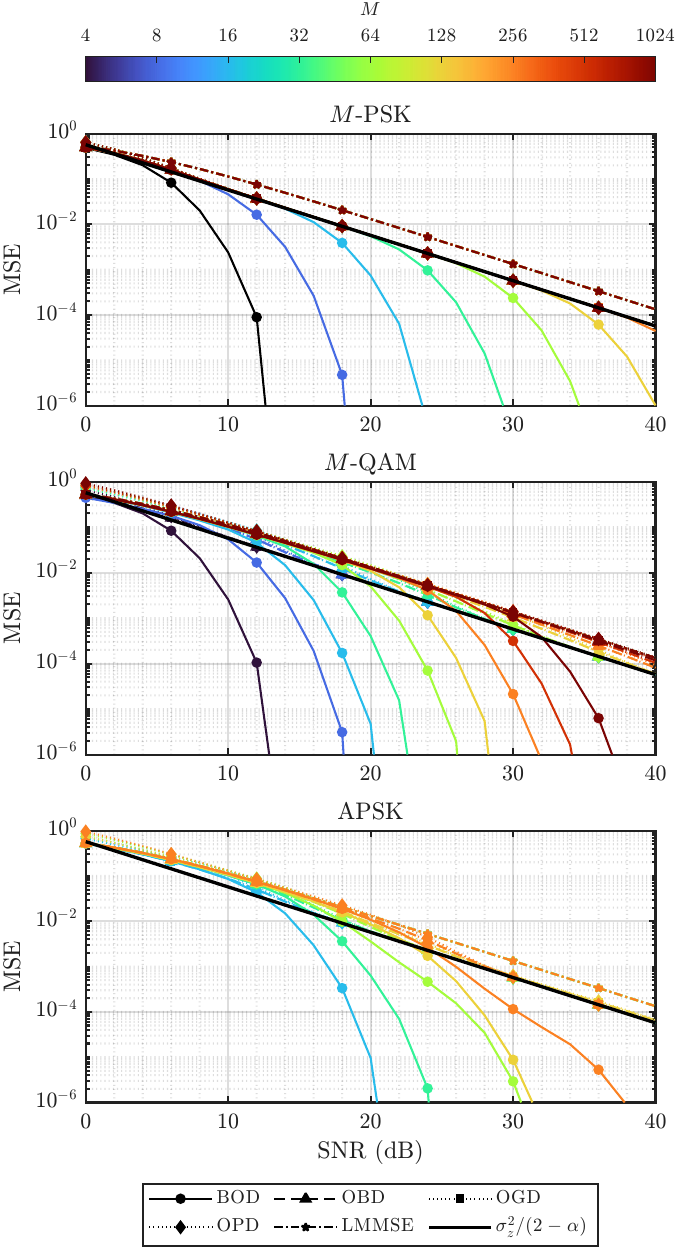}
    \vspace{-3ex}
    \caption{Fixed-point $\mathrm{MSE}_\infty$ versus \ac{SNR} at $\alpha = 0.25$ for all $M$-\ac{PSK} and $M$-\ac{QAM} orders $M \in \{4,\ldots,1024\}$ and the DVB-S2/S2x \ac{APSK} constellations $M \in \{16,\ldots,256\}$; line style encodes the detector (shared legend), color the order $M$ (shared colorbar). The \ac{OBD}/\ac{OGD}/\ac{OPD} relaxations coincide with the shared law $\sigma_z^2/(2-\alpha)$ (black dotted); the \ac{BOD} decays exponentially; the \ac{LMMSE} saturates at $\sigma_z^2/(1-\alpha)$.}
    \label{fig:MSE_MPSK_MQAM}
    \vspace{-2ex}
\end{figure}

\subsubsection{LMMSE Baseline}
The linear estimator converges to
\begin{equation}
  \mathrm{MSE}_{t+1}^{\mathrm{L}} =
  \frac{E_d\,\bar{\sigma}_t^2}{E_d + \bar{\sigma}_t^2},
  \label{eq:se_lmmse}
\end{equation}
whose fixed point couples to~\eqref{eq:se_noise} through $\bar{\sigma}^2 = \sigma_z^2 + \alpha\,\mathrm{MSE}_\infty^{\mathrm{L}}$.
At high \ac{SNR} ($\bar{\sigma}^2 \ll E_d$) the gain saturates, $\mathrm{MSE}_\infty^{\mathrm{L}} \approx \bar{\sigma}^2$, giving
\begin{equation}
  \label{eq:lmmse_fixed_point}
  \mathrm{MSE}_\infty^{\mathrm{L}} \approx \frac{\sigma_z^2}{1 - \alpha},
\end{equation}
strictly larger than the orbital $\sigma_z^2/(2-\alpha)$ and consistent with the ordering of Proposition~\ref{prop:fp_ordering}.

\medskip
Because the continuous phase relaxation prevents the probability mass from collapsing into a discrete Dirac delta until $\bar{\sigma}^2 = 0$ exactly, the orbital residual variance decays linearly rather than exponentially.
As the next subsection shows, this linear decay manifests macroscopically as a constant asymptotic \ac{SNR} penalty, a parallel shift in the \ac{BER} curve relative to the \ac{BOD}, rather than an absolute asymptotic performance penalty.

Figure~\ref{fig:MSE_MPSK_MQAM} verifies each of these fixed-point computations by Monte Carlo evaluation of the mismatched \ac{SE} recursion~\eqref{eq:se_noise}-\eqref{eq:se_mismatch} at load $\alpha = 0.25$, {across all orders of the three constellation families}: the \ac{OBD}, \ac{OGD}, and \ac{OPD} trajectories collapse onto the shared law $\sigma_z^2/(2-\alpha)$ of Propositions~\ref{prop:obd_se} and~\ref{prop:cgr_se} and Corollary~\ref{cor:lpd_se}, the \ac{BOD} decays exponentially faster (Proposition~\ref{prop:linear_decay} versus~\eqref{eq:bod_fixed_point}), and the \ac{LMMSE} fixed point sits strictly above at $\sigma_z^2/(1-\alpha)$, the ordering of Proposition~\ref{prop:fp_ordering} at every \ac{SNR}, with no error floor for any level.

\vspace{-2ex}
\subsection{Cross-Level SE Fixed-Point Equivalence}
\label{sec:cross_level}

Having computed each orbital fixed point separately -- the \ac{OBD} in Proposition~\ref{prop:obd_se}, the \ac{OGD} in Proposition~\ref{prop:cgr_se}, and the \ac{OPD} in Corollary~\ref{cor:lpd_se}, each sharing the leading order $\sigma_z^2/(2-\alpha)$ -- we now place them side by side to compare their \emph{sub-leading} corrections.
The assembly is not merely a rephrasing: through the excess identity of Lemma~\ref{lem:excess}, the three fixed points are ordered by each denoiser's amplitude-shrinkage bias relative to the optimal $A(\kappa)$. The \ac{OGD} matches that shrinkage to second order in $\kappa^{-1}$, so its excess over the \ac{OBD} is only $\mathcal{O}(\sigma_z^8)$ -- the two are indistinguishable through $\mathcal{O}(\sigma_z^6)$. The \ac{OPD} applies \emph{no} shrinkage and therefore pays a strictly positive $\Theta(\sigma_z^4)$ penalty, of the same order as the \ac{OBD}'s own correction; ring misdetection adds only a separate, exponentially small term.
All three effects vanish at leading order, leaving a single shared fixed point.

\begin{theorem}[\ac{SE} Fixed-Point Equivalence Across the
               Denoiser Hierarchy]
\label{thm:cross_level_fp}
Let $\alpha \triangleq K/N \in (0,1)$ and $\sigma_z^2 > 0$, and recall from Proposition~\ref{prop:obd_se} the inverse-energy coefficient $\gamma \triangleq \sum_{\ell} r_\ell R_\ell^{-2}$; let $d_R \triangleq \min_{\ell \neq \ell'} \norm{R_\ell - R_{\ell'}}$ denote the minimum inter-ring amplitude separation. Throughout this theorem, $\delta^{\eta}(\sigma_z^2) \triangleq \mathrm{MSE}_\infty^{\eta} - \sigma_z^2/(2-\alpha)$ denotes the \emph{total} sub-leading correction of level $\eta$; the quantity denoted $\delta^{\mathrm{B}}$ in Proposition~\ref{prop:obd_se} is its quartic leading term, with which it agrees up to $\mathcal{O}(\sigma_z^6)$ by~\eqref{eq:delta_cbm} below.
The \ac{OBD} fixed point exists and is unique by Theorem~\ref{thm:se_convergence} and Corollary~\ref{cor:se_uniqueness}; the \ac{OGD} and \ac{OPD} fixed points are those constructed in Proposition~\ref{prop:cgr_se} and Corollary~\ref{cor:lpd_se}.

\begin{figure}[H]
    \centering
    \includegraphics[width=\columnwidth]{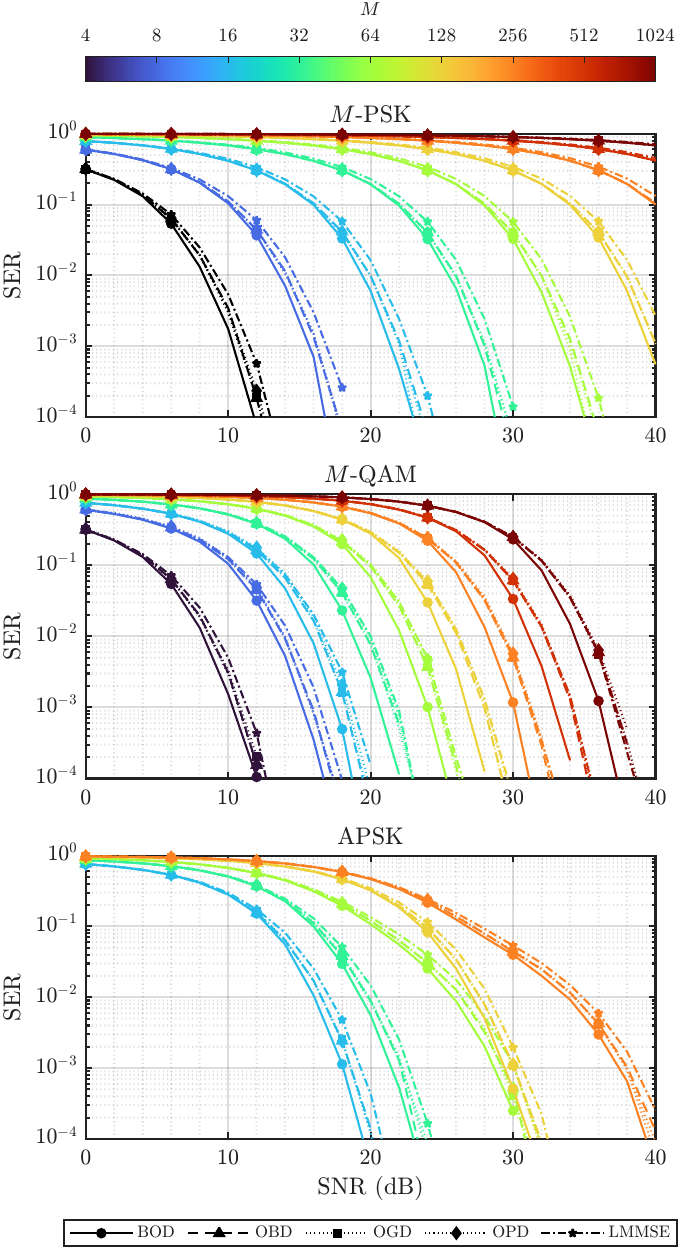}
    \vspace{-3ex}
    \caption{\ac{SE}-predicted \ac{SER} versus \ac{SNR} at $\alpha = 0.25$ of the \ac{BOD}, \ac{OBD}, \ac{OGD}, \ac{OPD}, and \ac{LMMSE}, for all $M$-\ac{PSK} and $M$-\ac{QAM} orders $M \in \{4,\ldots,1024\}$ and the DVB-S2/S2x \ac{APSK} constellations $M \in \{16,\ldots,256\}$; line style encodes the detector (shared legend), color the order $M$ (shared colorbar). The three orbital relaxations are indistinguishable and track the \ac{BOD} at a constant offset, confirming Theorem~\ref{thm:cross_level_fp}.}
    \label{fig:SE_MPSK_MQAM}
    \vspace{-2ex}
\end{figure}

{Assembling these three per-level results, all share the identical leading order as $\sigma_z^2 \to 0^+$ and admit the unified expansion}
\begin{equation}
  \label{eq:unified_fp}
  \mathrm{MSE}_\infty^{\eta}
  \;=\;
  \frac{\sigma_z^2}{2 - \alpha}
  \;+\;
  \delta^{\eta}(\sigma_z^2),
  \qquad
  \eta \in \bigl\{\mathrm{B},\, \mathrm{G},\, \mathrm{P}\bigr\},
\end{equation}
whose sub-leading corrections rank the denoisers by their amplitude-shrinkage bias relative to the optimal $A(\kappa)$:
\begin{align}
  \delta^{\mathrm{B}}
  &\;=\;
  \frac{\gamma\,\sigma_z^4}{(2-\alpha)^3}
  \;+\;
  \mathcal{O}\!\left(\sigma_z^6\right),
  \label{eq:delta_cbm}
  \\[4pt]
  \delta^{\mathrm{G}}
  &\;=\;
  \delta^{\mathrm{B}}
  \;+\;
  \mathcal{O}\!\left(\sigma_z^8\right),
  \label{eq:delta_cgr}
  \\[4pt]
  \delta^{\mathrm{P}}
  &\;=\;
  \tfrac{3}{2}\,\delta^{\mathrm{B}}
  \;+\;
  \mathcal{O}\!\left(\sigma_z^6\right)
  \;+\;
  \mathcal{O}\!\left(
    \exp\!\left(-\tfrac{(2-\alpha)\,d_R^2}{8\,\sigma_z^2}\right)
  \right).
  \label{eq:delta_lpd}
\end{align}
Consequently all pairwise fixed-point differences satisfy
\begin{align}
  \label{eq:fp_gaps}
  \norm{\mathrm{MSE}_\infty^{\mathrm{G}}
        - \mathrm{MSE}_\infty^{\mathrm{B}}}
  \;&=\;
  \mathcal{O}\!\left(\sigma_z^8\right), \\
  \mathrm{MSE}_\infty^{\mathrm{P}}
        - \mathrm{MSE}_\infty^{\mathrm{B}}
  \;&=\;
  \tfrac{1}{2}\,\delta^{\mathrm{B}}
  \;+\; \mathcal{O}\!\left(\sigma_z^6\right)
  \;=\;
  \Theta\!\left(\sigma_z^4\right) \;>\; 0 .
\end{align}
\end{theorem}

\begin{proof}
\proofref{app:thm_cross_level_fp}
\end{proof}

\begin{corollary}[Asymptotic \ac{SNR} Gap]
\label{cor:snr_gap}
For any underloaded system with $\alpha = K/N \in (0,1)$ and $\sigma_z^2 > 0$, the \ac{SE} fixed-point \ac{MSE} under all three levels of the denoiser hierarchy satisfies
\begin{equation}
  \label{eq:asymptotic_mse}
  \mathrm{MSE}_\infty
  \;=\;
  \frac{\sigma_z^2}{2 - \alpha}
  \;+\;
  \mathcal{O}\!\left(\sigma_z^4\right).
\end{equation}

As $\sigma_z^2 \to 0$, the \ac{MSE} vanishes, confirming that the continuous relaxation incurs no error floor: the \ac{SNR} gap is a finite, bounded constant that disappears in the high-\ac{SNR} limit.
\end{corollary}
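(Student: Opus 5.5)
The corollary follows almost directly from Theorem~\ref{thm:cross_level_fp}, so I will assemble it from the per-level expansions rather than re-derive any fixed point.

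\emph{Step 1: unified expansion.} For each $\eta \in \{\mathrm{B},\mathrm{G},\mathrm{P}\}$, the unified expansion~\eqref{eq:unified_fp} gives
\[
\mathrm{MSE}_\infty^\eta = \frac{\sigma_z^2}{2-\alpha} + \delta^\eta(\sigma_z^2).
\]
It then suffices to show $\delta^\eta = \mathcal{O}(\sigma_z^4)$ for all three levels.

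\emph{Step 2: the correction terms.}
\begin{itemize}
\item For the \ac{OBD}, \eqref{eq:delta_cbm} gives $\delta^{\mathrm{B}} = \gamma\sigma_z^4/(2-\alpha)^3 + \mathcal{O}(\sigma_z^6)$. The prefactor $\gamma/(2-\alpha)^3$ is finite because $\gamma = \sum_\ell r_\ell R_\ell^{-2} < \infty$ (all radii are positive) and $2-\alpha > 1$ for $\alpha \in (0,1)$.
\item For the \ac{OGD}, \eqref{eq:delta_cgr} adds only $\mathcal{O}(\sigma_z^8)$.
\item For the \ac{OPD}, \eqref{eq:delta_lpd} gives $\tfrac32\delta^{\mathrm{B}} + \mathcal{O}(\sigma_z^6)$ plus an exponentially small term. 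That term is $o(\sigma_z^{2k})$ for every $k$, since $e^{-c/\sigma_z^2}\sigma_z^{-4} \to 0$ with $c = (2-\alpha)d_R^2/8 > 0$.
\end{itemize}
Collecting these, $\delta^\eta = \mathcal{O}(\sigma_z^4)$ for all three levels, which is~\eqref{eq:asymptotic_mse}.

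\emph{Step 3: no error floor.} Both terms on the right of~\eqref{eq:asymptotic_mse} tend to zero, so $\mathrm{MSE}_\infty \to 0$ as $\sigma_z^2 \to 0$.

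For the finite-gap statement, I read the asymptotic \ac{SNR} gap as the ratio of the noise levels at which the orbital fixed point and a reference law reach the same target \ac{MSE}. Against the \ac{LMMSE} law~\eqref{eq:lmmse_fixed_point}, inverting $\sigma_z^2/(2-\alpha)$ against $\sigma_z^2/(1-\alpha)$ gives the constant ratio $(2-\alpha)/(1-\alpha)$, with corrections that vanish as $\sigma_z^2 \to 0$. More generally, since $\mathrm{MSE}_\infty$ decays linearly in $\sigma_z^2$, the gap in dB tends to a finite constant rather than diverging, and the $\mathcal{O}(\sigma_z^4)$ correction contributes only $\mathcal{O}(\sigma_z^2)$ relative error, which disappears in the limit.

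\emph{Main obstacle.} The only delicate point is uniformity of the $\mathcal{O}(\cdot)$ constants. The constant for the \ac{OPD} includes the Rician prefactor $C_R$ and the ring-count factor $L-1$. These are fixed for a given constellation and $\alpha$, so I will state that the big-O is taken as $\sigma_z^2 \to 0$ with constellation and load held fixed. Under that convention Step 2 goes through, and I expect nothing further to be needed.
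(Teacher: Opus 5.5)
Your proposal is correct and matches the paper's own proof: you read the leading term off the unified expansion~\eqref{eq:unified_fp} of Theorem~\ref{thm:cross_level_fp} and note from~\eqref{eq:delta_cbm}--\eqref{eq:delta_lpd} that every $\delta^\eta$ is $\mathcal{O}(\sigma_z^4)$, with the \ac{OPD}'s ring-misdetection term exponentially smaller. One caution on your gloss of the finite gap: the paper measures its \ac{ASG} against the \ac{BOD}, and there the bounded constant is the effective-noise ratio $\bar{\sigma}_{\infty,\eta}^2/(\bar{\sigma}_\infty^{\mathrm{D}})^2 \to 2/(2-\alpha)$, a parallel shift of the \ac{SER}/rate curves; an equal-\ac{MSE} comparison against the exponentially decaying \ac{BOD} would instead give a \emph{diverging} dB gap, so your ``more generally'' claim holds only against another linearly decaying reference such as the \ac{LMMSE}, and its ratio $(2-\alpha)/(1-\alpha)$ is really the content of Corollary~\ref{cor:load_gap}.
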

\begin{proof}
The leading-order term is~\eqref{eq:unified_fp} of Theorem~\ref{thm:cross_level_fp}.
The $\mathcal{O}(\sigma_z^4)$ bound follows from the sub-leading corrections~\eqref{eq:delta_cbm}--\eqref{eq:delta_lpd}, noting $\delta^{\mathrm{B}}, \delta^{\mathrm{G}} = \mathcal{O}(\sigma_z^4)$ and {$\delta^{\mathrm{P}} = \tfrac{3}{2}\delta^{\mathrm{B}} = \mathcal{O}(\sigma_z^4)$ as well, its dominant term being the shrinkage penalty, with the ring-misdetection contribution $\mathcal{O}(e^{-c/\sigma_z^2})$ exponentially smaller}.
\end{proof}

\begin{remark}[Macroscopic Blindness to Denoiser Level]
\label{rem:macroscopic_blind}
Theorem~\ref{thm:cross_level_fp} establishes that the iterative detector is \emph{macroscopically blind} to which level of the denoiser hierarchy it is running.
The \ac{AMP} algorithm converges to the same leading-order \ac{MSE} whether it employs $\mathcal{O}(M)$ exact discrete posteriors, $\mathcal{O}(L)$ \ac{OBD} Bessel projections, $\mathcal{O}(L)$ Bessel-free \ac{OGD} projections, or a single $\mathcal{O}(1)$ ring-projection of the \ac{OPD}.
The three-orders-of-magnitude complexity reduction from $\mathcal{O}(M)$ to $\mathcal{O}(1)$ is ``free'' at the macroscopic level: the complexity difference is visible only in the sub-leading terms \eqref{eq:delta_cbm}--\eqref{eq:delta_lpd}, all of which vanish
strictly faster than $\sigma_z^2$ itself.
Equivalently, a decoder monitoring its own \ac{BER} curve at any finite operating \ac{SNR} would observe the three curves as parallel (shifted by at most $\mathcal{O}(\sigma_z^4)$ in \ac{MSE}), with no qualitative ``performance cliff'' separating any two levels.
\end{remark}

Figure~\ref{fig:SE_MPSK_MQAM} is the operational rendering of Theorem~\ref{thm:cross_level_fp}: {across all orders of the three families,} the \ac{SE}-predicted \ac{SER} of the $\mathcal{O}(L)$ \ac{OBD}, the Bessel-free \ac{OGD}, and the $\mathcal{O}(1)$ \ac{OPD} are visually indistinguishable over the entire \ac{SNR} range, tracking the exact \ac{BOD} up to the constant horizontal offset of Corollary~\ref{cor:snr_gap}.
Spanning three orders of arithmetic complexity costs nothing observable at the macroscopic level (Remark~\ref{rem:macroscopic_blind}).

\begin{remark}[Qualitative Distinction Among the Sub-Leading Corrections]
\label{rem:sublevel_corrections}
The three corrections in~\eqref{eq:delta_cbm}--\eqref{eq:delta_lpd} are governed by one mechanism: the squared bias of each denoiser's amplitude-shrinkage factor relative to the optimal $A(\kappa)$ (Lemma~\ref{lem:excess}).
\begin{enumerate}[(i)]
  \item \textbf{\ac{OGD} vs.\ \ac{OBD}.} The \ac{OGD} magnitude $m_{\mathrm{G}} = 1 - 1/(2\kappa)$ matches $A(\kappa) = 1 - 1/(2\kappa) - 1/(8\kappa^2) + \cdots$ through order $\kappa^{-1}$, so the bias $A - m_{\mathrm{G}} = \mathcal{O}(\kappa^{-2})$ enters~\eqref{eq:excess_ring} squared: the excess is $\mathcal{O}(\sigma_z^8)$, giving $\delta^{\mathrm{G}} = \delta^{\mathrm{B}} + \mathcal{O}(\sigma_z^8)$. The \ac{OGD} is thus asymptotically \emph{indistinguishable} from the \ac{OBD}, agreeing through $\mathcal{O}(\sigma_z^6)$; i.e., the Bessel-ratio lookup buys nothing at this order.
  \item \textbf{\ac{OPD}.} The \ac{OPD} applies no shrinkage ($m_{\mathrm{P}} = 1$), so its bias is the full $1 - A(\kappa) = \Theta(\kappa^{-1})$, giving a \emph{strictly positive} excess $\E[R_{\ell^*}^2(1-A)^2] = \Theta(\sigma_z^4)$ which, propagated to the fixed point, inflates the correction by one half, implicating in $\delta^{\mathrm{P}} = \tfrac{3}{2}\,\delta^{\mathrm{B}}$. This amplitude-shrinkage penalty, \emph{not} ring misdetection, is the dominant \ac{OPD} correction.
  As a result, the wrong-ring large deviations (Proposition~\ref{prop:ring_discrimination}) add only a separate, exponentially small $\mathcal{O}(e^{-c/\sigma_z^2})$ term. Hence $\mathrm{MSE}_\infty^{\mathrm{B}} \leq \mathrm{MSE}_\infty^{\mathrm{G}} \leq \mathrm{MSE}_\infty^{\mathrm{P}}$, and at high \ac{SNR}, $\sigma_z^2 < \sigma_\star^2$ (Corollary~\ref{cor:fp_ordering_hierarchy}), with \ac{OGD} coincident with \ac{OBD}, while \ac{OPD} is a clear $\Theta(\sigma_z^4)$ above.
\end{enumerate}
\end{remark}

\begin{corollary}[Ordering Across the Full Denoiser Hierarchy]
\label{cor:fp_ordering_hierarchy}
Let $\mathcal{F}_{\mathrm{G}}$ and $\mathcal{F}_{\mathrm{P}}$ be the \ac{OGD} and \ac{OPD} \ac{SE} maps, with fixed points $\mathrm{MSE}_\infty^{\mathrm{G}}$ and $\mathrm{MSE}_\infty^{\mathrm{P}}$ as constructed in Proposition~\ref{prop:cgr_se} and Corollary~\ref{cor:lpd_se} (the ``largest fixed point'' characterization via monotone iteration is reserved for the posterior-mean denoisers of Proposition~\ref{prop:se_monotone}).
There exists $\sigma_\star^2 > 0$ such that, for every $\sigma_z^2 < \sigma_\star^2$, the all-\ac{SNR} ordering of Proposition~\ref{prop:fp_ordering} refines to the full hierarchy
\begin{equation}
\label{eq:fp_ordering_full}
\mathrm{MSE}_\infty^{\mathrm{D}} \leq \mathrm{MSE}_\infty^{\mathrm{B}} \leq \mathrm{MSE}_\infty^{\mathrm{G}} \leq \mathrm{MSE}_\infty^{\mathrm{P}} \leq \mathrm{MSE}_\infty^{\mathrm{L}}.
\end{equation}
The restriction to $\sigma_z^2 < \sigma_\star^2$ is essential: above that threshold, $\mathrm{MSE}_\infty^{\mathrm{G}}$ and $\mathrm{MSE}_\infty^{\mathrm{P}}$ need not exist (the \ac{OGD} map leaves $[0,E_d]$ and grows without bound once its truncated gain $1-1/(2\kappa)$ turns negative ($\kappa<\tfrac12$), and the non-shrinking \ac{OPD} fixed point escapes above $E_d$) so at low \ac{SNR} the \ac{OGD} and \ac{OPD} fixed points escape above $\mathrm{MSE}_\infty^{\mathrm{L}}$, leaving only the all-\ac{SNR} ordering $\mathrm{MSE}_\infty^{\mathrm{D}} \le \mathrm{MSE}_\infty^{\mathrm{B}} \le \mathrm{MSE}_\infty^{\mathrm{L}}$ of Proposition~\ref{prop:fp_ordering}.
\end{corollary}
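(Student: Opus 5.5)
The plan is to assemble the full chain from pieces already established, splitting \eqref{eq:fp_ordering_full} into its four links and choosing $\sigma_\star^2$ as the minimum of finitely many per-link thresholds.

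\emph{Outer links.} The first link, $\mathrm{MSE}_\infty^{\mathrm{D}} \le \mathrm{MSE}_\infty^{\mathrm{B}}$, is inherited verbatim from Proposition~\ref{prop:fp_ordering}, which holds at every $\sigma_z^2>0$. For the last link, $\mathrm{MSE}_\infty^{\mathrm{P}} \le \mathrm{MSE}_\infty^{\mathrm{L}}$, I would compare the expansion of Corollary~\ref{cor:lpd_se}, $\sigma_z^2/(2-\alpha) + \Theta(\sigma_z^4)$, with the \ac{LMMSE} fixed point. That fixed point solves the quadratic from \eqref{eq:se_lmmse} and \eqref{eq:se_noise} exactly, giving $\mathrm{MSE}_\infty^{\mathrm{L}} = \sigma_z^2/(1-\alpha) + \mathcal{O}(\sigma_z^4)$. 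The leading coefficients satisfy $1/(1-\alpha) > 1/(2-\alpha)$ strictly for $\alpha\in(0,1)$. So the difference is $\sigma_z^2\bigl[1/(1-\alpha) - 1/(2-\alpha)\bigr] + \mathcal{O}(\sigma_z^4)$, which is positive once $\sigma_z^2$ is below an explicit threshold set by the ratio of the gap coefficient to the implied $\mathcal{O}(\sigma_z^4)$ constants.

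\emph{Middle links.} These come from Theorem~\ref{thm:cross_level_fp}. Lemma~\ref{lem:excess} gives $\mathcal{F}_{\mathrm{G}} \ge \mathcal{F}_{\mathrm{B}}$ pointwise. However, $\mathcal{F}_{\mathrm{G}}$ is not known to be monotone, so pointwise domination alone does not order the fixed points (Remark~\ref{rem:comparison_role}). I would therefore use the perturbative representation instead. Write $\mathrm{MSE}_\infty^{\mathrm{G}} - \mathrm{MSE}_\infty^{\mathrm{B}} = \E[R_{\ell^*}^2(A-m_{\mathrm{G}})^2]/(1-\mathcal{F}_{\mathrm{B}}') + \text{higher order}$. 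The numerator is non-negative, and its leading term $\E[R_{\ell^*}^2/(64\kappa^4)]$ is strictly positive. The denominator is $1-\alpha/2>0$. Hence the difference has the sign of its leading term for small $\sigma_z^2$.

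Similarly, $\mathrm{MSE}_\infty^{\mathrm{P}} - \mathrm{MSE}_\infty^{\mathrm{G}} = \tfrac12\delta^{\mathrm{B}} + \mathcal{O}(\sigma_z^6)$ by \eqref{eq:fp_gaps}. Since $\tfrac12\delta^{\mathrm{B}} = \gamma\sigma_z^4/(2(2-\alpha)^3)$ with $\gamma>0$, this is positive below a further threshold. The exponentially small misdetection term of \eqref{eq:delta_lpd} is dominated by the $\sigma_z^4$ term and can be absorbed.

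\emph{Main obstacle.} The delicate step is the \ac{OGD}--\ac{OBD} link, where the $\mathcal{O}(\sigma_z^8)$ gap must be shown non-negative rather than merely small. I would first try an exact, non-perturbative mean-value argument. Write $\mathrm{MSE}_\infty^{\mathrm{G}} - \mathrm{MSE}_\infty^{\mathrm{B}} = \mathcal{F}_{\mathrm{G}}(\mathrm{MSE}_\infty^{\mathrm{G}}) - \mathcal{F}_{\mathrm{B}}(\mathrm{MSE}_\infty^{\mathrm{B}})$. Split it as $[\mathcal{F}_{\mathrm{G}}-\mathcal{F}_{\mathrm{B}}](\mathrm{MSE}_\infty^{\mathrm{G}}) + \mathcal{F}_{\mathrm{B}}(\mathrm{MSE}_\infty^{\mathrm{G}}) - \mathcal{F}_{\mathrm{B}}(\mathrm{MSE}_\infty^{\mathrm{B}})$. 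Then use $0\le \mathcal{F}_{\mathrm{B}}' \le c<1$ near the fixed point, which holds by the local contraction of Corollary~\ref{cor:se_uniqueness}(b). This gives $(1-c)\,(\mathrm{MSE}_\infty^{\mathrm{G}}-\mathrm{MSE}_\infty^{\mathrm{B}}) \ge \text{excess} \ge 0$, provided both fixed points lie in the contractive basin. Establishing that both do is where the restriction $\sigma_z^2<\sigma_\star^2$ is used. The same splitting handles the \ac{OPD} link with Lemma~\ref{lem:excess} and $m_{\mathrm{P}}=1$.

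Finally, I would note why the restriction cannot be dropped: for $\kappa<\tfrac12$ the \ac{OGD} gain $1-1/(2\kappa)$ turns negative and the \ac{OGD} map leaves $[0,E_d]$, so the chain fails at low \ac{SNR}.
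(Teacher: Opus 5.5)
Your proposal is correct and takes essentially the paper's route. The paper inherits $\mathrm{MSE}_\infty^{\mathrm{D}}\le\mathrm{MSE}_\infty^{\mathrm{B}}$ from Proposition~\ref{prop:fp_ordering}, and orders the OGD and OPD above the OBD by pushing the non-negative excess of Lemma~\ref{lem:excess} through the amplification $1/(1-\mathcal{F}_{\mathrm{B}}')$; your mean-value step is simply the exact form of that first-order argument. It then lets the $\Theta(\sigma_z^4)$ OPD separation dominate the $\mathcal{O}(\sigma_z^8)$ OGD one, compares $1/(1-\alpha)>1/(2-\alpha)$ against the LMMSE, and takes $\sigma_\star^2$ as the smallest threshold, all as you propose.

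There are two small points, neither of which affects the conclusion. First, from $(1-\mathcal{F}_{\mathrm{B}}'(\xi))(\mathrm{MSE}_\infty^{\mathrm{G}}-\mathrm{MSE}_\infty^{\mathrm{B}})$ equal to the excess, with $0\le\mathcal{F}_{\mathrm{B}}'\le c<1$, you actually get $(1-c)(\mathrm{MSE}_\infty^{\mathrm{G}}-\mathrm{MSE}_\infty^{\mathrm{B}})\le$ excess, not $\ge$; the sign conclusion you need is unaffected. Second, for the low-SNR clause the paper also treats the OPD, not only the OGD: its output lands on the outer ring with near-random phase, so its error tends to $E_d+R_L^2>E_d\ge\mathrm{MSE}_\infty^{\mathrm{L}}$.
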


\begin{proof}
\proofref{app:cor_fp_ordering_hierarchy}
\end{proof}

The last inequality of~\eqref{eq:fp_ordering_full} admits a quantitative form, and it is the one of most operational interest: the margin over the linear baseline is governed entirely by the load.

\begin{corollary}[Load Dependence of the Linear-Baseline Gap]
\label{cor:load_gap}
For every $\alpha \in (0,1)$ and each orbital level $\eta \in \{\mathrm{B},\mathrm{G},\mathrm{P}\}$, as $\sigma_z^2 \to 0$, we have
\vspace{-1ex}
\begin{equation}
\label{eq:load_gap}
\frac{\mathrm{MSE}_\infty^{\mathrm{L}}}{\mathrm{MSE}_\infty^{\eta}}
\;=\; \frac{2-\alpha}{1-\alpha} \;+\; \mathcal{O}(\sigma_z^2).
\vspace{-1ex}
\end{equation}

The orbital advantage over the linear baseline therefore \emph{grows without bound} as the load approaches unity: $3.0$~dB as $\alpha \to 0$, $4.8$~dB at $\alpha = \tfrac12$, $7.0$~dB at $\alpha = \tfrac34$, and $13.2$~dB at $\alpha = 0.95$.
\end{corollary}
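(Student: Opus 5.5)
The plan is to compute the ratio from the two fixed-point expansions already in hand. No new estimates should be needed.

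For the orbital levels, Theorem~\ref{thm:cross_level_fp} (equivalently Corollary~\ref{cor:snr_gap}) gives $\mathrm{MSE}_\infty^{\eta} = \sigma_z^2/(2-\alpha) + \delta^{\eta}$ with $\delta^{\eta} = \mathcal{O}(\sigma_z^4)$ for each $\eta\in\{\mathrm{B},\mathrm{G},\mathrm{P}\}$. For the \ac{OPD}, the exponentially small misdetection term is absorbed into this bound.

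For the \ac{LMMSE}, the approximation~\eqref{eq:lmmse_fixed_point} is not enough, since we need an explicit remainder. I would therefore solve the fixed-point equation of~\eqref{eq:se_lmmse} exactly:
\begin{equation*}
m = \frac{E_d\,\bar{\sigma}^2}{E_d+\bar{\sigma}^2}, \qquad \bar{\sigma}^2 = \sigma_z^2+\alpha m.
\end{equation*}
Writing $m = \bar{\sigma}^2 - (\bar{\sigma}^2)^2/(E_d+\bar{\sigma}^2)$ gives $m = \bar{\sigma}^2 + \mathcal{O}((\bar{\sigma}^2)^2)$. Substituting $\bar{\sigma}^2=\sigma_z^2+\alpha m$ then yields $(1-\alpha)m = \sigma_z^2 + \mathcal{O}(\sigma_z^4)$. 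Equivalently, one can solve the quadratic $\alpha m^2 + (E_d(1-\alpha)+\sigma_z^2)m - E_d\sigma_z^2 = 0$ and expand its positive root; uniqueness of that root follows from Corollary~\ref{cor:se_uniqueness}(i). Either way,
\begin{equation*}
\mathrm{MSE}_\infty^{\mathrm{L}} = \frac{\sigma_z^2}{1-\alpha} + \mathcal{O}(\sigma_z^4).
\end{equation*}
Here one must confirm that the implied constant stays finite for fixed $\alpha<1$. It scales like $1/(E_d(1-\alpha)^3)$, so it is bounded for each fixed load.

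Next I factor $\sigma_z^2$ out of both expansions:
\begin{equation*}
\frac{\mathrm{MSE}_\infty^{\mathrm{L}}}{\mathrm{MSE}_\infty^{\eta}} = \frac{\tfrac{1}{1-\alpha}+\mathcal{O}(\sigma_z^2)}{\tfrac{1}{2-\alpha}+\mathcal{O}(\sigma_z^2)}.
\end{equation*}
The denominator is bounded away from zero for fixed $\alpha$, so a first-order expansion of the quotient gives $(2-\alpha)/(1-\alpha)+\mathcal{O}(\sigma_z^2)$, which is~\eqref{eq:load_gap}.

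The quoted numbers come from evaluating $10\log_{10}\!\big((2-\alpha)/(1-\alpha)\big)$ at the stated loads: $10\log_{10}2\approx3.0$~dB as $\alpha\to0$, $10\log_{10}3\approx4.8$~dB at $\alpha=\tfrac12$, $10\log_{10}5\approx7.0$~dB at $\alpha=\tfrac34$, and $10\log_{10}21\approx13.2$~dB at $\alpha=0.95$. The function $(2-\alpha)/(1-\alpha) = 1 + 1/(1-\alpha)$ is increasing in $\alpha$ and diverges as $\alpha\to1^-$, which gives the unbounded growth claim.

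The main obstacle is making sure the $\mathcal{O}(\sigma_z^4)$ remainders are genuinely uniform when used here, since the theorem's error terms hold at fixed $\alpha$ and fixed constellation. For the \ac{OPD} in particular, we need its fixed point to be the one constructed in Corollary~\ref{cor:lpd_se} and to lie in the high-\ac{SNR} regime $\sigma_z^2<\sigma_\star^2$. I would therefore state the corollary as a pointwise-in-$\alpha$ limit $\sigma_z^2\to0$. I would not claim uniformity as $\alpha\to1$, where the \ac{LMMSE} remainder constant blows up.
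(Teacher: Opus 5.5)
Your proposal is correct and follows essentially the same route as the paper's proof. The paper also divides the \ac{LMMSE} expansion $\sigma_z^2/(1-\alpha)+\mathcal{O}(\sigma_z^4)$ by the shared orbital expansion $\sigma_z^2/(2-\alpha)+\mathcal{O}(\sigma_z^4)$ of Corollary~\ref{cor:snr_gap}, letting the quartic remainders produce the $\mathcal{O}(\sigma_z^2)$ relative error. Your explicit solution of the \ac{LMMSE} quadratic, with quartic coefficient $-1/(E_d(1-\alpha)^3)$, actually supplies the remainder that the paper only asserts from the approximation~\eqref{eq:lmmse_fixed_point}, and your caveat that the constant is pointwise rather than uniform in $\alpha$ near $1$ is apt.
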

\begin{proof}
Divide the \ac{LMMSE} fixed point $\mathrm{MSE}_\infty^{\mathrm{L}} = \sigma_z^2/(1-\alpha) + \mathcal{O}(\sigma_z^4)$ by the shared orbital leading order $\mathrm{MSE}_\infty^{\eta} = \sigma_z^2/(2-\alpha) + \mathcal{O}(\sigma_z^4)$ of Corollary~\ref{cor:snr_gap}; the quartic remainders contribute $\mathcal{O}(\sigma_z^2)$ relative error. The $\alpha \to 0$ endpoint is the ratio of \ac{MMSE} dimensions, the \ac{LMMSE} obeying $\mathrm{MMSE}_{\mathrm{L}} \sim \bar{\sigma}^2$ against $\bar{\sigma}^2/2$ for every orbital level (Proposition~\ref{prop:mmse_dimension}): with no interference the fixed point is the single-shot \ac{MMSE}, and the entire advantage is the one freed real coordinate.
\end{proof}

Figure~\ref{fig:load_gap} plots~\eqref{eq:load_gap} and makes the two regimes visible at once. The gap never falls below $3$~dB, however lightly the system is loaded, because the relaxation frees one real coordinate rather than two, which is the horizontal asymptote of the figure, and the only part of the gain that survives at vanishing load.

\begin{figure}[H]
  \centering
  \includegraphics[width=\columnwidth]{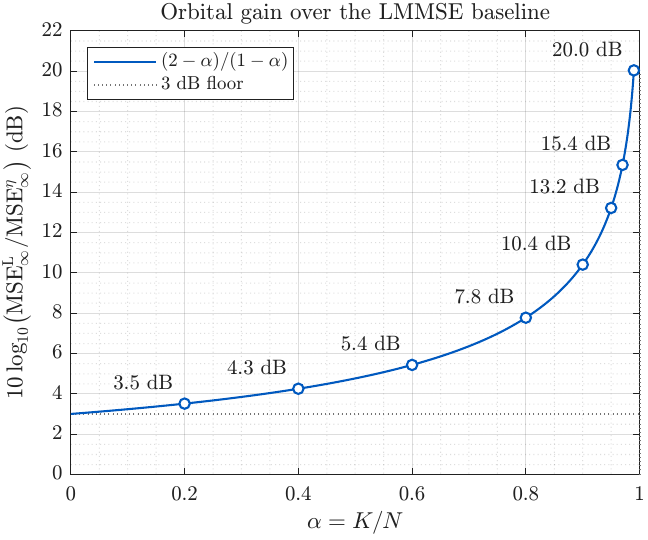}
  \vspace{-4ex}
  \caption{Gain of the orbital hierarchy over the \ac{LMMSE} baseline versus system load, $10\log_{10}\bigl(\mathrm{MSE}_\infty^{\mathrm{L}}/\mathrm{MSE}_\infty^{\eta}\bigr) = 10\log_{10}\frac{2-\alpha}{1-\alpha}$, from Corollary~\ref{cor:load_gap}. The curve is bounded below by the $10\log_{10}2 = 3.0$ dB floor (dotted) -- the ratio of \ac{MMSE} dimensions, $1$ for the \ac{LMMSE} against $\tfrac12$ for every orbital level (Proposition~\ref{prop:mmse_dimension}) -- and diverges at unit load, where the linear baseline loses its contraction while the orbital fixed point stays finite up to $\alpha \to 2$. The gain is independent of $M$, of the constellation family, and of which orbital denoiser is used.}
  \label{fig:load_gap}
  \vspace{-2ex}
\end{figure}

The gap diverges as $\alpha \to 1^-$, where the linear baseline loses its contraction while the orbital fixed point remains finite up to $\alpha \to 2$. The growth is slow over the lightly loaded range and then abrupt: $3.7$~dB at $\alpha = \tfrac14$ and $4.8$~dB at $\alpha = \tfrac12$, but $7.0$~dB by $\alpha = \tfrac34$ and $10.4$~dB by $\alpha = 0.9$. Two consequences are worth drawing. First, the advantage measured at the moderate loads of our numerical evaluations ($\alpha = 0.25$) is close to the \emph{smallest} the framework offers; the operating points where a linear receiver is most tempting -- heavily loaded ones -- are precisely those where it is most costly. Second, because the ratio is a function of $\alpha$ alone, the gain is the same for every constellation order and family and for all three orbital levels: even the $\mathcal{O}(1)$ \ac{OPD} inherits it in full. The restriction $\alpha < 1$ is exactly that of Proposition~\ref{prop:fp_ordering}, and is not conservative: at $\alpha = 1$ the law $\sigma_z^2/(1-\alpha)$ ceases to hold and $\mathrm{MSE}_\infty^{\mathrm{L}}$ saturates toward $E_d$ instead of diverging, since $\mathcal{F}_{\mathrm{L}}$ remains a self-map of $[0,E_d]$ at every load; the curve of Fig.~\ref{fig:load_gap} is therefore to be read as the small-$\sigma_z^2$ asymptote, which the measured ratio tracks closely while $\alpha$ stays away from unity (Fig.~\ref{fig:MSE_alpha}, Section~\ref{sec:load_sweep}).

\begin{remark}[Absence of an Error Floor and Operational Boundary]
\label{rem:no_error_floor_v2}
Equation~\eqref{eq:asymptotic_mse} confirms that as $\sigma_z^2 \to 0$, the shared \ac{MSE} vanishes for all three denoiser levels, with no irreducible residual.
The factor $1/(2-\alpha)$ is a finite, bounded amplification of the physical noise; it does not represent a minimum floor.
The condition $\alpha < 1$ (Corollary~\ref{cor:se_uniqueness}) is the true operational boundary for uniqueness and capacity achievement, not the apparent singularity at $\alpha = 2$.
Near $\alpha = 2$, the leading-order approximation $\hat{\sigma}^2 \approx \bar{\sigma}^2/2$ breaks down before the denominator collapses, and the fixed point still exists by Theorem~\ref{thm:se_convergence}; only uniqueness and the linear approximation~\eqref{eq:asymptotic_mse} are lost.
\end{remark}

\begin{corollary}[Full Asymptotic Capacity Across the Hierarchy]
  \label{cor:capacity}
  Let $C_{\mathrm{SE}}^{\eta}(\mathrm{SNR})$ denote the mutual information of the \ac{SE}-equivalent scalar channel at the fixed point of denoiser $\eta \in \{\mathrm{B}, \mathrm{G}, \mathrm{P}\}$.
  For each level, the effective noise variance at the \ac{SE} fixed point is $\bar{\sigma}_\infty^2(\mathrm{SNR}) = \sigma_z^2 + (K/N)\,\mathrm{MSE}_\infty^{\eta}(\mathrm{SNR})$.
  As the physical \ac{SNR} grows large ($\mathrm{SNR} \to \infty$, yielding $\sigma_z^2 \to 0$), Corollary~\ref{cor:snr_gap} guarantees that the tracking error strictly vanishes for every level, $\mathrm{MSE}_\infty^{\eta}(\mathrm{SNR}) \to 0$.
  Consequently, the effective cavity noise also identically vanishes ($\bar{\sigma}_\infty^2 \to 0$), and the \ac{SE}-equivalent mutual information (in bits/channel use) fully achieves the discrete constellation capacity for all three denoisers
  \begin{equation}
  \!\lim_{\mathrm{SNR} \to \infty} \!\!C_{\mathrm{SE}}^{\eta}(\mathrm{SNR}) = \!\!\lim_{\bar{\sigma}_\infty \to 0} I\!\left(x;\; x\! +\! \bar{\sigma}_\infty \tilde{z}\right) = \!\log_2 M,
    \label{eq:capacity_achieved}
  \end{equation}
  where $\tilde{z} \sim \mathcal{CN}(0,1)$ and $x \in \mathcal{M}$ is drawn from the uniform discrete prior.
  In particular, the $\mathcal{O}(1)$ \ac{OPD}, using a single ring-projection, is asymptotically capacity-achieving.
\end{corollary}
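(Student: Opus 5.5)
The plan is to split the claim into two limits and chain them: first show $\bar{\sigma}_\infty^2\to 0$ as $\mathrm{SNR}\to\infty$ for each $\eta\in\{\mathrm{B},\mathrm{G},\mathrm{P}\}$, then show that the scalar mutual information $I(x;x+\bar{\sigma}\tilde{z})$ with $x$ uniform on $\mathcal{M}$ tends to $H(x)=\log_2 M$ as $\bar{\sigma}\to 0$.

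\emph{Step 1: the effective noise vanishes.} For the first limit I will invoke Corollary~\ref{cor:snr_gap}, which gives $\mathrm{MSE}_\infty^{\eta}=\sigma_z^2/(2-\alpha)+\mathcal{O}(\sigma_z^4)$ for all three levels. I will restrict to $\sigma_z^2<\sigma_\star^2$, so that the \ac{OGD} and \ac{OPD} fixed points exist and are localized (Corollary~\ref{cor:fp_ordering_hierarchy}). Substituting into~\eqref{eq:se_noise} then gives
\[
\bar{\sigma}_\infty^2=\sigma_z^2\Bigl(1+\tfrac{\alpha}{2-\alpha}\Bigr)+\mathcal{O}(\sigma_z^4)=\tfrac{2\sigma_z^2}{2-\alpha}+\mathcal{O}(\sigma_z^4)\to 0 .
\]
This step only composes two earlier results, and $\alpha<1$ keeps every constant finite.

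\emph{Step 2: the scalar mutual information reaches $\log_2 M$.} Write $I(x;\bar{x})=H(x)-H(x\mid\bar{x})$ with $H(x)=\log_2 M$. By Fano's inequality,
\[
H(x\mid\bar{x})\le h_b(P_e)+P_e\log_2(M-1),
\]
where $P_e$ is the error probability of the minimum-distance (\ac{ML}) detector on the scalar channel. A union bound with the Gaussian tail estimate gives $P_e\le (M-1)\,e^{-d_{\min}^2/(4\bar{\sigma}^2)}$, which is the same computation as in Proposition~\ref{prop:ring_discrimination}, with $d_{\min}$ in place of $d_R$. Hence $H(x\mid\bar{x})\to 0$, and the mutual information converges to $\log_2 M$. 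Because it is bounded above by $\log_2 M$ and approaches it from below, this is exactly the limit in~\eqref{eq:capacity_achieved}. The \ac{OPD} statement is then the special case $\eta=\mathrm{P}$.

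\emph{Main obstacle: interpreting $C_{\mathrm{SE}}^{\eta}$.} The delicate point is what $C_{\mathrm{SE}}^{\eta}$ means. I read it as the mutual information of the decoupled channel at the level-$\eta$ fixed point, with the input taken under the true discrete prior $p$, as the corollary specifies. Under that reading, the denoiser enters only through $\bar{\sigma}_\infty^2$, and the mismatch between $p$ and $q$ is irrelevant. The decoupling itself must be justified, since the \ac{OGD} and \ac{OPD} are not pseudo-Lipschitz. For this I will cite Lemma~\ref{lem:pseudo_lip}, whose regularization argument transfers state evolution to these denoisers. I would also note explicitly that the conclusion concerns constellation-constrained capacity, not the mismatched-decoding rate, which is treated separately through the \ac{GMI}.
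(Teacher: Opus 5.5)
Your proposal is correct and follows the paper's own argument. Both reduce the claim to $\bar{\sigma}_\infty^2 \to 0$ via Corollary~\ref{cor:snr_gap}, together with the observation that the \ac{SE}-equivalent channel depends on the denoiser only through $\bar{\sigma}_\infty^2$; the one addition is that you justify $I(x;x+\bar{\sigma}\tilde{z}) \to \log_2 M$ using Fano's inequality and the union bound $P_e \le (M-1)e^{-d_{\min}^2/(4\bar{\sigma}^2)}$, a step the paper simply asserts.
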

\begin{proof}
The limit rests solely on $\bar{\sigma}_\infty^2 \to 0$, which Corollary~\ref{cor:snr_gap} establishes uniformly over $\eta \in \{\mathrm{B}, \mathrm{G}, \mathrm{P}\}$ via the shared leading order $\mathrm{MSE}_\infty^{\eta} = \sigma_z^2/(2-\alpha) + \mathcal{O}(\sigma_z^4)$. Since the \ac{SE}-equivalent scalar channel $\bar{x} = x + \bar{\sigma}_\infty \tilde{z}$ depends on the denoiser only through $\bar{\sigma}_\infty^2$, its mutual information $I(x; x + \bar{\sigma}_\infty \tilde{z}) \to \log_2 M$ is common to all three levels. The identification of $C_{\mathrm{SE}}^{\eta}$ with this mutual information is exact for the Bayes-optimal (under $q$) \ac{OBD}; for the \ac{OGD} and \ac{OPD} the \ac{SE} still yields the equivalent \ac{AWGN} channel of variance $\bar{\sigma}_\infty^2$, and as $\sigma_z^2 \to 0$ the concentration $\kappa \to \infty$ places the \ac{OGD} in its valid regime ($\kappa \geq \tfrac12$), so the limit holds unconditionally.
\end{proof}

\begin{remark}[Capacity Penalty as a Parallel SNR Shift]
  Unlike severe mismatched decoding scenarios that induce strict, impassable capacity ceilings~\cite{Merhav1994TIT}, the orbital relaxation is fundamentally constellation-constrained capacity-achieving for any underloaded system ($K < N$), at every level of the denoiser hierarchy (\ac{OBD}, \ac{OGD}, and \ac{OPD}).
  The geometric prior mismatch manifests entirely in the rate of approach: to achieve a target mutual information $C$ arbitrarily close to $\log_2 M$, any orbital receiver simply requires a fixed, constant dB increase in transmit power relative to the \ac{BOD}, perfectly corresponding to the linear variance penalty derived in Section~\ref{sec:asymptotic_gap}; the choice of denoiser affects only the $\mathcal{O}(\sigma_z^4)$ sub-leading offset, not the capacity limit itself.
\end{remark}

\subsection{Load Dependence of the Hierarchy}
\label{sec:load_sweep}

The fixed-point laws of this section depend on the load as much as on the noise variance, yet every evaluation so far has swept the \ac{SNR} at the single load $\alpha = 0.25$. 
Figures~\ref{fig:MSE_alpha} and~\ref{fig:SER_alpha} complete the picture by fixing $\mathrm{SNR} = 30$~dB and sweeping $\alpha$ instead, over the same three constellation families and the same five detectors. 
Three predictions of this section become visible at once. 
First, the \ac{OBD}, \ac{OGD} and \ac{OPD} curves lie on the shared law $\sigma_z^2/(2-\alpha)$ throughout the sweep: the cross-level equivalence of Theorem~\ref{thm:cross_level_fp} holds uniformly in the load, and is not an artifact of the particular $\alpha$ chosen for Figs.~\ref{fig:MSE_MPSK_MQAM} and~\ref{fig:SE_MPSK_MQAM}. 
Second, throughout the underloaded range the \ac{BOD} lies strictly below that law, and falls further below it as the load lightens. 
This separation is Proposition~\ref{prop:mmse_dimension} in geometric form: the exact detector has \ac{MMSE} dimension $0$ and decays exponentially in \ac{SNR}, whereas every orbital level has dimension $\tfrac12$ and is pinned to $\sigma_z^2/(2-\alpha)$; the two can never meet, and the vertical offset between them is the price of the relaxation, displayed as a function of load. 
Third, the \ac{LMMSE} baseline follows $\sigma_z^2/(1-\alpha)$ while $\alpha$ stays away from unity and then saturates toward $E_d$ rather than diverging, exactly as anticipated in the discussion of Fig.~\ref{fig:load_gap}. The vertical distance between the two black references is the load gap $(2-\alpha)/(1-\alpha)$ of Corollary~\ref{cor:load_gap}: Fig.~\ref{fig:MSE_alpha} is its direct measurement, and the widening of that distance with $\alpha$ is the divergence the corollary predicts.
\vspace{-3ex}
\begin{figure}[H]
    \centering
    \includegraphics[width=\columnwidth]{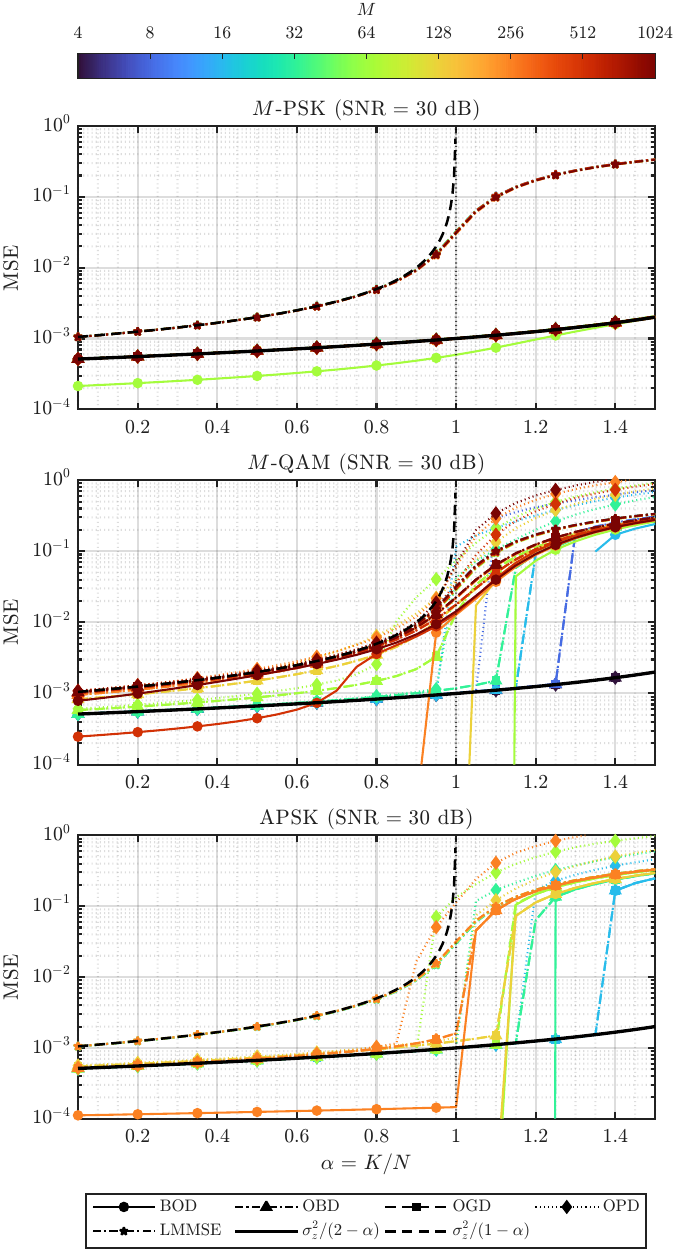}
    \vspace{-3ex}
    \caption{Fixed-point $\mathrm{MSE}_\infty$ versus load $\alpha = K/N$ at $\mathrm{SNR} = 30$~dB, for all $M$-\ac{PSK} and $M$-\ac{QAM} orders $M \in \{4,\ldots,1024\}$ and the DVB-S2/S2x \ac{APSK} constellations; line style encodes the detector (shared legend), color the order $M$ (shared colorbar), as in Fig.~\ref{fig:MSE_MPSK_MQAM}. The three orbital levels track the shared law $\sigma_z^2/(2-\alpha)$ (black solid) across the sweep; the \ac{BOD} lies strictly below it while $\alpha < 1$; the \ac{LMMSE} follows $\sigma_z^2/(1-\alpha)$ (black dashed) until it saturates toward $E_d$ near unit load. The vertical distance between the two black references is the load gap $(2-\alpha)/(1-\alpha)$ of Corollary~\ref{cor:load_gap}. The dotted vertical line marks $\alpha = 1$; the region to its right lies outside the hypotheses of Corollary~\ref{cor:se_uniqueness} and is shown for completeness only.}
    \label{fig:MSE_alpha}
    \vspace{-2ex}
\end{figure}

\begin{figure}[H]
    \centering
    \includegraphics[width=\columnwidth]{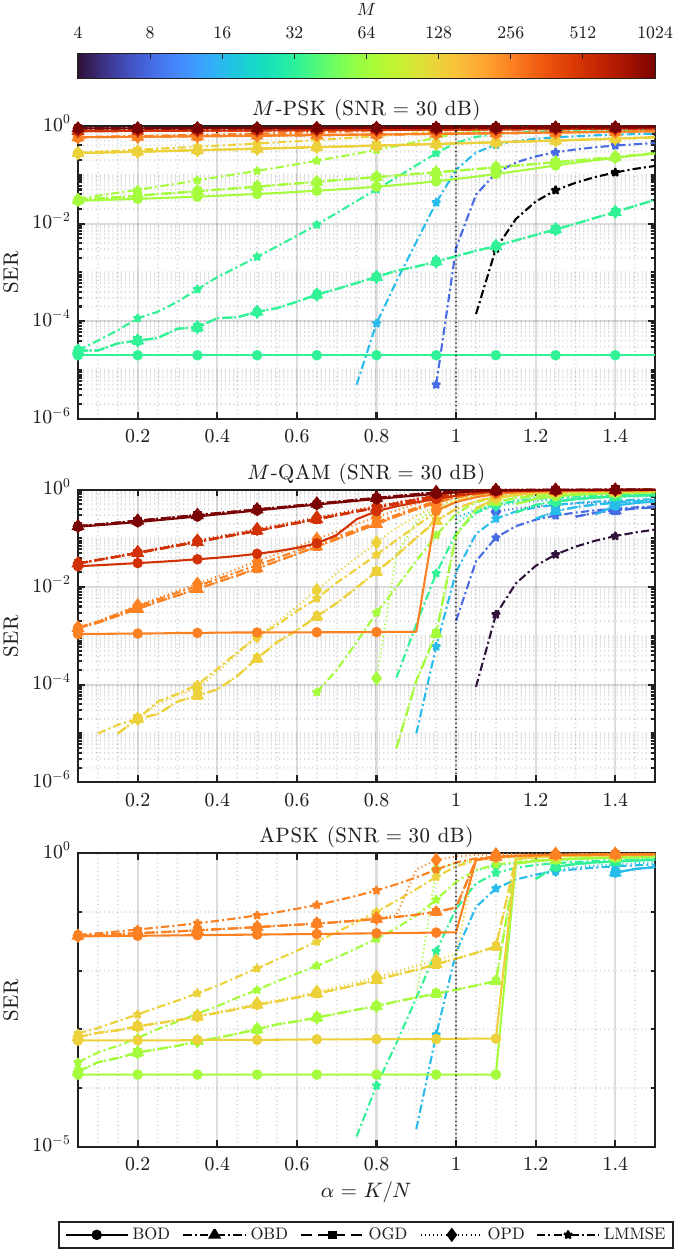}
    \vspace{-3ex}
    \caption{\ac{SE}-predicted \ac{SER} versus load $\alpha = K/N$ at $\mathrm{SNR} = 30$~dB, under the conventions and constellation set of Fig.~\ref{fig:MSE_alpha}. Within a single color -- at fixed $M$ -- the ordering $\mathrm{BOD} \le \mathrm{OBD} \le \mathrm{OGD} \le \mathrm{OPD} \le \mathrm{LMMSE}$ of Proposition~\ref{prop:fp_ordering} is preserved at every load. The load at which the \ac{SER} collapses decreases with $M$.}
    \label{fig:SER_alpha}
    \vspace{-2ex}
\end{figure}

Two features of the dense panels should be read with care. Above $\alpha \approx 0.9$ the \ac{QAM} and \ac{APSK} curves depart steeply from the shared law and cross one another. This is the same failure of the \emph{iteration} rather than of the fixed point that was identified at the opening of this section: on closely spaced rings the \ac{OPD}'s non-Lipschitz projection (Lemma~\ref{lem:pseudo_lip}) carries an Onsager correction that misses its jump discontinuities, so the recursion stalls before reaching the fixed point that Theorem~\ref{thm:cross_level_fp} certifies. The heavier the load, the more residual interference that correction must cancel, so the effect strengthens with $\alpha$ just as it strengthens with $M$; the scalar \ac{SE} recursion, which is immune to \ac{AMP} iteration dynamics, continues to place all three levels on the shared law. Separately, the region $\alpha \ge 1$ to the right of the dotted line lies outside the hypotheses of Corollary~\ref{cor:se_uniqueness}: uniqueness of the fixed point is no longer guaranteed there, and the law $\sigma_z^2/(1-\alpha)$ for the linear baseline is meaningless. That the orbital curves nonetheless continue to track $\sigma_z^2/(2-\alpha)$ is consistent with the fixed point remaining finite up to $\alpha \to 2$ (Remark~\ref{rem:no_error_floor_v2}), and suggests that orbital detection retains an advantage in the overloaded regime in which a linear receiver possesses no contraction at all. We record this as an empirical observation; it is not a claim of the present theory.

Figure~\ref{fig:SER_alpha} is the operational counterpart. Within a single color the ordering of Proposition~\ref{prop:fp_ordering} is preserved at every load, so the hierarchy is an ordering in the load as much as in the \ac{SNR}. The \ac{OPD} pays for its $\mathcal{O}(1)$ cost by a margin that stays $\mathcal{O}(\sigma_z^4)$ at every load, though not a load-independent one: by~\eqref{eq:lpd_fixed_point} its excess scales as $(2-\alpha)^{-3}$, and so grows by roughly a factor of four between $\alpha = 0.1$ and $\alpha = 0.8$ while remaining subleading. The load at which the \ac{SER} collapses decreases with $M$: denser constellations surrender to residual interference earlier, since a given effective noise variance covers proportionally more of the minimum distance. Unlike the \ac{MSE}, the \ac{SER} obeys no closed-form law in $\alpha$; the panel is included to confirm that the \ac{MSE} ordering survives the passage to a decision rule, which is the quantity a receiver ultimately reports.
The \ac{SE} analysis characterizes performance in terms of \ac{MSE}. 
The natural question is how this \ac{MSE} translates into information-theoretic rate. 
Section \ref{sec:inf_theory} makes this translation precise via the \ac{I-MMSE} identity, the \ac{GMI} framework, and an optimal transport bound that connects constellation geometry directly to rate loss.

\vspace{-2ex}
\section{Fundamental Geometry-Rate Tradeoffs}
\label{sec:inf_theory}

We now discuss the information-theoretic consequences that result from the relaxed geometry of the proposed architectures.
Throughout this section, quantities that depend on the choice of denoiser carry the index $\eta \in \{\mathrm{B}, \mathrm{G}, \mathrm{P}\}$: $\mathrm{MSE}_\infty^{\eta}$ is the \ac{SE} fixed-point error, $\bar{\sigma}_{\infty,\eta}^2 = \sigma_z^2 + \alpha\,\mathrm{MSE}_\infty^{\eta}$ (with $\alpha = K/N$) the corresponding effective noise, and the subscript $\mathrm{D}$ denotes the discrete \ac{BOD} reference. Where a statement specializes to the representative \ac{OBD} we abbreviate $\mathrm{MSE}_\infty \equiv \mathrm{MSE}_\infty^{\mathrm{B}}$ and $\bar{\sigma}_\infty^2 \equiv \bar{\sigma}_{\infty,\mathrm{B}}^2$. By Corollary~\ref{cor:snr_gap}, all three levels share the leading order $\mathrm{MSE}_\infty^{\eta} = \sigma_z^2/(2-\alpha) + \mathcal{O}(\sigma_z^4)$.

\vspace{-2ex}
\subsection{Decoupling Principle and Single-Letter Rate Formula}
\label{sec:decoupling}

The \ac{SE} framework of Section~\ref{sec:se} implicitly relies on the \emph{decoupling principle}: in the large-system limit, the vector \ac{MIMO} channel decouples into $K$ parallel scalar channels. 
We now make this explicit and derive a single-letter formula for the per-user achievable rate of the orbital receiver, connecting our framework to the random matrix theory foundations of Guo and Verd\'{u}~\cite{GuoVerdu2005CDMA}.
What Theorem~\ref{thm:decoupling} adds to Section~\ref{sec:se} is not the
value of the effective noise variance -- the \ac{SE} recursion already delivers that -- but a
\emph{distributional} statement. 
The \ac{SE} propagates only the scalar second moment
$\mathrm{MSE}_t$, whereas decoupling asserts that the joint empirical distribution of the
transmitted symbols and their cavity statistics converges to the distribution of a genuine scalar
\ac{AWGN} channel with \emph{Gaussian} effective noise -- each user behaving, in the
test-function sense made precise below, as if observed alone through~\eqref{eq:decoupled_channel}.

It is this distributional equivalence -- not the \ac{MSE} alone --
that licenses replacing the $K$-dimensional mutual information by $K$ copies of the scalar
$I(x; x + \bar{\sigma}_\infty \tilde{z})$ in Corollary~\ref{cor:single_letter_rate}.

\begin{theorem}[Decoupling Under the Orbital Denoisers]
\label{thm:decoupling}
Consider the system model~\eqref{eq:system} with $\mathbf{H}$ having \ac{iid} entries $H_{nk} \sim \CN(0, 1/N)$.
In the large-system limit ($N, K \to \infty$ with $K/N \to \alpha$), the \ac{AMP} iteration under each orbital denoiser $\eta \in \{\mathrm{B}, \mathrm{G}, \mathrm{P}\}$ decouples at its fixed point: the joint empirical distribution of the pairs $\{(x_k, \bar{x}_k)\}_{k=1}^K$ converges weakly, almost surely -- equivalently, averages of pseudo-Lipschitz test functions converge -- to the distribution of $(x, \bar{x})$ generated by the scalar channel
\begin{equation}
\label{eq:decoupled_channel}
\bar{x}_k = x_k + \bar{\sigma}_{\infty{,\eta}} \tilde{z}_k, \quad \tilde{z}_k \sim \CN(0, 1),
\end{equation}
where $\bar{\sigma}_{\infty{,\eta}}^2$ is the {denoiser's} \ac{SE} fixed point satisfying
\begin{equation}
\label{eq:fp_implicit}
\bar{\sigma}_{\infty{,\eta}}^2 = \sigma_z^2 + \alpha \cdot \mathcal{F}\!\left(\tfrac{\bar{\sigma}_{\infty{,\eta}}^2 - \sigma_z^2}{\alpha}\right),
\end{equation}
with $\mathcal{F}$ defined in~\eqref{eq:fixed_point} and $\alpha = K/N$ (equivalently $\mathrm{MSE}_\infty^{\eta} = \mathcal{F}(\mathrm{MSE}_\infty^{\eta})$ with $\mathrm{MSE}_\infty^{\eta} = (\bar{\sigma}_{\infty{,\eta}}^2 - \sigma_z^2)/\alpha$). {The three fixed-point variances coincide to leading order (Corollary~\ref{cor:snr_gap}), differing only at $\mathcal{O}(\sigma_z^4)$.}
\end{theorem}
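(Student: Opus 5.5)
The plan is to obtain Theorem~\ref{thm:decoupling} as a direct application of the rigorous \ac{AMP} state-evolution theorem of~\cite{Bayati2011,Javanmard2013} to each orbital denoiser, followed by a limit exchange $t\to\infty$ at the fixed point. That theorem states that, for i.i.d.\ Gaussian $\mathbf{H}$ with $H_{nk}\sim\CN(0,1/N)$, a separable pseudo-Lipschitz denoiser, and i.i.d.\ prior with finite moments, the following holds at every fixed iteration $t$. For every pseudo-Lipschitz test function $\psi$ of order $2$,
\begin{equation*}
\tfrac{1}{K}\sum_{k=1}^K \psi(x_k,\bar{x}_k^t)\;\to\;\E\bigl[\psi(x,x+\bar{\sigma}_t\tilde{z})\bigr]
\end{equation*}
almost surely, where $\bar{\sigma}_t^2$ follows~\eqref{eq:se_noise}. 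This convergence of pseudo-Lipschitz test averages is exactly the weak-convergence statement of the theorem at finite $t$. The complex case reduces to the real one by the standard $\Complex\cong\Real^2$ embedding, with circular symmetry giving isotropic effective noise. The discrete prior on $\mathcal{M}$ is bounded, so all moment conditions hold.

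The first step is to verify the regularity hypothesis level by level. For the \ac{OBD} I will invoke Lemma~\ref{lem:pseudo_lip} directly. At each fixed $\bar{\sigma}_t^2>0$, $\eta_B$ is globally Lipschitz with constant $\Theta(R_L^2/\bar{\sigma}_t^2)$, and $\bar{\sigma}_t^2\ge\sigma_z^2>0$ keeps this bounded uniformly in $t$. For the \ac{OGD} and \ac{OPD}, which Lemma~\ref{lem:pseudo_lip} shows are bounded but discontinuous, I will use the regularization route sketched there. Replace $\eta$ by a mollified $\eta^{(\beta)}$, obtained by smoothing the hard ring slice, or the regime switch, over a band of width $1/\beta$. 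Apply the finite-$t$ \ac{SE} theorem to $\eta^{(\beta)}$, then let $\beta\to\infty$. Since $\eta$ is bounded by $R_L$ and its discontinuity set is Lebesgue-null, while $\bar{x}$ has a density at every $t$, dominated convergence sends both the \ac{SE} map and the test-function averages of $\eta^{(\beta)}$ to those of $\eta$. A diagonal argument in $(\beta,N)$ then transfers the decoupling at each $t$.

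The second step passes to the fixed point. For the \ac{OBD}, Proposition~\ref{prop:se_monotone} gives $\bar{\sigma}_t^2\to\bar{\sigma}_{\infty,\mathrm{B}}^2$ monotonically, and Corollary~\ref{cor:se_uniqueness} gives uniqueness in the stated regimes. The limiting scalar law $\mathrm{Law}(x,x+\bar{\sigma}_t\tilde{z})$ is continuous in $\bar{\sigma}_t$ in the pseudo-Lipschitz Wasserstein sense, so $\lim_t\lim_N$ equals the law of~\eqref{eq:decoupled_channel}. The fixed-point relation~\eqref{eq:fp_implicit} is then just~\eqref{eq:se_noise} at $\mathrm{MSE}^\ast=\mathcal{F}(\mathrm{MSE}^\ast)$, rewritten in $\bar{\sigma}^2$. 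For the \ac{OGD} and \ac{OPD} I will take the fixed points constructed in Proposition~\ref{prop:cgr_se} and Corollary~\ref{cor:lpd_se}. I will argue local convergence using Lemma~\ref{lem:excess}: the excess terms are $\mathcal{O}(\sigma_z^4)$ perturbations of $\mathcal{F}_{\mathrm{B}}$, whose slope at the fixed point is $\alpha/2<1$, so the maps are locally contractive near $\mathcal{O}(\sigma_z^2)$ and the iterates initialized appropriately converge. The final sentence of the theorem, that the three variances agree to leading order, is inherited verbatim from Corollary~\ref{cor:snr_gap}.

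The main obstacle is the $t\to\infty$ exchange. The \ac{AMP} \ac{SE} theorem holds for fixed $t$ as $N\to\infty$, and it does not by itself control $\lim_N\lim_t$. I will therefore state decoupling in the standard sense, $\lim_{t\to\infty}\lim_{N\to\infty}$, as is done in~\cite{Bayati2011}, and make this convention explicit. A secondary delicacy is that the \ac{OPD}'s Onsager term uses an almost-everywhere divergence. I will handle this by taking the Onsager coefficient from the mollified denoiser and passing to the limit $\beta\to\infty$, which is consistent with the damping remark after Fig.~\ref{fig:Hierarchy_SER}. In outside regimes where only existence is known, the limit will be identified with the largest fixed point $\mathrm{MSE}_\infty$.
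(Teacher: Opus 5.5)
Your route is the paper's: the Bayati--Montanari/Javanmard--Montanari state-evolution theorem, with its pseudo-Lipschitz hypothesis supplied by Lemma~\ref{lem:pseudo_lip}, then the $\mathbb{C}\cong\mathbb{R}^2$ embedding, then the channel read off at the fixed point. The lemma is applied directly for the OBD and through regularization for the OGD and OPD. Three of your additions are more careful than the paper, which simply asserts the channel ``at the fixed point'':
\begin{itemize}
\item the explicit $\lim_{t\to\infty}\lim_{N\to\infty}$ convention;
\item the Onsager coefficient taken as the $\beta\to\infty$ limit of the mollified divergence, which keeps the jump-surface term that the almost-everywhere divergence drops (the failure the paper reports for dense QAM);
\item the local-contraction argument for the OGD/OPD fixed points.
\end{itemize}

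\emph{The gap: your mollifier does not make the OPD Lipschitz.} As you describe it, you smooth only the ring slice (or regime switch) over a radial band of width $1/\beta$. Near the origin $\ell^*=1$, so there $\eta_P(\bar{x})=R_1e^{j\angle\bar{x}}$. This has modulus $R_1>0$ and no limit as $\bar{x}\to 0$. Smoothing the radial profile near the decision radii never touches this singularity. For $M$-PSK ($L=1$) it is the only discontinuity, so your mollifier changes nothing at all. Hence the state-evolution theorem cannot be applied to your $\eta^{(\beta)}$.

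The paper's Lemma~\ref{lem:pseudo_lip} avoids this by convolving the whole map with the isotropic Gaussian kernel on $\mathbb{R}^2$, $\eta^{(\beta)}=\eta*\phi_\beta$. This map is smooth with Lipschitz constant $R_L\sqrt{\pi\beta}$, and it regularizes the ring circles and the origin at once. It is exactly the point the paper makes against a ring-selection-only smoothing. A cutoff such as $\min(1,\beta\lvert\bar{x}\rvert)$ would also repair your version.

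\emph{A secondary overclaim: the diagonal argument in $(\beta,N)$.} It only gives decoupling for AMP run with $\eta^{(\beta_N)}$ along some slowly growing sequence $\beta_N$. It does not give decoupling for AMP run with the hard $\eta$. To transfer, you need a stability estimate: the two AMP trajectories must stay close in normalized $\ell_2$ as $\beta\to\infty$, uniformly in large $N$. This can be obtained by induction on $t$. Because the limiting $\bar{x}$ has a bounded density, only an $\mathcal{O}(\delta)$ fraction of coordinates lie within $\delta$ of the discontinuity set, and away from that set $\eta$ is Lipschitz. The paper's lemma also transfers only the SE maps and their fixed points, so this is a shared soft spot rather than a departure. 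Still, as written your proposal claims more than diagonalization delivers.
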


\begin{proof}
The proof follows from the rigorous \ac{SE} analysis of Bayati and Montanari~\cite{Bayati2011} extended to mismatched denoisers by Javanmard and Montanari~\cite{Javanmard2013}, whose pseudo-Lipschitz hypothesis holds directly for the \ac{OBD} and, for the \ac{OGD} and \ac{OPD}, through the fixed-$\bar{\sigma}^2$ regularization of Lemma~\ref{lem:pseudo_lip}. The complex model is handled by the standard real $2\times2$ vectorization of the $\CN$ quantities, under which the \ac{iid} $\CN(0,1/N)$ ensemble and the (regularized) $\Real^2$-Lipschitz denoisers of Lemma~\ref{lem:pseudo_lip} meet the hypotheses of~\cite{Bayati2011,Javanmard2013}.
Two strengthenings of that machinery, not needed here but worth recording, bear on the scope of the statement: finite-sample concentration bounds quantify the approach to the large-system limit at finite $K$~\cite{rush_tit_2018}, and the rigorous theory extends beyond \ac{iid} Gaussian ensembles to unitarily invariant matrices through \ac{VAMP}~\cite{ranganVAMP2019} and its expectation-propagation analysis~\cite{TakeuchiTIT2020} -- the natural route to the correlated-channel extension we leave to future work.
Under the \ac{AMP} iteration with {denoiser $\eta$}, the effective observation (the denoiser input) obeys $\bar{x}_k^{(t)} = x_k + \bar{\sigma}_t^{\eta}\tilde{z}_k + o_P(1)$ with $\tilde{z}_k\sim\CN(0,1)$ asymptotically independent of $x_k$; equivalently, the joint empirical distribution of $\{(x_k, \bar{x}_k^{(t)})\}_{k=1}^K$ converges weakly to the distribution of $(x,\, x + \bar{\sigma}_t^{\eta}\tilde{z})$ as $N,K\to\infty$. The denoised error then has mean square $\mathrm{MSE}_t^{\eta} = \E\norm{x - \eta(x + \bar{\sigma}_t^{\eta}\tilde{z};\,(\bar{\sigma}_t^{\eta})^2)}^2$ -- note that it is the observation noise, not the (generally non-Gaussian) estimate error, that is asymptotically Gaussian.
At the fixed point, the effective per-user channel becomes~\eqref{eq:decoupled_channel} with noise variance $\bar{\sigma}_{\infty{,\eta}}^2$ determined self-consistently by~\eqref{eq:fp_implicit}.
The decoupling is exact in the sense that any separable test function $\frac{1}{K}\sum_k \psi(x_k, (\hat{x}^{\eta})_k)$ converges almost surely to $\E[\psi(x, \eta(x + \bar{\sigma}_{\infty{,\eta}} \tilde{z}; \bar{\sigma}_{\infty{,\eta}}^2))]$.
\end{proof}

The decoupling principle immediately yields a single-letter achievable-rate formula.

\begin{corollary}[Single-Letter Achievable Rate]
\label{cor:single_letter_rate}
For each denoiser $\eta \in \{\mathrm{B}, \mathrm{G}, \mathrm{P}\}$, the per-user achievable rate in the decoupled regime is the matched information rate of the decoupled \ac{AWGN} channel at effective noise $\bar{\sigma}_{\infty,\eta}$, i.e.
\begin{equation}
\label{eq:single_letter_rate}
R_{\eta}(\mathrm{SNR}) = I\!\left(x;\; x + \bar{\sigma}_{\infty,\eta}\, \tilde{z}\right) \quad \text{bits/symbol},
\end{equation}

The operational rate of the mismatched ($q$-based) receiver is the \ac{GMI} of Corollary~\ref{thm:gmi}, which lower-bounds~\eqref{eq:single_letter_rate} (cf.\ Fig.~\ref{fig:GMI_rate}),
where $\bar{\sigma}_{\infty,\eta}^2 = \sigma_z^2 + \alpha \cdot \mathrm{MSE}_\infty^{\eta}$ and $\mathrm{MSE}_\infty^{\eta}$ is the \ac{SE} fixed point~\eqref{eq:se_mismatch} of that denoiser.
At high \ac{SNR}, substituting Corollary~\ref{cor:snr_gap} (under which $\bar{\sigma}_{\infty,\eta}^2 = 2\sigma_z^2/(2-\alpha) + \mathcal{O}(\sigma_z^4)$ for all three levels) yields the common rate
\begin{equation}
\label{eq:rate_high_snr}
R_{\eta} = I\!\left(x;\, x + \sigma_z\sqrt{\tfrac{2}{2-\alpha}}\,\tilde{z}\right) + \mathcal{O}(\sigma_z^4),
\end{equation}
identical across the hierarchy up to the $\mathcal{O}(\sigma_z^4)$ remainder.
\end{corollary}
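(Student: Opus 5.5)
The plan is to obtain \eqref{eq:single_letter_rate} directly from the distributional decoupling of Theorem~\ref{thm:decoupling}, and then to get \eqref{eq:rate_high_snr} by a perturbation argument driven by Corollary~\ref{cor:snr_gap}.

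\textbf{Step 1: reduction to a scalar channel.} By Theorem~\ref{thm:decoupling}, for each $\eta\in\{\mathrm{B},\mathrm{G},\mathrm{P}\}$ the empirical law of $\{(x_k,\bar{x}_k)\}$ converges weakly to that of $(x,\,x+\bar{\sigma}_{\infty,\eta}\tilde{z})$, with $x$ uniform on $\mathcal{M}$ and $\tilde{z}$ independent of $x$. The receiver hands each user's cavity statistic $\bar{x}_k$ to a per-user decoder. Over a coherence block of many symbols, the codeword of user $k$ therefore sees a memoryless channel whose transition law is $\CN(x,\bar{\sigma}^2_{\infty,\eta})$, up to a vanishing perturbation.

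I will make this precise as follows. The log-likelihood $\ln p(\bar{x}\mid x)$ is pseudo-Lipschitz of order two on the bounded alphabet, so the information density averaged over users converges to $I(x;x+\bar{\sigma}_{\infty,\eta}\tilde{z})$. A standard random-coding (Feinstein-type) argument applied per user then gives achievability of \eqref{eq:single_letter_rate}. The rate is called ``matched'' because it assumes the decoder uses the true Gaussian transition law.

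I will also record the consistency with the GMI. The $q$-based metric is a particular decoding metric, so its GMI (Corollary~\ref{thm:gmi}) cannot exceed the matched mutual information of the same channel. This yields the stated lower-bound relation.

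\textbf{Step 2: high-SNR expansion.} Substitute $\mathrm{MSE}_\infty^\eta=\sigma_z^2/(2-\alpha)+\mathcal{O}(\sigma_z^4)$ (Corollary~\ref{cor:snr_gap}) into $\bar{\sigma}^2_{\infty,\eta}=\sigma_z^2+\alpha\,\mathrm{MSE}_\infty^\eta$. This gives $\bar{\sigma}^2_{\infty,\eta}=2\sigma_z^2/(2-\alpha)+\mathcal{O}(\sigma_z^4)$ uniformly over the three levels.

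Next, write $f(s)\triangleq I(x;x+\sqrt{s}\,\tilde{z})$ and apply the mean value theorem in $s$. The I-MMSE relation gives $|f'(s)|=\mathrm{mmse}_p(s)/s^2$ per nat, up to constants, where $\mathrm{mmse}_p$ is the discrete-prior MMSE at noise variance $s$. The difference between the two rates is then bounded by $\sup_s|f'(s)|$ times $\mathcal{O}(\sigma_z^4)$, with the supremum taken over the segment between the two variances.

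\textbf{Main obstacle.} The hard step is controlling $f'$ near $s\approx 2\sigma_z^2/(2-\alpha)\to0$. A naive bound of the form $f'\sim 1/s$ would turn the $\mathcal{O}(\sigma_z^4)$ variance perturbation into only an $\mathcal{O}(\sigma_z^2)$ rate perturbation.

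I will instead use the exponential decay $\mathrm{mmse}_p(s)=\mathcal{O}(e^{-d_{\min}^2/(4s)})$ established in the justification of \eqref{eq:bod_fixed_point}. This makes $f'(s)$ exponentially small on the whole segment, since that segment stays within constant multiples of $\sigma_z^2$ for small $\sigma_z^2$. The rate difference is therefore $o(\sigma_z^4)$, in fact exponentially small, which is stronger than the claimed $\mathcal{O}(\sigma_z^4)$. The argument needs the remainder in Corollary~\ref{cor:snr_gap} to be small enough that both variances stay comparable, and it does.
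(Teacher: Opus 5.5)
Your Step~2 is sound, and it is more careful than the paper's proof. The paper only substitutes $\mathrm{MSE}_\infty^\eta \approx \sigma_z^2/(2-\alpha)$ into the mutual information and never bounds the effect of the $\mathcal{O}(\sigma_z^4)$ variance perturbation. Your observation is that $\lvert f'(s)\rvert = \mathrm{mmse}_p(s)/s^2$ is exponentially small on the whole segment, whereas a generic $1/s$ bound would give only $\mathcal{O}(\sigma_z^2)$. That is exactly what justifies the stated remainder. Your remark that the GMI cannot exceed the matched mutual information is also correct.

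The gap is in Step~1, where you claim to make single-letter achievability rigorous. Theorem~\ref{thm:decoupling} is a statement about one channel use: the empirical distribution \emph{across users} $k$ of $(x_k,\bar{x}_k)$ converges to the scalar AWGN pair. A Feinstein-type argument for user $k$'s codeword needs different inputs:
\begin{enumerate}
\item the $n$-letter transition law of that user's channel across channel uses;
\item the memorylessness of that channel;
\item concentration of that single user's \emph{time-averaged} information density $\frac{1}{n}\sum_t i\bigl(x_k(t);\bar{x}_k(t)\bigr)$.
\end{enumerate}
The convergence you derive from pseudo-Lipschitz test functions is of the cross-user average $\frac{1}{K}\sum_k i(x_k;\bar{x}_k)$, which is the wrong average. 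It says nothing about the joint law of $\{\bar{x}_k(t)\}_t$ for fixed $k$. With $\mathbf{H}$ held fixed over a coherence block, those statistics are coupled through the same channel matrix. Nor does the finite-$K$ channel become exactly Gaussian, so a decoder using the Gaussian law is itself mismatched before the limit.

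Nothing in the paper supplies these ingredients, and the paper's own proof explicitly stops short of your claim. It notes that Theorem~\ref{thm:decoupling} gives only weak convergence of the per-user distribution and of test-function averages. It treats splitting the $K$-dimensional mutual information into $K$ copies of $I(x;x+\bar{\sigma}_{\infty,\eta}\tilde{z})$ as the replica-symmetric prediction. Accordingly, it reads \eqref{eq:single_letter_rate} as the rate \emph{predicted} by decoupling.

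To close the gap you have two options.
\begin{itemize}
\item Adopt the paper's reading. Take ``in the decoupled regime'' to mean that each user faces the memoryless scalar channel \eqref{eq:decoupled_channel}. Achievability of $I(x;x+\bar{\sigma}_{\infty,\eta}\tilde{z})$ is then just the channel coding theorem, and no averaging over users is needed.
\item Genuinely extend decoupling to blocks. This requires:
\begin{itemize}
\item a vector-valued SE over $n$ channel uses showing the per-user $n$-letter channel converges to the memoryless Gaussian one, with vanishing cross-time noise covariance;
\item exchangeability to pass from empirical laws to single-user laws;
\item limits taken with $K\to\infty$ at fixed $n$ before $n\to\infty$, controlling error probability directly rather than through the information density.
\end{itemize}
\end{itemize}
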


\begin{proof}
By Theorem~\ref{thm:decoupling} (which decouples every level of the hierarchy), the per-user channel is equivalent -- in the distributional, test-function sense of that theorem -- to a scalar \ac{AWGN} channel with noise variance $\bar{\sigma}_{\infty,\eta}^2$.
The mutual information of this channel with discrete input $x \in \mathcal{M}$ is the maximum achievable rate over that equivalent channel. What Theorem~\ref{thm:decoupling} rigorously supplies is the weak convergence of the per-user distribution $(x_k,\bar{x}_k)$ and of pseudo-Lipschitz test-function averages (hence the \ac{MSE}); the further decomposition of the $K$-dimensional mutual information into $K$ copies of this scalar term is the replica-symmetric single-letter characterization~\cite{GuoVerdu2005CDMA} (Remark~\ref{rem:replica}), expected but not proven exact for $\alpha < 1$. We therefore read~\eqref{eq:single_letter_rate} as the single-letter rate predicted by decoupling rather than a fully proven finite-$K$ converse.
Substituting $\mathrm{MSE}_\infty^{\eta} \approx \sigma_z^2/(2 - \alpha)$ from Corollary~\ref{cor:snr_gap} gives $\bar{\sigma}_{\infty,\eta}^2 = \sigma_z^2 + \alpha \sigma_z^2/(2-\alpha) = 2\sigma_z^2/(2-\alpha)$ {for each $\eta$}, yielding~\eqref{eq:rate_high_snr}.
\end{proof}

\begin{remark}[Connection to Replica Analysis]
\label{rem:replica}
The single-letter formula~\eqref{eq:single_letter_rate} is the \ac{RS} prediction for the mutual information of the massive \ac{MIMO} channel under mismatched decoding. 
Under the \ac{RS} assumption (expected to hold for $\alpha < 1$ with \ac{iid} Gaussian channels~\cite{Bayati2011}), the \ac{SE} fixed point is the stationary point, {in a trial \ac{MSE} $m$,} of a replica-symmetric potential of the {schematic} form
\begin{align}
\label{eq:free_energy}
&\mathcal{F}_{\mathrm{RS}}({m}) \\ 
&= I\!\left(x; x + {\sqrt{\sigma_z^2 + \alpha m}}\, \tilde{z}\right) - \frac{\alpha}{2}\ln\!\left(1 + {\frac{\alpha m}{\sigma_z^2}}\right) + \text{const.}, \nonumber 
\end{align}
whose saddle point -- taken jointly with the conjugate order parameter that the schematic single-parameter form suppresses, as in the full replica-symmetric potential -- reproduces~\eqref{eq:fp_implicit}. This provides a variational characterization of the achievable rate that is dual to the operational \ac{SE} characterization.
This variational duality is specific to the \ac{OBD}: the free energy~\eqref{eq:free_energy} is stationary at the \ac{SE} fixed point only because $\eta_B$ is the Bayes posterior mean under the orbital prior $q$ (its per-iteration error equals the posterior variance, the Nishimori property invoked in Section~\ref{sec:three_se}). The \ac{OGD} and \ac{OPD} are not posterior means, so they admit no such free-energy potential; they still reach a decoupled fixed point (Theorem~\ref{thm:decoupling}) and hence an operational rate~\eqref{eq:single_letter_rate}, but not the variational characterization of this remark.
\end{remark}

\begin{proposition}[\ac{I-MMSE} Rate Gap and Exponential Capacity Achievement]
\label{prop:immse_rate_gap}
The achievable rate gap between the optimal receiver and any orbital denoiser $\eta \in \{\mathrm{B}, \mathrm{G}, \mathrm{P}\}$ is
\begin{equation}
\label{eq:rate_gap_correct}
\Delta R_{\eta}(\mathrm{SNR})
\;=\;
{\frac{1}{\ln 2}}
\int_{1/\bar{\sigma}_{\infty,\eta}^2}^{1/(\bar{\sigma}_\infty^{\mathrm{D}})^2}
\!\!\mathrm{mmse}_p(\zeta)\,\mathrm{d}\zeta,
\end{equation}
where $\mathrm{mmse}_p(\zeta) \triangleq \mathbb{E}[\norm{x - \mathbb{E}[x\mid\sqrt{\zeta}\,x\!+\!\tilde{z}]}^2]$,
$\tilde{z}\sim\mathcal{CN}(0,1)$, is the \ac{MMSE} under the true prior $p$ -- the true-prior counterpart of $\mathrm{mmse}_q$ from~\eqref{eq:mmse_Q_def}, the two related by $\zeta = 1/\bar{\sigma}^2$ -- at effective \ac{SNR} $\zeta$, and the effective noise levels are $\bar{\sigma}_{\infty,\eta}^2 = \sigma_z^2 + \alpha\,\mathrm{MSE}_\infty^{\eta}$ and $(\bar{\sigma}_\infty^{\mathrm{D}})^2 = \sigma_z^2 + \alpha\,\mathrm{MSE}_\infty^{\mathrm{D}}$.
At high \ac{SNR}, $\Delta R_{\eta}$ decays exponentially, at the same rate for all three levels,
\begin{equation}
\label{eq:rate_gap_bound}
\Delta R_{\eta}(\mathrm{SNR})
\;=\;
\mathcal{O}\!\left(
  \exp\!\left(-\frac{d_{\min}^2(2-\alpha)}{8\sigma_z^2}\right)
\right),
\end{equation}
where $d_{\min}$ is the minimum Euclidean distance of $\mathcal{M}$.
\end{proposition}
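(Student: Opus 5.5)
The plan has two parts: an integral identity and then a tail estimate. For the identity, I would invoke the I-MMSE relation of Guo, Shamai and Verd\'{u}~\cite{Guo2005} on the decoupled scalar channel of Theorem~\ref{thm:decoupling}. In natural units and for complex circular Gaussian noise, it reads $\tfrac{\mathrm{d}}{\mathrm{d}\zeta} I(x;\sqrt{\zeta}\,x+\tilde{z}) = \mathrm{mmse}_p(\zeta)$. The noise scaling is fixed by $\tilde z\sim\CN(0,1)$, and I would check it against the $\bar{\sigma}^2/2$ per real dimension used throughout Section~\ref{sec:se}. By Corollary~\ref{cor:single_letter_rate}, $R_\eta = I(x;x+\bar{\sigma}_{\infty,\eta}\tilde z)$ and $R_{\mathrm D}=I(x;x+\bar{\sigma}^{\mathrm D}_\infty\tilde z)$. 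Both channels are scale-equivalent to $\sqrt{\zeta}x+\tilde z$ with $\zeta=1/\bar{\sigma}^2$. Subtracting the two rates and applying the fundamental theorem of calculus along $\zeta$ gives~\eqref{eq:rate_gap_correct}, with the $1/\ln 2$ converting to bits. The ordering of the limits, and hence $\Delta R_\eta\ge 0$, follows from $\mathrm{MSE}_\infty^{\mathrm D}\le \mathrm{MSE}_\infty^{\eta}$ at high SNR, which is Corollary~\ref{cor:fp_ordering_hierarchy}. Note that the integrand is the \emph{true-prior} MMSE: the rate is the matched information of the decoupled channel, so only the noise level depends on $\eta$.

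For the exponential bound, I would bound the integral by its length times the supremum of the integrand:
\[
\Delta R_\eta \le \tfrac{1}{\ln 2}\Bigl(\tfrac{1}{(\bar{\sigma}^{\mathrm D}_\infty)^2}-\tfrac{1}{\bar{\sigma}^2_{\infty,\eta}}\Bigr)\sup_{\zeta\ge 1/\bar{\sigma}^2_{\infty,\eta}}\mathrm{mmse}_p(\zeta).
\]
The length is polynomial in $1/\sigma_z^2$. By~\eqref{eq:bod_fixed_point}, $(\bar{\sigma}^{\mathrm D}_\infty)^2=\sigma_z^2+\mathcal{O}(e^{-c/\sigma_z^2})$, and by Corollary~\ref{cor:snr_gap}, $\bar{\sigma}^2_{\infty,\eta}=2\sigma_z^2/(2-\alpha)+\mathcal{O}(\sigma_z^4)$. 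The length is therefore $\Theta(\alpha/(2\sigma_z^2))$, i.e.\ $\mathcal{O}(1/\sigma_z^2)$.

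For the integrand, $\mathrm{mmse}_p$ is non-increasing in $\zeta$. Its supremum is therefore attained at the left endpoint $\zeta_0=1/\bar{\sigma}^2_{\infty,\eta}$. The justification following~\eqref{eq:bod_fixed_point} gives $\mathrm{mmse}_p(\zeta)=\mathcal{O}(\mathrm{poly}(\zeta)\,e^{-d_{\min}^2\zeta/4})$. Substituting $\zeta_0 = (2-\alpha)/(2\sigma_z^2)+\mathcal{O}(1)$ gives the exponent $d_{\min}^2(2-\alpha)/(8\sigma_z^2)$. The $\mathcal{O}(\sigma_z^4)$ correction in $\bar{\sigma}^2_{\infty,\eta}$ shifts $\zeta_0$ by only $\mathcal{O}(1)$, so it changes the prefactor but not the exponent. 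This is what makes the rate identical for all three levels: $\mathrm{B}$, $\mathrm{G}$ and $\mathrm{P}$ share the leading $\bar{\sigma}^2_{\infty,\eta}$.

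The main obstacle is controlling the polynomial prefactors. The interval length $\mathcal{O}(\sigma_z^{-2})$ and the polynomial factor in the MMSE tail must be absorbed into the $\mathcal{O}(\cdot)$. Strictly, this requires either reading~\eqref{eq:rate_gap_bound} up to polynomial factors, or sharpening the estimate. To sharpen it, I would integrate the tail directly: $\int_{\zeta_0}^\infty \mathrm{poly}(\zeta)e^{-d_{\min}^2\zeta/4}\,\mathrm{d}\zeta = \mathcal{O}(\mathrm{poly}(\zeta_0)e^{-d_{\min}^2\zeta_0/4})$. I would also make rigorous the upper bound $\mathrm{mmse}_p(\zeta)\le \sum_{s\ne s'} p_s\norm{s-s'}^2 \Pr(\text{pairwise error})$ by using the suboptimal nearest-point estimator and a Gaussian tail (Chernoff) bound. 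This yields $\mathcal{O}(e^{-d_{\min}^2\zeta/4})$ with a constant prefactor, so only the single polynomial factor from integration remains.
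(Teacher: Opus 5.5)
Your proposal is correct and follows the paper's route: integrate the complex-channel I-MMSE identity (no factor $\tfrac12$) between $1/\bar{\sigma}_{\infty,\eta}^2$ and $1/(\bar{\sigma}_\infty^{\mathrm{D}})^2$, then substitute $\zeta_0=(2-\alpha)/(2\sigma_z^2)+\mathcal{O}(1)$ into a tail bound $\mathrm{mmse}_p(\zeta)\le C\,e^{-d_{\min}^2\zeta/4}$, which the paper cites from Guo et al.\ and you derive directly via the nearest-point estimator. One correction: once you have this constant-prefactor bound and integrate it directly, $\int_{\zeta_0}^{\infty}C\,e^{-d_{\min}^2\zeta/4}\,\mathrm{d}\zeta=(4C/d_{\min}^2)\,e^{-d_{\min}^2\zeta_0/4}$, so no polynomial factor survives and \eqref{eq:rate_gap_bound} holds exactly as stated; the spurious $\sigma_z^{-2}$ comes only from your length-times-supremum bound, which you can drop.
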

\begin{proof}
Apply the \ac{I-MMSE} identity {$dI/d(1/\bar{\sigma}^2)=\mathrm{mmse}_p(1/\bar{\sigma}^2)$ (in nats; the \emph{complex} scalar channel carries no factor $\tfrac12$, unlike the real phase channel of Corollary~\ref{cor:immse})}
\cite{Guo2005} to integrate between effective \ac{SNR} levels
$1/\bar{\sigma}_{\infty,\eta}^2$ and $1/(\bar{\sigma}_\infty^{\mathrm{D}})^2$ {(the $1/\ln 2$ converting nats to bits)}.
At high \ac{SNR}, Corollary~\ref{cor:snr_gap} gives
$1/\bar{\sigma}_{\infty,\eta}^2 = (2-\alpha)/(2\sigma_z^2)+\mathcal{O}(1)$ {for every $\eta \in \{\mathrm{B}, \mathrm{G}, \mathrm{P}\}$ -- the $\mathcal{O}(\sigma_z^4)$ spread among the three fixed points enters only the $\mathcal{O}(1)$ term} and
$1/(\bar{\sigma}_\infty^{\mathrm{D}})^2 = (1+o(1))/\sigma_z^2$.
Since $\mathrm{mmse}_p(\zeta)\le C\exp(-d_{\min}^2 \zeta/4)$
\cite[Prop.~1]{Guo2005}, substitution gives~\eqref{eq:rate_gap_bound} {uniformly in $\eta$}.
\end{proof}

This confirms that the orbital relaxation is, at every level of the hierarchy, not merely capacity-approaching but exponentially capacity-achieving: the rate loss decays at the same exponential rate as the optimal decoder's \ac{MSE}.

\emph{Operational interpretation at finite \ac{SNR}.}
At any fixed operating point, $\Delta R_{\eta}$ is a finite, computable quantity given
by~\eqref{eq:rate_gap_correct}. Taking $\eta = \mathrm{B}$ for concreteness, the \ac{MSE} gap $\mathrm{MSE}_\infty^{\mathrm{B}} - \mathrm{MSE}_\infty^{\mathrm{D}}$
from Theorem~\ref{thm:ot_bound} provides the difference in effective noise levels
($\bar{\sigma}_{\infty,\mathrm{B}}^2 - (\bar{\sigma}_\infty^{\mathrm{D}})^2 = \alpha(\mathrm{MSE}_\infty^{\mathrm{B}} - \mathrm{MSE}_\infty^{\mathrm{D}})$),
which in turn determines the integration width in~\eqref{eq:rate_gap_correct}.
At moderate \ac{SNR} where the exponential bound is loose, the rate gap can be evaluated
numerically via Monte Carlo (Remark~\ref{rem:gmi_computation}).

\vspace{-2ex}
\subsection{GMI Under Mismatched Decoding}
\label{sec:gmi}

Theorem~\ref{thm:decoupling} and Corollary~\ref{cor:single_letter_rate} price one mismatch, and one only. They identify \emph{which channel} the orbital receiver faces -- a scalar \ac{AWGN} channel whose noise variance $\bar{\sigma}_{\infty,\eta}^2$ is inflated by the excess error of the mismatched \emph{estimator} -- and the rate $R_\eta = I(x; x + \bar{\sigma}_{\infty,\eta}\tilde{z})$ attached to it is the \emph{matched} mutual information of that channel: the rate of a decoder that knows the true prior $p$. Our receiver does not. It scores candidate codewords with the orbital metric $q$, so a second and logically independent mismatch is still unpriced -- that of the \emph{decoding rule} itself. This subsection supplies the missing quantity: the largest rate the actual $q$-based decoder extracts from the channel decoupling handed us. In short, decoupling settles the channel; the \ac{GMI} settles what our metric can extract from it, and the two penalties compose without overlap.
We now derive a single-letter \emph{achievable rate} under the mismatched orbital decoding metric via the \ac{GMI} framework of Merhav et al.~\cite{Merhav1994TIT}, evaluated with the \ac{I-MMSE} machinery of Guo et al.~\cite{Guo2005}.
The mismatched-decoding literature this draws on begins with the \ac{LM} rate of Csisz\'ar and K\"orner~\cite{csiszar_tit_1981}, subsequently extended to multiple-access channels by Lapidoth~\cite{lapidoth_tit_1996}, refined to error exponents and second-order rates~\cite{scarlett_tit_2014}, and completed by a general formula for the mismatch capacity~\cite{somekh_baruch_tit_2015}. We use only the \ac{iid}-ensemble \ac{GMI}, the weakest of these guarantees and the one matching an unconstrained random codebook.

\begin{definition}[Orbital Decoding Metric]
\label{def:cbm_metric}
The orbital decoding metric is the mismatched log-likelihood induced by the orbital prior from Definition~\ref{def:orbital}, given by
\begin{equation}
\label{eq:cbm_metric}
q(\bar{x}\! \mid\! x) \!\triangleq\!
\frac{r_\ell}{\pi\bar{\sigma}^2}
\exp\!\left(\!-\tfrac{\norm{\bar{x}}^2 \!+\! R_\ell^2}{\bar{\sigma}^2}\!\right)
\!\exp\!\left(\kappa_\ell\cos(\phi \!-\! \angle\bar{x})\right)
\bigg|_{\genfrac{}{}{0pt}{}{\ell = \ell(x)}{\phi = \angle x}},
\end{equation}
where $\ell(x) \triangleq \operatorname{argmin}_\ell\norm{\norm{x} - R_\ell}$ identifies the ring of $x$.

The metric is read off the orbital model directly. Under Definition~\ref{def:orbital} ring $\ell$ carries mass $r_\ell$ spread uniformly in phase, so the orbital joint density of a symbol and its observation is $r_\ell$ times the \ac{AWGN} kernel.

The factor $\exp(\kappa_\ell\cos(\phi - \angle\bar{x}))$ is the von Mises angular likelihood (cf.\ eq.~\eqref{eq:vm_like}), which resolves intra-ring phase at cost $O(1)$.
This single metric is the decoding rule of the orbital receiver, common to the whole hierarchy: the \ac{OGD} and \ac{OPD} are estimation shortcuts that replace its von Mises phase factor by a Gaussian surrogate $\exp(-\tfrac12\kappa_\ell(\phi-\angle\bar{x})^2)$ and a hard ring-projection, respectively. These agree with~\eqref{eq:cbm_metric} up to $\mathcal{O}(\kappa_\ell^{-2})$ and induce the same ring-decision regions, so they enter the achievable rate below only through the fixed-point noise $\bar{\sigma}_\infty^2$, at $\mathcal{O}(\sigma_z^4)$.
The same metric also serves, without modification, as the soft-output rule of a bit-interleaved coded receiver~\cite{Caire1998BICM}: bit \acp{LLR} for a soft-input forward-error-correction decoder are read directly from~\eqref{eq:cbm_metric}, and its radial/angular factorization evaluates them at the hierarchy's $\mathcal{O}(L)$-to-$\mathcal{O}(1)$ cost rather than the $\mathcal{O}(M)$ of a full max-log demapper -- with the \ac{GMI} of Corollary~\ref{thm:gmi} furnishing the exact achievable rate of that coded receiver under mismatched (bit-metric) decoding~\cite{Martinez2009BICM}.
\end{definition}

\begin{corollary}[\ac{GMI} of the Orbital Receiver]
\label{thm:gmi}
At the \ac{SE} fixed point with effective noise variance $\bar{\sigma}_\infty^2 = \sigma_z^2 + (K/N)\,\mathrm{MSE}_\infty$, the \ac{GMI} -- the largest rate achievable with an \ac{iid} random-coding ensemble under the mismatched orbital decoding metric (the mismatch capacity itself may be larger, e.g.\ via constant-composition ensembles~\cite{Merhav1994TIT}) -- is given in the standard single-letter form~\cite[eq.~(25)]{Martinez2009BICM} by
\begin{equation}
\label{eq:gmi}
I_{\mathrm{GMI}} = \sup_{\varsigma \geq 0} \left\{ \E_{x,\bar{x}}\!\left[\ln \frac{q(\bar{x} \mid x)^\varsigma}{\E_{x'}\bigl[q(\bar{x} \mid x')^\varsigma\bigr]}\right] \right\},
\end{equation}
where $\bar{x} = x + \bar{\sigma}_\infty \tilde{z}$, $\tilde{z} \sim \CN(0,1)$, and $x, x'$ are drawn independently from the uniform discrete prior over $\mathcal{M}$.
\end{corollary}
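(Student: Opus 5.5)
The plan is to obtain Corollary~\ref{thm:gmi} by specializing the standard i.i.d.-ensemble GMI theorem to the decoupled scalar channel of Theorem~\ref{thm:decoupling}, with the orbital metric~\eqref{eq:cbm_metric} as the decoding rule. The proof has three steps.

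\textbf{Step 1: reduce to a memoryless scalar channel.} By Theorem~\ref{thm:decoupling}, at the SE fixed point each user's pair $(x_k,\bar{x}_k)$ is distributed, in the weak/pseudo-Lipschitz sense, as $(x,x+\bar{\sigma}_\infty\tilde{z})$ with $\tilde{z}\sim\CN(0,1)$ independent of $x$. Following Corollary~\ref{cor:single_letter_rate}, I treat the $K$ decoupled channels as i.i.d. uses of this memoryless AWGN channel $W(\bar{x}\mid x)$. This is the same single-letter reading that corollary adopts, and I flag it as such rather than claim a finite-$K$ converse.

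\textbf{Step 2: bound the error probability of the $q$-decoder.} Fix a codebook of $e^{nR}$ codewords drawn i.i.d. from the uniform prior on $\mathcal{M}$. The decoder maximizes $\prod_i q(\bar{x}_i\mid x_i)$. The metric depends on $x$ only through its ring index and phase, and it is positive and finite for $\bar{\sigma}_\infty>0$, so it is a legitimate symbolwise metric. For any $\varsigma\ge 0$, I bound the pairwise probability that an independent codeword $x'^n$ scores at least as high as the transmitted one by a Chernoff/Markov bound on $\prod_i\bigl(q(\bar{x}_i\mid x'_i)/q(\bar{x}_i\mid x_i)\bigr)^\varsigma$. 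Averaging over $x'$ gives the factor $\E_{x'}[q(\bar{x}\mid x')^\varsigma]/q(\bar{x}\mid x)^\varsigma$ per letter. A union bound over the $e^{nR}$ competitors then shows the error probability vanishes whenever $R$ is below the expectation in~\eqref{eq:gmi}, by the LLN applied to the i.i.d. log-ratios. Optimizing over $\varsigma$ yields the supremum. This is exactly the derivation behind~\cite[eq.~(25)]{Martinez2009BICM} and~\cite{Merhav1994TIT}, so I cite it and verify only that its hypotheses hold.

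\textbf{Step 3: check integrability and state the scope.} The step I expect to be the main obstacle is verifying that the log-ratio has finite mean, so that the LLN applies. I handle it directly. $\ln q(\bar{x}\mid x)$ is a quadratic in $|\bar{x}|$ plus the term $\kappa_\ell\cos(\cdot)$, which is linear in $|\bar{x}|$. The log-partition $\ln\E_{x'}[q^\varsigma]$ is a log-sum of $M$ such exponentials, so it is sandwiched between the maximum term and that term plus $\ln M$. Both are polynomial in $|\bar{x}|$, and $\bar{x}$ has Gaussian tails, so the expectation is finite and the function is continuous in $\varsigma$.

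Finally, I note two points about scope. The supremum may be unattained, which is harmless. The quantity is the i.i.d.-ensemble GMI, not the mismatch capacity, which constant-composition ensembles may exceed.
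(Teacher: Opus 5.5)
Your proposal is correct and follows essentially the same route as the paper: you reduce to the decoupled scalar channel in the distributional sense of Theorem~\ref{thm:decoupling}, and you invoke the standard i.i.d.-ensemble single-letter GMI formula of~\cite{Martinez2009BICM}. The paper defers validity on the continuous output alphabet to Ganti--Lapidoth--Telatar~\cite{Ganti2000}, whereas you sketch the Chernoff/union-bound achievability argument and check integrability of the log-ratio directly. That check secures achievability, but the ``largest rate'' (ensemble-tightness) half still rests on the cited general-alphabet result, and~\cite{Ganti2000} is the right reference for it rather than the discrete-alphabet~\cite{Merhav1994TIT}.
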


\begin{proof}
The expression~\eqref{eq:gmi} is the \ac{GMI} of a mismatched decoder in its standard single-letter form~\cite{Martinez2009BICM}, whose validity for the \emph{continuous} output alphabet at hand follows from the general-alphabet treatment of Ganti, Lapidoth, and Telatar~\cite{Ganti2000}.

\begin{figure}[H]
  \centering
  \includegraphics[width=\columnwidth]{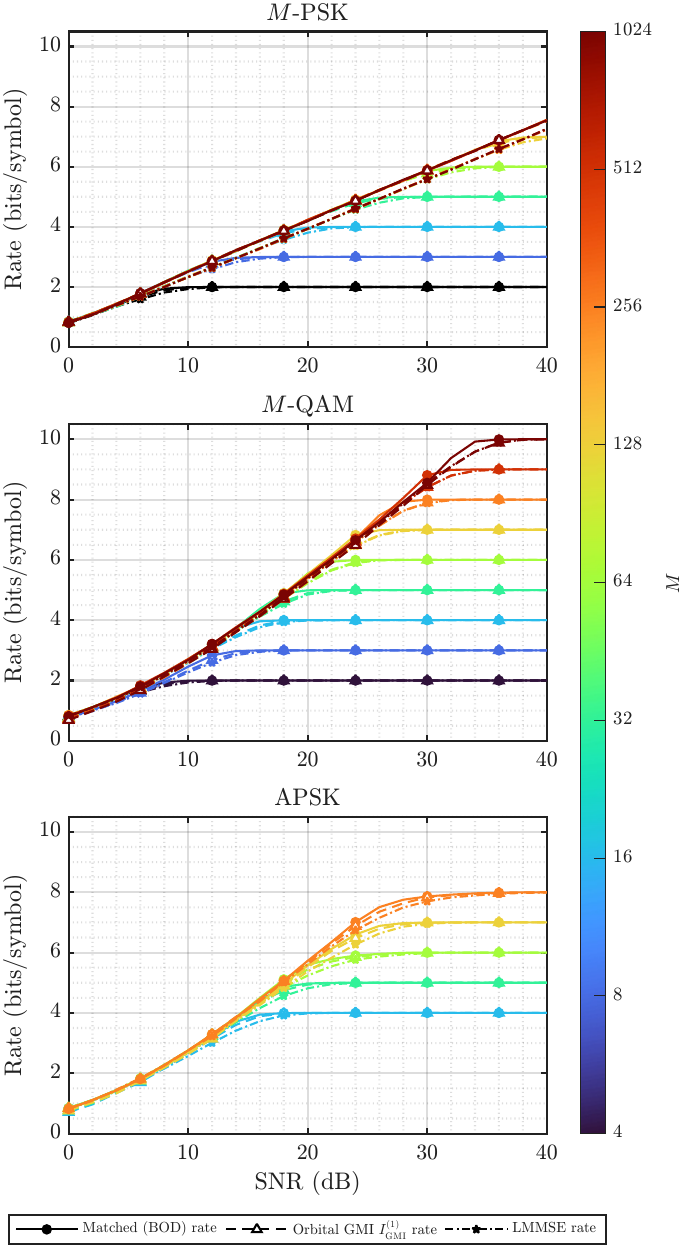}
  \vspace{-4ex}
  \caption{Achievable rate versus \ac{SNR} at $\alpha = 0.5$ for all $M$-\ac{PSK} and $M$-\ac{QAM} orders $M \in \{4,\ldots,1024\}$ and the DVB-S2/S2x \ac{APSK} constellations $M \in \{16,\ldots,256\}$: the $\varsigma = 1$ \ac{GMI} lower bound~\eqref{eq:gmi_lower} of the mismatched orbital receiver at its \ac{SE} fixed point against the matched \ac{BOD} and \ac{LMMSE} rates. Line style encodes the rate curve (shared legend), color the order $M$ (shared colorbar); each pair saturates at its ceiling $\log_2 M$. The mismatch loss vanishes exponentially in \ac{SNR} (Proposition~\ref{prop:immse_rate_gap}).}
  \label{fig:GMI_rate}
  \vspace{-2ex}
\end{figure}

The \ac{SE} framework (Section~\ref{sec:se}) establishes that in the large-system limit, the vector channel decouples into $K$ parallel scalar channels, each with the equivalent model $\bar{x} = x + \bar{\sigma}_\infty \tilde{z}$.
The mismatched decoder employs the metric $q(\bar{x} \mid x)$ from Definition~\ref{def:cbm_metric} rather than the true discrete likelihood.
By the underlying random-coding analysis~\cite{Ganti2000,Martinez2009BICM}, the supremum over the parameter $\varsigma \geq 0$ yields the largest rate achievable with \ac{iid} random-coding under this mismatched metric.
The \ac{LM} rate of~\cite{Merhav1994TIT}, achievable with constant-composition ensembles, dominates and is not pursued here. The reduction to the scalar channel is taken in the distributional sense of Theorem~\ref{thm:decoupling}.
\end{proof}
\vspace{-1ex}

At the \ac{SE} fixed point, the \ac{GMI} admits the following single-letter characterization that connects directly to the \ac{OD} outputs.

\begin{corollary}[Single-Letter GMI at the SE Fixed Point]
\label{cor:gmi_single_letter}
Evaluating~\eqref{eq:gmi} at $\varsigma = 1$ yields the  achievable lower bound
\begin{align}
I_{\mathrm{GMI}} &\geq I_{\mathrm{GMI}}^{(1)} \triangleq \E_{x,\tilde{z}}\!\left[\ln \frac{q(\bar{x} \mid x)}{\E_{x'}[q(\bar{x} \mid x')]}\right] \nonumber \\
&= H(x) - \E_{x,\tilde{z}}\!\left[\ln \frac{\sum_{m=1}^{M} q(\bar{x} \mid x = s_m)}{q(\bar{x} \mid x)}\right],
\label{eq:gmi_lower}
\end{align}
where $H(x) = \ln M$ nats under the (uniform) true prior $p$.

Furthermore, at high \ac{SNR}, we have
\begin{equation}
\label{eq:gmi_high_snr}
I_{\mathrm{GMI}} = \log_2 M - \mathcal{O}\!\left(e^{-c/\sigma_z^2}\right) \quad \text{bits/symbol},
\end{equation}
for a constant $c > 0$ governed by the minimum distance of $\mathcal{M}$: the rate loss vanishes \emph{exponentially} in the \ac{SNR}, consistent with Proposition~\ref{prop:immse_rate_gap} above. (The \ac{MSE} gap of Corollary~\ref{cor:snr_gap} is linear in $\sigma_z^2$; the corresponding \emph{rate} gap is not -- the \ac{I-MMSE} integral that converts one into the other is dominated by exponentially rare decision-boundary events.)
\end{corollary}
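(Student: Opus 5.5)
The proof has two parts. The first is the $\varsigma=1$ lower bound~\eqref{eq:gmi_lower}, which is immediate. Since~\eqref{eq:gmi} is a supremum over $\varsigma\ge 0$, its value at any admissible $\varsigma$, in particular $\varsigma=1$, is a lower bound. For the second equality, note that $x'$ is uniform over $\mathcal{M}$, so $\E_{x'}[q(\bar{x}\mid x')]=\frac{1}{M}\sum_{m=1}^M q(\bar{x}\mid s_m)$. Pulling the $\ln(1/M)$ out of the logarithm produces $H(x)=\ln M$ minus the expected log-ratio shown. No property of $q$ is used here beyond positivity; the metric~\eqref{eq:cbm_metric} is strictly positive, so every logarithm is finite.

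The second part is the high-SNR statement~\eqref{eq:gmi_high_snr}. The idea is to sandwich: use $I_{\mathrm{GMI}}\ge I^{(1)}_{\mathrm{GMI}}$ from below, and from above use that the GMI never exceeds the matched mutual information $I(x;\bar{x})\le\log_2 M$. It then suffices to show the penalty term $\E\bigl[\ln\bigl(1+\sum_{s_m\neq x} q(\bar{x}\mid s_m)/q(\bar{x}\mid x)\bigr)\bigr]$ is $\mathcal{O}(e^{-c/\sigma_z^2})$.

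To do this I will write the metric ratio explicitly. For $s_m$ on ring $\ell'$ and $x$ on ring $\ell$, the log-ratio is
\[
\ln\tfrac{r_{\ell'}}{r_\ell}-\frac{\norm{\bar{x}-s_m}^2-\norm{\bar{x}-x}^2}{\bar{\sigma}_\infty^2}.
\]
This holds because $\kappa\cos(\phi-\angle\bar{x})$ combined with the radial factor reconstructs exactly $-\norm{\bar{x}-s}^2/\bar{\sigma}^2$, by~\eqref{eq:loglik_decomp}. So, up to the bounded prior factors $r_{\ell'}/r_\ell$, the metric evaluated on the discrete support is the true Gaussian likelihood. I then bound $\ln(1+u)\le u$ and take expectations using the standard pairwise computation: with $\bar{x}=x+\bar{\sigma}_\infty\tilde{z}$, the expectation of $\exp\bigl(-(\norm{\bar{x}-s}^2-\norm{\bar{x}-x}^2)/\bar{\sigma}^2\bigr)$ is not small, since the mean of the exponential of a Gaussian gap is $1$. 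So a naive $\E[u]$ bound fails. Instead I split on the event $\mathcal{E}$ that the noise projection on $s-x$ exceeds half the distance. On $\mathcal{E}^c$ the ratio is at most $e^{-\norm{s-x}^2/(2\bar{\sigma}^2)}$. On $\mathcal{E}$, which has probability $\Theta(e^{-\norm{s-x}^2/(4\bar{\sigma}^2)})$, the log term is bounded by the log-gap itself, which is polynomial in $1/\bar{\sigma}$. This is the same near-boundary argument used for~\eqref{eq:bod_fixed_point}.

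Summing over the $M-1$ competitors gives a penalty of order $\mathrm{poly}(1/\bar{\sigma}_\infty)\,e^{-d_{\min}^2/(4\bar{\sigma}_\infty^2)}$. Substituting $\bar{\sigma}_\infty^2=2\sigma_z^2/(2-\alpha)+\mathcal{O}(\sigma_z^4)$ from Corollary~\ref{cor:snr_gap}, and absorbing the polynomial prefactor by slightly shrinking the constant, yields $c=(2-\alpha)d_{\min}^2/8-\epsilon$, matching Proposition~\ref{prop:immse_rate_gap}. Converting from nats to bits finishes the argument.

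The main obstacle is the upper tail on $\mathcal{E}$. There the ratio is unbounded, so I will need a Gaussian-tail bound on $\E[\text{gap}\,\mathbf{1}_{\mathcal{E}}]$ rather than a bound on the ratio itself. I also need to check that the ring-prior factors $\ln(r_{\ell'}/r_\ell)$ only shift thresholds by $\mathcal{O}(\bar{\sigma}^2)$, which is harmless to the exponent.
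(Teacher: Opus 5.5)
Your route differs from the paper's, and once one step is repaired it is correct and more explicit than the paper's argument.

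\textbf{Comparison with the paper.} The paper splits $\log_2 M - I_{\mathrm{GMI}}$ into two terms: the matched equivocation $\log_2 M - I(x;\bar{x})$ and the metric-mismatch penalty $I(x;\bar{x}) - I_{\mathrm{GMI}}$. It argues each qualitatively in polar terms: ring misdetection is exponentially rare, and the von Mises factor separates adjacent phases by $\exp(\kappa_\ell[1-\cos(2\pi/M_\ell)])$.

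You instead sandwich $I^{(1)}_{\mathrm{GMI}} \le I_{\mathrm{GMI}} \le I(x;\bar{x}) \le \ln M$ and bound the single quantity $\ln M - I^{(1)}_{\mathrm{GMI}}$. The middle inequality is Gibbs' inequality applied to the normalized tilted metric, which is a conditional pmf on $\mathcal{M}$. Your single quantity dominates both of the paper's terms at once.

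Your observation that on $\mathcal{M}$ the metric~\eqref{eq:cbm_metric} is exactly $r_{\ell(x)}$ times the true Gaussian likelihood is sharper than anything the paper uses. It holds because the radial factor and the von Mises kernel recombine via~\eqref{eq:loglik_decomp}. Three consequences follow:
\begin{enumerate}
\item The phase relaxation does not enter the metric at all; it acts only through the fixed-point noise $\bar{\sigma}_\infty^2$.
\item The sole metric mismatch is the ring weight, so for $M$-PSK $I^{(1)}_{\mathrm{GMI}} = I(x;\bar{x})$ exactly.
\item Everything reduces to pairwise Gaussian log-likelihood-ratio computations with an explicit exponent.
\end{enumerate}
You also rightly note that $\E[u]=1$ defeats the naive bound, a point the paper's ``exponentially large ratio'' argument passes over.

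\textbf{The step that fails.} As written, the claim that on $\mathcal{E}^c$ the ratio is at most $e^{-\norm{s-x}^2/(2\bar{\sigma}^2)}$ is false. Let $d=\norm{s-x}$ and let $\nu$ be the noise projection on $(s-x)/d$. The gap is $\norm{\bar{x}-s}^2-\norm{\bar{x}-x}^2 = d^2 - 2d\nu$. On $\mathcal{E}^c=\{\nu\le d/2\}$ this gap is only nonnegative, and the ratio tends to $r_{\ell'}/r_\ell$ near the bisector.

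Your pointwise bound actually needs $\nu \le d/4$. Splitting there leaves a complementary event of probability $\approx e^{-d^2/(16\bar{\sigma}^2)}$. That is still exponential, so it suffices for the corollary as stated (any $c>0$), but it does not give the exponent $(2-\alpha)d_{\min}^2/8$ you claim.

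\textbf{The fix.} Keep the split at the bisector and use a change of measure. Write $u=(r_{\ell'}/r_\ell)\,p(\bar{x}\mid s)/p(\bar{x}\mid x)$.
\begin{enumerate}
\item On $\{u\le 1\}$, bound $\ln(1+u)\le u$. Then $\E_x[u\,\mathbf{1}\{u\le 1\}]=(r_{\ell'}/r_\ell)\Pr_s(u\le 1)$, which is a Gaussian tail of order $e^{-d^2/(4\bar{\sigma}^2)}$.
\item On $\{u>1\}$, bound $\ln(1+u)\le\ln 2+\ln u$. Here $\ln u - \ln(r_{\ell'}/r_\ell) = -(d^2-2d\nu)/\bar{\sigma}^2 \sim \mathcal{N}(-d^2/\bar{\sigma}^2,\,2d^2/\bar{\sigma}^2)$ exactly. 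The contribution is therefore an explicit first-moment Gaussian tail of the same order.
\end{enumerate}
Use $\ln(1+\sum_m u_m)\le\sum_m\ln(1+u_m)$ to reduce to pairs. This yields a penalty $\mathcal{O}(e^{-d_{\min}^2/(4\bar{\sigma}_\infty^2)})$, and substituting $\bar{\sigma}_\infty^2$ recovers your claimed constant.
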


\begin{proof}
Since the supremum in~\eqref{eq:gmi} dominates any fixed $\varsigma$, the value at $\varsigma = 1$ is an achievable lower bound~\cite{Merhav1994TIT}; we note it is \emph{not} the \ac{LM} constant-composition rate, which is a distinct (generally larger) quantity.
For the high-\ac{SNR} expansion, split the loss as $\log_2 M - I_{\mathrm{GMI}} = \bigl(\log_2 M - I(x;\bar{x})\bigr) + \bigl(I(x;\bar{x}) - I_{\mathrm{GMI}}\bigr)$ at the fixed-point noise $\bar{\sigma}_\infty^2 = 2\sigma_z^2/(2-\alpha) + \mathcal{O}(\sigma_z^4)$.
The first term is the equivocation of the discrete-input Gaussian channel, which decays as $\mathcal{O}(e^{-d_{\min}^2/(4\bar{\sigma}_\infty^2)})$ since it is governed by the pairwise symbol-error probability.
The second term is the mismatch penalty of the orbital metric: as $\bar{\sigma}_\infty^2 \to 0$ the metric $q(\bar{x} \mid x)$ selects the correct ring up to the exponentially rare misdetections of~\eqref{eq:ring_prob}, and within the ring the von Mises factor separates the true phase from its nearest competitor (angular distance $2\pi/M_\ell$) by the exponentially large ratio $\exp\bigl(\kappa_\ell[1 - \cos(2\pi/M_\ell)]\bigr)$, so this penalty is likewise $\mathcal{O}(e^{-c/\bar{\sigma}_\infty^2})$.
Collecting constants yields~\eqref{eq:gmi_high_snr}, in agreement with the exponential rate-gap bound of Proposition~\ref{prop:immse_rate_gap}.
\end{proof}

\begin{remark}[Relationship to Corollary~\ref{cor:capacity}]
\label{rem:gmi_vs_capacity}
Corollary~\ref{cor:capacity} proves $\lim_{\mathrm{SNR} \to \infty} C_{\mathrm{SE}}(\mathrm{SNR}) = \log_2 M$ but does not quantify the finite-\ac{SNR} rate loss. 
Corollary~\ref{cor:gmi_single_letter} fills this gap: the \ac{GMI} provides the \emph{exact} achievable rate at every operating point, not just asymptotically. 
The rate penalty at finite \ac{SNR} is precisely the \ac{I-MMSE} integral of the excess \ac{MSE} induced by the continuous phase relaxation.
\end{remark}

\begin{remark}[Computational Evaluation]
\label{rem:gmi_computation}
The \ac{GMI} in~\eqref{eq:gmi} can be evaluated numerically via Monte Carlo using the same samples generated for the \ac{SE} recursion~\eqref{eq:se_mismatch}. 
The optimization over $\varsigma$ is one-dimensional and concave, admitting efficient bisection. 
This provides a complete rate-\ac{SNR} characterization of the orbital receiver at negligible additional computational cost beyond the \ac{SE} evaluation.
\end{remark}

The rate evaluations of this section adopt $\alpha = 0.5$, rather than the $\alpha = 0.25$ of the \ac{SE} figures of Section~\ref{sec:se}, for a reason supplied by Corollary~\ref{cor:load_gap}: the margin over the linear baseline is $3.7$~dB at $\alpha = \tfrac14$ but $4.8$~dB at $\alpha = \tfrac12$, so the heavier load separates the \ac{LMMSE} curve from the orbital ones visibly without approaching the unit-load regime in which the linear fixed point ceases to obey $\sigma_z^2/(1-\alpha)$. 

\begin{figure}[H]
  \centering
  \includegraphics[width=\columnwidth]{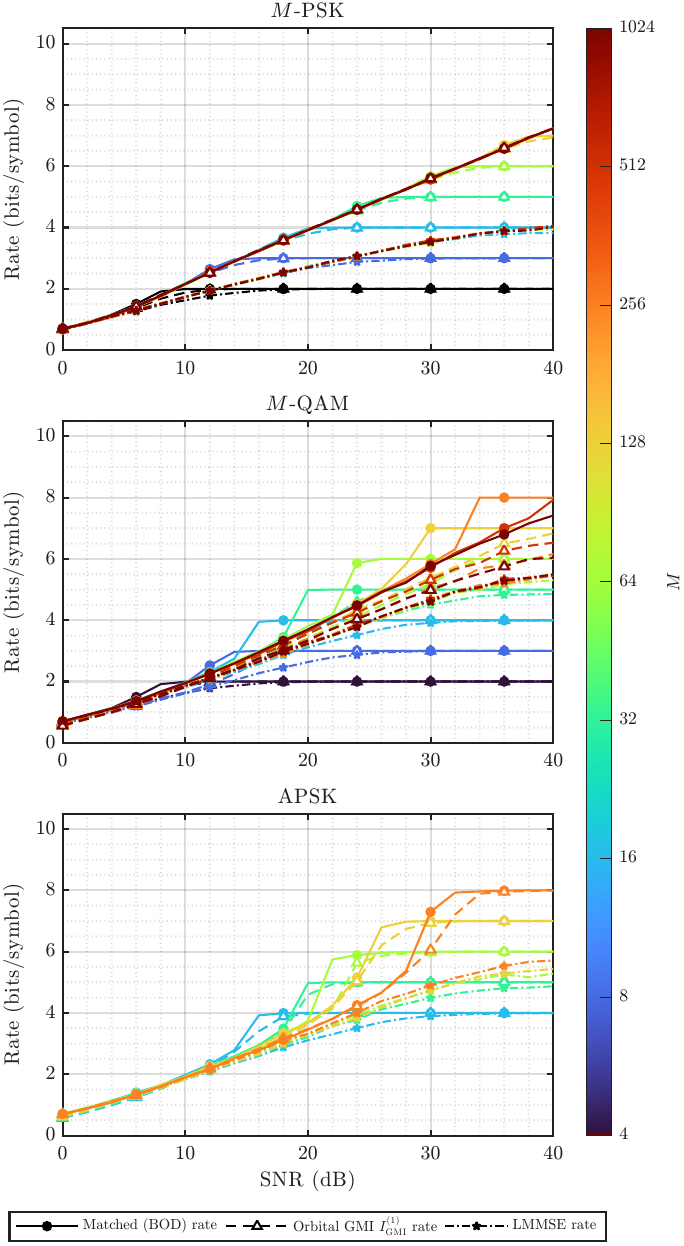}
  \vspace{-3ex}
  \caption{Achievable rate versus \ac{SNR} at $\alpha = 1$ for all $M$-\ac{PSK} and $M$-\ac{QAM} orders $M \in \{4,\ldots,1024\}$ and the DVB-S2/S2x \ac{APSK} constellations $M \in \{16,\ldots,256\}$: the $\varsigma = 1$ \ac{GMI} lower bound~\eqref{eq:gmi_lower} of the mismatched orbital receiver at its \ac{SE} fixed point against the matched \ac{BOD} and \ac{LMMSE} rates.}
  \label{fig:GMI_rate_a1}
  \vspace{-2ex}
\end{figure}

Figure~\ref{fig:GMI_rate_a1} then pushes to $\alpha = 1$ to exhibit that breakdown directly.
Figure~\ref{fig:GMI_rate} carries out this evaluation at $\alpha = 0.5$ across all orders of the three constellation families: the $\varsigma = 1$ \ac{GMI} lower bound $I_{\mathrm{GMI}}^{(1)}$ of~\eqref{eq:gmi_lower} for the orbital receiver, evaluated at its own \ac{SE} fixed point, is plotted against the matched rate $I(x; x + \bar{\sigma}_\infty^{\mathrm{D}} \tilde{z})$ of the \ac{BOD} and the \ac{LMMSE} rate $I(x; x + \bar{\sigma}_\infty^{\mathrm{L}} \tilde{z})$.
For every order the three rate curves are separated by a fraction of a bit at moderate \ac{SNR} and merge as the \ac{SNR} grows -- the rate loss vanishing exponentially, as Proposition~\ref{prop:immse_rate_gap} predicts -- with both attaining $\log_2 M$.

Finally, Fig~\ref{fig:GMI_rate_a1} carries out the same evaluation at $\alpha = 1$.
Unlike the $\alpha < 1$ case, the \ac{LMMSE} rate degrades significantly while the orbital rates maintain a better performance throughout.

\subsection{MMSE Dimension and the I-MMSE Bridge}
\label{sec:MMSE_dim}

The two preceding subsections leave an apparent contradiction on the table, and resolving it is the purpose of this one. 
Section~\ref{sec:asymptotic_gap} showed that the relaxation costs an \ac{MSE} penalty that is \emph{linear} in $\sigma_z^2$ -- a bounded but firmly non-vanishing \ac{ASG}. Corollary~\ref{thm:gmi} and Proposition~\ref{prop:immse_rate_gap} nonetheless report a \emph{rate} penalty that vanishes \emph{exponentially}.
Both are correct, and the reconciliation is the \ac{I-MMSE} identity: rate is the \emph{integral} of the \ac{MMSE} over \ac{SNR}, so a penalty in \ac{MSE} is charged against the rate only in proportion to how much \ac{MMSE} curve remains to be accumulated -- and at the high-\ac{SNR} operating point that curve has already collapsed to exponentially small values.
A linear estimation penalty levied where estimation no longer buys information is an exponentially small rate penalty.
What follows makes this bridge quantitative in both directions: the \ac{MMSE} dimension isolates the estimation-side invariant of the relaxation ($d = \tfrac12$, one freed real coordinate, hence linear decay and a \emph{finite} \ac{ASG}), and the \ac{I-MMSE} integral converts it into the information-side statement. 
The finiteness of $d$ makes the penalty a parallel \ac{SNR} shift rather than a rate ceiling.

\begin{proposition}[MMSE Dimensions of the Denoiser Hierarchy]
\label{prop:mmse_dimension}
For a denoiser $\eta \in \{\mathrm{D}, \mathrm{B}, \mathrm{G}, \mathrm{P}\}$ acting on the scalar channel $\bar{x} = x + \bar{\sigma}\tilde{z}$ (with $x \in \mathcal{M}$, $\tilde{z} \sim \CN(0,1)$) at input signal-to-noise ratio $\mathrm{SNR} = 1/\bar{\sigma}^2$, define its \emph{per-symbol mean-squared error}
\begin{equation}
\label{eq:mmse_eta_def}
\mathrm{MMSE}_\eta(\mathrm{SNR}) \;\triangleq\; \E_{x,\tilde{z}}\!\left[\,\norm{x - \eta(\bar{x};\, 1/\mathrm{SNR})}^2\,\right].
\end{equation}

This equals the true \ac{MMSE} for the Bayes-optimal \ac{BOD} (and for the \ac{OBD}, whose mismatched error reduces to the matched \ac{MMSE} under the orbital prior $q$ by circular symmetry), and the achieved error for the non-Bayes \ac{OGD} and \ac{OPD}. The \emph{\ac{MMSE} dimension}~\cite{WuTIT2011,Renyi1959} of denoiser $\eta$ is then (a quantity that, under mild regularity, equals \emph{half} the R\'enyi information dimension of the input measure on $\Real^2$, the factor $\tfrac12$ being the per-complex-dimension normalization adopted throughout; the operational meaning of the latter is fixed by the almost-lossless analog compression theorem of Wu and Verd\'u~\cite{WuTIT2011})
\begin{equation}
\label{eq:mmse_dim_def}
d_\eta \;\triangleq\; \lim_{\mathrm{SNR}\to\infty}\mathrm{SNR}\cdot\mathrm{MMSE}_\eta(\mathrm{SNR}).
\end{equation}

The discrete \ac{BOD} and the entire orbital hierarchy satisfy
\begin{subequations}
  \begin{eqnarray}
  &d_{\mathrm{D}} = 0,&\\ 
  &d_{\mathrm{B}} = d_{\mathrm{G}} = d_{\mathrm{P}} = \tfrac{1}{2}.&
  \end{eqnarray}
\end{subequations}

The dimensional gap $d_\eta - d_{\mathrm{D}} = 1/2$ per complex dimension is the information-theoretic signature of the continuous phase relaxation; its finiteness guarantees the \ac{ASG} remains bounded (Corollary~\ref{cor:snr_gap}). The value is $\tfrac12$ rather than $1$ because the relaxation frees exactly one of the two real degrees of freedom of a complex symbol -- the phase -- while the radius stays pinned to the discrete ring set; the estimator therefore pays the \ac{MMSE}-dimension price of a single continuous real coordinate.
The value $\tfrac12$ is shared by all three orbital denoisers because it is fixed by the common leading-order decay $\hat\sigma^2 \sim \bar\sigma^2/2$ (Corollary~\ref{cor:snr_gap}), independent of which denoiser realizes it.
\end{proposition}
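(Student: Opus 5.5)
The proof splits into the discrete case $d_{\mathrm{D}}=0$ and the orbital case $d_\eta=\tfrac12$, and in both we evaluate the limit \eqref{eq:mmse_dim_def} directly, i.e.\ we show that $\bar{\sigma}^{-2}\,\mathrm{MMSE}_\eta(1/\bar{\sigma}^2)$ converges as $\bar{\sigma}^2\to0^+$.

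\emph{The discrete \ac{BOD}.} For $\eta_{\mathrm{D}}$ the per-symbol error equals the averaged posterior variance by \eqref{eq:se_optimal}. The near-boundary argument justifying \eqref{eq:bod_fixed_point} gives $\mathrm{MMSE}_{\mathrm{D}}(\bar{\sigma}^2)=\mathcal{O}\bigl(\mathrm{poly}(1/\bar{\sigma})\,e^{-d_{\min}^2/(4\bar{\sigma}^2)}\bigr)$. To make that argument rigorous, we bound the error by $4R_L^2$ times the probability that the observation falls outside the Voronoi cell of $x$, plus the exponentially small off-cell posterior mass. Multiplying by $\mathrm{SNR}=1/\bar{\sigma}^2$ still leaves a quantity tending to $0$, so $d_{\mathrm{D}}=0$.

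\emph{The \ac{OBD}.} By Lemma~\ref{lem:phase_preserving}(iii) its error under $p$ equals $\mathrm{mmse}_q(\bar{\sigma}^2)$, so we may average under $q$ and split on ring detection. On the correct ring the per-ring error is $R_\ell^2(1-A(\kappa_\ell)^2)=R_\ell\bar{\sigma}^2/(2\norm{\bar{x}})+\mathcal{O}(\bar{\sigma}^6)$, using Proposition~\ref{prop:linear_decay} together with the expansion \eqref{eq:A_approx}. Averaging with $\E[R_\ell/\norm{\bar{x}}]=1+\mathcal{O}(\bar{\sigma}^2)$ gives \eqref{eq:obd_Fmap}, namely $\mathrm{mmse}_q=\bar{\sigma}^2/2+\mathcal{O}(\bar{\sigma}^4)$. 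There are also wrong-ring and soft-weight contributions (nonzero $w_\ell$ for $\ell\neq\ell_{\mathrm{true}}$). Since the error is at most $4R_L^2$, these are bounded by $4R_L^2$ times the exponentially small probability of Proposition~\ref{prop:ring_discrimination}, plus the softmax residual $1-w_{\ell^*}=\mathcal{O}(e^{-c/\bar{\sigma}^2})$. Hence $\mathrm{SNR}\cdot\mathrm{MMSE}_{\mathrm{B}}\to\tfrac12$.

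\emph{The \ac{OGD} and \ac{OPD}.} Lemma~\ref{lem:excess} gives $\mathrm{MMSE}_\eta=\mathrm{MMSE}_{\mathrm{B}}+\E\norm{\eta_B-\eta}^2$. By \eqref{eq:excess_ring} the excess is $\E[R_{\ell^*}^2(A-m_\eta)^2]$. For the \ac{OGD} this is $\mathcal{O}(\kappa^{-4})=\mathcal{O}(\bar{\sigma}^8)$. For the \ac{OPD} it is $\Theta(\kappa^{-2})=\Theta(\bar{\sigma}^4)$, consistent with Proposition~\ref{prop:lpd_mse}. Each is also subject to the same exponentially small wrong-ring term. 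Both excesses are $o(\bar{\sigma}^2)$, so the limit remains $\tfrac12$. The interpretive statements (one freed real coordinate, and the link to R\'enyi dimension) then follow by reading the factor $\tfrac12$ as the phase-only share of the isotropic complex noise variance $\bar{\sigma}^2$.

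\emph{Main obstacle.} The delicate step is making the high-$\kappa$ expansions uniform when averaged over $\bar{x}$, because $\kappa_\ell=2R_\ell\norm{\bar{x}}/\bar{\sigma}^2$ is random and can be small when $\norm{\bar{x}}$ is near $0$. For the \ac{OGD} the factor $1-1/(2\kappa)$ even becomes negative there. We handle this by truncating on the event $\{\norm{\bar{x}}\ge R_1/2\}$, whose complement has probability $e^{-\Theta(1/\bar{\sigma}^2)}$ and on which all denoisers are bounded (using the adaptive fallback for the \ac{OGD}). On the good event, the bounds of Proposition~\ref{prop:vm_gaussian}, valid for $\kappa\ge1$, hold uniformly and justify dominated convergence.
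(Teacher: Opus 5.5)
Your proposal is correct and takes essentially the paper's route: the exponential decay of the \ac{BOD} error gives $d_{\mathrm{D}}=0$, the linear decay $\bar{\sigma}^2/2$ of the orbital posterior variance gives $d_{\mathrm{B}}=\tfrac12$, and the \ac{OGD} and \ac{OPD} differ from the \ac{OBD} only by $o(\bar{\sigma}^2)$ terms.

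Your version is more careful than the paper's in two places. First, you get the \ac{OGD}/\ac{OPD} excess from the scalar-channel identity of Lemma~\ref{lem:excess} at fixed $\bar{\sigma}^2$, whereas the paper cites the fixed-point Corollary~\ref{cor:snr_gap}, which is stated in $\sigma_z^2$. Second, you explicitly control the small-$\kappa$ region near $\bar{x}=0$. The adaptive fallback is genuinely needed there: the unmodified \ac{OGD} output grows like $\bar{\sigma}^2/(4\norm{\bar{x}})$, so its mean-squared error would diverge.
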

\begin{proof}
{By~\eqref{eq:mmse_eta_def}, $\mathrm{MMSE}_\eta(\mathrm{SNR}) = \E[\hat{\sigma}^2_\eta]$ evaluated at $\bar\sigma^2 = 1/\mathrm{SNR}$, where $\hat\sigma^2_\eta$ is the per-symbol error of the corresponding level: {the exponential decay $\mathcal{O}(e^{-c/\bar\sigma^2})$ for the \ac{BOD} {(established with~\eqref{eq:bod_fixed_point})}, the linear decay of Proposition~\ref{prop:linear_decay} for the \ac{OBD}, and the shared leading order $\bar\sigma^2/2$ of Corollary~\ref{cor:snr_gap} for the \ac{OGD} and \ac{OPD}}.}
For the \ac{BOD}, exponential error decay
{$\mathrm{MMSE}_{\mathrm{D}} = \mathcal{O}(e^{-c/\bar\sigma^2}) = \mathcal{O}(e^{-c\,\mathrm{SNR}})$} {(the minimum-distance decay established with~\eqref{eq:bod_fixed_point})}
gives $\mathrm{SNR}\cdot\mathrm{MMSE}_{\mathrm{D}}\to0$, hence $d_{\mathrm{D}}=0$.
For the \ac{OBD}, linear decay {$\mathrm{MMSE}_{\mathrm{B}} \sim \bar\sigma^2/2 = 1/(2\,\mathrm{SNR})$
(Proposition~\ref{prop:linear_decay}) gives
$\mathrm{SNR}\cdot\mathrm{MMSE}_{\mathrm{B}} = \mathrm{SNR}\cdot\tfrac{1}{2\,\mathrm{SNR}} \to \tfrac{1}{2}$},
a finite positive constant, so $d_{\mathrm{B}}=1/2$.
{The \ac{OGD} and \ac{OPD} share the same leading-order decay $\mathrm{MMSE}_\eta \sim \bar\sigma^2/2 = 1/(2\,\mathrm{SNR})$ (Corollary~\ref{cor:snr_gap}); their sub-leading offsets are $\mathcal{O}(\bar\sigma^4) = \mathcal{O}(\mathrm{SNR}^{-2})$ and vanish in the product $\mathrm{SNR}\cdot\mathrm{MMSE}_\eta$, so $d_{\mathrm{G}} = d_{\mathrm{P}} = 1/2$ identically.}
\end{proof}

More generally, relaxing the phase on only a subset $\mathcal{A}$ of the rings -- naturally the dense, high-$M_\ell$ rings, where the mismatch is cheapest by Proposition~\ref{prop:cbm_error}, while keeping the sparse rings discrete -- realizes any intermediate \ac{MMSE} dimension $d = \tfrac{1}{2}\sum_{\ell \in \mathcal{A}} r_\ell \in [0,\tfrac{1}{2}]$ -- an adaptive complexity-accuracy dial whose selective per-ring preservation of the discrete amplitude/phase structure is useful for many applications.

The per-ring phase likelihood induced by the effective channel $\bar{x} = x + \bar{z}$ is von Mises, $\propto \exp(\kappa_\ell\cos(\angle\bar{x} - \theta))$ with concentration $\kappa_\ell = 2R_\ell\norm{\bar{x}}/\bar{\sigma}^2$; this is exact and independent of the denoiser. 
At high concentration, it is a wrapped Gaussian, so the per-ring angular estimation decouples into a scalar \ac{AWGN} channel $\Delta\theta_{\mathrm{obs}} = \Delta\theta_{\mathrm{true}} + n_\theta$, $n_\theta\sim\mathcal{N}(0,1/\kappa_\ell)$, with $\kappa_\ell$ playing the role of effective (phase) \ac{SNR}. 
The \ac{OGD} adopts this Gaussian channel exactly, while the \ac{OBD} uses the exact von Mises posterior and the \ac{OPD} the raw observed phase; but all three share the same effective \ac{SNR} $\kappa_\ell$. 
The \ac{I-MMSE} identity~\cite{Guo2005} on this phase channel then bounds the information loss that the continuous relaxation -- at any level of the hierarchy -- incurs relative to the discrete constellation.
Both applications of the identity in this section are \emph{matched} ones: each integrand is the \ac{MMSE} of the prior that defines the corresponding mutual information, with the mismatch confined to the limits of integration in~\eqref{eq:rate_gap_correct} and to the choice of prior in~\eqref{eq:immse_bound}. This matters because the genuinely mismatched counterpart is a different statement: the \ac{SNR}-integrated \emph{excess} error of an estimator built on the wrong prior equals twice the relative entropy between the two distributions~\cite{verdu_tit_2010}, a quantity that is \emph{infinite} here, since the discrete $p$ is not absolutely continuous with respect to its continuous relaxation $q$ -- which is precisely why Section~\ref{sec:ot_bound} bounds the excess pointwise in \ac{SNR}, through a transport distance that remains finite, rather than integrating it. Related representations of mutual information via input estimates~\cite{palomar_tit_2007} and functional properties of the \ac{MMSE}~\cite{wu_tit_2012} complete the Gaussian picture, and the identity itself extends beyond Gaussian observations to Poisson~\cite{guo_tit_2008} and, more generally, L\'evy channels~\cite{jiao_tit_2017}, where the squared error is replaced by the Bregman divergence generated by the channel's cumulant generating function.

\begin{corollary}[\ac{I-MMSE} Characterization of the Phase Loss]
\label{cor:immse}
For any orbital denoiser $\eta \in \{\mathrm{B}, \mathrm{G}, \mathrm{P}\}$, the per-ring mutual information loss of the continuous phase relaxation relative to the discrete $M_\ell$-ary phase input is
\begin{equation}
\label{eq:immse_loss}
\Delta I_\ell^{\eta} = \int_{\kappa_0}^{\kappa_{\max}} \frac{1}{2}\Bigl[\mathrm{MMSE}^{\mathrm{ph}}_{\eta}(\kappa) - \mathrm{MMSE}^{\mathrm{ph}}_{\mathrm{D}}(\kappa)\Bigr]\, d\kappa,
\end{equation}
where $\kappa_0$ and $\kappa_{\max}$ are the endpoints of the operating \ac{SNR} range, $\mathrm{MMSE}^{\mathrm{ph}}_\eta(\kappa)$ is the per-ring \emph{phase} \ac{MMSE} at angular concentration $\kappa$, and $\mathrm{MMSE}^{\mathrm{ph}}_{\mathrm{D}}(\kappa) \sim \exp\bigl(-2\kappa\sin^2(\pi/(2M_\ell))\bigr)$ is the discrete-input phase \ac{MMSE} (which coincides with the \ac{MAP} error up to $\mathcal{O}(e^{-c\kappa})$ at high concentration), governed by the von Mises boundary-crossing rate $\kappa[1 - \cos(\pi/M_\ell)]$ at the half-spacing decision boundary $\pi/M_\ell$. All three orbital denoisers share the leading value $\mathrm{MMSE}^{\mathrm{ph}}_\eta(\kappa) = 1/\kappa + \mathcal{O}(\kappa^{-2})$ (exactly $1/\kappa$ for the Gaussian-phase \ac{OGD}).

Since $\mathrm{MMSE}^{\mathrm{ph}}_{\eta} \geq \mathrm{MMSE}^{\mathrm{ph}}_{\mathrm{D}}$ pointwise, the integrand is nonnegative and, over any finite range $[\kappa_0,\kappa_{\max}]$, bounded above by $1/(2\kappa)$ up to the integrable $\mathcal{O}(\kappa^{-2})$ correction, so
\begin{equation}
\label{eq:immse_bound}
\Delta I_\ell^{\eta} \leq \frac{1}{2}\ln\!\left(\frac{\kappa_{\max}}{\kappa_0}\right) + \mathcal{O}(1),
\end{equation}
which is finite for any bounded operating \ac{SNR} range and identical across the hierarchy to leading (logarithmic) order -- with the $\mathcal{O}(1)$ term absent for the \ac{OGD}, whose phase \ac{MMSE} is exactly $1/\kappa$.
\end{corollary}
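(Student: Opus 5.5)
The plan is to apply the \ac{I-MMSE} identity of Guo, Shamai and Verd\'u~\cite{Guo2005} to the per-ring phase channel, then bound the resulting integrand with the phase-MMSE asymptotics already established. Conditioned on the ring $\ell$, the effective channel restricted to phase is the von Mises likelihood $\propto\exp(\kappa_\ell\cos(\angle\bar{x}-\theta))$. At high concentration this reduces to the real scalar Gaussian channel $\Delta\theta_{\mathrm{obs}}=\Delta\theta_{\mathrm{true}}+n_\theta$ with $n_\theta\sim\mathcal{N}(0,1/\kappa)$. This is the wrapped-Gaussian limit underlying Definition~\ref{def:gaussian_phase} and Proposition~\ref{prop:vm_gaussian}, with $\kappa$ playing the role of SNR.

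\textbf{Step 1: write each mutual information as an integral.} For a real channel the identity reads $\mathrm{d}I/\mathrm{d}\kappa=\tfrac12\,\mathrm{mmse}(\kappa)$ in nats. The factor $\tfrac12$ is the one Proposition~\ref{prop:immse_rate_gap} explicitly defers to this corollary. I apply it twice, in each case matched:
\begin{itemize}
\item once to the discrete $M_\ell$-ary uniform phase input;
\item once to the continuous uniform (orbital) phase input.
\end{itemize}
Each resulting mutual information is an integral of its own MMSE over $[\kappa_0,\kappa_{\max}]$.

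\textbf{Step 2: form the loss.} I take $\Delta I_\ell^\eta$ to be the difference of the two accumulated informations over the operating window, which is exactly the integral in \eqref{eq:immse_loss}. For the orbital input I identify the integrand with $\mathrm{MMSE}^{\mathrm{ph}}_\eta$, using the shared leading value $1/\kappa$. The justification is as follows:
\begin{itemize}
\item for the OGD, the Gaussian-phase posterior has variance exactly $1/\kappa$;
\item for the OBD, $1-A(\kappa)^2=1/\kappa+\mathcal{O}(\kappa^{-3})$, from the proof of Proposition~\ref{prop:obd_se};
\item for the OPD, $2(1-A)=1/\kappa+\mathcal{O}(\kappa^{-2})$, from Proposition~\ref{prop:lpd_mse}, after dividing the Euclidean errors by $R_\ell^2$ to obtain angular errors.
\end{itemize}
For the discrete input I take $\mathrm{MMSE}^{\mathrm{ph}}_{\mathrm{D}}$ from the boundary-crossing argument used for \eqref{eq:bod_fixed_point}, transplanted to the circle. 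The nearest competitor lies at angular distance $2\pi/M_\ell$, and the half-spacing boundary at $\pi/M_\ell$ is reached with von Mises tail probability $\exp(-\kappa[1-\cos(\pi/M_\ell)])=\exp(-2\kappa\sin^2(\pi/(2M_\ell)))$.

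\textbf{Step 3: nonnegativity and the bound.} The integrand is nonnegative because the discrete phase MMSE is exponentially small while $\mathrm{MMSE}^{\mathrm{ph}}_\eta\sim1/\kappa$; pointwise dominance then holds at every $\kappa\ge\kappa_0$ once $\kappa_0$ lies in the high-concentration regime. Next, I drop $\mathrm{MMSE}^{\mathrm{ph}}_{\mathrm{D}}\ge0$ and bound the integrand by $\tfrac1{2\kappa}+\mathcal{O}(\kappa^{-2})$. Integrating gives $\tfrac12\ln(\kappa_{\max}/\kappa_0)$ plus $\int_{\kappa_0}^{\infty}\mathcal{O}(\kappa^{-2})\,\mathrm{d}\kappa=\mathcal{O}(1/\kappa_0)=\mathcal{O}(1)$. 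For the OGD the remainder vanishes identically, so the $\mathcal{O}(1)$ term is absent. Since all three levels share the leading $1/\kappa$ behaviour, the bound is identical across the hierarchy to logarithmic order.

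\textbf{Main obstacle.} The hardest step is justifying that the orbital-side integrand may be taken to be $\mathrm{MMSE}^{\mathrm{ph}}_\eta$ for the non-Bayes OGD and OPD. The I-MMSE identity is a matched statement and, strictly, produces only the Bayes phase MMSE of the continuous prior. I would handle this by applying the identity to the matched continuous-phase channel, whose MMSE is the OBD's $1-A^2$. I would then invoke Lemma~\ref{lem:excess} to note that the OGD and OPD achieved errors exceed it by only $\mathcal{O}(\kappa^{-2})$ (OPD) and $\mathcal{O}(\kappa^{-4})$ (OGD). Both excesses are integrable and therefore absorbed into the $\mathcal{O}(1)$ term.

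A secondary subtlety is the wrapped-Gaussian approximation of the phase channel at moderate $\kappa$. I would sidestep it by restricting $\kappa_0$ to the high-concentration regime ($\kappa_0\gtrsim3$, as in Definition~\ref{def:adaptive_cgr}), where Proposition~\ref{prop:vm_gaussian} controls the error.
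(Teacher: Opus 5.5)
\emph{Verdict:} your route is the paper's, and the upper bound \eqref{eq:immse_bound} goes through as you give it, but the nonnegativity step fails as written and the OGD exactness claim conflicts with your own fix. The shared route: treat the per-ring phase channel as a scalar Gaussian channel of SNR $\kappa$, apply the real I-MMSE identity to both phase inputs, use the shared leading value $1/\kappa$, drop $\mathrm{MMSE}^{\mathrm{ph}}_{\mathrm{D}}\ge 0$, and integrate $\tfrac{1}{2\kappa}+\mathcal{O}(\kappa^{-2})$.

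\emph{Nonnegativity.} Comparing an exponentially small $\mathrm{MMSE}^{\mathrm{ph}}_{\mathrm{D}}$ against $1/\kappa$ settles the sign only once $\kappa\gg 1/(2\sin^2(\pi/(2M_\ell)))\approx 2M_\ell^2/\pi^2$. That is the threshold the paper's proof itself records, not $\kappa_0\gtrsim 3$. Already for $M_\ell=4$ at $\kappa=3$, the bare exponential is about $0.41>1/\kappa$. For dense rings, both phase MMSEs sit near $1/\kappa$ across the whole window $[3,\Theta(M_\ell^2)]$, the regime of Remark~\ref{rem:complexity_cost_inversion}. There, decay rates say nothing about which one is larger. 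A fix that holds at every $\kappa$ uses three facts:
\begin{itemize}
\item The MMSE is concave in the input law, being an infimum over estimators of a functional that is linear in the prior.
\item The uniform ring is the rotation average $q_\ell=\frac{1}{2\pi}\int_0^{2\pi}p_\ell^{(\phi)}\,\mathrm{d}\phi$ of rotated copies of $p_\ell$, for any phase set.
\item Each rotated copy has the same MMSE, by rotational invariance of the noise (the substitution in Lemma~\ref{lem:phase_preserving}(iii)).
\end{itemize}
Hence $\mathrm{mmse}_{q_\ell}\ge\mathrm{mmse}_{p_\ell}$ pointwise. Lemma~\ref{lem:excess} then extends this to the OGD and OPD, whose errors exceed $\mathrm{mmse}_{q_\ell}$.

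\emph{The OGD claim.} Your treatment of the non-Bayes levels (matched continuous-phase MMSE plus the excess of Lemma~\ref{lem:excess}) is sound. It can even be made exact on the complex ring channel: with $\kappa=2R_\ell^2/\bar\sigma^2$, the complex identity $\mathrm{d}I/\mathrm{d}(1/\bar\sigma^2)=\mathrm{mmse}$ becomes $\mathrm{d}I/\mathrm{d}\kappa=\tfrac12\,\mathrm{mmse}/R_\ell^2$ exactly. So the identity itself needs no wrapped-Gaussian approximation. However, this reading contradicts your Step~3 claim that the OGD remainder vanishes identically. Under it, the OGD integrand is $R_\ell^{-2}\E\norm{x-\eta_G}^2=\E[1-A(\kappa_\ell)^2]+\E[(A(\kappa_\ell)-m_{\mathrm{G}})^2]$. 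This differs from $1/\kappa$ by a nonzero integrable remainder: Remark~\ref{rem:ogd_nominal_var}, plus the $\bar\sigma^2/(4R_\ell^2)$ correction to $\E[R_\ell/\norm{\bar x}]$ used in Proposition~\ref{prop:obd_se}. So an $\mathcal{O}(1)$ term survives. ``Exactly $1/\kappa$'' holds only for the angular variance in the idealized Gaussian-phase channel of Definition~\ref{def:gaussian_phase}, which is the reading the paper adopts; you need to commit to one reading.

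\emph{The discrete exponent.} A faithful transplant of the Euclidean argument behind \eqref{eq:bod_fixed_point} gives the exponent $\tfrac{\kappa}{2}\sin^2(\pi/M_\ell)$, from the distance $R_\ell\sin(\pi/M_\ell)$ to the boundary ray. It does not give the von Mises rate $\kappa[1-\cos(\pi/M_\ell)]$. The von Mises kernel is the likelihood in the hypothesized phase, not the sampling law of $\angle\bar x$. The two exponents agree only to leading order in $1/M_\ell$, which is the precision at which the statement's ``$\sim$'' should be read.
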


\begin{proof}
The \ac{I-MMSE} relationship~\cite{Guo2005} states that for a scalar Gaussian channel the derivative of mutual information with respect to the \ac{SNR} equals half the \ac{MMSE}: $dI/d\mathrm{SNR} = \frac{1}{2}\mathrm{MMSE}(\mathrm{SNR})$.
The per-ring angular channel is a scalar Gaussian channel with \ac{SNR} parameter $\kappa_\ell$ -- exactly for the \ac{OGD}, and asymptotically ($\kappa_\ell \gg 1$) for the von Mises \ac{OBD} and the \ac{OPD} -- so the mutual information difference between the continuous relaxation $\eta$ and the discrete input is the integral of the phase-\ac{MMSE} difference over $\kappa$.
With $\mathrm{MMSE}^{\mathrm{ph}}_\eta(\kappa) = 1/\kappa + \mathcal{O}(\kappa^{-2})$ (the shared leading value, $=1/\kappa$ exactly for the \ac{OGD}) and the discrete error decaying exponentially once $\kappa \gtrsim 1/(2\sin^2(\pi/(2M_\ell)))$, the excess integrand is bounded by $1/(2\kappa) + \mathcal{O}(\kappa^{-2})$, whose integral over $[\kappa_0,\kappa_{\max}]$ is $\tfrac{1}{2}\ln(\kappa_{\max}/\kappa_0) + \mathcal{O}(1)$, giving~\eqref{eq:immse_bound}. The $1/\kappa$ term is not integrable on $[\kappa_0,\infty)$, so the bound is finite only over a bounded \ac{SNR} range.
\end{proof}

\begin{remark}[Operational Interpretation]
\label{rem:immse_interpretation}
Corollary~\ref{cor:immse} explains why the orbital relaxation -- at any level of the hierarchy -- incurs only a controlled phase-information loss: the Gaussian phase maximizes the \ac{MMSE} at any given~$\kappa$ (by the maximum entropy property), so it upper-bounds the discrete \ac{MMSE}, {and every orbital denoiser shares the same leading phase $\mathrm{MMSE}^{\mathrm{ph}}_\eta(\kappa) = 1/\kappa$ (their $\mathcal{O}(\kappa^{-2})$ differences stemming from the common expansion $1-A(\kappa) = 1/(2\kappa) + \mathcal{O}(\kappa^{-2})$).}
The \ac{I-MMSE} integral of this excess \ac{MMSE} is finite over any bounded \ac{SNR} range, growing only logarithmically in the range width~\eqref{eq:immse_bound}, so the per-ring information loss stays controlled at every finite operating point.
\end{remark}

\subsection{Geometry-to-Rate Chain: An Optimal Transport Bound}
\label{sec:ot_bound}

Everything established so far shares a common limitation. The rate formula, the \ac{GMI}, the rate gap and the \ac{MMSE} dimension are all statements \emph{at the \ac{SE} fixed point}, and all but the first are \emph{asymptotic} in $\sigma_z^2$: they describe what the relaxation costs once the recursion has been run and the noise driven small. None of them can be evaluated from the constellation before a receiver is built, and none certifies anything at a moderate operating point. This final subsection removes both restrictions at once, bounding the cost by a quantity computed from the ring geometry $\{R_\ell, M_\ell\}$ alone, non-asymptotically and uniformly in \ac{SNR}. It is what turns the preceding analysis into a design rule.

Proposition~\ref{prop:cbm_error} quantifies the geometric quality of the \ac{OBD} approximation via the Wasserstein-1 distance $W_1(p_{\ell}, q_{\ell})$.
Corollary~\ref{cor:snr_gap} characterizes the macroscopic \ac{MSE} gap at high \ac{SNR}. 
We now close the loop between these two results by deriving a \emph{non-asymptotic} bound on the \ac{SE} fixed-point gap in terms of the Wasserstein distance, valid at all \ac{SNR}. 
The natural metric here is the Wasserstein-2 distance -- the companion of the squared-error \ac{MSE} -- rather than the $W_1$ of Proposition~\ref{prop:cbm_error}; the two share the same $\Theta(R_\ell/M_\ell)$ per-ring transport map and differ only in the cost exponent, so $W_2^2 = \Theta(R_\ell^2/M_\ell^2)$ is the exact squared-cost analogue of the $W_1$ result.

\begin{theorem}[Transport-Theoretic Bound on the \ac{ASG}]
\label{thm:ot_bound}
Let $\eta_D$ and $\eta_B$ denote the \ac{BOD} and \ac{OBD}, and define the aggregate squared Wasserstein-2 distance between the discrete prior and its orbital relaxation as
\begin{equation}
\label{eq:aggregate_w2}
\overline{W}_2^{\,2} \;\triangleq\; \sum_{\ell=1}^{L} r_\ell \, W_2\bigl(p_{\ell},\, q_{\ell}\bigr)^2,
\end{equation}
where, for uniform-phase rings, the per-ring distance is \emph{exact} (by the same monotone-rearrangement argument as Proposition~\ref{prop:cbm_error}):
\begin{equation}
\label{eq:w2_ring}
W_2\bigl(p_{\ell},\, q_{\ell}\bigr)^2
= 2R_\ell^2\left(1 - \frac{\sin(\pi/M_\ell)}{\pi/M_\ell}\right)
= \Theta\!\left(\frac{R_\ell^2}{M_\ell^2}\right).
\end{equation}

\noindent\emph{(i) Operating-point excess (unconditional).} At any common effective noise $\bar{\sigma}^2 > 0$, the excess \ac{MSE} of the orbital denoiser over the optimal one obeys the exact identity and bound
\begin{equation}
\label{eq:ot_mse_bound}
\mathcal{F}_{\mathrm{B}}(\mathrm{MSE}) - \mathcal{F}_{\mathrm{D}}(\mathrm{MSE})
= \E\bigl[\norm{\eta_D(\bar{x}) - \eta_B(\bar{x})}^2\bigr]
\;\leq\; 4\,\overline{W}_2^{\,2},
\end{equation}
uniformly in $\bar{\sigma}^2$ -- the identity by the orthogonality principle, the bound by Gaussian-smoothing arguments (Tweedie's identity, the relative de Bruijn identity, and the joint convexity of relative entropy; see Appendix~\ref{app:thm_ot_bound} for the precise statements and attributions).
Transport-type continuity of information measures has precedent in this journal~\cite{polyanskiy_tit_2016}, and the de Bruijn/Fisher chain we use is the same machinery that underlies information-theoretic proofs of the entropy power inequality~\cite{rioul_tit_2011}, itself obtainable directly from the \ac{I-MMSE} relation~\cite{verdu_tit_2006}.

\noindent\emph{(ii) Fixed-point transfer.} Whenever the \ac{BOD} \ac{SE} map is contractive with modulus $c_{\mathrm{D}} \triangleq \sup_{[\mathrm{MSE}_\infty^{\mathrm{D}},\, \mathrm{MSE}_\infty^{\mathrm{B}}]} \mathcal{F}_{\mathrm{D}}' < 1$ (the high-noise and high-\ac{SNR} regimes of Corollary~\ref{cor:se_uniqueness}, where $c_{\mathrm{D}} \leq \alpha$ and $c_{\mathrm{D}} \to 0$, respectively),
\begin{equation}
\label{eq:ot_fp_bound}
\mathrm{MSE}_\infty^{\mathrm{B}} - \mathrm{MSE}_\infty^{\mathrm{D}}
\;\leq\; \frac{4\,\overline{W}_2^{\,2}}{1 - c_{\mathrm{D}}}.
\end{equation}
\end{theorem}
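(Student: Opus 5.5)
The plan is to prove the three ingredients separately, in this order: the exact per-ring distance~\eqref{eq:w2_ring}, then the operating-point identity and bound~\eqref{eq:ot_mse_bound}, then the fixed-point transfer~\eqref{eq:ot_fp_bound}.

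\emph{Per-ring distance~\eqref{eq:w2_ring}.} I would reuse the transport map of Proposition~\ref{prop:cbm_error}.
\begin{itemize}
\item \emph{Lower bound.} Any coupling of $q_\ell$ with $p_\ell$ sends each point of the circle to some atom of $p_\ell$. Its cost is therefore at least $\E_{q_\ell}[\mathrm{dist}^2(x,\mathrm{supp}\,p_\ell)]$.
\item \emph{Attainment.} For an equally spaced ring, the nearest-atom map pushes $q_\ell$ forward exactly onto $p_\ell$, since every Voronoi arc carries mass $1/M_\ell$. So this lower bound is attained.
\item \emph{Evaluation.} The squared chord at angular offset $t$ is $4R_\ell^2\sin^2(t/2)=2R_\ell^2(1-\cos t)$. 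Averaging over $t$ uniform on $[-\pi/M_\ell,\pi/M_\ell]$ gives $2R_\ell^2\bigl(1-\sin(\pi/M_\ell)/(\pi/M_\ell)\bigr)$.
\item \emph{Order.} Taylor expansion of $\sin u/u$ gives $\Theta(R_\ell^2/M_\ell^2)$.
\end{itemize}

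\emph{Part (i), the identity.} By Lemma~\ref{lem:phase_preserving}(iii) the $p$-MSE of $\eta_B$ is $\mathcal{F}_{\mathrm{B}}$. Write $x-\eta_B=(x-\eta_D)+(\eta_D-\eta_B)$ and expand the square under $p$. Since $\eta_D=\E_p[x\mid\bar{x}]$ and $\eta_D-\eta_B$ is a function of $\bar{x}$, the cross term vanishes by orthogonality, as in Lemma~\ref{lem:excess} with the roles of $p$ and $q$ swapped. This yields $\mathcal{F}_{\mathrm{B}}-\mathcal{F}_{\mathrm{D}}=\E\norm{\eta_D-\eta_B}^2$.

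\emph{Part (i), the bound.} Write $p_{\bar\sigma}=p*\phi_{\bar\sigma^2}$ and $q_{\bar\sigma}=q*\phi_{\bar\sigma^2}$ for the smoothed priors.
\begin{itemize}
\item \emph{Tweedie's identity.} Each posterior mean equals $\bar{x}+\bar\sigma^2\nabla\ln(\text{prior}*\phi_{\bar\sigma^2})(\bar{x})$, with $\nabla$ the appropriately normalized Wirtinger gradient. Hence $\E_p\norm{\eta_D-\eta_B}^2=\bar\sigma^4\,J(p_{\bar\sigma}\,\|\,q_{\bar\sigma})$, the relative Fisher information.
\item \emph{Joint convexity of relative entropy.} Take an optimal $W_2$ coupling $\pi$ of $p$ and $q$; it decomposes ring-wise because both priors share the radial marginal, so its cost is $\overline{W}_2^{\,2}$. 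Writing $p_\tau$ and $q_\tau$ as $\pi$-mixtures of Gaussians at level $\tau$ gives $D(p_\tau\|q_\tau)\le \E_\pi\norm{x-y}^2/\tau=\overline{W}_2^{\,2}/\tau$.
\item \emph{Main obstacle.} The difficulty is going from this relative-entropy bound to a \emph{pointwise} bound on $J$ at the single level $\bar\sigma^2$. The relative de Bruijn identity only controls the integral of $J$ over the noise level, and $D$ is monotone in the level. My first attempt would split the noise as $\bar\sigma^2=\tau+(\bar\sigma^2-\tau)$ with $\tau=\bar\sigma^2/2$. Convolution is a Markov semigroup, so I would use a smoothing (reverse log-Sobolev) estimate of the form $(\bar\sigma^2-\tau)\,J(p_{\bar\sigma}\|q_{\bar\sigma})\le C\,D(p_\tau\|q_\tau)$. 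Combining with the convexity step gives $\bar\sigma^4J\le C\bar\sigma^2\cdot 2\overline{W}_2^{\,2}/\bar\sigma^2$, and I would tune $C$ to land on the factor $4$.
\item \emph{Fallback.} If that estimate resists, I would integrate de Bruijn over $[\bar\sigma^2,2\bar\sigma^2]$. I would then use monotonicity of $t^2J(p_t\|q_t)$, or directly of the excess MSE in $t$, to convert the integral into a pointwise bound with the same constant structure.
\end{itemize}
Either route makes the bound uniform in $\bar\sigma^2$, because the two factors of $\bar\sigma^2$ cancel.

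\emph{Part (ii), fixed-point transfer.} Proposition~\ref{prop:fp_ordering} gives $\mathrm{MSE}_\infty^{\mathrm{D}}\le\mathrm{MSE}_\infty^{\mathrm{B}}$. Then
\begin{equation*}
\mathrm{MSE}_\infty^{\mathrm{B}}-\mathrm{MSE}_\infty^{\mathrm{D}}
=\bigl[\mathcal{F}_{\mathrm{B}}-\mathcal{F}_{\mathrm{D}}\bigr](\mathrm{MSE}_\infty^{\mathrm{B}})
+\bigl[\mathcal{F}_{\mathrm{D}}(\mathrm{MSE}_\infty^{\mathrm{B}})-\mathcal{F}_{\mathrm{D}}(\mathrm{MSE}_\infty^{\mathrm{D}})\bigr].
\end{equation*}
By part (i) the first bracket is at most $4\overline{W}_2^{\,2}$. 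By the mean value theorem on $[\mathrm{MSE}_\infty^{\mathrm{D}},\mathrm{MSE}_\infty^{\mathrm{B}}]$ the second bracket is at most $c_{\mathrm{D}}(\mathrm{MSE}_\infty^{\mathrm{B}}-\mathrm{MSE}_\infty^{\mathrm{D}})$. Rearranging with $c_{\mathrm{D}}<1$ gives~\eqref{eq:ot_fp_bound}.
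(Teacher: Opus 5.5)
Your orthogonality identity, Tweedie representation, convexity step and fixed-point transfer follow the same architecture as the paper's proof. Your per-ring computation is a clean and valid alternative to the paper's cyclical-monotonicity argument: the nearest-atom lower bound is attained by the Voronoi push-forward.

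\textbf{The gap.} The gap is exactly the step you flag: you never establish the pointwise bound. Work in real normalization: $t=\bar{\sigma}^2/2$ per real dimension, $J(t)\triangleq J(p*\phi_t\,\|\,q*\phi_t)$ and $D(\tau)\triangleq D(p*\phi_\tau\,\|\,q*\phi_\tau)$. The bound you need is then $\tfrac{t}{4}J(t)\le D(t/2)$. Neither of your routes delivers it.
\begin{itemize}
\item \emph{Route 1.} You posit a smoothing inequality whose constant $C$ you plan to ``tune''. The constant of an unproved inequality cannot be tuned. Moreover, the ratio of the two smoothed densities is not the heat semigroup applied to the earlier ratio; it is a $q$-dependent posterior average. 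So the semigroup's reverse log-Sobolev inequality does not apply, and the relevant posterior under the blurred circle $q*\phi_\tau$ is not log-concave.
\item \emph{Route 2 (fallback).} You integrate de Bruijn over $[\bar{\sigma}^2,2\bar{\sigma}^2]$, which lies above the operating level. Bounding $J$ at the left endpoint then requires $t^2J(t)$, i.e.\ the excess MSE, to be non-decreasing in the noise level. That is false. The excess tends to $0$ as $t\to0$, because both estimators recover the transmitted ring point. It also tends to $0$ as $t\to\infty$, because both collapse to the common prior mean. So it rises and then falls, the same mechanism behind the collapse in Fig.~\ref{fig:OT_bound}.
\end{itemize}

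\textbf{What the paper does.} The paper puts the window on the other side of the operating point. The relative de Bruijn identity $\tfrac{\mathrm{d}}{\mathrm{d}\tau}D(\tau)=-\tfrac12 J(\tau)$ gives $D(t/2)\ge D(t/2)-D(t)=\tfrac12\int_{t/2}^{t}J(\tau)\,\mathrm{d}\tau$. Non-increase of $J$ along the simultaneous heat flow bounds this integral below by $\tfrac{t}{2}J(t)$. The convexity step at level $t/2$ gives $D(t/2)\le W_2(p,q)^2/t$. Combining these yields $t^2J(t)\le 4W_2(p,q)^2$, so the constant $4$ falls out rather than being tuned.

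The load-bearing input is therefore monotonicity of $J$ itself, not of $t^2J$. Treat it with care: it is not a general fact for non-log-concave references. For $p=\delta_a$ and $q=\tfrac12(\delta_a+\delta_{-a})$ on $\Real$, one finds $J(t)\to0$ as $t\to0$ even though $J>0$. A complete proof must verify it for the ring priors at hand.

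\textbf{Minor point.} An optimal $W_2$ coupling of $p$ and $q$ need not be ring-wise; two close rings carrying few atoms can be cheaper to couple across. You do not need it to be, since joint convexity holds for any coupling. Take the product of per-ring optimal couplings, which gives $W_2(p,q)^2\le\overline{W}_2^{\,2}$ as in the paper.
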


\begin{proof}
\proofref{app:thm_ot_bound}
\end{proof}

\begin{remark}[Prior-Induced Gap Is Shared; Denoiser Adds $\mathcal{O}(\sigma_z^4)$]
\label{rem:ot_hierarchy}
The transport bound~\eqref{eq:ot_mse_bound} measures the mismatch of the orbital \emph{prior} $q$ against the discrete truth $p$, a geometric quantity common to the whole hierarchy since \ac{OBD}, \ac{OGD}, and \ac{OPD} all target the same prior $q$. For the \ac{OGD} and \ac{OPD} the gap to the optimal receiver splits as
\begin{equation*}
\mathrm{MSE}_\infty^{\eta} \!- \mathrm{MSE}_\infty^{\mathrm{D}}
= \underbrace{(\mathrm{MSE}_\infty^{\eta} \!-\! \mathrm{MSE}_\infty^{\mathrm{B}})}_{\mathcal{O}(\sigma_z^4)\ \text{(Thm.~\ref{thm:cross_level_fp})}}
\!+\! \underbrace{(\mathrm{MSE}_\infty^{\mathrm{B}} \!-\! \mathrm{MSE}_\infty^{\mathrm{D}})}_{\text{bounded by~\eqref{eq:ot_fp_bound}}},
\end{equation*}
for $\eta \in \{\mathrm{G},\mathrm{P}\}$; hence~\eqref{eq:ot_fp_bound} governs the prior-geometry contribution for every level, with only the small additive denoiser term distinguishing them.
\end{remark}

\begin{figure*}[t]
  \centering
  \includegraphics[width=2\columnwidth]{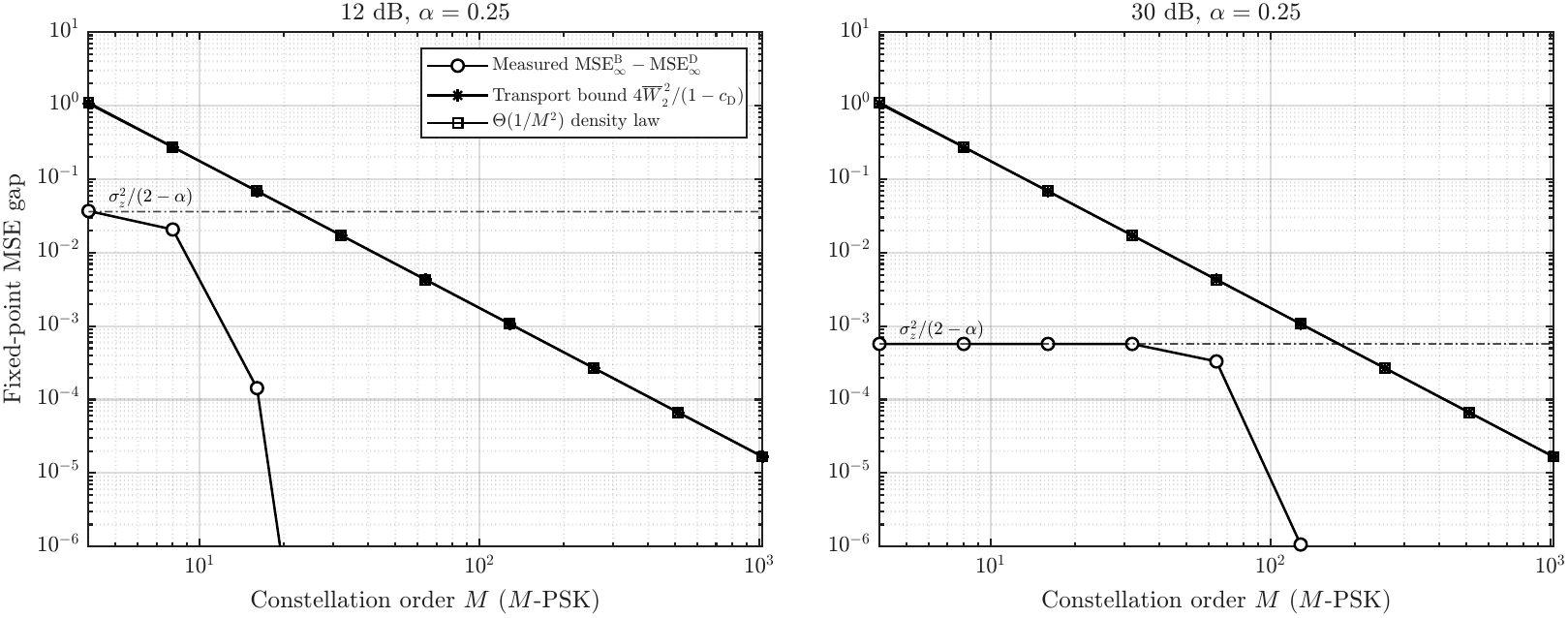}
  \caption{\ac{SE} fixed-point gap $\mathrm{MSE}_\infty^{\mathrm{B}} - \mathrm{MSE}_\infty^{\mathrm{D}}$ versus constellation order $M$ for $M$-\ac{PSK} at $12$~dB (left) and $30$~dB (right), $\alpha = 0.25$, computed by deterministic Gauss--Hermite quadrature, against the \ac{SNR}-uniform transport bound~\eqref{eq:ot_fp_bound} of Theorem~\ref{thm:ot_bound} -- the same $\Theta(1/M^2)$ density law of Corollary~\ref{cor:ot_scaling} in both panels. The dash-dotted horizontal line marks the \ac{ASG} value $\sigma_z^2/(2-\alpha)$, at which the measured gap plateaus before its superexponential collapse beyond $M \approx \pi/\bar{\sigma}_\infty$ (where the \ac{BOD}'s own resolution fails); gaps below the $10^{-6}$ numerical resolution are omitted.}
  \label{fig:OT_bound}
  \vspace{-2ex}
\end{figure*}

\begin{corollary}[Scaling of the \ac{ASG} with Ring Density]
\label{cor:ot_scaling}
Substituting the exact per-ring distance~\eqref{eq:w2_ring} into~\eqref{eq:ot_fp_bound} yields the single, \ac{SNR}-uniform density law
\begin{align}
\label{eq:ot_scaling}
\mathrm{MSE}_\infty^{\mathrm{B}} - \mathrm{MSE}_\infty^{\mathrm{D}}
  &\leq \frac{8}{1 - c_{\mathrm{D}}} \sum_{\ell=1}^{L} r_\ell R_\ell^2 \left(1 - \frac{\sin(\pi/M_\ell)}{\pi/M_\ell}\right) \nonumber \\
  &= \mathcal{O}\!\left(\max_\ell \frac{R_\ell^2}{M_\ell^2}\right).
\end{align}

For $M$-\ac{PSK} ($L = 1$, $M_1 = M$, $R_1 = 1$), the bound evaluates to $\tfrac{8}{1-c_{\mathrm{D}}}\bigl(1 - \tfrac{\sin(\pi/M)}{\pi/M}\bigr) \approx \tfrac{4\pi^2}{3(1-c_{\mathrm{D}})}\,\tfrac{1}{M^2}$: the geometric price of the relaxation decays \emph{quadratically} in the constellation density, uniformly over the operating \ac{SNR}.
\end{corollary}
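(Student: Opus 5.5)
The plan is to treat Corollary~\ref{cor:ot_scaling} as a direct substitution into Theorem~\ref{thm:ot_bound}(ii), followed by an elementary asymptotic evaluation. There are three steps.

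\emph{Step 1: substitute the per-ring distance.} Start from the fixed-point transfer bound~\eqref{eq:ot_fp_bound}, $\mathrm{MSE}_\infty^{\mathrm{B}} - \mathrm{MSE}_\infty^{\mathrm{D}} \le 4\overline{W}_2^{\,2}/(1-c_{\mathrm{D}})$. It holds under the same contractivity hypothesis $c_{\mathrm{D}}<1$, which the corollary inherits implicitly. Into it, insert the definition~\eqref{eq:aggregate_w2} of $\overline{W}_2^{\,2}$ together with the exact uniform-phase ring distance~\eqref{eq:w2_ring}. This gives
\[
4\overline{W}_2^{\,2} = 4\sum_\ell r_\ell \cdot 2R_\ell^2\bigl(1 - \mathrm{sinc}\bigr) = 8\sum_\ell r_\ell R_\ell^2\bigl(1 - \tfrac{\sin(\pi/M_\ell)}{\pi/M_\ell}\bigr),
\]
which is precisely the first line of~\eqref{eq:ot_scaling}. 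Since the bound of Theorem~\ref{thm:ot_bound} carries no $\bar{\sigma}^2$ dependence apart from the modulus $c_{\mathrm{D}}$, the resulting bound is uniform in SNR.

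\emph{Step 2: establish the order estimate.} Use the elementary inequality $0 \le 1 - \sin u/u \le u^2/6$ for $u \in (0,\pi]$, which follows from the alternating Taylor series of $\sin u$. With $u = \pi/M_\ell$, each summand is bounded by $r_\ell R_\ell^2 \pi^2/(6M_\ell^2)$. Because $\sum_\ell r_\ell = 1$, the convex combination is at most $\max_\ell R_\ell^2/M_\ell^2$ times $\pi^2/6$. The prefactor $8/(1-c_{\mathrm{D}})$ is a constant for a fixed operating regime, so the whole bound is $\mathcal{O}(\max_\ell R_\ell^2/M_\ell^2)$.

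\emph{Step 3: specialize to $M$-PSK.} For $M$-PSK we have $L=1$, $r_1=1$, $R_1=1$ (since $E_d=1$) and $M_1=M$, so the sum collapses to the displayed single term. Expanding $1 - \sin u/u = u^2/6 + \mathcal{O}(u^4)$ at $u=\pi/M$ gives $8\cdot \pi^2/(6M^2) = 4\pi^2/(3M^2)$, which yields the stated approximation with relative error $\mathcal{O}(M^{-2})$.

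The main obstacle is not the algebra but the role of $c_{\mathrm{D}}$. The claim of "uniformity over SNR" is legitimate only where $c_{\mathrm{D}}$ is bounded away from $1$. I would therefore make explicit that in both regimes of Corollary~\ref{cor:se_uniqueness} we have $c_{\mathrm{D}} \le \alpha < 1$ (high noise) or $c_{\mathrm{D}}\to 0$ (high SNR). Replacing $1/(1-c_{\mathrm{D}})$ by $1/(1-\alpha)$ then gives a single SNR-independent constant across those regimes. The intermediate band, where contraction is not guaranteed, has to be excluded, exactly as in Theorem~\ref{thm:ot_bound}(ii).
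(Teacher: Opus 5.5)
Your proposal is correct and is essentially the paper's argument: substitute \eqref{eq:w2_ring} through \eqref{eq:aggregate_w2} into \eqref{eq:ot_fp_bound}, then use the small-angle behaviour of $1-\mathrm{sinc}$. Where the paper quotes the expansion $x^2/6+\mathcal{O}(x^4)$, your one-sided bound $1-\sin u/u\le u^2/6$ simply makes the hidden constant explicit as $4\pi^2/(3(1-c_{\mathrm{D}}))$. One small caveat on your closing aside: $c_{\mathrm{D}}\to 0$ gives $c_{\mathrm{D}}\le\alpha$ only once $\sigma_z^2$ is small enough, not necessarily on the whole band $\sigma_z^2<\sigma_{\mathrm{th}}^2(\alpha)$, so replacing $1/(1-c_{\mathrm{D}})$ by $1/(1-\alpha)$ requires shrinking that threshold --- this is harmless here, since the corollary keeps $c_{\mathrm{D}}$ explicit.
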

\begin{proof}
Immediate from~\eqref{eq:ot_fp_bound},~\eqref{eq:aggregate_w2}, and~\eqref{eq:w2_ring}, using $1 - \mathrm{sinc}(x) = x^2/6 + \mathcal{O}(x^4)$.
\end{proof}

Equation~\eqref{eq:w2_ring} presumes uniform-phase rings. That holds for $M$-\ac{PSK} and for the \ac{APSK} families, whose rings carry equally spaced points, but \emph{not} for square \ac{QAM}. The following proposition removes the hypothesis, at the cost of one scalar per ring.

\begin{proposition}[Per-Ring Transport Distance for Arbitrary Phase Sets]
\label{prop:w2_general}
Let ring $\ell$ carry $M_\ell$ equal-mass symbols at arbitrary angles $\{\phi_m\}_{m=1}^{M_\ell}$, and let $q_\ell$ be its uniform-phase relaxation. Then
\begin{equation}
\label{eq:w2_general}
W_2\bigl(p_\ell,\, q_\ell\bigr)^2 = 2R_\ell^2\Bigl[\,1 - C_\ell\,\tfrac{\sin(\pi/M_\ell)}{\pi/M_\ell}\Bigr],
\end{equation}
where the \emph{phase-regularity factor}
\begin{equation}
\label{eq:w2_C}
C_\ell \triangleq \max_{\varphi \in [0,2\pi)}\ \frac{1}{M_\ell}\sum_{m=1}^{M_\ell}\cos\psi_m(\varphi) \in [0,1]
\end{equation}
is determined by the offsets $\psi_m(\varphi)$ of the symbols from the centres of the $M_\ell$ equal-length arcs of a cut at $\varphi$. One has $C_\ell = 1$ -- and hence~\eqref{eq:w2_ring} -- if and only if the ring is equidistributed. 

For a square-\ac{QAM} ring generated by a lattice point $(a,b)$ with $a > b > 0$, the dihedral orbit has $M_\ell = 8$ and
\begin{equation}
\label{eq:w2_qam_C}
C_\ell = \cos\!\left(\arctan\tfrac{b}{a} - \tfrac{\pi}{8}\right),
\end{equation}
so such a ring is equidistributed when $b/a = \tan(\pi/8)$, with rings with $a = b$ or $b = 0$ having $M_\ell = 4$ and $C_\ell = 1$.

Equation~\eqref{eq:w2_qam_C} presumes that the ring is a single dihedral orbit. A radius admitting several essentially distinct representations as a sum of two odd squares instead carries a \emph{union} of orbits, the first being $R_\ell^2 = 50$ in $64$-\ac{QAM}, where $(1,7)$ and $(5,5)$ together give $M_\ell = 12$; in $256$-\ac{QAM} the radii $R_\ell^2 \in \{130, 170, 250\}$ each carry $M_\ell = 16$. Such rings are covered by the general factor~\eqref{eq:w2_C}, which is what the values reported in Remark~\ref{rem:w2_families} evaluate.
\end{proposition}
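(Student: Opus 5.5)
The plan is to solve the semi-discrete transport problem between the uniform measure $q_\ell$ on the circle of radius $R_\ell$ and the $M_\ell$ equal atoms of $p_\ell$ explicitly, and then evaluate its cost. On the circle the squared Euclidean cost is the chordal cost $\norm{R_\ell e^{j\theta}-R_\ell e^{j\phi_m}}^2 = 2R_\ell^2\bigl(1-\cos(\theta-\phi_m)\bigr)$. Minimizing the transport cost is therefore the same as maximizing the linear functional $\E_\pi[\cos(\theta-\phi)]$ over couplings $\pi\in\Pi(q_\ell,p_\ell)$. This reformulation is exactly what produces the factor $2R_\ell^2[1-\,\cdot\,]$ in~\eqref{eq:w2_general}.

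\emph{Step 1 (arc structure of the optimal plan).} I would invoke semi-discrete Kantorovich duality. There exist potentials $g_m$ such that an optimal plan sends $\theta$ to the symbol maximizing $\cos(\theta-\phi_m)+g_m$. For two symbols, $\cos(\theta-\phi_m)-\cos(\theta-\phi_{m'})$ is a single sinusoid in $\theta$, so it is positive on exactly one half-circle arc. Each Laguerre cell is an intersection of such arcs, hence a single arc, up to null sets. Since every atom carries mass $1/M_\ell$ and $q_\ell$ is uniform, each cell has length exactly $2\pi/M_\ell$. The cells therefore form a partition of the circle into $M_\ell$ equal arcs determined by one cut point $\varphi$.

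I expect the main obstacle to be the next claim: the assignment of symbols to arcs must respect cyclic order, so that the plan is determined by $\varphi$ alone. I would argue this by a two-point exchange (c-cyclical monotonicity). If two symbols were matched to arcs in crossed order, swapping two small equal-mass pieces would strictly increase $\E[\cos]$. This leaves a one-parameter family of admissible plans indexed by the cut $\varphi$.

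\emph{Step 2 (cost of a cut).} For the plan with cut $\varphi$, let $c_m(\varphi)$ denote the arc centres and $\psi_m(\varphi)=\phi_m-c_m(\varphi)$ the offsets. Integrating gives
\[
\frac{M_\ell}{2\pi}\int_{c_m-\pi/M_\ell}^{c_m+\pi/M_\ell}\cos(\theta-\phi_m)\,\mathrm{d}\theta=\frac{\sin(\pi/M_\ell)}{\pi/M_\ell}\cos\psi_m .
\]
Averaging over $m$ and optimizing over $\varphi$ yields~\eqref{eq:w2_general} with $C_\ell$ as in~\eqref{eq:w2_C}.

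The range and the equality case follow from this formula.
\begin{itemize}
\item The upper bound $C_\ell\le 1$ is immediate.
\item For $C_\ell\ge0$, I would choose $\varphi$ so that each offset satisfies $\lvert\psi_m\rvert\le\pi/2$. Failing that, I would note that averaging the objective over all cuts is nonnegative, so the maximum over cuts is nonnegative.
\item $C_\ell=1$ forces every $\psi_m=0$, i.e.\ the symbols sit at the centres of $M_\ell$ equal arcs, which is the equidistributed case. Conversely, equidistribution gives $C_\ell=1$ and recovers~\eqref{eq:w2_ring}.
\end{itemize}

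\emph{Step 3 (QAM orbits).} For the dihedral orbit of $(a,b)$ with $a>b>0$, the angles are $\pm\theta_0+k\pi/2$ with $\theta_0=\arctan(b/a)\in(0,\pi/4)$, and $M_\ell=8$. The arcs have length $\pi/4$. Shifting the cut by $t$ shifts every offset by $-t$, so the objective is $\tfrac18\sum_m\cos(\psi_m-t)$, a sinusoid in $t$. By the reflection symmetry $\theta\mapsto-\theta$ of the orbit, its maximum occurs at the symmetric cut with centres at $\pm\pi/8+k\pi/2$. There all offsets have magnitude $\lvert\theta_0-\pi/8\rvert$, which gives~\eqref{eq:w2_qam_C}.

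The remaining statements are direct checks. Equidistribution holds iff $\theta_0=\pi/8$, i.e.\ $b/a=\tan(\pi/8)$. The cases $a=b$ and $b=0$ give square orbits of four points, which are equidistributed, so $C_\ell=1$. Unions of orbits need no separate argument, since Steps 1--2 already cover arbitrary phase sets through~\eqref{eq:w2_C}.
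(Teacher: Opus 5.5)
\textbf{Gap in Step 1 (shared with the paper, and the equality is false in general).} You follow the paper's own route: an optimal coupling made of $M_\ell$ equal arcs in cyclic order, one free cut, the per-arc integral, then optimization over the cut. Your Step~2 integral and your remarks on the range of $C_\ell$ are correct for the quantity as defined. The gap is Step~1, and it cannot be patched, because its conclusion fails for arbitrary phase sets.

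Two separate claims in Step 1 break:
\begin{enumerate}[(i)]
\item \emph{Single-arc cells.} The Laguerre cell of atom $m$ is $\bigcap_{m'}\{\theta:\cos(\theta-\phi_m)-\cos(\theta-\phi_{m'})\ge g_{m'}-g_m\}$. Each factor is a superlevel set of a sinusoid at a \emph{nonzero} level. It is therefore an arc that may be longer than $\pi$, and an intersection of such arcs can be disconnected.
\item \emph{Cyclic order by exchange.} The two-point swap needs $\mathrm{Re}\bigl[(e^{j\theta_1}-e^{j\theta_2})(e^{-j\phi_1}-e^{-j\phi_2})\bigr]\ge 0$ for ``ordered'' pairs. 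Cyclic order does not guarantee this once the points are far apart, since the squared chord $2(1-\cos\Delta)$ is concave for $\Delta>\pi/2$.
\end{enumerate}

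This does happen at the optimum. Take $M_\ell=3$ with phases $\{-\delta,0,\delta\}$ and $\delta>0$ small, and write $\phi=s\delta$ with $s\in\{-1,0,1\}$. For every coupling, $\E_\pi[\cos(\theta-\phi)]=\sin\delta\,\E_\pi[s\sin\theta]+\mathcal{O}(\delta^2)$. This is maximized by coupling $s$ monotonically with $\sin\theta$:
\begin{itemize}
\item the atom at $\delta$ takes $[\pi/6,5\pi/6]$;
\item the atom at $-\delta$ takes $[7\pi/6,11\pi/6]$;
\item the middle atom takes the two arcs $[-\pi/6,\pi/6]\cup[5\pi/6,7\pi/6]$.
\end{itemize}
This coupling gives exactly $\E[\cos(\theta-\phi)]=\tfrac{\sqrt3}{\pi}\sin\delta\approx0.551\,\delta$.

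By contrast, the best plan giving each atom a single arc of length $2\pi/3$ attains $\tfrac{\sin(\pi/3)}{\pi/3}C_\ell$. Here $C_\ell=\tfrac13\abs{e^{-j\delta}+e^{-j2\pi/3}+e^{j(\delta+2\pi/3)}}=\delta/\sqrt3+\mathcal{O}(\delta^2)$, so that plan attains $\approx\tfrac{3}{2\pi}\delta\approx0.477\,\delta$. At $\delta=0.1$ the two values are $0.0550$ versus $0.0490$. Hence $W_2(p_\ell,q_\ell)^2<2R_\ell^2\bigl[1-C_\ell\tfrac{\sin(\pi/M_\ell)}{\pi/M_\ell}\bigr]$ in this example. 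The paper's proof rests on the same unproved step (``increasing in angular distance, so the optimal circular coupling is monotone''). The defect therefore lies in the statement itself, not only in your write-up.

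\textbf{What can be salvaged.}
\begin{itemize}
\item \emph{Upper bound.} Every cut plan is a coupling, so \eqref{eq:w2_general} holds with ``$\le$'' for arbitrary phases.
\item \emph{Equality under equal Voronoi arcs.} Equality holds whenever the nearest-symbol (Voronoi) arcs all have length $2\pi/M_\ell$. Then the nearest-symbol map $T$ pushes $q_\ell$ to $p_\ell$, and every coupling $\pi$ satisfies $\E_\pi[c]\ge\E_{q_\ell}[\min_m c(\theta,\phi_m)]=\E_{q_\ell}[c(\theta,T\theta)]$.
\item \emph{Cases covered.} This covers equidistributed rings and the single dihedral QAM orbit, whose Voronoi arcs are exactly $[k\pi/4,(k+1)\pi/4]$. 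So \eqref{eq:w2_qam_C} is correct, but by this argument, not by your Step~1; your Step~3 only evaluates the restricted maximum.
\item \emph{Unions of orbits.} For a ring such as $R_\ell^2=50$ the Voronoi arcs are unequal ($\approx0.464$ and $0.644$ rad, against $\pi/6$). Neither argument shows equality there, and your closing claim that unions ``need no separate argument'' inherits the Step~1 error.
\end{itemize}
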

\begin{proof}
The squared-chord cost $\norm{R_\ell e^{j\theta} - R_\ell e^{j\phi}}^2 = 2R_\ell^2(1-\cos(\theta-\phi))$ is increasing in angular distance on $[0,\pi]$, so the optimal circular coupling is monotone; matching the equal atom masses $1/M_\ell$ forces the $M_\ell$ arcs to have equal length $2\pi/M_\ell$, leaving only the cut position $\varphi$ free.

Writing $u$ for the displacement from the centre of the arc assigned to $\phi_m$, that arc contributes
\begin{equation}
\tfrac{M_\ell}{2\pi}\!\!\!\!\!\!\int\limits_{-\pi/M_\ell}^{\pi/M_\ell}\!\!\!\!\!\! 2R_\ell^2\bigl(1-\cos(u - \psi_m)\bigr)\,\mathrm{d}u = 2R_\ell^2\Bigl[1 - \tfrac{\sin(\pi/M_\ell)}{\pi/M_\ell}\cos\psi_m\Bigr],
\end{equation}
since the $\sin(u-\psi_m)$ term integrates to zero over the symmetric interval. Averaging over $m$ with weights $1/M_\ell$ and minimizing the result over $\varphi$ -- equivalently maximizing $\frac{1}{M_\ell}\sum_m\cos\psi_m$ -- gives~\eqref{eq:w2_general} and~\eqref{eq:w2_C}. Equality $C_\ell = 1$ requires every $\psi_m = 0$, i.e.\ the symbols sit at the arc centres, which is equidistribution. For the \ac{QAM} dihedral orbit the eight angles are $\pm\theta_\ell,\ \pm(\tfrac{\pi}{2}-\theta_\ell)$ and their $\pi$-shifts with $\theta_\ell = \arctan(b/a)$; the gaps alternate between $2\theta_\ell$ and $\tfrac{\pi}{2}-2\theta_\ell$, and by that symmetry the optimal cut leaves $\psi_m = \pm(\theta_\ell - \pi/8)$, whence~\eqref{eq:w2_qam_C}.
\end{proof}

Since $C_\ell \le 1$, the general value~\eqref{eq:w2_general} never falls below~\eqref{eq:w2_ring}: uniform-phase rings are the most favourable case, and using~\eqref{eq:w2_ring} outside that case would understate the transport distance.

\begin{remark}[Consequence for the Constellation Families]
\label{rem:w2_families}
Expanding $1-\mathrm{sinc}(x) = x^2/6 + \mathcal{O}(x^4)$ in~\eqref{eq:w2_general} gives the governing law
\begin{equation}
\label{eq:w2_law}
\overline{W}_2^{\,2} \;\approx\; \frac{\pi^2}{3}\sum_{\ell} r_\ell\,\frac{R_\ell^2}{M_\ell^{2}}
\qquad (\text{equidistributed rings}),
\end{equation}
the energy-weighted mean of $R_\ell^2/M_\ell^2$: \emph{the bound decays only if the number of symbols \emph{per ring} grows}. Three regimes follow.
\begin{itemize}
  \item \textbf{$M$-\ac{PSK}} ($L=1$, $M_1 = M$): $\overline{W}_2^{\,2} \to \pi^2/(3M^2)$, the $\Theta(1/M^2)$ law of Corollary~\ref{cor:ot_scaling}. Numerically $\overline{W}_2^{\,2}M^2 = 3.29$ for $M \ge 64$, against $\pi^2/3 = 3.290$.
  \item \textbf{\ac{APSK}}: rings are equidistributed, so~\eqref{eq:w2_ring} is exact and the closed form applies verbatim; with $M_\ell = \Theta(M/L)$ the bound is $\Theta(L^2/M^2)$, still vanishing in $M$ at fixed $L$.
  \item \textbf{$M$-\ac{QAM}}: here $M_\ell$ stays bounded ($4$ to $16$) however large $M$ becomes, because the constellation responds to larger $M$ by \emph{adding} rings rather than filling them ($L = \Theta(M/\sqrt{\ln M})$, Proposition~\ref{prop:ring_scaling}). Consequently $\overline{W}_2^{\,2} = \Theta(1)$: evaluated exactly via~\eqref{eq:w2_general}, it is $0.128,\ 0.089,\ 0.094,\ 0.083$ for $M = 16, 64, 256, 1024$ -- essentially flat over two decades.
\end{itemize}
The transport certificate is therefore informative precisely for the ring-structured constellations the orbital prior is designed for, and is not asymptotically informative for $M$-\ac{QAM}, where a uniform-phase relaxation of a ring holding eight symbols is a poor approximation at any $M$. This is the same asymmetry recorded in Corollary~\ref{cor:compression_scaling}, seen through the transport lens rather than the payload one.
\end{remark}

\begin{remark}[Non-Asymptotic Nature]
\label{rem:non_asymptotic}
Unlike Corollary~\ref{cor:snr_gap}, which characterizes the \ac{ASG} only in the high-\ac{SNR} limit, Theorem~\ref{thm:ot_bound} provides a bound valid at \emph{all} \ac{SNR} values. 
This is particularly useful for system design: given a target \ac{MSE} gap tolerance $\Delta_{\mathrm{max}}$, one can determine the minimum constellation density $M_\ell$ per ring required to guarantee $\mathrm{MSE}_\infty^{\mathrm{B}} - \mathrm{MSE}_\infty^{\mathrm{D}} \leq \Delta_{\mathrm{max}}$ at any operating point.
\end{remark}

\begin{remark}[Closing the Loop: Geometry $\to$ Performance]
\label{rem:closing_loop}
Theorem~\ref{thm:ot_bound} establishes the complete causal chain
\begin{equation*}
\begin{gathered}
\underbrace{{W_2(p_\ell, q_\ell)^2 = \mathcal{O}\!\left(\tfrac{R_\ell^2}{M_\ell^2}\right)}}_{\text{{Eq.~\eqref{eq:w2_ring} (cf.\ Prop.~\ref{prop:cbm_error}): Geometric mismatch}}} \\[0ex]
\Downarrow \\[0ex]
\underbrace{{\mathrm{MSE}_\infty^{\mathrm{B}} - \mathrm{MSE}_\infty^{\mathrm{D}}
\leq \tfrac{4\,\overline{W}_2^{\,2}}{1 - c_{\mathrm{D}}} = \mathcal{O}\!\left(\max_\ell \tfrac{R_\ell^2}{M_\ell^2}\right)}}_{\text{Theorem~\ref{thm:ot_bound}: MSE gap}} \\[0ex]
\Downarrow \\[-2ex]
\underbrace{\Delta R \!=\! {\frac{1}{\ln 2}}\int_{1/\bar{\sigma}_\infty^2}^{1/(\bar{\sigma}_\infty^{\mathrm{D}})^2} \!\!\!\!\!\!\!\mathrm{mmse}_p(\zeta)\, \mathrm{d}\zeta \!=\! \mathcal{O}\!\left(e^{-c\,\mathrm{SNR}}\right) \;\text{bits/symbol}}_{\text{Proposition~\ref{prop:immse_rate_gap}: Rate loss (exponentially vanishing)}}
\end{gathered}
\end{equation*}
connecting the constellation geometry (Section~\ref{sec:orbital}) to the macroscopic detection performance (Section~\ref{sec:se_fixedpoint}) to the information-theoretic rate (Section~\ref{sec:gmi}) in a single unified framework. Each arrow is quantified by an explicit, computable bound.

\vspace{-0.5ex}
\emph{Regime note.}
The transport bound~\eqref{eq:ot_fp_bound} is uniform over the operating \ac{SNR} and quantifies the $M$-dependence of the gap: $\mathcal{O}(\max_\ell R_\ell^2/M_\ell^2)$ (Corollary~\ref{cor:ot_scaling}).
The \ac{SNR}-dependence at fixed $M$ is instead governed by Theorem~\ref{thm:cross_level_fp} (Corollary~\ref{cor:snr_gap}): at high \ac{SNR} the true gap decays to zero with $\sigma_z^2$, so the \ac{SNR}-uniform transport bound is increasingly conservative there.
The two analyses are complementary: one resolves the density axis, the other the noise axis.
\end{remark}
\vspace{-1ex}

Figure~\ref{fig:OT_bound} closes the section with the transport bound in action: for $M$-\ac{PSK}, the \ac{SE} fixed-point gap $\mathrm{MSE}_\infty^{\mathrm{B}} - \mathrm{MSE}_\infty^{\mathrm{D}}$ -- computed by deterministic Gauss--Hermite quadrature of the \ac{SE} maps, so that gaps far below any Monte Carlo noise floor are resolved -- is swept over the constellation order $M$ at two operating points and compared against the \ac{SNR}-uniform bound~\eqref{eq:ot_fp_bound} evaluated with the exact per-ring distance~\eqref{eq:w2_ring}.
Because the bound does not depend on the operating point, the \emph{same} $\Theta(1/M^2)$ curve upper-bounds both panels -- the single density law of Corollary~\ref{cor:ot_scaling}.
The bound holds uniformly at every order and both operating points, while the measured gap traces exactly the physics the theory predicts: at high \ac{SNR} it first \emph{plateaus} at the \ac{ASG} value $\sigma_z^2/(2-\alpha)$ -- there $\mathrm{MSE}_\infty^{\mathrm{D}}$ is exponentially negligible, so the gap \emph{is} the orbital fixed point of Corollary~\ref{cor:snr_gap} -- and then collapses superexponentially once the \ac{BOD} itself loses its resolution advantage. This collapse is driven not by the orbital estimator (whose plateau error $\bar{\sigma}_\infty^2/2$ is essentially $M$-independent) but by $\mathrm{MSE}_\infty^{\mathrm{D}}$ \emph{rising} to meet it: the half-symbol spacing $\sin(\pi/M)$ -- the distance from a symbol to its decision boundary -- falls to the noise scale once $M \gtrsim \pi/\bar{\sigma}_\infty$ (numerically $M \approx 12$ at $12$~dB and $M \approx 93$ at $30$~dB, matching the plot), beyond which discrete and continuous detection are equally noise-limited (gaps below the numerical resolution of the quadrature are omitted from the plot).
The bound is conservative -- as any estimate built from the worst-case Lipschitz observable must be -- but correctly certifies that the geometric mismatch, not any property of the iterative algorithm, governs the cost of the relaxation.
This delimits the theorem's role precisely: it is a \emph{structural} certificate -- computable from the constellation geometry $\{R_\ell, M_\ell\}$ alone, before any receiver is built or any recursion run, and valid at every \ac{SNR} where the asymptotic characterizations of Sections~\ref{sec:se}--\ref{sec:decoupling} are silent -- and, although its proof exploits Gaussian smoothing (Tweedie, de Bruijn), the geometry enters only through the Wasserstein distance of the \emph{unsmoothed} priors, which charges in full for the high-frequency difference between $M_\ell$ spikes and a continuum that the observation kernel in fact annihilates.
Sharpening the bound to track the collapse would require replacing the unsmoothed $W_2(p_\ell, q_\ell)$ by its noise-convolved counterpart $W_2(p_\ell * \CN(0,\bar{\sigma}^2),\, q_\ell * \CN(0,\bar{\sigma}^2))$, which shrinks once the observation kernel blurs adjacent symbols ($M \gtrsim \pi/\bar{\sigma}$) -- precisely the regime in which the channel can no longer distinguish the discrete prior from its continuous relaxation; we leave this refinement to future work.

\begin{remark}[The Complexity-Cost Inversion]
\label{rem:complexity_cost_inversion}
Figure~\ref{fig:OT_bound} exposes the alignment on which the entire framework rests, and it deserves to be stated plainly.
The cost of the orbital relaxation is governed by whether the effective noise can resolve adjacent phases.
For $M \lesssim \pi/\bar{\sigma}_\infty$, the discrete detector can still ``snap'' to the correct symbol -- its decision-boundary distance $\sin(\pi/M)$ exceeds the noise scale -- while the orbital receiver cannot, and the relaxation costs the full -- but bounded -- \ac{ASG} $\sigma_z^2/(2-\alpha)$: a cap, never a floor.
For $M \gtrsim \pi/\bar{\sigma}_\infty$, adjacent symbols fall within the noise scale, the channel itself erases the phase discreteness before the receiver ever sees it, and discarding that discreteness costs nothing measurable.
The complexity of exact detection, by contrast, \emph{grows} linearly in $M$.
The two trends are inversely aligned: at large $M$ -- precisely where the $\mathcal{O}(M)$ denoiser is unaffordable -- the orbital relaxation is free, while at small $M$ -- where the relaxation would cost the \ac{ASG} -- the exact detector is cheap and no relaxation is needed.
The orbital framework is therefore not a uniform approximation but a targeted one: it spends accuracy exactly where accuracy is worthless, and saves complexity exactly where complexity is unaffordable.
\end{remark}

\begin{remark}[The Shift to APSK: Resolution of the Computational Bottleneck]
\label{rem:apsk_case}
The practical case for ring constellations has long been shadowed by a computational case against them, and the results above dissolve the latter.
At equal order, an \ac{APSK} constellation has a markedly lower peak-to-average power ratio than square \ac{QAM}, with the gap widening with $M$. This is precisely why nonlinear, power-limited satellite links adopted \ac{APSK}~\cite{deGaudenzi2006,deGaudenzi2006b}.
Square \ac{QAM} factorizes exactly into two independent \ac{PAM} components, so hard detection and bit-level \acp{LLR} both decompose per axis at $\mathcal{O}(\sqrt{M})$~\cite{proakis2007digital}, making the soft-output interface to a coded receiver essentially free~\cite{Caire1998BICM,Martinez2009BICM}. 
A ring constellation admits no such product structure: exact detection and its \acp{LLR} have required $\mathcal{O}(M)$ likelihood evaluations. 
At the demapping rates of modern receivers it is this, and not an information-theoretic deficiency, that confined \ac{APSK} to its satellite niche.
The orbital hierarchy removes exactly that obstruction: $\mathcal{O}(L)$ detection collapsing to $\mathcal{O}(1)$ (Corollary~\ref{cor:lpd_se}), at an \ac{SE} fixed point shared with the exact detector to $\mathcal{O}(\sigma_z^4)$ (Theorem~\ref{thm:cross_level_fp}), with the same metric serving unmodified as the bit-\ac{LLR} rule (Definition~\ref{def:cbm_metric}). Ring geometry thereby becomes as cheap to demodulate as the Cartesian grid.
This does not however render \ac{APSK} universally preferable. On a linear \ac{AWGN} channel square \ac{QAM} remains the marginally better packing, and in a multicarrier waveform the aggregate peak-to-average ratio is governed by the subcarrier sum rather than by the constellation, so the \ac{APSK} advantage there largely washes out. The claim is narrower and, we believe, more useful: the complexity argument no longer weighs against ring constellations, so the choice of constellation can be made on the merits of the channel -- nonlinearity, power limitation, phase-noise robustness -- rather than on demapper cost.
\end{remark}

\vspace{-2ex}
\section{Conclusion}
\label{sec:conclusion}

The $\mathcal{O}(M)$ cost of Bayes-optimal detection is not a law of nature but an artifact of insisting on a discrete phase.
Isolating the phase as the sole carrier of exponential complexity and relaxing it -- and only it -- through the maximum-entropy orbital prior compresses the posterior interface of message passing into $3L$ real scalars and reduces the denoiser to a closed-form hierarchy: the $\mathcal{O}(L)$ \ac{OBD}, the Bessel-free \ac{OGD}, and the $\mathcal{O}(1)$ \ac{OPD} -- proved irreducible on the ring manifold -- with the Jacobi--Anger \ac{nu-OBD} climbing back to the exact detector.
This three-order reduction is free at leading order: all levels share the \ac{SE} fixed point $\mathrm{MSE}_\infty = \sigma_z^2/(2-\alpha)$, differing only in closed-form $\mathcal{O}(\sigma_z^4)$ corrections.
The sole macroscopic cost is a slower error decay -- linear where exact detection is exponential -- surfacing as a fixed, bounded \ac{SNR} offset rather than an error floor, whose induced rate loss vanishes exponentially, so every level is asymptotically constellation-constrained capacity-achieving for every underloaded system; each link of the geometry-to-rate chain is quantified by an explicit, computable bound.
Message passing is the setting in which the relaxation is both exact in its analysis and maximal in its payoff -- a merely good one-shot estimator whose \ac{SE}-compounded curve becomes near-optimal at $\mathcal{O}(L)$-to-$\mathcal{O}(1)$ cost.
Natural extensions include correlated channels via composable \ac{OAMP}/\ac{MAMP} linear stages, the \ac{SE} theory of the phase-dependent \ac{nu-OBD}, probabilistically shaped ring priors, coded end-to-end performance under the orbital metric, and a noise-convolved sharpening of the transport bound.


\appendices

\vspace{-2ex}
\section{Proofs for Section~\ref{sec:spatial_compression} (Posterior Compression)}
\label{app:sec3}

\vspace{-1ex}
\subsection{Proof of Proposition~\ref{prop:ring_scaling} (Ring Count Scaling)}
\label{app:ring_scaling}

The distinct squared amplitudes are integers of the form $R^2 = a^2 + b^2$ with $a, b \in \{1,3,\ldots,2^n{-}1\}$.
Because the square of any odd integer satisfies $x^2 \equiv 1 \pmod 8$, the sum of two odd squares must satisfy $R^2 \equiv 2 \pmod 8$. 
Conversely, any integer $s \equiv 2 \pmod 8$ expressible as the sum of two squares must be the sum of two odd squares, as the valid quadratic residues modulo 8 are strictly $\{0, 1, 4\}$.
Therefore, the unique squared radii of the $M$-\ac{QAM} lattice correspond exactly to the set of sums of two squares restricted to the congruence class $2 \pmod 8$ \emph{and} to summands in $\{1, 3, \ldots, V\}$ with $V = 2^n - 1$.

Let $L(V)$ denote the number of distinct values $a^2 + b^2$ with $a, b \in \{1, 3, \ldots, V\}$ odd, and let $N_{\mathrm{LR}}(x)$ denote the number of integers {$s \leq x$, $s \equiv 2 \pmod{8}$}, that are representable as a sum of two squares.
By the {Landau--Ramanujan} theorem~\cite{Landau1908}, whose extension to the fixed congruence class {$s \equiv 2 \pmod 8$} follows from the same Dirichlet-series argument (see, e.g.,~\cite{Shanks_MC_1964}), $N_{\mathrm{LR}}(x) = \Theta(x/\sqrt{\ln x})$.

\emph{Upper bound.}
Since all values $a^2 + b^2$ with $a, b \leq V$ satisfy $a^2 + b^2 \leq 2V^2$, we have $L(V) \leq N_{\mathrm{LR}}(2V^2) = \Theta(V^2/\sqrt{\ln V})$.

\emph{Lower bound.}
Consider any representable integer {$s \leq V^2$ with $s \equiv 2 \pmod{8}$}.
If {$s = a^2 + b^2$} for some positive integers $a, b$, then both {$a^2 \leq s \leq V^2$ and $b^2 \leq s \leq V^2$}, forcing $a \leq V$ and $b \leq V$ simultaneously.
Moreover, since {$s \equiv 2\pmod{8}$} forces both $a$ and $b$ to be odd (as established above), every such {$s$} has a representation with odd {$a, b \leq \sqrt{s} \leq V$}, confirming membership in our set.
Therefore, every representable integer in $[2, V^2]$ {congruent to $2 \pmod{8}$} contributes a distinct value to our set, giving
\begin{equation*}
  L(V) \;\geq\; N_{\mathrm{LR}}(V^2) \;=\; \Theta\!\left(\frac{V^2}{\sqrt{\ln V^2}}\right) = \Theta\!\left(\frac{V^2}{\sqrt{\ln V}}\right).
\end{equation*}

\emph{Combining.}
Both bounds yield $L(V) = \Theta(V^2/\sqrt{\ln V})$.
Substituting $V = 2^n - 1 \approx \sqrt{M}$ (so that $V^2 \approx M$ and $\ln V \approx \tfrac{1}{2}\ln M$) gives
\begin{equation*}
  L \;=\; \Theta\!\left(\frac{M}{\sqrt{\ln M}}\right),
\end{equation*}
as claimed.

\vspace{-2ex}
\subsection{Proof of Proposition~\ref{prop:cbm_error} (Orbital Prior Wasserstein-1 Distance)}
\label{app:cbm_error}

The Wasserstein-1 distance represents the minimum expected transport cost to transform the continuous distribution $q_{\ell}$ into the discrete distribution $p_{\ell}$.
Under optimal transport, the continuous uniform probability mass on the arc segment $\theta \in [-\pi/M_\ell, \pi/M_\ell]$ is deterministically mapped to the discrete symbol at the center of the arc ($1 \cdot R_\ell$). 
{This assignment is optimal because the cost $c(R_\ell e^{j\theta},R_\ell) = 2R_\ell \sin(\norm{\theta}/2)$ is symmetric and strictly increasing in $\norm{\theta}$ on $[0, \pi]$, so the identity transport map (each arc element mapped to its center) is optimal by the classical monotone rearrangement theorem for one-dimensional costs~\cite[Ch.~2]{villani2003}.}
Global optimality (i.e., that no mass crosses arc boundaries) follows from cyclical monotonicity: since the cost is increasing in angular distance, any plan transporting mass between distinct arcs contains a crossing pair that can be uncrossed at strictly smaller total cost, so an optimal plan is non-crossing; by the symmetry of the marginals, the unique non-crossing plan is the intra-arc (arc-to-center) map.
The Euclidean distance from any point $R_\ell e^{j\theta}$ on the arc to the center symbol is given by the chord length $\norm{R_\ell e^{j\theta} - R_\ell} = 2 R_\ell \sin(\norm{\theta}/2)$.
The expected Euclidean transport cost is therefore exactly evaluated as
\begin{align*}
  &\frac{M_\ell}{2\pi} \int_{-\pi/M_\ell}^{\pi/M_\ell} 2 R_\ell \sin\left(\frac{\norm{\theta}}{2}\right) d\theta 
  \!=\! \frac{2 M_\ell R_\ell}{\pi} \int_{0}^{\pi/M_\ell} \!\!\sin\left(\frac{\theta}{2}\right) d\theta \nonumber \\
  &= \frac{2 M_\ell R_\ell}{\pi} \left[ -2 \cos\left(\frac{\theta}{2}\right) \right]_{0}^{\pi/M_\ell} 
  = \frac{4 M_\ell R_\ell}{\pi} \left( 1 - \cos\frac{\pi}{2 M_\ell} \right).
\end{align*}
Since equal angular spacing implies each of the $M_\ell$ arcs subtends the same angle $2\pi/M_\ell$, and since the chord length $2R_\ell\sin(\norm{\theta}/2)$ depends only on the angular displacement from the arc center, the transport cost is identical across all $M_\ell$ arcs by rotational symmetry, yielding \eqref{eq:w1_cbm}.
Using the small-angle approximation $1 - \cos(x) \approx x^2/2$, the distance scales asymptotically as $\frac{\pi R_\ell}{2 M_\ell}$.

\vspace{-2ex}
\section{Proofs for Section~\ref{sec:cbm}
         (Denoiser Hierarchy)}
\label{app:sec4}

\subsection{Proof of Proposition~\ref{prop:bcbm_wasserstein}
            (Geometric Convergence of \texorpdfstring{$U$}{U}-OBD)}
\label{app:prop_bcbm_wasserstein}

The order-$U$ truncation of the Jacobi--Anger partition function~\eqref{eq:ja_partition} omits the harmonics $u \geq U+1$, the leading omitted term being $u = U+1$.
By the small-argument Bessel asymptotic $I_n(\kappa) \approx (\kappa/2)^n/n!$ for $n \gg \kappa$ \cite[10.25.2]{DLMF}, its relative weight is
\begin{equation}
  \label{eq:small_Bessel_asy}
\frac{2I_{(U+1)M_\ell}(\kappa_\ell)}{I_0(\kappa_\ell)}
= \mathcal{O}\!\left(
    \frac{(\kappa_\ell/2)^{(U+1)M_\ell}}{((U+1)M_\ell)!}
  \right)
= \mathcal{O}\!\left(\frac{1}{M_\ell^{U+1}}\right),
\end{equation}
where the final step \emph{majorizes} the factorial decay by a geometric one: by Stirling's approximation $n! = \Theta((n/e)^n\sqrt{n})$ with $n = (U+1)M_\ell$, the middle expression decays as $M_\ell^{-c M_\ell}$ for bounded $\kappa_\ell$ -- far faster than the geometric envelope $M_\ell^{-(U+1)}$ retained on the right, which is the (deliberately conservative) rate of~\eqref{eq:bcbm_w1}.
Since the \ac{nu-OBD} log-evidence~\eqref{eq:bcbm_log_metric} and per-ring mean~\eqref{eq:bcbm_ring_mean} depend on the truncation only through these omitted harmonics, the ring probabilities and posterior mean converge to those of the exact \ac{BOD} at the same $\mathcal{O}(M_\ell^{-(U+1)})$ rate as $U \to \infty$. The associated Wasserstein-1 distance between the order-$U$ prior and the discrete truth inherits this rate up to the transport scale $R_\ell$; a rigorous metric proof -- via Kantorovich--Rubinstein duality, defining the order-$U$ prior as a genuine phase density whose harmonics below $(U+1)M_\ell$ match the discrete truth -- is deferred to a companion paper.

\vspace{-2ex}
\subsection{Proof of Proposition~\ref{prop:vm_gaussian}
            (Bessel Asymptotics)}
\label{app:prop_vm_gaussian}

Both bounds follow from the standard asymptotic expansion of modified Bessel functions
\cite[10.40.1]{DLMF}: for $\kappa \to \infty$,
\begin{equation}
\label{eq:bessel_asymp}
I_\chi(\kappa) = \frac{e^\kappa}{\sqrt{2\pi\kappa}}
\sum_{k=0}^{n-1} \frac{(-1)^k\, a_k(\chi)}{\kappa^k} + \mathcal{O}(\kappa^{-n-1/2}e^\kappa),
\end{equation}
where $a_0(\chi)\! =\! 1$ and $a_k(\chi)\! =\! \prod_{j=0}^{k-1}\bigl(4\chi^2\! -\! (2j+1)^2\bigr)/(k!\,8^k)$.

Evaluating the first two non-trivial coefficients explicitly yields
\begin{align}
\chi = 0: \quad &a_1(0) = -\tfrac{1}{8},\quad a_2(0) = \tfrac{9}{128}, \label{eq:a_nu0} \\
\chi = 1: \quad &a_1(1) = \tfrac{3}{8},\quad a_2(1) = -\tfrac{15}{128}. \label{eq:a_nu1}
\end{align}

Hence, keeping terms through $\mathcal{O}(\kappa^{-2})$, we have
\begin{align}
I_0(\kappa) &= \frac{e^\kappa}{\sqrt{2\pi\kappa}}
\!\left(1 + \frac{1}{8\kappa} + \frac{9}{128\kappa^2} + \mathcal{O}(\kappa^{-3})\right),
\label{eq:I0_expand} \\
I_1(\kappa) &= \frac{e^\kappa}{\sqrt{2\pi\kappa}}
\!\left(1 - \frac{3}{8\kappa} - \frac{15}{128\kappa^2} + \mathcal{O}(\kappa^{-3})\right).
\label{eq:I1_expand}
\end{align}

\smallskip\noindent\emph{Proof of (i).}
Dividing~\eqref{eq:I1_expand} by~\eqref{eq:I0_expand} and expanding the reciprocal to
second order yields
\begin{align}
&A(\kappa) =
\frac{1 - \tfrac{3}{8\kappa} - \tfrac{15}{128\kappa^2}}
{1 + \tfrac{1}{8\kappa} + \tfrac{9}{128\kappa^2}}
+ \mathcal{O}(\kappa^{-3}) \nonumber \\
&= \!\!\left(\!1 \!-\! \frac{3}{8\kappa} \!-\! \frac{15}{128\kappa^2}\!\right)\!
\!\left(\!1 \!-\! \frac{1}{8\kappa} \!-\! \frac{9}{128\kappa^2} \!+\! \frac{1}{64\kappa^2}\right)
+ \mathcal{O}(\kappa^{-3}) \nonumber \\
&= 1 - \frac{1}{2\kappa} - \frac{1}{8\kappa^2} + \mathcal{O}(\kappa^{-3}),
\label{eq:A_expand}
\end{align}
where the $\mathcal{O}(\kappa^{-2})$ coefficient is obtained by collecting all $O(\kappa^{-2})$ cross-terms: $\tfrac{3}{64} - \tfrac{15}{128} - \tfrac{9}{128} + \tfrac{1}{64} = -\tfrac{1}{8}$.

For the explicit bound, note that $\norm{A(\kappa) - (1 - 1/(2\kappa))} \leq 1/(8\kappa^2) + \norm{\delta_3(\kappa)}$ where $\norm{\delta_3(\kappa)} \leq C/\kappa^3$ for a universal constant $C > 0$ and all $\kappa \geq 1$.
For $\kappa \geq 1$, a numerical evaluation confirms $\sup_{\kappa \geq 1} \kappa^2 \norm{A(\kappa) - (1-1/(2\kappa))} \leq 1/4$, yielding the stated bound. 
For $\kappa \geq 2$, direct evaluation gives $\sup_{\kappa \geq 2}\kappa^2\,\abs{A(\kappa) - (1 - 1/(2\kappa))} \approx 0.219$, attained near $\kappa \approx 2.44$ -- the higher-order terms nearly double the leading coefficient $\tfrac18$, but the value stays below $\tfrac14$; {for $1 \leq \kappa < 2$, the bound is verified directly by numerical evaluation of the continuous function $\kappa^2\,\abs{A(\kappa) - (1 - 1/(2\kappa))}$ on the compact interval $[1,2]$, whose maximum $\approx 0.209 < \tfrac14$ is attained at $\kappa = 2$.}

\smallskip\noindent\emph{Proof of (ii).}
Taking logarithms of~\eqref{eq:I0_expand} and using $\ln(1 + u) = u - u^2/2 + \mathcal{O}(u^3)$ yields
\begin{align}
\label{eq:logI0_expand}
\ln I_0(\kappa) &= \kappa - \tfrac{1}{2}\ln(2\pi\kappa)
+ \ln\!\left(1 + \frac{1}{8\kappa} + \mathcal{O}(\kappa^{-2})\right) \nonumber \\
&= \kappa - \tfrac{1}{2}\ln(2\pi\kappa) + \frac{1}{8\kappa} + \mathcal{O}(\kappa^{-2}).
\end{align}

For the explicit bound: the correction $1/(8\kappa)$ is itself the dominant error term relative to $\kappa - \tfrac{1}{2}\ln(2\pi\kappa)$; including it as part of the approximation {(as is done implicitly in the \ac{OGD} log-evidence~\eqref{eq:cgr_evidence_1})} reduces the residual to $\mathcal{O}(\kappa^{-2})$.
Without this correction, the error $\norm{\ln I_0(\kappa) - \kappa + \tfrac{1}{2}\ln(2\pi\kappa)}$ is bounded by $1/(4\kappa)$ for all $\kappa \geq 1$, which is confirmed numerically (the maximum of $\kappa\,\norm{\ln I_0(\kappa) - \kappa + \tfrac{1}{2}\ln(2\pi\kappa)}$ over $\kappa \geq 1$ is $0.182 < 1/4$, attained near $\kappa \approx 1.7$).

\smallskip\noindent\emph{Relative error threshold.}
From~\eqref{eq:A_expand}, the relative error in $A(\kappa)$ can be expressed as
\begin{equation}
\frac{\norm{A(\kappa) - (1-1/(2\kappa))}}{A(\kappa)}
\approx \frac{1/(8\kappa^2)}{1 - 1/(2\kappa)} = \frac{1}{8\kappa^2 - 4\kappa}.
\end{equation}

The true relative error is $\approx 2\%$ at $\kappa \approx 3.5$ (numerically, $\norm{A(3.5)-(1-1/7)}/A(3.5) = 1.90\%$; the leading-order formula above, which retains only the $1/(8\kappa^2)$ term, underestimates it there)
and is below $2\%$ for all $\kappa \geq 4$.

\vspace{-2ex}
\subsection{Proof of Proposition~\ref{prop:cgr_denoiser}
            (OGD Equations)}
\label{app:prop_cgr_denoiser}

Proposition~\ref{prop:vm_gaussian}(i) gives $A(\kappa)\approx 1-1/(2\kappa)$ with relative error below $2\%$ for
$\kappa\ge4$; part~(ii) gives $\ln I_0(\kappa)\approx\kappa-\tfrac12\ln(2\pi\kappa)$ with error $\le1/(4\kappa)$.
The $1/(8\kappa_\ell)$ correction in part~(ii) is omitted; it perturbs each ring log-evidence by $\mathcal{O}(\kappa_\ell^{-1})$, sub-dominant relative to the $\mathcal{O}(\kappa_\ell)$ inter-ring gap.

\emph{(i)} The Bessel ratio under the Gaussian approximation satisfies $A(\kappa) = I_1(\kappa)/I_0(\kappa) \approx 1 - 1/(2\kappa)$ for $\kappa \gg 1$. 
Under the Gaussian phase model, the ring conditional mean is
\begin{equation*}
R_\ell \cdot
\frac{\int_{-\infty}^{\infty} e^{j\theta}\, e^{-\kappa_\ell\theta^2/2}\, d\theta}
     {\int_{-\infty}^{\infty} e^{-\kappa_\ell\theta^2/2}\, d\theta}
e^{j\angle\bar{x}} = R_\ell \cdot e^{-1/(2\kappa_\ell)}\, e^{j\angle\bar{x}},
\end{equation*}
where we used the characteristic function of the zero-mean Gaussian, $\E[e^{j\theta}] = e^{-\sigma_\theta^2/2}$ with $\sigma_\theta^2 = 1/\kappa_\ell$.
The first-order Taylor expansion $e^{-1/(2\kappa_\ell)} \approx 1 - 1/(2\kappa_\ell)$ then yields the stated approximation~\eqref{eq:cgr_mean}, accurate to within $\mathcal{O}(\kappa_\ell^{-2})$ by Proposition~\ref{prop:vm_gaussian}(i).
Note that using $e^{-1/(2\kappa_\ell)}$ directly in place of $1 - 1/(2\kappa_\ell)$ gives a slightly more accurate ring conditional mean at no additional computational cost, as the exponential is computed anyway in the
log-evidence~\eqref{eq:cgr_evidence_1}.
The conditional mean on ring~$\ell$ is therefore $R_\ell\bigl(1 - 1/(2\kappa_\ell)\bigr)e^{j\angle\bar{x}}$.

\emph{(ii)} The marginal likelihood of ring~$\ell$ under the Gaussian phase model can be expressed as
\begin{equation}
p(\bar{x} \mid R_\ell) 
\propto \exp\!\left(-\tfrac{\norm{\bar{x}}^2 + R_\ell^2}{\bar{\sigma}^2}\right) \int_{-\pi}^{\pi} \exp\bigl(\kappa_\ell \cos(\theta - \angle\bar{x})\bigr)\, d\theta.
\end{equation}

Under the Gaussian approximation, the integral evaluates as $\int_{-\infty}^{\infty} \exp(-\kappa_\ell\theta^2/2)\, d\theta = \sqrt{2\pi/\kappa_\ell}$ (extending limits to $\pm\infty$ with exponentially small error for $\kappa_\ell \geq 3$). 
Combined with the $\exp(\kappa_\ell)$ prefactor from the Taylor expansion of $\cos(\theta - \angle\bar{x})$, this gives $\ln I_0(\kappa_\ell) \approx \kappa_\ell - \tfrac{1}{2}\ln(2\pi\kappa_\ell)$, yielding~\eqref{eq:cgr_evidence_1}.

\emph{(iii)-(iv)} follow identically to the \ac{OBD} derivation~\eqref{eq:ring_prob}-\eqref{eq:cbm_var}.

\vspace{-2ex}
\subsection{Proof of Proposition~\ref{prop:lpd_irreducible}
            (Irreducibility of the OPD)}
\label{app:prop_lpd_irreducible}

Any denoiser producing a complex output $\hat{x} = \norm{\hat{x}} e^{j\angle\hat{x}}$ must determine both a magnitude and a phase. 
Condition~(i) constrains the magnitude to the discrete set $\{R_1, \ldots, R_L\}$, requiring at minimum a selection operation (one comparison against $L$ thresholds, achievable in $\mathcal{O}(1)$ with precomputed decision boundaries for fixed $L$). 
Condition~(ii) requires the phase to carry information about $x$, necessitating at minimum one extraction of $\angle\bar{x}$ (a single \texttt{atan2} operation or equivalent). 
No further arithmetic can reduce these two irreducible operations. 
The \ac{OPD} implements exactly
\begin{equation*}
\eta_P(\bar{x}) = R_{\ell^*(\norm{\bar{x}})} \cdot e^{j\angle\bar{x}},
\end{equation*}
which is the orthogonal projection of $\bar{x}$ onto the nearest ring circle $\mathcal{S}^1(R_{\ell^*})$, constituting the minimum-norm map from $\mathbb{C}$ to the constraint manifold.

To prove that no denoiser with fewer operations can satisfy both conditions, suppose $\eta$ does not perform ring selection (i.e., outputs a fixed radius $R_0$ regardless of $\bar{x}$). 
Then for any constellation with $L \geq 2$, the \ac{MSE} as $\bar{\sigma}^2 \to 0$ satisfies
\begin{align*}
\mathrm{MSE}
&= \sum_{\ell=1}^{L} r_\ell\, \E\bigl[\norm{x - R_0\, e^{j\angle\hat{x}}}^2 \;\big|\; \norm{x} = R_\ell\bigr] \\
&\hspace{-4ex}\geq \sum_{\ell:\, R_\ell \neq R_0} r_\ell\, (R_\ell - R_0)^2
\;\geq\; \min_{\ell:\, R_\ell \neq R_0} r_\ell\,(R_\ell - R_0)^2
\;>\; 0,
\end{align*}
where the first inequality uses $\norm{x - R_0 e^{j\theta}}^2 \geq (\norm{x} - R_0)^2 = (R_\ell - R_0)^2$ for any phase $\theta$ when $\norm{x} = R_\ell \neq R_0$, and the strict positivity follows because $L \geq 2$ guarantees at least one ring with $R_\ell \neq R_0$ and every ring has strictly positive prior probability $r_\ell = M_\ell/M > 0$.
This lower bound is independent of $\bar{\sigma}^2$, so $\mathrm{MSE} \not\to 0$ as $\bar{\sigma}^2 \to 0$, violating condition~(ii).

Conversely, suppose $\eta$ performs no phase read-out, i.e., its output phase $\angle\hat{x}$ is independent of $\angle\bar{x}$ (equivalently, at $\bar{\sigma}^2 \to 0$, where the ring is correctly identified, independent of which symbol on that ring was sent). Conditioned on the correct ring $\ell^*$ -- whose magnitude $R_{\ell^*}$ is fixed by condition~(i) -- the error is
\begin{align*}
  \E\bigl[\norm{x - \hat{x}}^2 \,\big|\, \norm{x}=R_{\ell^*}\bigr]
  &= R_{\ell^*}^2\,\E\bigl[\norm{e^{j\angle x} - e^{j\angle\hat{x}}}^2\bigr] \\
  &= 2R_{\ell^*}^2\bigl(1 - \E[\cos(\angle x - \angle\hat{x})]\bigr).
\end{align*}

On any ring carrying $M_{\ell^*} \ge 2$ distinct symbol phases, no single fixed angle $\angle\hat{x}$ can coincide with all of them, so $\E[\cos(\angle x - \angle\hat{x})] \le c_{\ell^*} < 1$ for a constant independent of $\bar{\sigma}^2$ (indeed, for equally spaced phases $\sum_m e^{j\phi_m} = 0$ gives $\E[\cos(\angle x - \angle\hat{x})] = 0$ and the bound $2R_{\ell^*}^2$). The conditional error is thus at least $2R_{\ell^*}^2(1 - c_{\ell^*}) > 0$, again independent of $\bar{\sigma}^2$, so $\mathrm{MSE} \not\to 0$, violating condition~(ii). (When every ring is a single point, the constellation carries no intra-ring phase and condition~(ii) is vacuous.) Hence both the amplitude selection and the phase read-out are necessary -- exactly the two operations the \ac{OPD} performs -- establishing the matching lower bound.

\vspace{-2ex}
\subsection{Proof of Proposition~\ref{prop:det_est}
            (Detection-Estimation Separation)}
\label{app:prop_det_est}

We establish the factorization, the independence, and the individual optimality of each component.

\emph{Step 1: Polar decomposition of the observation.}
Write the cavity observation in polar form as $\bar{x} = \norm{\bar{x}} e^{j\angle\bar{x}}$, where $\bar{x} = x + \bar{z}$ with $x = R_{\ell_{\mathrm{true}}} e^{j\theta_0}$ and $\bar{z} \sim \CN(0, \bar{\sigma}^2)$. The noise $\bar{z} = \bar{z}_r + j\bar{z}_i$ has independent real and imaginary components $\bar{z}_r, \bar{z}_i \sim \mathcal{N}(0, \bar{\sigma}^2/2)$. Projecting onto the radial and tangential directions relative to the true symbol yields
\begin{align}
n_r &\triangleq \mathrm{Re}(\bar{z} \cdot e^{-j\theta_0}) \sim \mathcal{N}(0, \bar{\sigma}^2/2), \label{eq:radial_noise} \\
n_\perp &\triangleq \mathrm{Im}(\bar{z} \cdot e^{-j\theta_0}) \sim \mathcal{N}(0, \bar{\sigma}^2/2), \label{eq:tangential_noise}
\end{align}
where $n_r$ and $n_\perp$ are independent by the circular symmetry of $\CN(0, \bar{\sigma}^2)$.

\emph{Step 2: Asymptotic independence of amplitude and phase.}
At high \ac{SNR} ($\bar{\sigma}^2 \to 0$), the magnitude and phase of $\bar{x}$ satisfy~\cite[Chapter~2]{kay1993fundamentals}:
\begin{align}
\norm{\bar{x}} &\approx R_{\ell_{\mathrm{true}}} + n_r, \label{eq:mag_approx} \\
\angle\bar{x} &\approx \theta_0 + \frac{n_\perp}{R_{\ell_{\mathrm{true}}}}. \label{eq:phase_approx}
\end{align}

Since $n_r$ and $n_\perp$ are independent Gaussian random variables, the magnitude $\norm{\bar{x}}$ and the phase $\angle\bar{x}$ are asymptotically independent. 
The approximation error is $\mathcal{O}(\bar{\sigma}^2/R_{\ell_{\mathrm{true}}}^2)$, which is negligible in the regime $\kappa_{\ell^*} \gg 1$ where the \ac{OPD} operates.

\emph{Step 3: Optimality of the amplitude detector.}
Conditioned on the true ring $\ell_{\mathrm{true}}$, the radial observation~\eqref{eq:mag_approx} is a scalar Gaussian observation of the amplitude $R_{\ell_{\mathrm{true}}}$ in noise with variance $\bar{\sigma}^2/2$. 
The nearest-radius rule~\eqref{eq:amplitude_ml} is the \ac{ML} decision rule for this $L$-ary hypothesis test.
For equal ring priors, \ac{ML} coincides with the \ac{MAP} (Bayes) rule, which minimizes the probability of error among all decision rules; for unequal priors the bias $\ln r_\ell$ is $\mathcal{O}(1)$ against the $\Theta(1/\bar{\sigma}^2)$ metric separation, so the \ac{ML} rule remains asymptotically optimal. The resulting error probability achieves the optimal exponent $\exp(-d_R^2/(4\bar{\sigma}^2))$ as established in Proposition~\ref{prop:ring_discrimination}.

\emph{Step 4: Optimality of the phase estimator.}
Conditioned on correct ring detection ($\ell^* = \ell_{\mathrm{true}}$), the phase estimation problem reduces to estimating the deterministic parameter $\theta_0$ from the observation $\bar{x} = R_{\ell^*} e^{j\theta_0} + \bar{z}$. 
The \ac{ML} estimate is
\begin{equation}
\hat{\theta}_{\mathrm{ML}} = \arg\max_\theta \; p(\bar{x} \mid R_{\ell^*}, \theta) = \angle\bar{x},
\end{equation}
since the likelihood $p(\bar{x} \mid R_{\ell^*}, \theta) \propto \exp\bigl(\kappa_{\ell^*} \cos(\angle\bar{x} - \theta)\bigr)$ is maximized at $\theta = \angle\bar{x}$. 

This is the classical result of Rife and Boorstyn~\cite{RifeBoorstyn1974} for single-tone phase estimation.

The \ac{CRLB} for estimating $\theta_0$ from $\bar{x}$ is obtained from the Fisher information. 
The log-likelihood is $\ell(\theta) = \mathrm{const} + (2R_{\ell^*}/\bar{\sigma}^2)\mathrm{Re}(\bar{x} e^{-j\theta})$, yielding
\begin{equation}
J(\theta_0) \!=\! -\E\!\left[\frac{\partial^2 \ell}{\partial \theta^2}\right] \!=\! \frac{2R_{\ell^*}^2}{\bar{\sigma}^2} \cdot \E[\cos(\angle\bar{x} - \theta_0)] \approx \frac{2R_{\ell^*}^2}{\bar{\sigma}^2} \!=\! \kappa_{\ell^*},
\end{equation}
where the approximation uses $\E[\cos(\angle\bar{x} - \theta_0)] \approx 1$ at high \ac{SNR}. 

The \ac{CRLB} is therefore $\Var[\hat{\theta}] \geq 1/\kappa_{\ell^*}$.
From~\eqref{eq:phase_approx}, the variance of the \ac{ML} phase estimate is
\begin{equation}
\Var[\hat{\theta}_{\mathrm{ML}}] = \Var\!\left[\frac{n_\perp}{R_{\ell^*}}\right] = \frac{\bar{\sigma}^2/2}{R_{\ell^*}^2} = \frac{1}{\kappa_{\ell^*}},
\end{equation}
which exactly achieves the \ac{CRLB}. 

Therefore, $\hat{\theta} = \angle\bar{x}$ is an asymptotically efficient estimator of the phase.

\emph{Step 5: Factorization of the \ac{OPD}.}
Combining Steps 1-4, the \ac{OPD} output factorizes as
\begin{equation}
\eta_P(\bar{x}) = \underbrace{R_{\ell^*(\norm{\bar{x}})}}_{\text{amplitude (from $\norm{\bar{x}}$ only)}} \cdot \underbrace{e^{j\angle\bar{x}}}_{\text{phase (from $\angle\bar{x}$ only)}},
\end{equation}
where the amplitude component depends only on $\norm{\bar{x}}$ and the phase component depends only on $\angle\bar{x}$.

By Step 2, these sufficient statistics are asymptotically independent, completing the factorization.

\emph{Step 6: Combined MSE.}
The total \ac{MSE} of the \ac{OPD} decomposes as
\begin{align}
\mathrm{MSE}_{\mathrm{P}} &= \underbrace{\Pr(\ell^* \neq \ell_{\mathrm{true}}) \cdot \mathcal{O}(d_R^2)}_{\text{ring misdetection: } \mathcal{O}(e^{-d_R^2/(4\bar{\sigma}^2)})} + \underbrace{R_{\ell^*}^2 \cdot \E[\norm{e^{j\Delta\theta} - 1}^2]}_{\text{phase error: } R_{\ell^*}^2/\kappa_{\ell^*} = \bar{\sigma}^2/2} \nonumber \\
&= \frac{\bar{\sigma}^2}{2} + \mathcal{O}\!\left(e^{-d_R^2/(4\bar{\sigma}^2)}\right),
\label{eq:mse_decomp_det_est}
\end{align}
confirming that the phase estimation error dominates at high \ac{SNR}, recovering the leading-order result $\mathrm{MSE}_{\mathrm{P}} \approx \bar{\sigma}^2/2$ of Proposition~\ref{prop:lpd_mse}. 
The ring misdetection contributes only an exponentially vanishing correction.

\vspace{-2ex}
\section{Proofs for Section~\ref{sec:se}
         (State Evolution)}
\label{app:sec5}

\subsection{Proof of Lemma~\ref{lem:phase_preserving}
            (Phase-Preserving Structure and Stein Identity)}
\label{app:lem_phase_preserving}

\emph{Part (i): Phase-preserving structure and real derivative.}
From~\eqref{eq:xi_cbm}, $\eta_B(\bar{x}; \bar{\sigma}^2) = \sum_\ell w_\ell^{\mathrm{B}} R_\ell A(\kappa_\ell) e^{j\angle\bar{x}}$, where both $w_\ell^{\mathrm{B}}$ and $\kappa_\ell = 2R_\ell\norm{\bar{x}}/\bar{\sigma}^2$ depend only on $\norm{\bar{x}}$ (since $\Lambda_\ell^{\mathrm{B}}$ in~\eqref{eq:log_metric} depends on $\norm{\bar{x}}$ but not $\angle\bar{x}$). Hence $\eta_B = \rho(\norm{\bar{x}}) e^{j\angle\bar{x}}$ with $\rho(r) = \sum_\ell w_\ell^{\mathrm{B}}(r) R_\ell A(\kappa_\ell(r)) \geq 0$.

For the Wirtinger derivative, write $\eta_B = \rho(\norm{\bar{x}}) \cdot \bar{x}/\norm{\bar{x}}$ and apply the product rule:
\begin{align*}
\frac{\partial \eta_B}{\partial \bar{x}} &= \frac{\partial \rho}{\partial \bar{x}} \cdot \frac{\bar{x}}{\norm{\bar{x}}} + \rho \cdot \frac{\partial}{\partial \bar{x}}\!\left(\frac{\bar{x}}{\norm{\bar{x}}}\right).
\end{align*}
Using the Wirtinger derivative $\frac{\partial \norm{\bar{x}}}{\partial \bar{x}} = \frac{\bar{x}^*}{2\norm{\bar{x}}}$ and the chain rule $\frac{\partial \rho}{\partial \bar{x}} = \rho' \cdot \frac{\bar{x}^*}{2\norm{\bar{x}}}$, together with the quotient rule
$\frac{\partial}{\partial \bar{x}}\!\left(\frac{\bar{x}}{\norm{\bar{x}}}\right) = \frac{1}{\norm{\bar{x}}} - \frac{\bar{x}}{\norm{\bar{x}}^2} \cdot \frac{\bar{x}^*}{2\norm{\bar{x}}} = \frac{1}{2\norm{\bar{x}}}$,
we obtain
\begin{equation*}
\frac{\partial \eta_B}{\partial \bar{x}} = \rho' \cdot \frac{\bar{x}^*}{2\norm{\bar{x}}} \cdot \frac{\bar{x}}{\norm{\bar{x}}} + \frac{\rho}{2\norm{\bar{x}}} = \frac{\rho'}{2} + \frac{\rho}{2\norm{\bar{x}}},
\end{equation*}
which is real-valued and non-negative (it equals $\hat{\sigma}^2_{\mathrm{B}}(\bar{x})/\bar{\sigma}^2 \ge 0$ by part~(ii)). 

\emph{Part (ii): Stein identity.}
The \ac{OBD} $\eta_B(\bar{x}) = \E_q[x\mid\bar{x}]$ is the posterior mean under the orbital prior $q$ with Gaussian likelihood. For the complex Gaussian channel $\bar{x} = x + \bar{\sigma} \tilde{z}$, $\tilde{z} \sim \mathcal{CN}(0,1)$, the Wirtinger derivative of the posterior mean satisfies
\begin{equation*}
\frac{\partial}{\partial \bar{x}} \E_q[x\mid\bar{x}] \!=\! \frac{1}{\bar{\sigma}^2}\bigl(\E_q[\norm{x}^2\mid\bar{x}] - \norm{\E_q[x\mid\bar{x}]}^2\bigr) \!=\! \frac{\hat{\sigma}^2_{\mathrm{B}}(\bar{x})}{\bar{\sigma}^2},
\end{equation*}
which follows from differentiating the ratio $f(\bar{x})/g(\bar{x})$ where $f = \int x\, q(x) \phi_{\bar{\sigma}^2}(\bar{x}-x)\,dx$ and $g = \int q(x) \phi_{\bar{\sigma}^2}(\bar{x}-x)\,dx$, and applying the quotient rule (see~\cite{Guo2005}). Explicitly, the kernel obeys $\partial\phi_{\bar{\sigma}^2}(\bar{x}-x)/\partial\bar{x} = -\frac{(\bar{x}-x)^*}{\bar{\sigma}^2}\phi_{\bar{\sigma}^2}(\bar{x}-x)$, so $\partial f/\partial\bar{x} = -\frac{\bar{x}^*}{\bar{\sigma}^2}f + \frac{1}{\bar{\sigma}^2}\!\int \norm{x}^2 q\,\phi_{\bar{\sigma}^2}\,dx$ and $\partial g/\partial\bar{x} = -\frac{\bar{x}^*}{\bar{\sigma}^2}g + \frac{1}{\bar{\sigma}^2}\!\int x^* q\,\phi_{\bar{\sigma}^2}\,dx$; the quotient rule cancels the $\bar{x}^*$ terms and leaves $\bar{\sigma}^{-2}(\E_q[\norm{x}^2\mid\bar{x}] - \norm{\E_q[x\mid\bar{x}]}^2)$.

\emph{Part (iii): Circular-symmetry equivalence.}
By part~(i), $\eta(\bar{x}) = \rho_\eta(\norm{\bar{x}})e^{j\angle\bar{x}}$, and $\tilde{z} \sim \mathcal{CN}(0,1)$ is rotationally invariant. For an input on ring $\ell$ at phase $\phi$, the substitution $\tilde{z}' = \tilde{z} e^{-j\phi}$ (which preserves the law of $\tilde{z}$) yields
\begin{align*}
&\E_z\bigl[\norm{R_\ell e^{j\phi} - \eta(R_\ell e^{j\phi} + \bar{\sigma} \tilde{z})}^2\bigr] \\
&= \E_{z'}\bigl[\norm{R_\ell - \eta(R_\ell + \bar{\sigma} \tilde{z}')}^2\bigr],
\end{align*}
which is independent of $\phi$. The \ac{MSE} of $\eta$ therefore depends only on the radial law $\{R_\ell, r_\ell\}$, which the true prior $p$ and the orbital prior $q$ share; this proves~\eqref{eq:circular_symmetry_equiv}.

\emph{Extension to the \ac{OGD} and \ac{OPD}.} The \ac{OPD} $\eta_P = R_{\ell^*}e^{j\angle\bar{x}}$ (a hard ring projection) is phase-preserving with non-negative $\rho_\eta \in [0,R_L]$, so parts~(i) and~(iii) hold. The \ac{OGD} approximates the \ac{OBD}'s von Mises phase posterior by a Gaussian (Definition~\ref{def:gaussian_phase}) under the same orbital prior $q$; its profile $\rho_{\mathrm{G}} = \sum_\ell w_\ell^{\mathrm{G}} R_\ell\,(1-1/(2\kappa_\ell))$ is non-negative at high concentration $\kappa_\ell \geq \tfrac12$, where parts~(i) and~(iii) likewise hold, but turns negative below it. Being an approximation (\ac{OGD}) or a hard projection (\ac{OPD}) rather than an exact posterior mean, neither satisfies the exact Stein identity~(ii).

\subsection{Proof of Lemma~\ref{lem:pseudo_lip}
            (Regularity of the Denoiser Hierachy)}
\label{app:lem_denoiser_regularity}

Identify $\Complex$ with $\Real^2$ and regard the \ac{OBD} $\eta_B(\bar{x}) = \E_q[x\mid\bar{x}]$ as a map $\Real^2 \to \Real^2$; it is the posterior mean of a prior supported in the disk $\{\norm{x}\le R_L\}$ observed through the Gaussian channel $\bar{x} = x + \bar{z}$, with $\bar{z}\sim\CN(0,\bar{\sigma}^2)$, i.e., $\Real^2$-noise $\mathcal{N}(0,\tfrac{\bar{\sigma}^2}{2}I_2)$ (each of the real and imaginary parts carries variance $\bar{\sigma}^2/2$). 
Throughout, $\norm{\cdot}$ is the Euclidean norm on $\Real^2$; for a matrix $\mathbf{A}\in\Real^{2\times2}$ we write $\opnorm{\mathbf{A}} \triangleq \sup_{\norm{\mathbf{u}}=1}\norm{\mathbf{Au}}$ for the \emph{operator (spectral) norm} it induces -- the largest singular value of $\mathbf{A}$ -- which is the constant in the elementary bound $\norm{\mathbf{Au}}\le\opnorm{\mathbf{A}}\,\norm{\mathbf{u}}$. 
Let $\mathrm{\mathbf{D}}\eta_B(\bar{x})\in\Real^{2\times2}$ denote the Jacobian of $\eta_B$ at $\bar{x}$. 
Since the domain $\Real^2$ is convex, the mean-value inequality identifies the global Lipschitz constant with the supremal Jacobian operator norm, given by
\begin{multline}
\label{eq:lip_mvt}
  \norm{\eta_B(\bar{x}_1) - \eta_B(\bar{x}_2)} \\
  \le\; \Bigl(\sup_{\bar{x}}\,\opnorm{\mathrm{\mathbf{D}}\eta_B(\bar{x})}\Bigr)\,\norm{\bar{x}_1 - \bar{x}_2},
\end{multline}
so it suffices to bound $\opnorm{\mathrm{\mathbf{D}}\eta_B(\bar{x})}$ uniformly in $\bar{x}$.

Let $\bm{\Sigma}(\bar{x}) \triangleq \operatorname{Cov}_q(x\mid\bar{x}) \in \Real^{2\times2}$ denote the real $2\times2$ posterior covariance. 
By the Gaussian-channel identity for the conditional mean~\cite{HatesellNolte1971,PalomarVerdu2006} -- the vector Tweedie/Brown relation, of which the Stein identity of Lemma~\ref{lem:phase_preserving}(ii) is the scalar Wirtinger instance -- the Jacobian equals this covariance normalized by the per-real-dimension noise variance $\bar{\sigma}^2/2$~\cite{DystoPoorShamai2023}, given by
\begin{equation}
\label{eq:jacobian_cov}
  \mathrm{\mathbf{D}}\eta_B(\bar{x}) \;=\; \frac{\bm{\Sigma}(\bar{x})}{\bar{\sigma}^2/2} \;=\; \frac{2}{\bar{\sigma}^2}\,\bm{\Sigma}(\bar{x}).
\end{equation}

A covariance matrix is symmetric and positive semidefinite, so its operator norm equals its largest eigenvalue, which is at most its trace: $\opnorm{\bm{\Sigma}} = \lambda_{\max}(\bm{\Sigma}) \le \tr\bm{\Sigma}$. 
The trace is the scalar posterior variance $\hat{\sigma}^2_{\mathrm{B}}(\bar{x}) = \E_q[\,\norm{x - \eta_B(\bar{x})}^2 \mid \bar{x}\,]$, and since $x$ lies in the disk of radius $R_L$, taking the origin as a suboptimal center gives $\hat{\sigma}^2_{\mathrm{B}}(\bar{x}) \le \E_q[\,\norm{x}^2 \mid \bar{x}\,] \le R_L^2$. 
Combining with~\eqref{eq:jacobian_cov}, we get
\begin{equation}
\label{eq:opnorm_bound}
\begin{split}
  \opnorm{\mathrm{\mathbf{D}}\eta_B(\bar{x})}
  &\;=\; \frac{2}{\bar{\sigma}^2}\,\opnorm{\bm{\Sigma}(\bar{x})} \;\le\; \frac{2}{\bar{\sigma}^2}\,\tr\bm{\Sigma}(\bar{x}) \\
  &\;=\; \frac{2}{\bar{\sigma}^2}\,\hat{\sigma}^2_{\mathrm{B}}(\bar{x}) \;\le\; \frac{2R_L^2}{\bar{\sigma}^2},
\end{split}
\end{equation}
uniformly in $\bar{x}$. With~\eqref{eq:lip_mvt}, $\eta_B$ is therefore globally Lipschitz at each fixed $\bar{\sigma}^2 > 0$, with constant $2R_L^2/\bar{\sigma}^2$.

Finally, the output is uniformly bounded, $\norm{\eta_B}\le R_L$, being a convex combination of ring points~\eqref{eq:eta_bound}. 
A globally Lipschitz map is \emph{a fortiori} pseudo-Lipschitz of order $2$, since the growth factor $1 + \norm{\bar{x}_1} + \norm{\bar{x}_2} \ge 1$ only weakens the bound:
\begin{multline}
\label{eq:pseudo_lip_bound}
  \norm{\eta_B(\bar{x}_1) - \eta_B(\bar{x}_2)} \;\le\; \frac{2R_L^2}{\bar{\sigma}^2}\,\norm{\bar{x}_1 - \bar{x}_2} \\
  \le\; \frac{2R_L^2}{\bar{\sigma}^2}\bigl(1 + \norm{\bar{x}_1} + \norm{\bar{x}_2}\bigr)\norm{\bar{x}_1 - \bar{x}_2},
\end{multline}
which is the regularity required by the \ac{SE} theorems of~\cite{Bayati2011,Javanmard2013}.

The adaptive \ac{OGD} (Definition~\ref{def:adaptive_cgr}) and the \ac{OPD} are uniformly bounded ($\norm{\eta}\le R_L$,~\eqref{eq:eta_bound}) but not globally continuous: the \ac{OGD} magnitude factor jumps by $\lvert A(\kappa_0) - (1-1/(2\kappa_0))\rvert = \mathcal{O}(\kappa_0^{-2})$ across the per-ring regime switch $\{\kappa_\ell = \kappa_0\}$, and the \ac{OPD} $\bar{x} \mapsto R_{\ell^*}e^{j\angle\bar{x}}$ jumps at $\bar{x}=0$ and across the ring-decision boundaries $\norm{\bar{x}} = (R_\ell + R_{\ell+1})/2$ -- in each case a Lebesgue-null set. Neither is therefore pseudo-Lipschitz. Rather than apply the machinery to a discontinuous map, we obtain their \ac{SE} as the $\beta\to\infty$ limit of a genuinely Lipschitz regularization at \emph{fixed} $\bar{\sigma}^2 > 0$, made precise as follows. 

Let $\eta$ denote either map (the \ac{OPD} or the adaptive \ac{OGD}), a bounded Borel function $\Real^2\to\Real^2$ with $\norm{\eta}\le R_L$. For a sharpness parameter $\beta > 0$, define the regularized denoiser by mollification with the Gaussian kernel $\phi_\beta$ as
\begin{equation}
\label{eq:eta_beta}
  \eta^{(\beta)}(\bar{x}) \;\triangleq\; (\eta * \phi_\beta)(\bar{x}) \;=\; \int_{\Real^2}\eta(\bar{x}-\mathbf{u})\,\phi_\beta(\mathbf{u})\,d\mathbf{u},
\end{equation}
where $\phi_\beta(\mathbf{u}) = (\beta/\pi)\,e^{-\beta\norm{\mathbf{u}}^2}$ is the density of $\CN(0,\beta^{-1})$ on $\Real^2$ (so $\int_{\Real^2}\phi_\beta = 1$). 
The parameter $\beta$ acts as an inverse temperature: as $\beta\to\infty$ the kernel concentrates at the origin and $\eta^{(\beta)}$ sharpens toward the hard map $\eta$. Unlike a soft-max over the ring \emph{selection} alone -- which would leave the phase singularity at $\bar{x}=0$ intact -- this single convolution regularizes \emph{every} discontinuity of $\eta$ (the ring-decision circles \emph{and} the origin) at once. The family $\{\eta^{(\beta)}\}_{\beta>0}$ has exactly the three properties that transfer state evolution:
\begin{enumerate}[(i)]
  \item \emph{Lipschitz at each finite $\beta$.} $\eta^{(\beta)}\in C^\infty$ with $\mathrm{\mathbf{D}}\eta^{(\beta)} = \eta * \nabla\phi_\beta$, so, as in~\eqref{eq:lip_mvt}, $\opnorm{\mathrm{\mathbf{D}}\eta^{(\beta)}(\bar{x})} \le \norm{\eta}_\infty\,\norm{\nabla\phi_\beta}_{L^1} = R_L\sqrt{\pi\beta}$; hence $\eta^{(\beta)}$ is globally Lipschitz with constant $L_\beta = R_L\sqrt{\pi\beta} < \infty$.
  \item \emph{Uniform envelope.} $\norm{\eta^{(\beta)}(\bar{x})} \le \int_{\Real^2}\norm{\eta(\bar{x}-\mathbf{u})}\,\phi_\beta(\mathbf{u})\,d\mathbf{u} \le R_L$ for every $\beta$ and $\bar{x}$.
  \item \emph{Pointwise limit off the null set.} As $\{\phi_\beta\}$ is an approximate identity, $\eta^{(\beta)}(\bar{x})\to\eta(\bar{x})$ ($\beta\to\infty$) at every continuity point of $\eta$, i.e.\ for every $\bar{x}\notin\mathcal{N}$, where $\mathcal{N} = \{0\}\cup\bigcup_\ell\{\norm{\bar{x}}=(R_\ell+R_{\ell+1})/2\}$ is Lebesgue-null.
\end{enumerate}
The \ac{SE} of $\eta$ then follows by two limits taken in order. \emph{(a)}~For each fixed $\beta$, property~(i) places $\eta^{(\beta)}$ within the hypotheses of~\cite{Bayati2011,Javanmard2013}, so its scalar \ac{SE} recursion is exact in the large-system limit, with map $\mathcal{F}^{(\beta)}(\mathrm{MSE}) = \E_{x,\tilde{z}}[\,\norm{x - \eta^{(\beta)}(x+\bar{\sigma}\tilde{z})}^2\,]$. \emph{(b)}~Letting $\beta\to\infty$: the effective input $x+\bar{\sigma}\tilde{z}$ has a density, hence avoids $\mathcal{N}$ almost surely, so by~(iii) the integrand converges pointwise to $\norm{x-\eta(x+\bar{\sigma}\tilde{z})}^2$; being dominated by $4R_L^2$ (property~(ii)), $\mathcal{F}^{(\beta)}\to\mathcal{F}$ pointwise by dominated convergence.  The convergence is in fact \emph{uniform} on $\mathcal{I}_0$, which is what transfers the fixed points and not merely the maps: the elementary bound $\bigl|\,\norm{x-\eta^{(\beta)}}^2 - \norm{x-\eta}^2\,\bigr| \le 4R_L\,\norm{\eta^{(\beta)}-\eta}$ gives $\sup_{\mathcal{I}_0}\lvert\mathcal{F}^{(\beta)}-\mathcal{F}\rvert \le 4R_L\,\sup_{\mathcal{I}_0}\E\norm{\eta^{(\beta)}(\bar{x})-\eta(\bar{x})}$, and on the compact $\mathcal{I}_0$ the effective variance obeys $\bar{\sigma}^2 \ge \sigma_z^2 > 0$, so the observation densities are uniformly bounded and depend continuously on $\mathrm{MSE}$; the right-hand side therefore tends to zero uniformly. Since uniform limits of continuous self-maps of a compact interval preserve fixed points, the fixed-point equation passes to the limit, and $\mathcal{F}$ is itself continuous (Theorem~\ref{thm:se_convergence}), so the \ac{OPD} and adaptive-\ac{OGD} fixed points are well defined.

\vspace{-2ex}
\subsection{Proof of Theorem~\ref{thm:se_convergence}
            (SE Fixed-Point Existence)}
\label{app:thm_se_convergence}

Throughout, $\bar{\sigma}^2(\mathrm{MSE}) \triangleq \sigma_z^2 + (K/N)\,\mathrm{MSE}$ denotes the effective variance~\eqref{eq:se_noise}. Non-negativity, boundedness, and continuity are established on the larger interval $\mathcal{I}_0 = [0,4R_L^2]$ -- they hold for \emph{every} denoiser there, and hence on the subinterval $\mathcal{I} = [0,E_d] \subseteq \mathcal{I}_0$ as well; the self-map and existence are then settled per class.

\emph{Non-negativity.} $\mathcal{F}(\mathrm{MSE}) = \E\bigl[\,\norm{x - \eta(\bar{x};\bar{\sigma}^2)}^2\,\bigr]$ is the expectation of a non-negative quantity, so $\mathcal{F} \geq 0$ on $\mathcal{I}_0$.

\emph{Boundedness.} Every denoiser output lies in the disk of radius $R_L$,
\begin{equation}
  \norm{\eta(\bar{x};\bar{\sigma}^2)} \;\leq\; R_L,
  \label{eq:eta_bound}
\end{equation}
since the \ac{BOD}, \ac{OBD}, and \ac{OGD} outputs are convex combinations of ring points -- so $\norm{\eta} = \norm{\E[x\mid\bar{x}]} \leq \E[\,\norm{x}\mid\bar{x}\,] \leq R_L$ for the posterior means, and $\norm{\eta_G} \leq R_L$ for the adaptive-fallback \ac{OGD} (Definition~\ref{def:adaptive_cgr}), with which it coincides at the \ac{SNR} of interest -- while $\norm{\eta_P} = R_{\ell^*} \leq R_L$. With $\norm{x} = R_\ell \leq R_L$, the triangle inequality gives the deterministic bound
\begin{equation}
  \norm{x - \eta(\bar{x};\bar{\sigma}^2)}^{2} \;\leq\; \bigl(\norm{x} + \norm{\eta}\bigr)^{2} \;\leq\; 4R_L^{2},
  \label{eq:sq_bound}
\end{equation}
whence, on taking expectations, $\mathcal{F}(\mathrm{MSE}) \leq 4R_L^2$, i.e.\ $\mathcal{F}(\mathcal{I}_0) \subseteq \mathcal{I}_0$. The bound~\eqref{eq:eta_bound} and hence the domination step below cover the four bounded denoisers; the \ac{LMMSE}, whose output is unbounded, is handled directly by its closed form $\mathcal{F}_{\mathrm{L}}(\mathrm{MSE}) = E_d\bar{\sigma}^2/(E_d+\bar{\sigma}^2)$, which is manifestly non-negative, continuous, and valued in $[0,E_d)\subset\mathcal{I}_0$, so non-negativity, self-mapping, and continuity hold for it without invoking the \ac{DCT}.

\emph{Continuity.} Fix $\mathrm{MSE} \in \mathcal{I}_0$ and let $\mathrm{MSE}_n \to \mathrm{MSE}$ be an arbitrary sequence in $\mathcal{I}_0$. Set $\bar{\sigma}_n^2 \triangleq \sigma_z^2 + (K/N)\mathrm{MSE}_n$ and $\bar{\sigma}^2 \triangleq \sigma_z^2 + (K/N)\mathrm{MSE}$; since $\sigma_z^2 > 0$, $\bar{\sigma}_n^2 \geq \sigma_z^2 > 0$ for every $n$ and $\bar{\sigma}_n^2 \to \bar{\sigma}^2 > 0$. We verify the hypotheses of the \ac{DCT}.
\emph{Domination.} By~\eqref{eq:sq_bound}, the integrand obeys $\norm{x - \eta(x + \bar{\sigma}_n \tilde{z};\, \bar{\sigma}_n^2)}^2 \leq 4R_L^2$ uniformly in $n$, and the finite constant $4R_L^2$ is an integrable dominating function.
\emph{Pointwise almost-sure convergence.} Set $\bar{x}_n \triangleq x + \bar{\sigma}_n \tilde{z}$ and $\bar{x} \triangleq x + \bar{\sigma} \tilde{z}$; since $\bar{\sigma}_n \to \bar{\sigma}$, we have $\bar{x}_n \to \bar{x}$ and $\bar{\sigma}_n^2 \to \bar{\sigma}^2$ simultaneously. The event $\{\bar{x} = 0\} = \{\tilde{z} = -x/\bar{\sigma}\}$ has probability zero under $\tilde{z} \sim \mathcal{CN}(0,1)$ and discrete $x$, so $\bar{x} \neq 0$ almost surely. For $\bar{x} \neq 0$ and $\bar{\sigma}^2 > 0$ each denoiser is continuous in $(\bar{x},\bar{\sigma}^2)$: the \ac{BOD}, \ac{OBD}, and \ac{OGD} are ratios of Gaussian-kernel averages over their priors with strictly positive denominator (indeed jointly $C^\infty$), the \ac{LMMSE} is linear, and $\eta_P = R_{\ell^*}e^{j\angle\bar{x}}$ is continuous off the measure-zero set of ring-decision boundaries. Hence $\eta(\bar{x}_n;\bar{\sigma}_n^2) \to \eta(\bar{x};\bar{\sigma}^2)$ almost surely.
\emph{Conclusion.} The \ac{DCT} then yields
\begin{align}
  \lim_{n \to \infty} \mathcal{F}(\mathrm{MSE}_n)
  &= \E\!\Bigl[\, \lim_{n \to \infty} \norm{x - \eta(\bar{x}_n;\bar{\sigma}_n^2)}^2 \,\Bigr] \nonumber \\
  &= \E\!\bigl[\, \norm{x - \eta(\bar{x};\bar{\sigma}^2)}^2 \,\bigr] \;=\; \mathcal{F}(\mathrm{MSE}),
  \label{eq:dct_continuity}
\end{align}
so $\mathcal{F}$ is continuous on $\mathcal{I}_0$, and a fortiori on $\mathcal{I} \subseteq \mathcal{I}_0$.

\medskip\noindent\emph{Case (i): \ac{BOD}, \ac{OBD}, \ac{LMMSE} -- fixed point in $[0,E_d]$.}
For the posterior-mean \ac{BOD} and \ac{OBD}, the circular-symmetry equivalence of Lemma~\ref{lem:phase_preserving} identifies $\mathcal{F}(\mathrm{MSE})$ with the matched \ac{MMSE} of the variance-$E_d$ prior that $\eta$ is optimal for ($p$ for the \ac{BOD}, $q$ for the \ac{OBD}). Since the Gaussian maximizes the \ac{MMSE} among priors of a given variance~\cite[Theorem~3]{Guo2005},
\begin{equation}
  \mathcal{F}(\mathrm{MSE}) \;\leq\; \frac{E_d\,\bar{\sigma}^2(\mathrm{MSE})}{E_d + \bar{\sigma}^2(\mathrm{MSE})} \;<\; E_d ,
  \label{eq:self_map_bound}
\end{equation}
and the \ac{LMMSE} attains the right-hand side with equality. Hence $\mathcal{F}(\mathcal{I}) \subseteq \mathcal{I}$; as $\mathcal{F}$ is continuous on $\mathcal{I}_0 \supseteq \mathcal{I}$, it is a continuous self-map of the compact convex set $\mathcal{I} = [0,E_d]$, and Brouwer's fixed-point theorem yields a fixed point $\mathrm{MSE}^* \in [0,E_d]$ -- the interval of Definition~\ref{def:fixed_point} and Corollary~\ref{cor:se_uniqueness}.

\medskip\noindent\emph{Case (ii): \ac{OGD}, \ac{OPD} -- fixed point in $\mathcal{I}_0$.}
The \ac{OGD} approximates the \ac{OBD}'s von Mises phase posterior by a Gaussian (Definition~\ref{def:gaussian_phase}, still under the orbital prior $q$) and the \ac{OPD} is a hard ring projection; being approximations rather than exact posterior means, they need not satisfy~\eqref{eq:self_map_bound}, so only the boundedness self-map $\mathcal{F}(\mathcal{I}_0) \subseteq \mathcal{I}_0$ is available. As $\mathcal{F}$ is a continuous self-map of the compact convex set $\mathcal{I}_0 = [0,4R_L^2]$, Brouwer's theorem yields a fixed point $\mathrm{MSE}^* \in \mathcal{I}_0$; its high-\ac{SNR} localization to $[0,E_d]$ and value are computed in Proposition~\ref{prop:cgr_se} (\ac{OGD}) and Corollary~\ref{cor:lpd_se} (\ac{OPD}).

\vspace{-2ex}
\subsection{Proof of Proposition~\ref{prop:se_monotone}
            (Monotonicity of the SE Map)}
\label{app:prop_se_monotone}

Each of the three denoisers is the Bayes posterior mean for the effective channel $\bar{x} = x + \bar{\sigma}\tilde{z}$ under a prior $\varpi$: the \ac{BOD} under the true prior $p$; the \ac{OBD} under the orbital prior $q$ -- and by the circular-symmetry equivalence of Lemma~\ref{lem:phase_preserving}(iii), its mismatched \ac{MSE} under $p$ equals its matched \ac{MSE} under $q$; and the \ac{LMMSE} under the circularly-symmetric Gaussian $\varpi = \mathcal{CN}(0, E_d)$, whose posterior mean is linear. In every case the \ac{SE} map is an \ac{MMSE},
\begin{equation}
  \mathcal{F}(\mathrm{MSE}) \!=\! \mathrm{mmse}_\varpi\!\bigl(\bar{\sigma}^2(\mathrm{MSE})\bigr),
  \;
  \bar{\sigma}^2(\mathrm{MSE}) \!=\! \sigma_z^2 + \alpha\,\mathrm{MSE}.
\end{equation}
By the \ac{I-MMSE} relation~\cite{Guo2005}, $\mathrm{mmse}_\varpi(\cdot)$ is non-decreasing in the noise variance for \emph{every} prior $\varpi$ -- degrading the channel cannot reduce the \ac{MMSE} -- and since $\bar{\sigma}^2(\mathrm{MSE})$ is increasing in $\mathrm{MSE}$, $\mathcal{F}$ is non-decreasing on $[0, E_d]$.

Monotone convergence from $\mathrm{MSE}_0 = E_d$ follows because $\mathcal{F}(E_d) = \mathrm{mmse}_\varpi(\bar{\sigma}^2(E_d)) \leq \mathbb{E}_\varpi[\norm{x}^2] = E_d$ -- the \ac{MMSE} never exceeds the input second moment, attained by the zero estimator -- and $\mathcal{F}$ is continuous and non-decreasing on the compact set $[0,E_d]$.

The \ac{OGD} and \ac{OPD} are \emph{not} posterior means of any prior, so this \ac{MMSE}-monotonicity argument does not apply to them; their fixed points are instead constructed explicitly at high \ac{SNR} (Proposition~\ref{prop:cgr_se}, Corollary~\ref{cor:lpd_se}; see Remark~\ref{rem:fp_scope}).

\vspace{-2ex}
\subsection{Proof of Proposition~\ref{prop:fp_ordering}
            (Fixed-Point Ordering)}
\label{app:prop_fp_ordering}

The proof has two steps: establishing the per-step (pointwise) ordering of the \ac{SE} maps, then lifting it to a fixed-point ordering via the comparison lemma for monotone maps.

\emph{Step 1: Per-step ordering.}
For any fixed $\mathrm{MSE} \in [0, E_d]$, define $\bar{\sigma}^2 \triangleq \sigma_z^2 + \alpha\mathrm{MSE}$ and the effective scalar channel $\bar{x} = x + \bar{\sigma} \tilde{z}$.
The \ac{BOD} $\eta_D(\bar{x}; \bar{\sigma}^2) = \E[x \mid \bar{x}]$ (under the true discrete prior) minimizes $\E[\norm{x - \eta(\bar{x})}^2]$ over \emph{all} measurable functions $\eta: \Complex \to \Complex$ by the orthogonality principle~\cite{kay1993fundamentals}.
Since $\eta_B$ (the \ac{OBD}) and $\eta_{\mathrm{L}}$ are particular measurable functions, we have
\begin{align}
\label{eq:perstep_lower}
\mathcal{F}_{\mathrm{D}}(\mathrm{MSE}) 
&= \E\bigl[\norm{x - \eta_D(\bar{x}; \bar{\sigma}^2)}^2\bigr] \nonumber \\
&\leq \E\bigl[\norm{x - \eta_B(\bar{x}; \bar{\sigma}^2)}^2\bigr] = \mathcal{F}_{\mathrm{B}}(\mathrm{MSE}).
\end{align}

For the upper bound, we establish $\mathcal{F}_{\mathrm{B}}(\mathrm{MSE}) \leq \mathcal{F}_{\mathrm{L}}(\mathrm{MSE})$ via the following three-step argument.

\emph{(a) Circular symmetry equivalence.}
Since the \ac{OBD} is phase-preserving, Lemma~\ref{lem:phase_preserving}(iii) gives the circular-symmetry equivalence~\eqref{eq:circular_symmetry_equiv}: its \ac{MSE} is the same under the true discrete prior $p$ and the orbital prior $q$, $\E_p[\norm{x - \eta_B(\bar{x})}^2] = \E_q[\norm{x - \eta_B(\bar{x})}^2]$.

\emph{(b) Bayes optimality under the orbital prior.}
Under the orbital prior $q$ (Definition~\ref{def:orbital}), the denoiser $\eta_B$ is the Bayes-optimal (posterior mean) estimator.
Therefore, $\E_q[\norm{x - \eta_B(\bar{x})}^2] = \mathrm{mmse}_q(\bar{\sigma}^2)$, the \ac{MMSE} of estimating $x \sim q$ from $\bar{x} = x + \bar{\sigma} \tilde{z}$.

\emph{(c) Gaussian maximizes MMSE.}
By the maximum-MMSE property of the Gaussian distribution~\cite[Theorem~3]{Guo2005}, among all inputs with second moment $E_d$, the circularly-symmetric Gaussian $x_{\mathrm{G}} \sim \mathcal{CN}(0, E_d)$ achieves the largest \ac{MMSE}:
\begin{equation}
\mathrm{mmse}_q(\bar{\sigma}^2) \;\leq\; \mathrm{mmse}_{\mathrm{Gauss}}(\bar{\sigma}^2) \;=\; \frac{E_d\,\bar{\sigma}^2}{E_d + \bar{\sigma}^2}.
\end{equation}
The equality holds because for Gaussian input the posterior mean \emph{is} linear, so $\mathrm{mmse}_{\mathrm{Gauss}}$ coincides with the linear \ac{MMSE}, which depends on the input only through its second moment.

Combining (a)--(c):
\begin{equation}
\label{eq:perstep_upper}
\mathcal{F}_{\mathrm{B}}(\mathrm{MSE})
\;=\; \mathrm{mmse}_q(\bar{\sigma}^2)
\;\leq\; \frac{E_d\,\bar{\sigma}^2}{E_d + \bar{\sigma}^2}
\;=\; \mathcal{F}_{\mathrm{L}}(\mathrm{MSE}),
\end{equation}
valid at \emph{all} \ac{SNR} without any high-probability qualifications.

\emph{Step 2: From per-step ordering to fixed-point ordering (Comparison Lemma).}
By Proposition~\ref{prop:se_monotone}, all three \ac{SE} maps are non-decreasing on $[0, E_d]$.
We apply the following standard comparison principle for monotone maps (cf.~\cite[Lemma~3]{ReevesPfister2019}): if $\mathcal{F}\le\mathcal{G}$ pointwise and both are non-decreasing continuous self-maps of $[0,E_d]$, then their respective largest fixed points satisfy $\mathrm{MSE}^*_{\mathcal{F}}\le\mathrm{MSE}^*_{\mathcal{G}}$.

Applying the Comparison Lemma to the pairs $(\mathcal{F}_{\mathrm{D}}, \mathcal{F}_{\mathrm{B}})$ and $(\mathcal{F}_{\mathrm{B}}, \mathcal{F}_{\mathrm{L}})$ with the per-step ordering~\eqref{eq:perstep_lower}--\eqref{eq:perstep_upper}, and noting that all three sequences initialized at $\mathrm{MSE}_0 = E_d$ converge monotonically to their respective largest fixed points by Proposition~\ref{prop:se_monotone} (monotone convergence from above), we obtain
\begin{align*}
\mathrm{MSE}_\infty^{\mathrm{D}} &= \lim_{t \to \infty} \mathrm{MSE}_t^{\mathrm{D}} \;\leq\; \lim_{t \to \infty} \mathrm{MSE}_t^{\mathrm{B}} \\
&= \mathrm{MSE}_\infty^{\mathrm{B}} \;\leq\; \lim_{t \to \infty} \mathrm{MSE}_t^{\mathrm{L}} = \mathrm{MSE}_\infty^{\mathrm{L}},
\end{align*}
establishing~\eqref{eq:fp_ordering}.

\emph{Strictness of the lower bound.}
The inequality $\mathrm{MSE}_\infty^{\mathrm{D}} < \mathrm{MSE}_\infty^{\mathrm{B}}$ is strict at any finite \ac{SNR} because the \ac{OBD} is \emph{not} the \ac{BOD} for the true discrete prior: the continuous phase relaxation assigns positive probability to phase angles where no constellation point exists, inducing a strictly positive excess \ac{MSE} at every $\mathrm{MSE} > 0$.
Specifically, $\mathcal{F}_{\mathrm{D}}(\mathrm{MSE}) < \mathcal{F}_{\mathrm{B}}(\mathrm{MSE})$ for all $\mathrm{MSE} > 0$ (not merely $\leq$), which propagates to a strict fixed-point gap by continuity and monotonicity. Indeed, if the two largest fixed points coincided at a common $m > 0$, then $m = \mathcal{F}_{\mathrm{D}}(m) < \mathcal{F}_{\mathrm{B}}(m) = m$, a contradiction; combined with $\mathrm{MSE}_\infty^{\mathrm{D}} \le \mathrm{MSE}_\infty^{\mathrm{B}}$ this forces the strict inequality $\mathrm{MSE}_\infty^{\mathrm{D}} < \mathrm{MSE}_\infty^{\mathrm{B}}$ at every finite \ac{SNR}.

\vspace{-2ex}
\subsection{Proof of Corollary~\ref{cor:se_uniqueness}
            (Uniqueness)}
\label{app:cor_se_uniqueness}

\emph{(i) \ac{LMMSE}.}
The \ac{LMMSE} estimator is linear, so $\mathcal{F}_{\mathrm{L}}(\mathrm{MSE}) = E_d\bar{\sigma}^2/(E_d+\bar{\sigma}^2)$ with $\bar{\sigma}^2 = \sigma_z^2 + \alpha\,\mathrm{MSE}$.  Differentiating, we have
\begin{equation*}
\mathcal{F}_{\mathrm{L}}'(\mathrm{MSE}) = \alpha\,\frac{E_d^2}{(E_d+\bar{\sigma}^2)^2} < \alpha < 1
\qquad\text{for every } \sigma_z^2 > 0,
\end{equation*}
so $\mathcal{F}_{\mathrm{L}}$ is a global contraction on $[0,E_d]$ and, by the Banach fixed-point theorem~\cite[Th.~5.1-2]{kreyszig1978introductory}, has a unique fixed point at every \ac{SNR}.

\emph{(ii) \ac{BOD} and \ac{OBD}.}
Let $\eta$ be the posterior mean ($\eta_B = \E_q[x\mid\bar{x}]$ for the \ac{OBD}, $\eta_D = \E_p[x\mid\bar{x}]$ for the matched \ac{BOD}) and $\hat{\sigma}^2(\bar{x})$ the corresponding posterior variance.  By Tweedie's identity for the complex Gaussian channel -- of which the Stein identity~\eqref{eq:stein_identity} is the \ac{OBD} instance -- $\partial\eta/\partial\bar{x} = D(\bar{x}) \triangleq \hat{\sigma}^2(\bar{x})/\bar{\sigma}^2 \geq 0$.
Since $\mathcal{F}(\mathrm{MSE}) = \mathrm{mmse}_\varpi(\bar{\sigma}^2)$ with $\bar{\sigma}^2 = \sigma_z^2 + \alpha\,\mathrm{MSE}$ (proof of Proposition~\ref{prop:se_monotone}), {the chain rule and the complex-channel \ac{MMSE}-derivative identity
\begin{equation*}
\frac{\mathrm{d}\,\mathrm{mmse}_\varpi}{\mathrm{d}\bar{\sigma}^2} = \frac{2}{\bar{\sigma}^4}\,\E\bigl[\|\bm{\Sigma}(\bar{x})\|_F^2\bigr],
\end{equation*}
where $\bm{\Sigma}(\bar{x})$ is the $2\times2$ real conditional covariance of $x$ given $\bar{x}$ (the vector-channel form of the real-scalar identity $\mathrm{d}\,\mathrm{mmse}/\mathrm{d}\bar{\sigma}^2 = \E[D^2]$ of~\cite{GuoWuShamaiVerdu_TIT_2011}), give the \ac{SE}-map derivative
\begin{equation*}
\mathcal{F}'(\mathrm{MSE}) = \frac{2\alpha}{\bar{\sigma}^4}\,\E\bigl[\|\bm{\Sigma}(\bar{x})\|_F^2\bigr].
\end{equation*}

Since $\tfrac12(\operatorname{tr}\bm{\Sigma})^2 \leq \|\bm{\Sigma}\|_F^2 \leq (\operatorname{tr}\bm{\Sigma})^2$ with $\operatorname{tr}\bm{\Sigma} = \hat{\sigma}^2$, the derivative is sandwiched as $\alpha\,\E[D^2] \leq \mathcal{F}' \leq 2\alpha\,\E[D^2]$: the lower (isotropic) extreme is attained by circularly symmetric posteriors -- e.g.\ the Gaussian prior, for which $\bm{\Sigma} = (\hat{\sigma}^2/2)\mathbf{I}_2$ with deterministic $D = E_d/(E_d+\bar{\sigma}^2)$, recovering $\mathcal{F}' = \alpha D^2$ consistently with (i) -- and the upper (rank-one) extreme by posteriors whose uncertainty concentrates in a single real direction.}
The posterior variance is bounded by the prior support, $\hat{\sigma}^2(\bar{x}) \leq R_L^2$, so $D \leq R_L^2/\bar{\sigma}^2$ pointwise, and the mean of $D$ is always subunit: $\E[D] = \mathrm{mmse}_\varpi(\bar{\sigma}^2)/\bar{\sigma}^2 \leq \frac{E_d}{E_d + \bar{\sigma}^2} < 1$ by the Gaussian maximum-\ac{MMSE} bound~\cite[Theorem~3]{Guo2005}{, strict for the non-Gaussian priors considered here}.

(a) \emph{Low \ac{SNR}.}  If $\sigma_z^2 \geq R_L^2$ then $\bar{\sigma}^2 \geq R_L^2$ and $D \leq R_L^2/\bar{\sigma}^2 \leq 1$ pointwise, so $D^2 \leq D$ and {$\mathcal{F}'(\mathrm{MSE}) \leq 2\alpha\,\E[D^2] \leq 2\alpha\,\E[D] < \frac{2\alpha\,E_d}{E_d + \bar{\sigma}^2} \leq \alpha < 1$ throughout $[0,E_d]$, the last inequality using $\bar{\sigma}^2 \geq R_L^2 \geq E_d$.}  The Banach theorem gives a unique fixed point.

(b) \emph{High \ac{SNR}.}  As $\sigma_z^2 \to 0$ the posterior concentrates on the correct ring (ring misdetection being exponentially rare, Proposition~\ref{prop:ring_discrimination}), so $\operatorname{Var}(D)\to0$.  For the \ac{OBD} the residual is the continuous phase uncertainty: the single-ring posterior variance~\eqref{eq:cbm_var} is $\hat{\sigma}^2_{\mathrm{B}} = R_{\ell^*}^2\bigl(1 - A(\kappa_{\ell^*})^2\bigr) + o(\bar{\sigma}^2)$, and since $1 - A(\kappa)^2 = 1/\kappa + \mathcal{O}(\kappa^{-3})$ (Proposition~\ref{prop:vm_gaussian}; the $\kappa^{-2}$ term cancels) with $\kappa_{\ell^*} = 2R_{\ell^*}\norm{\bar{x}}/\bar{\sigma}^2$, $\norm{\bar{x}} \to R_{\ell^*}$, one gets $\hat{\sigma}^2_{\mathrm{B}} \to \bar{\sigma}^2/2$, so $D \to \tfrac12$ and $\E[D^2] \to \tfrac14$.  {Moreover the conditional covariance is asymptotically \emph{rank-one} -- the radius is pinned to the ring and all residual uncertainty is tangential (phase) -- so $\|\bm{\Sigma}\|_F^2 \to (\operatorname{tr}\bm{\Sigma})^2 = \hat{\sigma}^4$ and the derivative attains the upper extreme of the sandwich, $\mathcal{F}'(\mathrm{MSE}^*) \to 2\alpha\,\E[D^2] = \alpha/2$, consistently with the direct expansion~\eqref{eq:obd_Fmap}.}  For the matched \ac{BOD} the discrete posterior concentrates on the true point, whose nearest neighbours are $\Omega(1)$-separated, so $\hat{\sigma}^2_{\mathrm{D}} \to 0$ exponentially, $D \to 0$, and $\E[D^2] \to 0$.  In either case {$\mathcal{F}'(\mathrm{MSE}^*) < 1$ in the limit (the limit being $\alpha/2$ for the \ac{OBD} and $0$ for the \ac{BOD})}, so there is a threshold $\sigma_{\mathrm{th}}^2(\alpha) > 0$ below which $\mathcal{F}$ is locally contractive at its fixed point ($\mathrm{MSE}^* = \mathcal{O}(\sigma_z^2)$); monotone convergence from $\mathrm{MSE}_0 = E_d$ (Proposition~\ref{prop:se_monotone}) identifies the \emph{largest} fixed point $\mathrm{MSE}_\infty = \mathrm{MSE}^*$, and local contractivity makes it the unique fixed point in a neighborhood. Since at high \ac{SNR} $\bar{\sigma}^2 = \sigma_z^2 + \alpha\,\mathrm{MSE}$ stays $\mathcal{O}(\sigma_z^2)$ for every $\mathrm{MSE}\in[0,\mathrm{MSE}^*]$, the derivative bound $\mathcal{F}' \le 2\alpha\,\E[D^2] \to \alpha/2 < 1$ holds throughout $[0,\mathrm{MSE}^*]$, so $\mathcal{F}$ is a contraction there and admits no smaller fixed point; combined with $\mathcal{F}(0) = \mathrm{mmse}_\varpi(\sigma_z^2) > 0$ this gives uniqueness of $\mathrm{MSE}_\infty$ (the operative fixed point) on $[0,E_d]$.

In the intermediate band $\sigma_{\mathrm{th}}^2 \leq \sigma_z^2 < R_L^2$ the pointwise bound $D \leq 1$ fails on a small set (near decision boundaries the conditional variance can approach the support diameter) and $\E[D^2]$ may exceed $1/\alpha$, so a global contraction is not guaranteed; this reflects the possible steepening of the discrete/orbital \ac{MMSE} map (the \ac{AMP} phase-transition mechanism) and does not affect the analysis, since the largest fixed point $\mathrm{MSE}_\infty$ remains well-defined by monotone convergence.

\vspace{-2ex}
\subsection{Proof of Proposition~\ref{prop:linear_decay}
            (Linear Variance Decay)}
\label{app:prop_linear_decay}

As $\bar{\sigma}^2 \to 0$, the concentration parameter $\kappa_\ell = 2R_\ell\norm{\bar{x}}/\bar{\sigma}^2 \to \infty$.
Let $\ell^*$ denote the radius-nearest (dominant) ring. The \ac{OBD} log-metric~\eqref{eq:obd_metric} is dominated by its radial term: since $\ln I_0(\kappa_\ell) = \kappa_\ell - \tfrac{1}{2}\ln(2\pi\kappa_\ell) + \mathcal{O}(\kappa_\ell^{-1})$ (Proposition~\ref{prop:vm_gaussian}) and $\kappa_\ell = 2R_\ell\norm{\bar{x}}/\bar{\sigma}^2$, we have $\Lambda_\ell^{\mathrm{B}} = (2R_\ell\norm{\bar{x}} - R_\ell^2)/\bar{\sigma}^2 + \mathcal{O}(\ln\bar{\sigma}^2)$. The gap $\Lambda_{\ell^*}^{\mathrm{B}} - \Lambda_\ell^{\mathrm{B}} = (R_{\ell^*}-R_\ell)(2\norm{\bar{x}}-R_{\ell^*}-R_\ell)/\bar{\sigma}^2 + \mathcal{O}(\ln\bar{\sigma}^2)$ therefore diverges as $1/\bar{\sigma}^2$ for every $\ell \neq \ell^*$, so the softmax ring weights concentrate, $w_{\ell^*}^{\mathrm{B}} \to 1$ exponentially -- the same mechanism established for the \ac{OGD} in Proposition~\ref{prop:lpd_limit}. 
Consequently the total \ac{OBD} posterior variance~\eqref{eq:obd_var} collapses to the variance conditioned on the dominant ring.
Conditioned on ring $\ell^*$, the complex posterior variance under the von Mises distribution is exactly $\hat{\sigma}_{\mathrm{B},\ell^*}^2 = R_{\ell^*}^2\bigl(1 - A(\kappa_{\ell^*})^2\bigr)$.
This follows because $\E[\norm{x}^2 \mid R_{\ell^*}] = R_{\ell^*}^2$ (the radius is deterministic on a fixed ring), while $\norm{\E[x \mid \bar{x}, R_{\ell^*}]}^2 = R_{\ell^*}^2 A(\kappa_{\ell^*})^2$ from \eqref{eq:ring_mean_cbm}, so $\hat{\sigma}_{\mathrm{B},\ell^*}^2 = R_{\ell^*}^2\bigl(1 - A(\kappa_{\ell^*})^2\bigr)$.
Applying the large-argument asymptotic expansion $A(\kappa) = 1 - 1/(2\kappa) - 1/(8\kappa^2) + \mathcal{O}(\kappa^{-3})$ yields $1 - A(\kappa)^2 = 1/\kappa + \mathcal{O}(\kappa^{-3})$, the $\kappa^{-2}$ contributions cancelling. 
Substituting $\kappa_{\ell^*} = 2R_{\ell^*}\norm{\bar{x}}/\bar{\sigma}^2$ gives
\begin{equation*}
  \hat{\sigma}_{\mathrm{B},\ell^*}^2 \sim \frac{R_{\ell^*}^2}{\kappa_{\ell^*}} = \frac{R_{\ell^*}\,\bar{\sigma}^2}{2\norm{\bar{x}}},
\end{equation*}
establishing \eqref{eq:cbm_var_decay}.

\vspace{-2ex}
\subsection{Proof of Theorem~\ref{thm:cross_level_fp}
            (Cross-Level SE Equivalence)}
\label{app:thm_cross_level_fp}

The three per-level fixed points are established in the main text, each via the orthogonality (excess) identity of Lemma~\ref{lem:excess}, which adds to the (under $q$) Bayes-optimal \ac{OBD} error the squared amplitude-shrinkage bias $\E[R_{\ell^*}^2(A(\kappa_{\ell^*})-m_\eta)^2]$ of the denoiser in use:
\begin{itemize}
\item the \ac{OBD} correction $\delta^{\mathrm{B}} = \gamma\sigma_z^4/(2-\alpha)^3$ (Proposition~\ref{prop:obd_se}), from the $\mathcal{O}(\kappa^{-2})$ cancellation in $1-A(\kappa)^2$ and the curvature~\eqref{eq:obd_Fmap} of the per-iteration variance;
\item the \ac{OGD} correction $\delta^{\mathrm{G}} = \delta^{\mathrm{B}} + \mathcal{O}(\sigma_z^8)$ (Proposition~\ref{prop:cgr_se}), the Gaussian-phase magnitude bias $A(\kappa)-m_{\mathrm{G}} = \mathcal{O}(\kappa^{-2})$ entering squared;
\item the \ac{OPD} correction $\delta^{\mathrm{P}} = \tfrac{3}{2}\delta^{\mathrm{B}} + \mathcal{O}(e^{-(2-\alpha)d_R^2/(8\sigma_z^2)})$ (Corollary~\ref{cor:lpd_se}), the no-shrinkage bias $1-A(\kappa) = \Theta(\kappa^{-1})$ contributing exactly half the \ac{OBD} curvature, $\gamma(\bar{\sigma}^2)^2/16$.
\end{itemize}
Substituting these into the unified expansion~\eqref{eq:unified_fp} and subtracting gives the pairwise differences $\norm{\mathrm{MSE}_\infty^{\mathrm{G}} - \mathrm{MSE}_\infty^{\mathrm{B}}} = \mathcal{O}(\sigma_z^8)$ and $\mathrm{MSE}_\infty^{\mathrm{P}} - \mathrm{MSE}_\infty^{\mathrm{B}} = \tfrac{1}{2}\delta^{\mathrm{B}} = \Theta(\sigma_z^4) > 0$, which is~\eqref{eq:fp_gaps}.

\vspace{-2ex}
\section{Proofs for Section~\ref{sec:inf_theory}
         (Geometry-Rate Tradeoffs)}
\label{app:sec6}

\subsection{Proof of Corollary~\ref{cor:fp_ordering_hierarchy}
            (Full Hierarchy Ordering)}
\label{app:cor_fp_ordering_hierarchy}

The \ac{OGD} and \ac{OPD} are bounded but non-Bayes approximations, whose excess over the \ac{OBD} is governed by the identity of Lemma~\ref{lem:excess},
\begin{equation*}
\mathcal{F}_\eta(\varepsilon) = \mathcal{F}_{\mathrm{B}}(\varepsilon) + \E\bigl[R_{\ell^*}^2\,(A(\kappa_{\ell^*}) - m_\eta)^2\bigr], \qquad \eta \in \{\mathrm{G},\mathrm{P}\},
\end{equation*}
with $m_{\mathrm{G}} = 1 - 1/(2\kappa)$ and $m_{\mathrm{P}} = 1$.
The excess in this identity is a squared norm, hence non-negative, so $\mathcal{F}_{\mathrm{G}}, \mathcal{F}_{\mathrm{P}} \geq \mathcal{F}_{\mathrm{B}}$ pointwise; propagated through the positive fixed-point amplification $1/(1-\mathcal{F}_{\mathrm{B}}')$, this makes the fixed-point increments of Theorem~\ref{thm:cross_level_fp} non-negative; i.e., $\mathrm{MSE}_\infty^{\mathrm{G}} = \mathrm{MSE}_\infty^{\mathrm{B}} + \mathcal{O}(\sigma_z^8)$ (a non-negative $\mathcal{O}(\sigma_z^8)$) and $\mathrm{MSE}_\infty^{\mathrm{P}} = \mathrm{MSE}_\infty^{\mathrm{B}} + \Theta(\sigma_z^4)$ (a strictly positive $\Theta(\sigma_z^4)$). Since the \ac{OPD}'s $\Theta(\sigma_z^4)$ separation dominates the \ac{OGD}'s $\mathcal{O}(\sigma_z^8)$ one, $\mathrm{MSE}_\infty^{\mathrm{B}} \leq \mathrm{MSE}_\infty^{\mathrm{G}} \leq \mathrm{MSE}_\infty^{\mathrm{P}}$ for $\sigma_z^2$ small enough.
Finally $\mathrm{MSE}_\infty^{\mathrm{P}} \leq \mathrm{MSE}_\infty^{\mathrm{L}}$ in this regime, since $\mathrm{MSE}_\infty^{\mathrm{L}} = \sigma_z^2/(1-\alpha) + \mathcal{O}(\sigma_z^4)$ exceeds $\mathrm{MSE}_\infty^{\mathrm{P}} = \sigma_z^2/(2-\alpha) + \mathcal{O}(\sigma_z^4)$ by the strictly positive leading gap $\sigma_z^2\bigl[(1-\alpha)^{-1} - (2-\alpha)^{-1}\bigr] > 0$.
Taking $\sigma_\star^2$ as the smallest of the three thresholds so obtained establishes~\eqref{eq:fp_ordering_full}.
At low \ac{SNR} the \ac{OGD} and \ac{OPD} fixed points escape above $\mathrm{MSE}_\infty^{\mathrm{L}}$: the non-shrinking \ac{OPD} lands on the outermost ring with essentially random phase, so $\mathrm{MSE}_{\mathrm{P}} \to E_d + R_L^2 > E_d \geq \mathrm{MSE}_{\mathrm{L}}$, and the \ac{LMMSE} baseline -- which shrinks toward the prior mean -- is the more robust estimator in the noise-dominated regime.

\vspace{-2ex}
\subsection{Proof of Theorem~\ref{thm:ot_bound}
            (Transport-Theoretic Bound)}
\label{app:thm_ot_bound}

\emph{Step 1: Exact excess identity (orthogonality).}
Fix the common effective noise $\bar{\sigma}^2$ and write $\bar{x} = x + \bar{\sigma}\tilde{z}$ with $x \sim p$. Expanding $x - \eta_B = (x - \eta_D) + (\eta_D - \eta_B)$,
\begin{align*}
\mathcal{F}_{\mathrm{B}}(\mathrm{MSE}) - \mathcal{F}_{\mathrm{D}}(\mathrm{MSE})
&= \E\norm{\eta_D(\bar{x}) - \eta_B(\bar{x})}^2 \\
&\hspace{-10ex}+ 2\,\mathrm{Re}\,\E\bigl[(x - \eta_D(\bar{x}))^*(\eta_D(\bar{x}) - \eta_B(\bar{x}))\bigr],
\end{align*}
and the cross term vanishes \emph{exactly}: $\eta_D = \E_p[x \mid \bar{x}]$ is the Bayes estimator under the true prior $p$, and $\eta_D - \eta_B$ is $\bar{x}$-measurable, so the orthogonality principle applies (all of $x,\eta_D,\eta_B$ lie in $L^2$ -- $\norm{\eta}\le R_L$ by~\eqref{eq:eta_bound} and $x$ is finitely supported -- so the $L^2$ projection is well-defined). This proves the identity in~\eqref{eq:ot_mse_bound}; it remains to bound $\E\norm{\eta_D - \eta_B}^2$.

\emph{Step 2: Tweedie representation.}
Identify $\Complex$ with $\Real^2$ and let $\phi_t$ denote the density of $\mathcal{N}(0, t\mathbf{I}_2)$; the effective channel adds Gaussian noise of variance $t = \bar{\sigma}^2/2$ per real dimension, so smoothing each prior by $\phi_t$ yields the observation densities $p_t \triangleq p * \phi_t$ and $q_t \triangleq q * \phi_t$. Tweedie's identity (the Gaussian-channel posterior-mean representation invoked for the Stein identity~\eqref{eq:stein_identity}; cf.~\cite{Guo2005}) gives, for \emph{any} prior,
\begin{equation*}
\eta_D(\bar{x}) = \bar{x} + t\,\nabla \ln p_t(\bar{x}),
\qquad
\eta_B(\bar{x}) = \bar{x} + t\,\nabla \ln q_t(\bar{x}),
\end{equation*}
so that, with the expectation taken under the true marginal $\bar{x} \sim p_t$,
\begin{equation}
\label{eq:gap_fisher}
\E\norm{\eta_D - \eta_B}^2
= t^2\, \E_{p_t}\!\left\|\nabla \ln \frac{p_t}{q_t}\right\|^2
= t^2\, J\bigl(p_t \,\|\, q_t\bigr),
\end{equation}
where $J(\cdot\|\cdot)$ denotes the relative Fisher information. For every $\tau > 0$ the smoothed densities $p * \phi_\tau$ and $q * \phi_\tau$ are $C^\infty$, strictly positive, and Gaussian-tailed, so every integration by parts in the de Bruijn step below is justified, and $W_2(p,q) < \infty$ (bounded constellation supports) makes $D(p * \phi_\tau \,\|\, q * \phi_\tau) < \infty$.

\emph{Step 3: Smoothed relative entropy is transport-bounded.}
For any coupling $\pi \in \Pi(p, q)$, joint convexity of the \ac{KL} divergence~\cite[Ch.~2]{CoverThomas2006} and the Gaussian formula $D(\mathcal{N}(x, s\mathbf{I}) \| \mathcal{N}(y, s\mathbf{I})) = \norm{x-y}^2/(2s)$ give, for every $\tau > 0$,
\begin{equation}
\label{eq:kl_transport}
D\bigl(p * \phi_\tau \,\|\, q * \phi_\tau\bigr)
\leq \E_{\pi}\!\left[\frac{\norm{x - y}^2}{2\tau}\right]
= \frac{W_2(p, q)^2}{2\tau},
\end{equation}
optimizing over couplings.

\emph{Step 4: de Bruijn integration.}
Along the simultaneous heat flow $\tau \mapsto (p * \phi_\tau, q * \phi_\tau)$, the relative de Bruijn identity gives $\frac{\mathrm{d}}{\mathrm{d}\tau} D(p*\phi_\tau \| q*\phi_\tau) = -\tfrac{1}{2} J(p*\phi_\tau \| q*\phi_\tau)$, and the relative Fisher information is non-increasing along the flow~\cite{villani2003, PolyanskiyWu_book_2024}. Hence
\begin{equation*}
D\bigl(t/2\bigr) \;\geq\; D\bigl(t/2\bigr) - D(t)
= \tfrac{1}{2}\!\int_{t/2}^{t}\! J(\tau)\,\mathrm{d}\tau
\;\geq\; \tfrac{t}{4}\, J(t),
\end{equation*}
so that, combining with~\eqref{eq:kl_transport} at $\tau = t/2$,
\begin{equation*}
J(t) \;\leq\; \frac{4}{t}\, D(t/2) \;\leq\; \frac{4}{t}\cdot\frac{W_2(p,q)^2}{t} = \frac{4\,W_2(p,q)^2}{t^2}.
\end{equation*}
Substituting into~\eqref{eq:gap_fisher} yields the clean, $\bar{\sigma}^2$-\emph{uniform} bound
\begin{equation*}
\E\norm{\eta_D - \eta_B}^2 = t^2 J(t) \;\leq\; 4\,W_2(p, q)^2.
\end{equation*}

\emph{Step 5: Ring-by-ring coupling.}
Since the orbital prior $q$ preserves the ring masses $r_\ell$ of $p$, coupling the two priors ring by ring gives $W_2(p, q)^2 \leq \sum_\ell r_\ell\, W_2(p_\ell, q_\ell)^2 = \overline{W}_2^{\,2}$, establishing~\eqref{eq:ot_mse_bound}. The per-ring value~\eqref{eq:w2_ring} follows from the intra-arc optimal coupling of the proof of Proposition~\ref{prop:cbm_error} (unchanged for the squared-chord cost, which is likewise increasing in angular distance): $W_2^2 = \frac{M_\ell}{2\pi}\int_{-\pi/M_\ell}^{\pi/M_\ell} 4R_\ell^2\sin^2(\theta/2)\,\mathrm{d}\theta = 2R_\ell^2\bigl(1 - \frac{\sin(\pi/M_\ell)}{\pi/M_\ell}\bigr)$.

\emph{Step 6: Fixed-point transfer.}
Write $m_{\mathrm{B}} = \mathcal{F}_{\mathrm{B}}(m_{\mathrm{B}})$ and $m_{\mathrm{D}} = \mathcal{F}_{\mathrm{D}}(m_{\mathrm{D}})$ for the two fixed points ($m_{\mathrm{B}} \geq m_{\mathrm{D}}$ by Proposition~\ref{prop:fp_ordering}). Then
\begin{align*}
m_{\mathrm{B}} \!-\! m_{\mathrm{D}}
&= \bigl[\mathcal{F}_{\mathrm{B}}(m_{\mathrm{B}}) - \mathcal{F}_{\mathrm{D}}(m_{\mathrm{B}})\bigr]
 + \bigl[\mathcal{F}_{\mathrm{D}}(m_{\mathrm{B}}) - \mathcal{F}_{\mathrm{D}}(m_{\mathrm{D}})\bigr] \\
&\leq 4\,\overline{W}_2^{\,2} + c_{\mathrm{D}}\,(m_{\mathrm{B}} - m_{\mathrm{D}}),
\end{align*}
by part~(i) at noise level $\bar{\sigma}^2(m_{\mathrm{B}})$ and the mean value theorem; rearranging gives~\eqref{eq:ot_fp_bound} whenever $c_{\mathrm{D}} < 1$. The regimes where $c_{\mathrm{D}} < 1$ holds -- with $c_{\mathrm{D}} = \alpha\,\E[D^2] \leq \alpha$ at high noise and $c_{\mathrm{D}} \to 0$ at high \ac{SNR} -- are those of Corollary~\ref{cor:se_uniqueness}.

\bibliographystyle{IEEEtran}
\bibliography{references}

\end{document}